\documentclass[preprint]{elsarticle}
\usepackage[utf8]{inputenc}
\usepackage[T1]{fontenc}
\usepackage[english]{babel}
\usepackage{amsmath}
\usepackage{amssymb}
\usepackage{amsthm} 
\usepackage{amscd}
\usepackage{bm}
\usepackage{mathtools}
\usepackage{mathrsfs}
\usepackage{mathdots}
\usepackage{fancyhdr}
\usepackage{fancyvrb}
\usepackage{lmodern}
\usepackage[top=4cm]{geometry}
\usepackage[shortlabels]{enumitem}
\setlist[enumerate,1]{
{1.},
ref={\arabic*}
}
\setlist[enumerate,2]{
{(a)},
ref={\theenumi{}(\alph*)}
}
\setlist[itemize]{
label={--}
}
\usepackage{hyperref}
\hypersetup{
    colorlinks=true,
    linkcolor=blue, 	%Colour for normal internal links
    anchorcolor=black, 	%Colour for anchor (target) text
    citecolor=green, 	%Colour for bibliographical citations
    filecolor=cyan 	    %Colour for links that open local files
    urlcolor=magenta,
    pdfpagemode=FullScreen,
    }
\usepackage{graphicx}
\usepackage{tikz-cd}
\usetikzlibrary{positioning,arrows.meta,trees, shapes.geometric, fit, calc}
\allowdisplaybreaks
\setcounter{tocdepth}{2}
\setcounter{secnumdepth}{2}
\theoremstyle{plain}
\newtheorem{thm}{Theorem}[section]
\newtheorem{corollary}[thm]{Corollary}
\newtheorem{lemma}[thm]{Lemma}
\newtheorem{proposition}[thm]{Proposition}
\theoremstyle{definition}
\newtheorem{definition}[thm]{Definition}
\newtheorem{remark}[thm]{Remark}
\newtheorem{example}[thm]{Example}
\numberwithin{equation}{subsection}
\newcommand{\mf}{\mathfrak}
\newcommand{\mc}{\mathcal}
\newcommand{\ms}{\mathscr}
\newcommand{\op}{\operatorname}

\newcommand{\id}{\operatorname{id}}
\newcommand{\im}{\operatorname{im}}

\newcommand{\ev}{\operatorname{ev}}

\newcommand{\R}{\mathbb R}
\newcommand{\Z}{\mathbb Z}
\newcommand{\N}{\mathbb N}
\newcommand{\Nast}{{\N^{\ast}}}

\newcommand{\D}{\mathbb D}

\newcommand{\F}{\mathbb F}

\renewcommand{\Pr}{\mathrm{Pr}}
\newcommand{\e}{\varepsilon}
\newcommand{\p}{\varphi}

\newcommand{\inj}{\hookrightarrow}
\newcommand{\surj}{\twoheadrightarrow}

\newcommand{\E}{\mathbb{E}}

\newcommand{\bigmid}{~ \Big |~}

\newcommand{\A}{\mathbb{A}}

\newcommand{\T}{\mathbb T}

\newcommand{\Out}{\op{Out}}

\renewcommand{\P}{{\mathbb P}}

\newcommand{\X}{{\bm{\mathsf X}}}
\newcommand{\x}{{\bm{\mathsf x}}}

\newcommand{\Tr}{{\bm{\mathsf T}}}
\newcommand{\y}{{\bm{\mathsf y}}}

\renewcommand{\H}{{\bm{\mathsf H}}}
\newcommand{\h}{{\bm{\mathsf h}}}

\makeatletter
\def\ps@pprintTitle{
  \let\@oddhead\@empty
  \let\@evenhead\@empty
  \let\@oddfoot\@empty
  \let\@evenfoot\@oddfoot
  }
\makeatother
\newcommand{\Today}{26th November 2024}
\usepackage{etoolbox}
\patchcmd{\MaketitleBox}{\footnotesize\itshape\elsaddress\par\vskip36pt}{\footnotesize\itshape\elsaddress\par\parbox[b][36pt]{\linewidth}{\vfill\hfill\textnormal{\Today}\hfill\null\vfill}}{}{}%
\patchcmd{\pprintMaketitle}{\footnotesize\itshape\elsaddress\par\vskip36pt}{\footnotesize\itshape\elsaddress\par\parbox[b][36pt]{\linewidth}{\vfill\hfill\textnormal{\Today}\hfill\null\vfill}}{}{}%
%\journal{no journal yet}

\begin{document}
\hypersetup{
    %colorlinks=true,
    linkcolor=blue, 	%Colour for normal internal links
    anchorcolor=black, 	%Colour for anchor (target) text
    citecolor=green, 	%Colour for bibliographical citations
    filecolor=cyan, 	%Colour for links that open local files
    urlcolor=magenta, 
    pdftitle={DSEF2},
    pdfauthor={E. Emanuel Rapsch}
    }

\begin{frontmatter}

%% use the tnoteref command within \title for footnotes;
%% use the tnotetext command for theassociated footnote;
%% use the fnref command within \author or \affiliation for footnotes;
%% use the fntext command for theassociated footnote;
%% use the corref command within \author for corresponding author footnotes;
%% use the cortext command for theassociated footnote;
%% use the ead command for the email address,
%% and the form \ead[url] for the home page:
\title{Decision making in stochastic extensive form II: \\Stochastic extensive forms and games}
\date{\today}
%% \title{Title\tnoteref{label1}}
%% \tnotetext[label1]{nothing to declare}
\author[1]{E.\ Emanuel Rapsch}
\ead{rapsch@math.tu-berlin.de}
%% \ead[url]{home page}
%% \fntext[label2]{Again, nothing to declare}
%% \cortext[cor1]{}
\affiliation[1]{organization={Institut für Mathematik, Technische Universität Berlin},
                addressline={Straße des 17.\ Juni 136}, 
                %postcode={10623},
                city={10623 Berlin},
                %state={Berlin},
                country={Germany}
                }
%% \fntext[label3]{}

\begin{abstract}
    A general theory of stochastic extensive forms is developed to bridge two concepts of information flow: decision trees and refined partitions on the one side, filtrations from probability theory on the other. Instead of the traditional ``nature'' agent, this framework uses a single lottery draw to select a tree of a given decision forest. Each ``personal'' agent receives dynamic updates from an own oracle on the lottery outcome and makes partition-refining choices adapted to this information. 
    This theory addresses a key limitation of existing approaches in extensive form theory, which struggle to model continuous-time stochastic processes, such as Brownian motion, as outcomes of ``nature'' decision making. Additionally, a class of stochastic extensive forms based on time-indexed action paths is constructed, encompassing a wide range of models from the literature and laying the groundwork for an approximation theory for stochastic differential games in extensive form.
\end{abstract}

%%%%%%Graphical abstract
%%%%\begin{graphicalabstract}
%%%%%\includegraphics{grabs}
%%%%\end{graphicalabstract}
%%%%
%%%%%%Research highlights
%%%%\begin{highlights}
%%%%\item Research highlight 1
%%%%\item Research highlight 2
%%%%\end{highlights}

\begin{keyword}
% %% keywords here, in the form: keyword \sep keyword
Extensive form games \sep Dynamic games \sep Stochastic games \sep Decision making \sep Sequential decision theory \sep Stochastic processes \sep Bayesian games\smallskip

% %% PACS codes here, in the form: \PACS code \sep code
% %% MSC codes here, in the form: \MSC code \sep code
% %% or \MSC[2008] code \sep code (2000 is the default)
\JEL C73 \sep D81 \smallskip

\MSC[2020] 91A15 \sep 91A18 \sep %91A25 \sep 91A25 \sep 91A35 \sep 91A60 \sep 91B02 \sep 
91B06 %\sep 91B70
\end{keyword}

\end{frontmatter}

\tableofcontents

\section*{Introduction}
\addcontentsline{toc}{section}{Introduction}

This paper is the second in a series aimed at developing a unified theory of stochastic games and decision problems in extensive form. Representing a decision problem in extensive form means specifying it in terms of what the author suggests calling ``extensive form characteristics'', namely, the flow of information about past choices and exogenous events, along with the set of choices available to decision makers. Although classical theory, as established by von Neumann and Morgenstern in \cite{Neumann1944} and furthered by Kuhn in \cite{Kuhn1950Extensive,Kuhn1953Extensive}, relies on strong finiteness assumptions, the concept itself is very general and broadly applicable. In a series of papers, including \cite{AlosFerrer2005,AlosFerrer2008,AlosFerrer2011Comment}, and in the monograph \cite{AlosFerrer2016}, Alós-Ferrer and Ritzberger develop an abstract, highly general theory of extensive form games and decision problems and give a concrete order-theoretic characterisation of its own boundaries. Beyond these boundaries, the notions of strategy, outcome, and equilibrium lose rigorous decision-theoretic meaning when applied to the extensive form characteristics of a given decision problem.

Hence, any decision problem and game exhibiting extensive form characteristics lies within these boundaries or is, at least, some sort of limit of objects within these boundaries. The inclusion, or the precise meaning of this limit, allows us to rigorously determine the decision-theoretic meaning of strategies, outcomes, equilibrium, etc.\ for the given decision problem. This opens a new perspective on problems in continuous time in particular -- a domain where the subtlety of this issue has long been recognised (see, e.g.\ \cite{Simon1989,Stinchcombe1992} and the references therein). Based on the maximality result in \cite{Stinchcombe1992}, systematically emphasised in \cite{AlosFerrer2008,AlosFerrer2011Comment}, which advocates restricting outcomes to certain piecewise constant paths, \cite{AlosFerrer2015} introduces, for the first time, a rigorous extensive form foundation of continuous-time games and decision problems involving such outcomes.

Having said that, an important question remains to be addressed: How can the ``extensive form characteristics'' of stochastic games, decision problems exposed to randomness, or those endowed with randomisation devices be modelled? The standard approach in game and decision theory is to introduce a ``nature'' agent executing a behaviour strategy (described in, e.g.\ \cite[Subsection~2.2.3]{AlosFerrer2005} or \cite{Fudenberg1991,AlosFerrer2016}, with roots in the works of Shapley \cite{Shapley1953} and Harsanyi \cite{Harsanyi1967,Harsanyi1968a,Harsanyi1968b}). However, a fundamental problem arises in the continuous-time case: many relevant modelling applications -- from economics (see, e.g.\ \cite{Riedel2017,Grenadier1996Strategic}) and finance (see \cite{Delbaen2006Mathematics,Pham2009Continuous,Bayer2023Rough,Lasry2007}) to engineering (see \cite{Cohen2023Optimal,Huang2006} and the references therein) and reinforcement learning (see, for example, \cite{Guo2023Reinforcement}) -- require exogenous noise represented by stochastic processes whose paths are anything but piecewise constant, with Brownian motion serving as a paradigmatic case (see, for instance, \cite{Pham2009Continuous,Karatzas1998Methods,Cohen2015Stochastic} for an overview of both theory and applications). By no means this implies ``the world to be Brownian'' or any similar assumption; rather, such models are widely used, be it because it appears reasonable to allow for unpredictably small time lags until the next exogenous information revelation, be it because certain quantities arguably evolve in decision paths of low regularity, be it for mathematical and computational convenience. One can certainly argue that these models exhibit ``extensive form characteristics'' without fitting into the Alós-Ferrer--Ritzberger framework (\cite{AlosFerrer2016}). This creates a fundamental issue: the lack of a rigorous decision-theoretic foundation of a large class of relevant models. 

Despite this, stochastic control and differential game theory, based on stochastic analysis, have found a pragmatic way to deal with the ``extensive form characteristics'' of a continuous-time decision problem (see, e.g.\ \cite{Cohen2015Stochastic,Carmona2018}). While extensive form theory is based on graph theory and refined partitions, stochastic analysis models exogenous information through filtrations on a given measurable space of scenarios. These are strictly less restrictive objects because $\sigma$-algebras need not be generated by partitions of the sample space. This works as if a one-shot lottery draw selects a scenario $\omega$ at the beginning, without communication, and then, as time progresses, more and more properties of $\omega$ become known to the decision makers. For instance, this could be knowledge about the realised path of a Brownian motion. From a decision-theoretic perspective, there is no disadvantage in using this weaker structure for modelling exogenous information. 

However, when it comes to the actual decision makers, the stochastic analysis-inspired approach is not compatible with the extensive form paradigm in the strict sense. This is because it does not explain strategies, randomisation, outcomes, or (dynamic, e.g.\ subgame-perfect) equilibria in terms of a decision tree-like object and choices locally available at moves. In this approach, ``strategies'' are typically defined as stochastic processes satisfying a certain measurability condition (e.g.\ progressive measurability) and such that a given stochastic differential equation, depending on them in a non-anticipating way, has a unique solution in some sense. This contradicts the problem's own extensive form characteristics in the strict sense, due to the \emph{ex post} restriction of the strategy space, which implies that the availability of a choice depends on future decisions (for a detailed discussion of this issue, see the introduction in \cite{AlosFerrer2015}). The ``outcome'' is taken to be the solution to the differential equation, and ``equilibria'' (or ``optimal controls'') are defined with implicit reference to the dynamic programming principle, but without an extensive form foundation. This not only fundamentally undermines the use of the equilibrium concept, but also introduces potential confusion, particularly regarding the decision-theoretic meaning of these ``equilibria'' (for example, concerning the notions of ``closed'' and ``open loop'' ``equilibria'' in stochastic differential games, see the discussion in \cite[p.\ 72--76]{Carmona2018}), as well as the precise definition of subgames (for example, in stochastic timing games, see \cite{Riedel2017} for further discussion).

Thus, there is room for an extensive form theory that models exogenous information through filtration-like objects, while modelling endogenous decision making with decision tree-like objects and partition-refining choices. This approach synthesises both frameworks in a productive way, which the author refers to as \emph{stochastic extensive forms}. This is the aim of this project, with the present paper forming its second part. The theory can be used to construct extensive form games involving general stochastic processes as noise. Moreover, it provides a way to precisely characterise how stochastic control and differential games can be approximated by stochastic extensive form decision problems. For instance, it applies to the stochastic timing game, which is typically not at all formulated in extensive form (see, in particular, \cite{Riedel2017}, where strategic form games are stacked according to a notion of ``consistency'', which is justified \emph{a priori} by an analogy to discrete time and \emph{a posteriori} in \cite{Steg2018} by a ``discrete time with an infinitesimal grid'' approximation argument reminiscent of \cite{Simon1987,Simon1989}). While a central motivation for this project stems from continuous-time problems, the theory is not confined to this domain. It is fundamentally about pushing the boundaries of extensive form theory as far as logically possible, with a focus set on probability. This makes it possible to formally establish links between uses of concepts such as equilibrium, which otherwise might seem ad hoc. It also allows for a clear outline of the decision-theoretic structure that arises naturally from the combination of these two ideas, distinguishing it from the additional structure sometimes added in modelling based on the modeller's discretion or specific perspectives on a given problem. For example, stochastic extensive form theory is independent of the representation of noise through a ``nature'' agent's virtual decision making, and in that sense, it is naturally stochastic. \smallskip

The first step in this endeavour, which is taken in the first paper \cite{Rapsch2024DecisionA} and briefly recalled in this paper for the reader's convenience, is to develop a theory of stochastic decision forests. These can be understood as forests of decision trees, where each tree corresponds to exactly one exogenous scenario, equipped with a similarity structure across trees, which identifies moves. Decision trees, as graph-theoretical objects, are the traditional base model for extensive form games and decision problems. However, as pointed out in \cite{AlosFerrer2005}, the refined partitions-based representation, using the set of all maximal chains, exhibits not only strong duality properties but also makes dynamic decision making amenable to the traditional decision-theoretic paradigm of choice under uncertainty: acts which assign a consequence to any state, following Savage's framework (\cite{Savage1972Foundations}). In this refined partitions approach, acts translate into strategies, mapping each move to a choice determined by a set of outcomes still possible at that move from the agent's perspective. This is how uncertainty about choices (whether past or future choices of oneself or others, and present choices of opponents) -- that is, endogenous information -- is handled. 
The modelling of uncertainty about exogenous noise, or exogenous information, in this paper, however, is fundamentally different. It is the aforementioned similarity structure, consisting in so-called random moves, that can be equipped with a filtration-like object which dynamically reveals information about the realised exogenous scenario. This allows for general stochastic processes to model noise without running into outcome generation problems along that dimension. The adaptedness of strategies with respect to exogenous information can be based on a concept of adapted choices.

In the present paper, the above-mentioned ingredients are combined to align with the modelling framework of choice under uncertainty, using refined partitions along decision paths and general probabilistic structure across exogenous scenarios. This yields, \emph{inter alia}, the notion of stochastic extensive forms, a strict generalisation of the corresponding concept in \cite{AlosFerrer2008,AlosFerrer2011Comment}. Strategies are simply Savage acts (\cite{Savage1972Foundations}), and randomisation occurs with respect to exogenous signals (following Aumann \cite{Aumann1974Subjectivity}). We demonstrate that the existence and uniqueness of induced outcomes for strategy profiles can be classified on a scenariowise level, so that the strong results in \cite{AlosFerrer2008} can be directly applied. We extend the concepts of sequential rationality and perfect Bayesian equilibrium to general well-posed stochastic extensive forms, showing that these concepts are not restricted to certain finite games or to specific formulations of more general situations. We conclude by discussing why stochastic extensive forms cannot be subsumed into the existing theory through the typical process of representing exogenous noise by means of a nature agent (as, especially, put forward by Harsanyi, in \cite{Harsanyi1967,Harsanyi1968a,Harsanyi1968b}).

The fundamental motivation for this theory arises in a large class of concrete applications. Thus, its presentation is tightly accompanied by examples. While some of them remain rather pedagogical, a general class of stochastic extensive forms based on time-indexed action paths is also constructed. A small set of readily verifiable axioms allows for many different time regimes, many different specifications of the outcomes (e.g.\ paths of timing games, also in the case of the scenario-dependent expiration of certain options), and general stochastic noise. Of course, it also includes the deterministic case. To the best of the author's knowledge, this theoretical unification of action path-based decision problems with extensive form characteristics within a single framework is a second new contribution of this paper in its own right.\smallskip

Before concluding the introduction, three remarks remain to be made. First, this paper is the second in a series of three and serves as the central part, introducing the key concepts from an abstract and general perspective. The main notions and results of the first papers are revisited as needed, with detailed references provided. Second, as pointed out by Aumann (see \cite{Aumann2020}), game theory arguably is ``interactive decision theory'', while decision theory is, in a trivial sense, single-player game theory. Still, game theory and decision theory are not the same; the former is more concerned with the ``interactive'' aspect, while the latter focuses more on the ``decision'' facet. This paper is more concerned with the latter, so it primarily employs the corresponding terminology. However, when the context allows, ``game'' is used instead of ``decision problem'', or vice versa, and similarly, the terms ``decision maker'', ``agent'', and ``player''  are used interchangeably. Finally, the relevant proofs of all new theorems, propositions, lemmata, and claims in examples can be found in the corresponding subsection of the appendix.

\section*{Notation}
\begin{itemize}[label=--]
    \item $\N = \Z_+$ = the positive integers including zero, $\Nast = \N \setminus \{0\}$, $\R$ = the real numbers, $\R_+ = \{x\in\R \mid x \ge 0\}$;
    \item $\mc P(A) = \mc P A$ = the set of subsets of a given set $A$, $\mc P(f) = \mc P f$ = the function $\mc P A \to\mc P B, M \mapsto \{ f(m) \mid m\in M\}$ for a given function $f\colon A \to B$ between two sets $A$ and $B$;\footnote{$\mc P$ defines a covariant endofunctor on the category of sets.}
    \item $\im f = (\mc P f)(D)$ = the image of a set-theoretic function $f\colon D \to V$;
    \item $|M|$ = if $M$ is not a number, then this is the cardinality of the set $M$;
    \item $\ms E|_D = \{A \cap D \mid A\in\ms E\}$, for any $\sigma$-algebra $\ms E$ and any $D\in\ms E$.
\end{itemize}

\section{Stochastic extensive forms}\label{sec:Stochastic extensive forms}

The fundamental object of classical extensive form decision and game theory is the decision tree (\cite{Neumann1944,Kuhn1950Extensive,Kuhn1953Extensive,AlosFerrer2016}). The stochastic generalisation constructed in the present series of papers replaces the traditional ``nature'' agent with a device ``randomly'' selecting a decision tree. As noted in the first part \cite{Rapsch2024DecisionA}, this implies that we consider decision forests rather than decision trees. In that paper, the former have been formalised using the refined partitions framework and made amenable to exogenous noise in the sense of general probability theory, and it has been shown that these are in a strong sense decision-theoretically natural. The most important notions of \cite{Rapsch2024DecisionA}, elementary in what follows, namely that of stochastic decision forests, exogenous information structures, and adapted choices are recalled in the first subsection. Then, based on this, the central concept of the current paper and arguably of the whole series, that of \emph{stochastic extensive forms}, is introduced. This makes it possible, in turn, to give a definition of strategies in the sense of Savage acts (\cite{Savage1972Foundations}) and their extension due to Anscombe and Aumann in \cite{Anscombe1963Definition} and Aumann in \cite{Aumann1974Subjectivity}, respectively. Towards the end of this section, the new concepts are illustrated by simple examples, including the absent-minded driver story.

\subsection{Preliminaries: Stochastic decision forests}\label{subs:SDF}

Let us start with recalling some basic definitions from graph and order theory, thereby fixing conventions used in this text, which combine those from \cite{Bollobas2013Modern,AlosFerrer2005,Davey2002}, and are identical to those from the first part \cite{Rapsch2024DecisionA}. In a partially ordered set (in short, \emph{poset}) $(N,\ge)$ a \emph{chain} is a subset $c\subseteq N$ such that for all $x,y\in c$, $x\ge y$ or $y\ge x$ holds true. A \emph{maximal chain} is a chain that is maximal as a chain with respect to set inclusion in $\mc P(N)$. $x\in N$ is called a \emph{maximal element} iff there is no $y\in N$ other than $x$ such that $y\ge x$. $x\in N$ is called \emph{maximum} iff for all $y\in N$, $x \ge y$. For $x\in N$, the \emph{principal up-set} $\uparrow x$ and \emph{principal down-set} $\downarrow x$ are defined by
\[ \uparrow x = \{y\in N \mid y\ge x\}, \qquad \downarrow x = \{y\in N \mid x \ge y\}. \]

Moreover, in this text, a poset $(F,\ge)$ is called a \emph{forest} iff for every $x\in F$, $\uparrow x$ is a chain. A forest $(F,\ge)$ is called \emph{rooted} iff $F\neq\emptyset$ and for every $x\in F$, $\uparrow x$ contains a maximal element of $(F,\ge)$. A forest $(T,\ge)$ is called a \emph{tree} iff for every $x,y\in T$, $(\uparrow x) \cap (\uparrow y) \neq \emptyset$. Given a forest $(F,\ge)$, the elements $x\in F$ are called \emph{nodes}. Nodes $x\in F$ such that $\downarrow x = \{x\}$ are called \emph{terminal}. We recall the following lemma, fundamental for what follows. It can actually be seen as an explicitly order-theoretic reformulation of a basic result from graph theory (see the discussion in \cite[section I.1]{Bollobas2013Modern}).

\begin{lemma}[Lemma 1.1 \cite{Rapsch2024DecisionA}]\label{lemma:partion_of_forest}
    For any forest $(F, \ge)$ there exists a unique partition $\mc F$ of $F$ into trees such that for all $x, y \in F$ with $x \ge y$ there is $T \in \mc F$ with $x, y \in T$. If $(F,\ge)$ is rooted, then for any $T\in\mc F$, $(T,\ge)$ is a rooted tree and has a maximum.
\end{lemma}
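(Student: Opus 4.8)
The plan is to construct the partition explicitly via an equivalence relation and verify the required properties. Define a relation $\sim$ on $F$ by declaring $x \sim y$ iff $(\uparrow x) \cap (\uparrow y) \neq \emptyset$, i.e.\ $x$ and $y$ have a common ancestor in $(F,\ge)$. The first task is to show $\sim$ is an equivalence relation. Reflexivity is immediate since $x \in \uparrow x$; symmetry is clear from the symmetric definition. The main point of this step is transitivity: if $z \in (\uparrow x) \cap (\uparrow y)$ and $w \in (\uparrow y) \cap (\uparrow u)$, then $z, w \in \uparrow y$, which is a chain by the forest axiom, so $z \ge w$ or $w \ge z$; in either case the larger of the two lies in both $\uparrow x$ and $\uparrow u$, giving $x \sim u$. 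Let $\mc F$ be the set of $\sim$-equivalence classes; it is by construction a partition of $F$.

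Next I would verify that each $T \in \mc F$ is a tree and that the separation property holds. Each class $T$, with the induced order, is a forest (being a subposet of a forest), and by construction any two $x, y \in T$ satisfy $(\uparrow_F x) \cap (\uparrow_F y) \neq \emptyset$; one must check this common ancestor can be taken inside $T$, which follows because any $z \ge x$ satisfies $z \sim x$, so $\uparrow_F x \subseteq T$ and hence $\uparrow_T x = \uparrow_F x$. Thus $(T, \ge)$ is a tree. For the separation property: if $x \ge y$ then trivially $x \in (\uparrow x) \cap (\uparrow y)$, so $x \sim y$ and both lie in the same class $T \in \mc F$.

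For uniqueness, suppose $\mc F'$ is another partition into trees with the separation property. If $x \sim y$, pick a common ancestor $z$; then $z \ge x$ and $z \ge y$, so by the separation property $z, x$ lie in one block of $\mc F'$ and $z, y$ lie in one block of $\mc F'$, and since $\mc F'$ is a partition these blocks coincide, so $x, y$ are in the same block. Conversely, if $x, y$ lie in the same block $T'$ of $\mc F'$, then since $T'$ is a tree, $(\uparrow_{T'} x) \cap (\uparrow_{T'} y) \neq \emptyset$, hence $(\uparrow_F x) \cap (\uparrow_F y) \neq \emptyset$, i.e.\ $x \sim y$. So the blocks of $\mc F'$ are exactly the $\sim$-classes, and $\mc F' = \mc F$.

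Finally, for the rooted case: fix $T \in \mc F$ and $x \in T$. Since $(F,\ge)$ is rooted, $\uparrow_F x$ contains a maximal element $r$ of $(F,\ge)$, and as noted $\uparrow_F x = \uparrow_T x \subseteq T$, so $r \in T$ and $(T,\ge)$ is rooted. To see $r$ is a maximum of $T$: for any $y \in T$ we have $x \sim y$, so there is a common ancestor $z \ge x$, $z \ge y$; then $z \in \uparrow_F x$, which is a chain containing the maximal element $r$, forcing $r \ge z \ge y$. Hence $r$ is the maximum, and incidentally it is the unique maximal element, so the root is unique. I expect the only genuinely delicate point to be transitivity of $\sim$ — it is exactly there that the forest axiom (``$\uparrow x$ is a chain'') does the essential work — while the remaining verifications are routine once the equivalence relation is in hand.
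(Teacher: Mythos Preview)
Your proof is correct. Note, however, that the present paper does not supply its own proof of this lemma: it merely recalls the statement from \cite{Rapsch2024DecisionA} (where it is Lemma~1.1) and remarks that it is an order-theoretic reformulation of a basic graph-theoretic fact (the decomposition of a forest into its connected components, cf.\ \cite[Section~I.1]{Bollobas2013Modern}). So there is no ``paper's own proof'' to compare against here. Your construction of the equivalence relation $x\sim y \iff (\uparrow x)\cap(\uparrow y)\neq\emptyset$ is precisely the standard order-theoretic way to recover connected components in this setting, and each step you outline---in particular transitivity via the chain property of $\uparrow y$, the observation $\uparrow_F x\subseteq T$, and the argument that a maximal element dominates every element of its class---is sound.
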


The elements of $\mc F$ are called \emph{connected components} of $(F,\ge)$. The maximum of a rooted tree $(T,\ge)$ is called the \emph{root}. The \emph{roots} of a forest $(F,\ge)$ are the roots of its connected components. A \emph{decision forest} (\emph{decision tree}) is a rooted forest (tree, respectively) $(F,\ge)$ such that all $x,y\in F$ with $x\neq y$ can be separated by some maximal chain $c\subseteq F$, that is, $c\cap \{x,y\}$ is a singleton. A \emph{move} in a decision forest $(F,\ge)$ is a non-terminal node $x\in F$. Following \cite{Rapsch2024DecisionA}, a decision forest is called $(F,\ge)$ \emph{(everywhere) non-trivial} iff some (any, respectively) root is a move.

If $V$ is a set, a \emph{$V$-poset} is a subset $N\subseteq \mc P(V)$. The name derives from the fact that $(N,\supseteq)$ defines a poset of subsets of $V$ ordered by set inclusion.

Note that the conventions used in this text are chosen intentionally in view of its objective. For instance, the definition of a tree used in this text is an order-theoretic transcription of a graph-theoretical concept. The meaning of ``tree'' and ``root'' and also the sense of the order ($\ge$ instead of $\le$) might differ slightly from those in other texts. See \cite[Remark 1.2]{Rapsch2024DecisionA} for a detailed explanation and comparison.\smallskip

Next, we recall the concept of decision forests from \cite{Rapsch2024DecisionA}, defined as a transcription of a characterisation of (a subclass of) so-called ``game trees'' in \cite[Theorem~3]{AlosFerrer2005} (see \cite[Subsection~1.2]{Rapsch2024DecisionA} for a detailed comparison).

\begin{definition}[Definition 1.3 in \cite{Rapsch2024DecisionA}]\label{def:decision_forest}
    Let $V$ be a set. A \emph{decision forest on $V$} is a $V$-poset $F$ such that:
    \begin{enumerate}
        \item\label{def:decision_forest.rooted_forest} $(F,\supseteq)$ is a rooted forest;
        \item\label{def:decision_forest:repr_by_dec_paths} $F$ is \emph{its own representation by decision paths}, that is, if $W$ denotes the set of maximal chains in $(F,\supseteq)$, and for every $y\in F$, $W(y) = \{w\in W \mid y\in w\}$, then there is a bijection $f\colon V \to W$ such that for every $y\in F$, $(\mc P f)(y) = W(y)$.
    \end{enumerate}
    $F$ is called \emph{decision tree on $V$} iff, in addition, for all $x,y\in F$ there is $z\in F$ with $z\supseteq x\cup y$.

    The nodes, terminal nodes, and moves of $(F,\supseteq)$ are also called \emph{nodes}, \emph{terminal nodes}, and \emph{moves} of $F$, respectively, and the elements of $V$ are called \emph{outcomes}. The set of moves of $F$ is denoted by $X(F)$ or $X$ in short.
\end{definition}

Decision forests on sets can be seen as a complete, faithful, and self-dual representation of order-theoretic decision forests in terms of refining partitions of the set of outcomes. They are self-dual in the sense that outcomes and maximal chains of nodes (called decision paths in \cite{Rapsch2024DecisionA} and plays in \cite{AlosFerrer2005}) can be uniquely identified. Actually, there is one and only one $f$ as in the definition, and it satisfies, for all $x\in F$ and $v\in V$:
\[ x\in f(v) \qquad \Longleftrightarrow\qquad v\in x. \]
In that sense, we can identify $V$ and $W$, and for that reason, we use the notation $W$ for the set underlying $F$ in the sequel (as in \cite{AlosFerrer2005}).

We note that although phrased differently and with a different aim, the duality and representation concepts just reviewed constitute one of the essential innovations of \cite{AlosFerrer2005}. \cite{Rapsch2024DecisionA} presents them from the perspective of (stochastic) decision forests and therefore additionally raises the question whether, put simply, a forest of decision trees on sets is the same thing as a decision forest on the union of these sets. The answer is shown to be affirmative in \cite[Theorem~1.7]{Rapsch2024DecisionA}. As a consequence, the analysis of decision forests as such (abstracting from possible supplementary structure like the one introduced in stochastic decision forests) can be reduced to the analysis of decision trees (compare \cite[Remark 1.8]{Rapsch2024DecisionA}).\smallskip

We now come to recall stochastic decision forests. The main idea behind this is to weaken the traditional assumptions on exogenous information which is no more assumed to arise through dynamic decision making of a nature agent. Rather, an exogenous scenario is ``randomly'' selected within a given measurable space which determines the decision tree underlying the actual decision maker's problem. From a decision-theoretic point of view, it is indeed very meaningful to define an \emph{exogenous scenario space} as a measurable space $(\Omega,\ms E)$ with $\Omega\neq\emptyset$, in the sense of measure theory (see \cite[Subsection~2.1]{Rapsch2024DecisionA} for a detailed argument, \cite{Bogachev2007Measure} for a textbook on measure theory). However, decision makers need not know in which tree they are (a piece of exogenous information) while making choices based on knowledge about the current (``information set'' of) move(s) (a piece of endogenous information). As a consequence, there is a need of a structure of similarity among trees that can serve as a consistent basis for both exogenous and endogenous information revelation. This led to the following definitions.

\begin{definition}[Definition 2.1 in \cite{Rapsch2024DecisionA}]\label{def:sdf}
    A \emph{stochastic decision forest}, in short \emph{\textsc{sdf}}, on an exogenous scenario space $(\Omega,\ms E)$ is a triple $(F,\pi,\X)$ consisting of:
    \begin{enumerate}
        \item\label{def:sdf.df} a decision forest $F$ on some set $W$;
        \item\label{def:sdf.conn_comp} a surjective map $\pi\colon F\to \Omega$ such that the set $\mc F$ of connected components of $(F,\supseteq)$ is given by the fibres of $\pi$, that is,
        \[ \mc F = \{\pi^{-1}(\{\omega\}) \mid \omega\in \Omega\}; \]
        \item\label{def:sdf.X} a set $\X$ such that: 
        \begin{enumerate}
            \item\label{def:sdf.X.section} any element $\x\in\X$ is a section of moves defined on some non-empty event, that is, it is a map $\x\colon D_\x \to X$ satisfying $\pi\circ\x = \id_{D_\x}$ for some $D_\x\in\ms E\setminus\{\emptyset\}$;
            \item\label{def:sdf.X.cov} $\X$ induces a covering of $X$, that is, $\{\x(\omega) \mid \x\in\X,\,\omega\in D_\x\} = X$.
        \end{enumerate}
    \end{enumerate}
    The elements of $\X$ are called \emph{random moves}. For $\omega\in\Omega$, let $T_\omega = \pi^{-1}(\{\omega\})$ and $W_\omega$ be the root of $T_\omega$. For $E\subseteq\Omega$, let $W_E = \bigcup_{\omega\in E} W_\omega$ and $F_E = \bigcup_{\omega\in E} T_\omega$.
\end{definition}

\begin{definition}[Definition 2.2 in \cite{Rapsch2024DecisionA}]\label{def:sdf.addon}
    Given a stochastic decision forest $(F,\pi,\X)$ on an exogenous scenario space $(\Omega,\ms E)$, let $\ge_\X$ denote the partial on $\X$ defined by
    \[ \x_1 \ge_\X \x_2 \quad \Longleftrightarrow \quad \Big[ D_{\x_1} \supseteq D_{\x_2}~ \text{ and }~\forall \omega\in D_{\x_2}\colon \x_1(\omega) \supseteq \x_2(\omega)\Big]. \]
    A set $\tilde\X\subseteq\X$ of random moves is said \emph{order consistent} iff for all $\x_1,\x_2\in\tilde\X$:
    \begin{equation*}
        \Big[\exists\omega\in D_{\x_1}\cap D_{\x_2}\colon~ \x_1(\omega) \supseteq \x_2(\omega)\Big] \qquad \Longrightarrow \qquad \x_1 \ge_\X \x_2.
    \end{equation*}

    A stochastic decision forest $(F,\pi,\X)$
    \begin{enumerate}[label=4(\alph*),ref=4(\alph*)]
        \item\label{def:sdf.X.OC} is said \emph{order consistent} iff $\X$ is order consistent;
        \item\label{def.sdf.X.surely_NT} is said \emph{surely non-trivial} iff $(F,\supseteq)$ is everywhere non-trivial;
        \item\label{def:sdf.X.max} that is order consistent, is said \emph{maximal} iff for every set $\bar\X$ such that $(F,\pi,\bar\X)$ is an order consistent stochastic decision forest and that is \emph{refined by $\X$} in that for all $\bar\x\in\bar\X$ there is $P_{\bar\x}\subseteq \X$ with $\bar\x = \bigcup P_{\bar\x}$,\footnote{According to standard set-theoretic conventions, $\bar\x = \bigcup P_{\bar\x}$ means: $\bar\x$ is a map with domain $\bigcup_{\x\in P_{\bar\x}} D_\x$ and for all $\x\in P_{\bar\x}$ and $\omega\in D_\x$, $\bar\x(\omega) = \x(\omega)$.} we have $\bar\X = \X$.
    \end{enumerate}
\end{definition}

For a detailed discussion, the reader is referred to \cite[Subsection~2.2]{Rapsch2024DecisionA}. Here, only the following two results are briefly recalled. For this, one extends $(\X,\ge_\X)$ by letting $\Tr$ be the set $\X$ augmented with all maps $\y\colon D_\y = \{\omega\}\to \{\{w\}\}$ for all $(\omega,w)\in\Omega\times W$ such that $\{w\}\in T_\omega$, and letting, for all $\y,\y'\in\Tr$, $\y\le_\Tr \y'$ iff $D_{\y} \supseteq D_{\y'}$ and all $\omega\in D_{\y'}$ satisfy $\y(\omega)\supseteq\y'(\omega)$ (see \cite[Subsection~2.2]{Rapsch2024DecisionA}). Then, first, in the order consistent case, the evaluation map on $\Tr\bullet\Omega = \{(\y,\omega) \in\Tr \times \Omega \mid \omega\in D_\y\}$, equipped with the order induced by the product of $\ge_\Tr$ on $\Tr$ and equality on $\Omega$, defines an order-isomorphism onto $(F,\supseteq)$ (\cite[Proposition~2.4]{Rapsch2024DecisionA}). Second, in the order consistent, surely non-trivial, and maximal case, $(\Tr,\ge_\Tr)$ is a decision tree in its own right and the set of its moves equals the set $\X$ of random moves of $F$ (\cite[Theorem~2.5]{Rapsch2024DecisionA}).  See \cite[Subsections~2.2, 2.3, 2.4]{Rapsch2024DecisionA} for a detailed discussion, including illustrations and examples.\smallskip

Random moves provide the basis for exogenous information revelation. The crucial modelling idea is based on adapting the concept of filtrations from probability theory, which provide a different and in probabilistic hindsight far more flexible way of generalising refined partitions on countable sets of a nature agent's ``choices'' (see \cite[Subsections~3.1, 3.2]{Rapsch2024DecisionA}). The corresponding definition is recalled next.
%%% Comment: In the first version of \cite{Rapsch2024DecisionA}, exogenous information structures are assumed to admit recall by definition. In the version from November 2024, we allow for more general ``flows'' of sub-sigma-algebras which need not be monotone in $\x\in\X$ and do, therefore, make it possible to ``forget''.

\begin{definition}[Definition 3.1 in \cite{Rapsch2024DecisionA}]\label{def:EIS}
    Let $(F,\pi,\X)$ be a stochastic decision forest on an exogenous scenario space $(\Omega,\ms E)$ and let $\tilde\X\subseteq\X$. An \emph{exogenous information structure on $\tilde\X$} is a family $\ms F = (\ms F_\x)_{\x\in\tilde\X}$ such that for all $\x\in\tilde\X$, $\ms F_\x$ is a sigma-algebra on $D_\x$ with $\ms F_\x\subseteq\ms E$. An exogenous information structure $\ms F$ is said to admit \emph{recall} iff for all $\x'\in\tilde\X$ with $\x \ge_\X \x'$ and every $E\in \ms F_\x$, we have $E\cap D_{\x'} \in \ms F_{\x'}$.
\end{definition}

For further explanation and examples, see \cite[Section~3]{Rapsch2024DecisionA}.\smallskip

The combination of stochastic decision forests and exogenous information structures makes it possible to define a concept of choices that is able to reconcile the classical decision-theoretic model of choice under uncertainty in its dynamic interpretation studied in \cite{AlosFerrer2016} on the one hand and on the other the general theory of stochastic processes. In short, choosing remains an act of refining partitions of global consequences (alias outcomes), but must be compatible with random moves, the exogenous information revealed at these and a given system of reference choices. This implements the basic principle from extensive form theory that at any ``move'' it is ``known'' to decision makers whether a given ``action'' is available to them or not.

One of the important facts the refined partitions approach formalised in \cite{AlosFerrer2005} and subsequent papers clarifies, is that the availability of a choice at a given move can be completely described in terms of the underlying set-theoretic structure: A choice is available at a move iff the latter is an immediate predecessor of the former. More precisely, if $(F,\pi,\X)$ is a stochastic decision forest on an exogenous scenario space $(\Omega,\ms E)$, $W=\bigcup F$ and $c\subseteq W$ is some subset (for instance, a union of nodes representing a choice), then, with
\[ \downarrow c = \{ x\in F \mid c \supseteq x \}, \]
let, as in the classical setting of \cite{AlosFerrer2005}, the set of \emph{immediate predecessors of $c$} be defined by:
\[ P(c) = \{x\in F \mid \exists y \in \downarrow c\colon \uparrow x = \uparrow y \setminus \downarrow c \}. \]

We now recall the definition of reference choices and adapted choices. For this, let us recall the definition going back to \cite{AlosFerrer2005} and reformulated in the setting of stochastic decision forests in \cite{Rapsch2024DecisionA} of a choice: In a stochastic decision forest, a \emph{choice} is a non-empty union of nodes.

\begin{definition}[Definition 4.2 in \cite{Rapsch2024DecisionA}]
    Let $(F,\pi,\X)$ be a stochastic decision forest on an exogenous scenario space $(\Omega,\ms E)$, let $\tilde \X\subseteq\X$, and let $\ms F$ be an exogenous information structure on $\tilde \X$.
    \begin{enumerate}
        \item A choice is said
        \begin{enumerate}
            \item \emph{non-redundant} iff for any $\omega\in\Omega$ with $P(c) \cap T_\omega = \emptyset$, we have $c\cap W_\omega = \emptyset$;
            \item \emph{$\tilde\X$-complete} iff for every random move $\x\in \tilde\X$, $\x^{-1}(P(c))$ is either empty or equal to $D_\x$;
            \item \emph{complete} iff it is $\X$-complete.
        \end{enumerate}
        \item For any random move $\x\in\X$, a choice $c$ is said \emph{available at $\x$} iff $\x^{-1}(P(c)) = D_\x$.
        \item A \emph{reference choice structure on $\tilde\X$} is a family $\ms C = (\ms C_\x)_{\x\in\tilde\X}$ of sets $\ms C_\x$ of non-redundant and $\tilde\X$-complete choices available at $\x$.
        \item Let $\ms C$ be a reference choice structure on $\tilde\X$. An \emph{$\ms F$-$\ms C$-adapted choice} is a non-redundant and $\tilde\X$-complete choice $c$ such that for all $\x\in\tilde\X$ that $c$ is available at and all $c'\in\ms C_\x$:
        \[ \x^{-1}(P(c \cap c')) = \{\omega\in D_\x \mid \x(\omega) \in P(c\cap c') \} \in \ms F_\x. \]
    \end{enumerate}
\end{definition}

For a detailed motivation and discussion of this definition, including examples, see \cite[Section~4]{Rapsch2024DecisionA}.

\subsection{Definition: Stochastic extensive forms}\label{subs:SEF}

The first paper, summarised in the preceding subsection, is about a model of stochastic decision forests that consistently combines exogenous and endogenous information flow and allows for a notion of choices adapted to exogenous information. While exogenous information is modelled through an additional structure, endogenous information is given in terms of the sets of immediate predecessors. In that sense, choices are ``adapted to'' endogenous information by construction. Given a stochastic decision forest $(F,\pi,\X)$, a set $I$ of agents, and families $\ms F = (\ms F^i)_{i\in I}$, $\ms C = (\ms C^i)_{i\in I}$ and $C = (C^i)_{i\in I}$ of exogenous information structures, reference choice structures, and sets of adapted choices, respectively, additional criteria are required in order to define an extensive form. First, choices must be partition refining along the trees: alternative future nodes must be separable by alternative choices and alternative choices must be disjoint, in any scenario; endogenous information sets must be disjoint; any possible future node must be compatible with some choice. Second, endogenous and exogenous information must be compatible: exogenous information must be identical across endogenous information sets. Finally, as we allow for multiple agents to act ``simultaneously'' at the same move (as in \cite{AlosFerrer2005,AlosFerrer2008,AlosFerrer2011Comment}), any admissible combination of choices must be compatible with at least one outcome.

The notation and definition that follow are generalisations of those from \cite{AlosFerrer2005,AlosFerrer2008,AlosFerrer2011Comment}. While the latter are based on (what we call) decision trees, the former are based on (more general) stochastic decision forests.

Let $I$ be a set. If $C = (C^i)_{i\in I}$ is a family of sets of choices on some stochastic decision forest $(F,\pi,\X)$, then for any move $x\in X$, any random move $\x\in \X$, and any $i\in I$,
$$ A^i(x) = \{c\in C^i \mid x\in P(c) \}, \qquad A^i(\x) = \{c\in C^i \mid \x^{-1}(P(c)) \neq \emptyset \} $$
are the sets of choices in $C^i$ \emph{available at} $x$, $\x$, respectively.  The notation $A$, commonly used and linked to the term ``action'', is discussed later in this subsection.

For any move $x\in X$, any random move $\x\in\X$, let
$$ J(x) = \{ i\in I \mid A^i(x) \neq \emptyset\}, \qquad J(\x) = \{i\in I \mid A^i(\x) \neq \emptyset\}. $$
For any $i\in I$, let 
\[ X^i = \{x\in X \mid i \in J(x) \}, \qquad \X^i = \{\x\in\X \mid i \in J(\x)\}. \]
Let $\X^i \bullet \Omega = \{(\x,\omega)\in\X^i\times \Omega\mid \omega\in D_\x\}$. 

Clearly, if all $c\in C^i$ are $\X^i$-complete, then we have $A^i(\x) = \{c\in C^i \mid \x^{-1}(P(c)) = D_\x\} = A^i(\x(\omega))$ and $J(\x) = J(\x(\omega))$ for all $\x\in\X^i$ and $\omega\in D_\x$. It is also clear that if $c\in C^i$ is $\X^i$-complete, then it is already complete.

\begin{definition}\label{def:SEF}
    Let $(\Omega,\ms E)$ be an exogenous scenario space. A \emph{stochastic pseudo-extensive form}, in short: \emph{$\psi$-\textsc{sef}}, \emph{on $(\Omega,\ms E)$}, is a tuple $\F = (F,\pi,\X,I,\ms F, \ms C,C)$ such that $(F,\pi,\X)$ is a stochastic decision forest on $(\Omega,\ms E)$, $I$ is a set, $C=(C^i)_{i\in I}$ is a family of sets of choices, $\ms F = (\ms F^i)_{i\in I}$ is a family of exogenous information structures on $\X^i$, $i\in I$, $\ms C = (\ms C^i)_{i\in I}$ is a family of reference choice structures on $\X^i$, $i\in I$, such that all elements of $C^i$ are $\ms F^i$-$\ms C^i$-adapted and the evaluation map $\X^i\bullet\Omega\to X$ is injective, for all $i\in I$, and satisfying the following axioms:
    \begin{enumerate}
        \item\label{def:SEF.P(c)} For all $i\in I$, all $c,c'\in C^i$ such that $P(c) \cap P(c') \neq \emptyset$ we have $P(c) = P(c')$, and for all $\omega\in\Omega$, we have either $c \cap W_\omega = c' \cap W_\omega$ or $c\cap c' \cap W_\omega = \emptyset$.
        \item\label{def:SEF.outcomes_faithful} For all $x\in X$ and all $(c^i)_{i\in J(x)} \in \bigtimes_{i\in J(x)} C^i$, we have
        $$ x\cap\bigcap_{i\in J(x)} c^i \neq \emptyset. $$
        \item\label{def:SEF.weak_separation} For all $y,y'\in F$ with $\pi(y) = \pi(y')$ and $y\cap y' = \emptyset$ there are $i\in I$ and $c,c'\in C^i$ such that $y\subseteq c$, $y\subseteq c'$ and $c\cap c' \cap W_{\pi(y)} = \emptyset$.
        \item\label{def:SEF.enough_choices} For all $x\in X$, all $i\in J(x)$, all $y\in \downarrow x\setminus\{x\}$, there is $c\in A^i(x)$ with $c\supseteq y$.
        \item\label{def:SEF.endo_exo_compatible} For all $i\in I$, $\x,\x'\in \X^i$ such that $A^i(\x) \cap A^i(\x') \neq \emptyset$, we have $\ms F^i_\x = \ms F^i_{\x'}$ and $\ms C^i_\x = \ms C^i_{\x'}$.
        \item\label{def:SEF.choice_completeness} For all $i\in I$, all $\ms F^i$-$\ms C^i$-adapted choices $c'$ such that
        \begin{enumerate}[label=(\roman*),ref=\theenumi{}(\roman*)]
            \item\label{def:SEF.choice_completeness.i} any $\omega\in\Omega$ with $c'\cap W_\omega \neq \emptyset$ admits $c\in C^i$ with $c'\cap W_\omega = c \cap W_\omega$,
            \item\label{def:SEF.choice_completeness.ii} and there is $c\in C^i$ with $P(c') = P(c)$,
        \end{enumerate}
         are choices for $i$, that is, satisfy $c'\in C^i$.
    \end{enumerate}
    The elements of $I$ are called \emph{agents} or \emph{decision makers}. For each agent $i\in I$, $\ms F^i$ is called \emph{$i$'s exogenous information structure}, $\ms C^i$ is called \emph{$i$'s reference choice structure}, and the elements of $C^i$ are called \emph{$i$'s choices}. For $i\in I$ and $x\in X$ ($\x\in \X$) $i$ is said \emph{active at $x$ ($\x$, respectively)} iff $i\in J(x)$ ($i\in J(\x)$, respectively). For any agent $i\in I$, the sets $P(c)$, $c\in C^i$, are called \emph{immediate predecessor sets of $i$'s choices}. For any agent $i\in I$, the elements of $X^i$ and $\X^i$ are called \emph{$i$'s moves} and \emph{$i$'s random moves}, respectively.

    If $\F$ is a stochastic pseudo-extensive form on $(\Omega,\ms E)$, then its items are typically denoted by
    \[ \F = (F,\pi,\X,I,\ms F,\ms C,C). \]

    A \emph{stochastic extensive form}, in short: \emph{\textsc{sef}}, \emph{on $(\Omega,\ms E)$}, is a stochastic pseudo-extensive form satisfying the following stronger separation axiom:
    \begin{enumerate}[label=3'.,ref=3']
        \item\label{def:SEF.separation} For all $y,y'\in F$ with $\pi(y) = \pi(y')$ and $y\cap y'= \emptyset$, there are $x\in X$, $i\in I$ and $c,c'\in C^i$ with $x\cap c\supseteq y$, $x\cap c'\supseteq y'$, $c\cap c' \cap W_{\pi(y)} = \emptyset$ and $x\in P(c) \cap P(c') \cap T_{\pi(y)}$. 
    \end{enumerate}

    A \emph{classical (pseudo-) extensive form} is the data $(F,I,C)$ for a stochastic (pseudo-) extensive form $\F$ on the singleton exogenous scenario space, respectively. 
\end{definition}

In other words, a stochastic pseudo-extensive form specifies a stochastic decision forest $(F,\pi,\X)$, a set of agents $I$, and for each agent $i\in I$, a ``dynamically updating oracle'' (exogenous information structure) $\ms F^i$ along $i$'s random moves, a set of reference choices $\ms C^i$ describing how $i$ can measure choices locally at each of $i$'s random moves, and a set $C^i$ of choices adapted to this data satisfying six axioms. Five of these axioms have already been motivated roughly, and all of them are discussed in more detail in the sequel. 

At this point, we make a first comparison with the notion of the ``extensive decision problem'' in the sense of \cite{AlosFerrer2005} and of the ``extensive form'' in the sense of \cite{AlosFerrer2011Comment}, reproduced and further developed in the monographic version in \cite[Definition 5.2, p.\ 118]{AlosFerrer2016}. Indeed, upon consulting the latter reference it becomes evident that, respectively, a triple $(T,I,C)$ is a classical (pseudo-) extensive form in the sense of the preceding definition iff $(T,C)$ is an (``extensive decision problem'') ``extensive form'' with set of ``players'' $I$ according to (\cite[Definition 4.1]{AlosFerrer2016}) \cite[Definition 5.2, p.\ 118]{AlosFerrer2016}, the tree $(T, \supseteq)$ is rooted, and $C = (C^i)_{i\in I}$ is such that for all $i\in I$ and all $c\in C^i$, $P(c) \neq \emptyset$.
In that sense, the concept of \cite[Definition 5.2]{AlosFerrer2016}-``extensive forms'' is naturally equivalent to the concept of classical extensive forms, which is naturally embedded into the concept of stochastic extensive forms. An analogous statement holds true for \cite[Definition 4.1]{AlosFerrer2016}-``extensive decision problems'' and classical pseudo-extensive forms. 

In the present text, the term ``pseudo-extensive forms'' is preferred over that of ``extensive decision problems'' because the latter, just as the term ``extensive (form) game'', can easily be understood to include a given preference relation on outcomes. This, however, is not the case and we wish to avoid any confusion about this. In this framing, pseudo-extensive forms as defined above, whether stochastic or classical, provide a form describing all possibilities of evolution, the information agents have about it, and what they can do. But to make a decision problem or game out of it, individual preferences about outcomes must be added. While the stronger separability property of extensive forms makes them more relevant in the end, their relaxed version (whatever its name) has been found to be important for understanding the problem of well-posedness in the classical case in \cite{AlosFerrer2008,AlosFerrer2011Comment}. This is why here as well both versions are introduced, but the naming is chosen such as to underline the importance of the stronger version.

The term ``classical'' is used instead of ``deterministic'' which, at first sight, might seem to be more compelling. This is because a certain class of stochastic (pseudo-) extensive forms can be represented as classical (pseudo-) extensive forms, as is discussed in Section~\ref{sec:well-posedness_equilibrium}. In the case of classical extensive forms, Axioms \ref{def:SEF.endo_exo_compatible} and~\ref{def:SEF.choice_completeness} become redundant, and the formulation of the remaining four axioms slightly shorter. The interpretation of the first four axioms in the general stochastic case, as discussed next, therefore resembles the discussion in \cite{AlosFerrer2008,AlosFerrer2011Comment,AlosFerrer2005} (see \cite{AlosFerrer2016} for a monographic treatment). 

Axiom \ref{def:SEF.enough_choices} means that at any move and for any possible future node, any active agent can choose not to discard it. Put less rigorously, it ensures that at any move any future node is compatible with some choice which is the last part of the refined partitions model. Axiom \ref{def:SEF.weak_separation} means that any pair of non-consecutive nodes from the same exogenous scenario $\omega$ can be separated by a pair of choices disjoint on $W_\omega$ and jointly available to one and the same agent, while Axiom \ref{def:SEF.separation} in addition requires this to be possible jointly at one and the same, and preceding move in $T_\omega$. Put less rigorously, both axioms ensure that alternative future nodes are separable by alternative choices, but to different extents. Axiom \ref{def:SEF.outcomes_faithful} means that at any move, all profiles of admissible choices by active agents are compatible with at least one outcome. This is a minimal requirement on the decision forest $F$ to be a faithful descriptor of outcomes, and is linked to the possibility that multiple agents can choose at once. This modelling ansatz, though non-standard compared to the historic literature, has been pursued in \cite{AlosFerrer2005,AlosFerrer2008,AlosFerrer2011Comment,AlosFerrer2016}. We adopt this convention because it simplifies the presentation of interactive settings.

\subsection{Information sets}

For the understanding of Axioms~\ref{def:SEF.P(c)} and~\ref{def:SEF.endo_exo_compatible}, let us note the following.
\begin{proposition}\label{prop:information_sets}
    Let $\F$ be a stochastic pseudo-extensive form and $i\in I$ an agent.
    \begin{enumerate}
        \item\label{prop:information_sets.P(c)_partition} $\{P(c) \mid c\in C^i\}$ is a partition of $X^i$.
        \item\label{prop:information_sets.A(x)_partition} $\{A^i(x) \mid x\in X^i\}$ is equal to $\{ A^i(\x) \mid \x \in \X^i\}$ and is a partition of $C^i$.
        \item\label{prop:information_sets.P(c)=P(c')} For all $c,c'\in C^i$, we have $P(c) = P(c')$ iff there is $x\in X$ with $c,c'\in A^i(x)$.
        \item\label{prop:information_sets.A(x)=A(x')} For all $x,x'\in X$, we have $A^i(x) = A^i(x')$ iff there is $c\in C^i$ with $x,x'\in P(c)$.
        \item\label{prop:information_sets.exists_mfP} There is a unique partition ${\mf P^i}$ of $\X^i$ such that for all $\x,\x'\in \X^i$ we have $A^i(\x) = A^i(\x')$ iff there is ${\mf p}\in{\mf P^i}$ such that $\x,\x'\in{\mf p}$.
        \item\label{prop:information_sets.Bij_mfP_P(c)} The assignment \[{\mf P^i} \ni {\mf p} \mapsto \bigcup_{\x\in\mf p} \im \x\] defines a bijection ${\mf P^i} \to \{P(c) \mid c\in C^i\}$.
        \item\label{prop:information_sets.msC_msF_const_on_mfP} For all ${\mf p}\in{\mf P^i}$, all $\x,\x'\in{\mf p}$, we have $D_\x = D_{\x'}$, $\ms C^i_\x = \ms C^i_{\x'}$, and $\ms F^i_\x = \ms F^i_{\x'}$.
    \end{enumerate}
\end{proposition}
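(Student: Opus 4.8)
The plan is to derive everything from Axiom~\ref{def:SEF.P(c)} --- the sets $P(c)$, $c\in C^i$, are pairwise equal-or-disjoint --- together with two bookkeeping facts. First, (i): $P(c)\neq\emptyset$ for every $c\in C^i$; otherwise non-redundancy of the adapted choice $c$ forces $c\cap W_\omega=\emptyset$ for all $\omega\in\Omega$, hence $c=\emptyset$ (the roots $W_\omega$ partition $W=\bigcup F$), contradicting that $c$ is a choice. Second, (ii): $\X^i$ covers $X^i$, i.e. every $x\in X^i$ equals $\x(\omega)$ for some $\x\in\X^i$, $\omega\in D_\x$, and then $A^i(\x)=A^i(\x(\omega))=A^i(x)$; indeed $\X$ covers $X$, so $x=\x(\omega)$ for some $\x\in\X$, and $\omega\in\x^{-1}(P(c))$ for a witnessing $c\in C^i$ forces $i\in J(\x)$, while $A^i(\x)=A^i(\x(\omega))$ is the observation made just before Definition~\ref{def:SEF} (valid since the elements of $C^i$ are $\X^i$-complete). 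Part~\ref{prop:information_sets.P(c)_partition} is then immediate: unwinding the definitions of $X^i$ and $A^i$ gives $X^i=\bigcup_{c\in C^i}P(c)$, the blocks are non-empty by (i) and pairwise equal-or-disjoint by Axiom~\ref{def:SEF.P(c)}.

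Parts~\ref{prop:information_sets.A(x)_partition}--\ref{prop:information_sets.A(x)=A(x')} are variations on the same theme. The set equality in~\ref{prop:information_sets.A(x)_partition} follows from (ii); $\{A^i(x)\mid x\in X^i\}$ covers $C^i$ by (i), its blocks are non-empty by the definition of $X^i$, and two blocks $A^i(x),A^i(x')$ sharing a choice $c$ coincide because for any $c'\in A^i(x)$ one has $x\in P(c)\cap P(c')$, so $P(c)=P(c')\ni x'$ by Axiom~\ref{def:SEF.P(c)}. Parts~\ref{prop:information_sets.P(c)=P(c')} and~\ref{prop:information_sets.A(x)=A(x')} follow by choosing witnesses: from $P(c)=P(c')$ a common $x\in P(c)$ (non-empty by (i)) gives $c,c'\in A^i(x)$, and $c,c'\in A^i(x)$ gives $x\in P(c)\cap P(c')$; dually, reading the forward direction of~\ref{prop:information_sets.A(x)=A(x')} for $x,x'\in X^i$, $A^i(x)=A^i(x')\neq\emptyset$ yields $c\in C^i$ with $x,x'\in P(c)$, while $x,x'\in P(c)$ places $c$ in $A^i(x)\cap A^i(x')$, so $A^i(x)=A^i(x')$ by~\ref{prop:information_sets.A(x)_partition}.

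For Part~\ref{prop:information_sets.exists_mfP}, $\x\sim\x'\Leftrightarrow A^i(\x)=A^i(\x')$ is an equivalence relation on $\X^i$, so $\mf P^i$ may be taken to be its set of classes, which has the stated property; uniqueness holds because that property forces the block of $\x$ to be $\{\x'\in\X^i\mid A^i(\x')=A^i(\x)\}$. For Part~\ref{prop:information_sets.Bij_mfP_P(c)}, the crux is that for $\mf p\in\mf P^i$, $\x_0\in\mf p$ and any $c\in A^i(\x_0)$ (non-empty as $\x_0\in\X^i$) one has $\bigcup_{\x\in\mf p}\im\x=P(c)$: ``$\subseteq$'', since for $\x\in\mf p$ and $\omega\in D_\x$, $c\in A^i(\x)=A^i(\x(\omega))$ means $\x(\omega)\in P(c)$; ``$\supseteq$'', since $x\in P(c)$ lies in $X^i$, so by (ii) $x=\x(\omega)$ with $A^i(\x)=A^i(x)\ni c$, hence $A^i(\x)=A^i(\x_0)$ by~\ref{prop:information_sets.A(x)_partition} and $\x\in\mf p$. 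This shows the assignment is well defined with values in $\{P(c)\mid c\in C^i\}$; surjectivity is the same computation started from a given $c\in C^i$ (using (i) to pick $x\in P(c)$), and injectivity follows since equal images $P(c_0)=P(c_1)$, with $c_k\in A^i(\x_k)$, $\x_0\in\mf p$, $\x_1\in\mf q$, give --- via an $\x\in\X^i$ over a point of that common set --- $c_0,c_1\in A^i(\x)$, so $A^i(\x_0)=A^i(\x)=A^i(\x_1)$ by~\ref{prop:information_sets.A(x)_partition}, hence $\mf p=\mf q$.

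Finally, for Part~\ref{prop:information_sets.msC_msF_const_on_mfP}: if $\x,\x'\in\mf p$ then $A^i(\x)=A^i(\x')$, which is non-empty since $\x\in\X^i$, so $A^i(\x)\cap A^i(\x')\neq\emptyset$ and Axiom~\ref{def:SEF.endo_exo_compatible} gives $\ms C^i_\x=\ms C^i_{\x'}$ and $\ms F^i_\x=\ms F^i_{\x'}$; then $D_\x=\bigcup\ms F^i_\x=\bigcup\ms F^i_{\x'}=D_{\x'}$, because a $\sigma$-algebra on a set has that set as its union (its maximum element). There is no real obstacle here; the only points worth a moment's care are that $D_\x=D_{\x'}$ drops out of the $\sigma$-algebra equality --- rather than requiring a separate comparison of $D_\x$ and $D_{\x'}$ through $\pi$ and $P(c)$ --- and keeping the three-way correspondence between the partitions of $X^i$, $\X^i$ and $C^i$ straight when proving bijectivity in Part~\ref{prop:information_sets.Bij_mfP_P(c)}.
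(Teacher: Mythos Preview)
Your proof is correct and follows essentially the same approach as the paper's: both derive everything from Axiom~\ref{def:SEF.P(c)} together with non-redundancy (for $P(c)\neq\emptyset$) and $\X^i$-completeness (for $A^i(\x)=A^i(\x(\omega))$), and both obtain $D_\x=D_{\x'}$ in Part~\ref{prop:information_sets.msC_msF_const_on_mfP} by reading off the ambient set from the $\sigma$-algebra equality produced by Axiom~\ref{def:SEF.endo_exo_compatible}. The only cosmetic difference is in the injectivity step of Part~\ref{prop:information_sets.Bij_mfP_P(c)}: the paper picks a single $x\in P(c)$ and writes it as $\x(\omega)=\x'(\omega)$ for $\x\in\mf p$, $\x'\in\mf p'$, whereas you route through a single auxiliary $\x\in\X^i$ and use Part~\ref{prop:information_sets.A(x)_partition} twice; both are equally valid.
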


\begin{remark}\label{rmk:prop_information_sets}
    In the proof of the preceding proposition in Section~\ref{subsec:appendix.proofs.1}, a bit more is shown. Namely, let $\F = (F,\pi,\X,I,\ms F,\ms F,C)$ be a tuple as in Definition~\ref{def:SEF} satisfying Axioms~\ref{def:SEF}.$k$, $k=1,\dots,5$, but not necessarily Axiom~\ref{def:SEF}.\ref{def:SEF.choice_completeness}. Then the conclusions of Proposition~\ref{prop:information_sets} hold true.
\end{remark}

Thus, the immediate predecessor sets partition $i$'s moves, and the sets of available choices partition $i$'s choices. Two choices have identical immediate predecessor sets iff they are available at a common move; and two moves $i$ is active at have identical available choices iff they are jointly immediate predecessors to one and the same choice. Moreover, immediate predecessor sets as well as the preceding statements can equivalently be formulated on the level of random moves, which gives rise to a model of endogenous information sets for stochastic extensive forms. This model is consistent with respect to exogenous information in that at two random moves belonging to the same endogenous information set ${\mf p}$ the same exogenous information is revealed. 

As a reaction to Proposition~\ref{prop:information_sets}, let, for ${\mf p}\in{\mf P^i}$ and $\x\in{\mf p}$:
$$ A^i({\mf p}) = A^i(\x), \qquad D_{\mf p} = D_\x, \qquad \ms F^i_{\mf p} = \ms F^i_\x, \qquad \ms C^i_{\mf p} = \ms C^i_\x. $$
Previous definitions about moves can be lifted accordingly. For instance, we call $c$ \emph{available} at an endogenous information set $\mf p$ iff $c \in A^i(\mf p)$. 

To fix names for the discussed concept of information, let us note that in stochastic pseudo-extensive forms, information is revealed along two different channels: exogenous information is revealed via $\ms F$, endogenous information (about agents' behaviour) is revealed via the position in $(\X,\ge_\X)$. Therefore, it is natural to decompose the property of perfect recall accordingly.

\begin{definition}
    Let $\F$ be a stochastic pseudo-extensive form and $i\in I$ be an agent. 
    \begin{enumerate}
        \item The elements of $\mf P^i$ are called \emph{$i$'s endogenous information sets}. 

        \item The set of choices of agent $i$ is said to admit and agent $i$ itself is said to have \emph{perfect endogenous information} iff all $\mf p\in \mf P^i$ are singletons and for all $j\in I\setminus\{i\}$, all $\x\in\X^i$, $\x'\in\X^j$, we have $\im\x \cap\im\x' = \emptyset$. Agent $i$ is said to have \emph{perfect exogenous information} iff $\ms F^i_\x = \ms E|_{D_\x}$ for all $\x\in\X^i$. Agent $i$ is said to have \emph{perfect information} iff $i$ has both perfect endogenous and exogenous information, and $\F$ is said to be of \emph{perfect information} iff this holds true for all $i\in I$.

        \item The set of choices of agent $i$ and agent $i$ itself are said to admit \emph{perfect endogenous recall} iff all $c,c'\in C^i$ and $\omega\in\Omega$ with $c\cap c'\cap W_\omega \neq \emptyset$ satisfy $c\cap W_\omega \supseteq c'\cap W_\omega$ or $c\cap W_\omega \subseteq c'\cap W_\omega$. Agent $i$ is said to admit \emph{perfect exogenous recall} iff ${\ms F}^i$ admits recall. Agent $i$ is said to admit \emph{perfect recall} iff $i$ admits both perfect endogenous and exogenous recall, and $\F$ is said so iff this holds true for all $i\in I$.
    \end{enumerate}
\end{definition}

Perfect recall with respect to endogenous information is defined by the requirement that for two choices available at moves along a given decision path the earlier one cannot condition on less endogenous information (this criterion is compatible with many classical definitions of perfect recall for a large class of classical extensive forms, see \cite[Subsections~6.4.1, 6.4.2]{AlosFerrer2016}). Perfect information with respect to endogenous information is essentially defined in the classical way, namely, by requiring information sets to be minimally small, among and across agents. Note, however, that we formulate the notion both for individual agents and the stochastic (pseudo-) extensive form as a whole. Perfect information with respect to exogenous information is defined analogously. However, it is not a very interesting case as no ``nature'' agent is supposed to act dynamically. The stochastic component of stochastic (pseudo-) extensive forms is relevant just because there may be agents having imperfect exogenous information (about the realised scenario). For the sake of an illustration of the above-defined notions it is shown in Lemma~\ref{lemma:perfect_endo_information_implies_perfect_endo_recall} that, as to be expected, perfect information implies perfect recall.

Perfect endogenous recall can be analysed along the lines of \cite{Ritzberger1999Recall}. Perfect recall, however, has a richer structure in the present setting with general sigma-algebras.
We also note that in stochastic extensive forms, no information set can be visited twice by a given decision path (as in the classical case discussed in \cite{AlosFerrer2005} and \cite{AlosFerrer2016}), which the author proposes to call the \emph{Heraclitus property}:

\begin{lemma}[Heraclitus Property]\label{lemma:Heraclitus_property}
   Let $(F,\pi,\X,I,\ms F,\ms C,C)$ be a stochastic pseudo-extensive form on an exogenous scenario space $(\Omega,\ms E)$. Then we have for all agents $i\in I$:
   \begin{enumerate}
       \item\label{lemma:Heraclitus_property.X} for all $x,x'\in X$ with $A^i(x) \cap A^i(x') \neq \emptyset$ and $x\supseteq x'$ we have $x=x'$;
       \item\label{lemma:Heraclitus_property.rmX} for all $\x,\x'\in\X$ with $A^i(\x) \cap A^i(\x') \neq \emptyset$ and $\x \ge_\X \x'$ we have $\x = \x'$.
   \end{enumerate}
\end{lemma}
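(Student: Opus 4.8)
The plan is to prove both parts essentially simultaneously, working on the level of random moves (part \ref{lemma:Heraclitus_property.rmX}) and deducing the node-level statement (part \ref{lemma:Heraclitus_property.X}) via the evaluation map $\X^i\bullet\Omega\to X$, which is injective by hypothesis in Definition~\ref{def:SEF}. First I would fix $i\in I$ and $\x,\x'\in\X^i$ with $\x\ge_\X\x'$ and some $c\in A^i(\x)\cap A^i(\x')$. By Proposition~\ref{prop:information_sets}.\ref{prop:information_sets.P(c)=P(c')} (or directly by definition of $A^i$ together with $\X^i$-completeness), $c\in A^i(\x)$ means $\x^{-1}(P(c)) = D_\x$, i.e.\ $\x(\omega)\in P(c)$ for all $\omega\in D_\x$, and likewise $\x'(\omega)\in P(c)$ for all $\omega\in D_{\x'}$. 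Since $\x\ge_\X\x'$ we have $D_\x\supseteq D_{\x'}$ and $\x(\omega)\supseteq\x'(\omega)$ for every $\omega\in D_{\x'}$. So it suffices to show: for a fixed $\omega\in D_{\x'}$, the nodes $x := \x(\omega)$ and $x' := \x'(\omega)$ satisfy $x\supseteq x'$, both lie in $P(c)$, and both lie in $T_\omega$ — and to conclude from this that $x = x'$; then injectivity of the evaluation map on $\X^i\bullet\Omega$ forces $\x = \x'$ (using that $D_\x = D_{\x'}$, which follows once we know $\x\ge_\X\x'$ and $\x'\ge_\X\x$, or can be obtained from Proposition~\ref{prop:information_sets}.\ref{prop:information_sets.msC_msF_const_on_mfP} after establishing $A^i(\x)=A^i(\x')$).

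The heart of the matter is thus the purely order-theoretic claim at a single node: if $x,x'\in P(c)\cap T_\omega$ with $x\supseteq x'$, then $x = x'$. Here I would unwind the definition $P(c) = \{z\in F \mid \exists y\in\downarrow c\colon \uparrow z = \uparrow y\setminus\downarrow c\}$. Pick witnesses $y\in\downarrow c$ with $\uparrow x = \uparrow y\setminus\downarrow c$ and $y'\in\downarrow c$ with $\uparrow x' = \uparrow y'\setminus\downarrow c$. Since $x,x'$ lie in the same tree $T_\omega$ and $x\supseteq x'$, we have $x\in\uparrow x'$, so $x\in\uparrow y'\setminus\downarrow c$; in particular $x\notin\downarrow c$, i.e.\ $c\not\supseteq x$, and $x\supseteq y'$. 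On the other hand $x'\in\uparrow x = \uparrow y\setminus\downarrow c$ is impossible to combine directly; instead the cleaner route is: from $x\supseteq y'$ and $y'\in\downarrow c$ one gets $x\in\downarrow(\downarrow c)$... no — rather, $y'\subseteq c$ and $y'\subseteq x$, and since $x$ is an immediate predecessor of $c$, the node $x$ is the last node before $c$ along the relevant chain, so any $y'\in\downarrow c$ with $x\supseteq y'$ actually forces $\uparrow y' = \uparrow x\cup(\text{stuff in }\downarrow c)$; comparing with $\uparrow x' = \uparrow y'\setminus\downarrow c$ gives $\uparrow x' = \uparrow x$, hence $x = x'$ as nodes of a decision forest (distinct nodes have distinct principal up-sets, since they are separated by a maximal chain). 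I would carry this out carefully using that $\uparrow x$ and $\uparrow x'$ are chains (forest property) and that $\downarrow c$ is a down-set union of nodes.

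The main obstacle I anticipate is precisely the bookkeeping in that single-node order argument: correctly handling the set $\uparrow y\setminus\downarrow c$ and showing that $x\supseteq x'$ together with both being immediate predecessors of $c$ in the same component collapses the two witnesses $y,y'$ and hence the two nodes. The analogous statement in the classical (tree) case is known from \cite{AlosFerrer2005,AlosFerrer2016}, so a secondary obstacle is making sure nothing in the argument used the tree structure rather than just the component $T_\omega$ — but since everything takes place within the single tree $T_\omega = \pi^{-1}(\{\omega\})$, which is a decision tree by Lemma~\ref{lemma:partion_of_forest} and Definition~\ref{def:sdf}, the classical argument transfers verbatim once $x,x'\in T_\omega$ is established. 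Finally, part \ref{lemma:Heraclitus_property.X} follows from part \ref{lemma:Heraclitus_property.rmX}: given $x,x'\in X$ with $A^i(x)\cap A^i(x')\neq\emptyset$ and $x\supseteq x'$, pick random moves $\x,\x'\in\X^i$ and scenarios with $\x(\omega)=x$, $\x'(\omega')=x'$; since $x,x'$ lie in the same component ($x\supseteq x'$), $\omega=\omega'$, and by Proposition~\ref{prop:information_sets}.\ref{prop:information_sets.exists_mfP} the equality $A^i(x)=A^i(x')$ holding (it does, by Proposition~\ref{prop:information_sets}.\ref{prop:information_sets.A(x)_partition} and the nonempty intersection) puts $\x,\x'$ in a common $\mf p\in\mf P^i$, whence $\x\ge_\X\x'$ follows from $x\supseteq x'$ and $D_\x=D_{\x'}$ (Proposition~\ref{prop:information_sets}.\ref{prop:information_sets.msC_msF_const_on_mfP}); applying part \ref{lemma:Heraclitus_property.rmX} gives $\x=\x'$, hence $x=x'$.
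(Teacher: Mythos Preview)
Your plan has two genuine gaps.

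\textbf{The core node-level argument fails.} The claim you isolate as ``the heart of the matter'' --- that $x,x'\in P(c)$ with $x\supseteq x'$ forces $x=x'$ as a \emph{purely order-theoretic} fact about immediate predecessors --- is false without using the extensive-form axioms. Here is a counterexample on a decision tree with outcomes $\{a,b,d,e\}$: take nodes $r=\{a,b,d,e\}$, $x=\{a,b,d\}$, $x'=\{a,b\}$, together with the singletons $\{a\},\{b\},\{d\},\{e\}$, and let $c=\{b,d\}=\{b\}\cup\{d\}$. Then $\downarrow c=\{\{b\},\{d\}\}$; the witness $\{d\}$ gives $\uparrow\{d\}\setminus\downarrow c=\{x,r\}=\uparrow x$, so $x\in P(c)$; the witness $\{b\}$ gives $\uparrow\{b\}\setminus\downarrow c=\{x',x,r\}=\uparrow x'$, so $x'\in P(c)$; yet $x\supsetneq x'$. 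Your asserted step ``any $y'\in\downarrow c$ with $x\supseteq y'$ forces $\uparrow y'=\uparrow x\cup(\text{stuff in }\downarrow c)$'' breaks precisely here: $x'$ sits strictly between $y'=\{b\}$ and $x$ but $x'\notin\downarrow c$. What kills this counterexample in an actual $\psi$-\textsc{sef} is Axiom~\ref{def:SEF}.\ref{def:SEF.enough_choices}, and the paper's proof uses exactly that: assuming $x\supsetneq x'$, Axiom~\ref{def:SEF}.\ref{def:SEF.enough_choices} produces $c'\in A^i(x)$ with $c'\supseteq x'$; since $A^i(x)=A^i(x')$ (Proposition~\ref{prop:information_sets}.\ref{prop:information_sets.A(x)_partition}), also $c'\in A^i(x')$, i.e.\ $x'\in P(c')$; but $c'\supseteq x'$ means $x'\in\downarrow c'$, contradicting $x'\in\uparrow x'=\uparrow y'\setminus\downarrow c'$. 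Three lines, no witness-juggling.

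\textbf{The reduction direction is backwards.} You propose to prove Part~\ref{lemma:Heraclitus_property.rmX} first and then deduce Part~\ref{lemma:Heraclitus_property.X}. But your deduction requires lifting $\x(\omega)\supseteq\x'(\omega)$ at a single $\omega$ to $\x\ge_\X\x'$, which means $\x(\omega'')\supseteq\x'(\omega'')$ for \emph{all} $\omega''\in D_{\x'}$. This does not follow from $D_\x=D_{\x'}$ and membership in a common $\mf p$; it would need order consistency of $(F,\pi,\X)$, which is not assumed in Definition~\ref{def:SEF}. The paper goes the other way: prove Part~\ref{lemma:Heraclitus_property.X} directly for arbitrary $x,x'\in X$ (the argument above), then for Part~\ref{lemma:Heraclitus_property.rmX} pick any $\omega\in D_{\x'}$, apply Part~\ref{lemma:Heraclitus_property.X} to $\x(\omega)\supseteq\x'(\omega)$ to get $\x(\omega)=\x'(\omega)$, and conclude $\x=\x'$ from injectivity of the evaluation map on $\X^i\bullet\Omega$. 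That direction needs no comparison of random moves at more than one scenario.
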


The possibility of crossing an information set twice is typically referred to as ``absent-mindedness'', as a response to the absent-minded driver story and the subsequent modelling idea both due to Piccione and Rubinstein in \cite{Piccione1997Interpretation}. Together with the preceding result, we might thus be tempted to reject either the hypothesis that ``absent-mindedness'' as a concept is compatible with classical decision theory, as brought up in \cite{Piccione1997Interpretation}, or the claim of the generality of extensive form modelling (compare the corresponding discussion in \cite{AlosFerrer2016}). The author believes that to resolve this seeming dilemma it might be helpful to distinguish between a phenomenon (such as the absent-minded driver story) and a model or formal attempt to analyse or describe a phenomenon.

Note that from a point of view of classical decision theory and in particular of refined partitions-based (stochastic) extensive forms, the \emph{phenomenon} of absent-mindedness as expressed in this story is not contradictory in itself. 
This has been discussed, for instance, in \cite{Aumann1997Absent,Gilboa1997Comment}. Gilboa, for instance, has proposed an alternative description which can be succinctly formulated in stochastic extensive form (see the third of the simple examples in \cite{Rapsch2024DecisionA}, rediscussed in Subsection~\ref{subs:SEF_simple-examples}). Moreover, both \cite{Aumann1997Absent} and \cite{Gilboa1997Comment} make disappear much of the paradoxical conclusions from \cite{Piccione1997Interpretation} which suggests that the latter arise rather from the model than from the phenomenon.

In that sense, the \emph{phenomenon} of absent-mindedness is not at odds with the refined-partitions theory of (stochastic) extensive forms. There is such a model describing the strategic phenomenon convincingly and concisely. To make that clear we have used the term ``Heraclitus property'' for the above-described \emph{formal property} as opposed to the word ``absent-mindedness'' which we use here only for the phenomenon (without denying its use in purely graph-based models that one cannot always make sense of from the rigorous decision-theoretic refined partitions-based perspective). This point of view also distinguishes the present treatment from the one in \cite{AlosFerrer2016}.\smallskip

Let us conclude this subsection with two remarks. The first is about the actual information flow perceived by a given agent. If an agent $i\in I$ is at random move $\x\in\X^i$, the three pieces of information the agent has are $A^i(\x)$, $\ms F_\x^i$, and $\ms C_\x^i$. From this, the agent can infer the current information set ${\mf p}\in{\mf P^i}$ alias $P(c)$, for $c\in A^i(\x)$, the fact that the realised scenario $\omega$ is an element of $D_\x$, and for any $E\in\ms F_\x^i$ the fact whether $\omega\in E$ or not, including $E = \x^{-1}(P(c \cap c'))$ for all $c'\in\ms C_\x^i$. But the consistency conditions imply that all of this does not reveal more information along the vertical tree axis (e.g.\ about which $\x'\in{\mf p}$ is the actual one) or about the horizontal scenario axis (e.g.\ about events not contained in $\ms F_\x^i$).

Also note that the refined partitions approach implemented through stochastic (pseudo-) extensive forms does not require action labels. Choices already implicitly contain the data specifying conditional on which endogenous information they can be made -- this point has been made in \cite{AlosFerrer2005,AlosFerrer2008,AlosFerrer2011Comment} already (for classical (pseudo-) extensive forms). This implies that there is no necessity to add action labels because they are already implicit in the definition of choices. In the present framework, a choice does not only tell whether to go left at one particular move, but also at which set of moves. So for instance, a choice can consist in going left at move $x_0$; but it can also consist in going left at any move at time $2$; or it can consists in going left if agent $j\neq i$ has gone left at time $1$. Moreover, the fact whether these choices are available or not determines the endogenous information the given agent has: knowing whether you are at move $x_0$ or not when you are actually there; knowing nothing about what agents did before time $2$; knowing whether $j$ has gone left at time $1$, respectively. This has partly been discussed in \cite[Section~4]{Rapsch2024DecisionA} and is further detailed in the upcoming examples. Consequently, for a stochastic (pseudo-) extensive form denoted as above, one could call the elements $c\in C^i$ \emph{actions} of agent $i$. Although this is avoided for reasons of simplicity, the standard notion $A^i(x)$ for the set of actions at move $x\in X$ is retained.

Moreover, note that Axiom \ref{def:SEF.P(c)} also includes the statement that for any agent $i\in I$, any pair of choices $c,c'\in C^i$ with $P(c) = P(c')$ when ``evaluated'' in a particular scenario $\omega\in\Omega$ is either equal or disjoint. Along any tree $T_\omega$, two choices available at the same move are either identical or disjoint (alias strict alternatives). 

\subsection{Completeness}

Finally, let us consider Axiom~\ref{def:SEF.choice_completeness}. This is a completeness axiom. Indeed, any tuple of the form $(F,\pi,\X,I,\ms F,\ms C,C)$ satisfying the conditions from Definition~\ref{def:SEF} except Axiom~\ref{def:SEF.choice_completeness} can be modified by extending the set of choices for any agent such that the result satisfies Axiom~\ref{def:SEF.choice_completeness} and is equivalent to $(F,\pi,\X,I,\ms F,\ms C,C)$. More precisely:

\begin{lemma}\label{lemma:completeness}
    Let $\F = (F,\pi,\X,I,\ms F,\ms C,C)$ be a tuple satisfying the conditions defining a stochastic extensive form on some exogenous scenario space $(\Omega,\ms E)$ possibly except Axiom~\ref{def:SEF.choice_completeness}, according to Definition~\ref{def:SEF}. For any $i\in I$, let $\hat C^i$ be the set of all $\ms F^i$-$\ms C^i$-adapted choices $\hat c$ such that 
    \begin{enumerate}[label=(\roman*),ref=(\roman*)]
        \item\label{lemma:completeness.def:hatC.i} any $\omega\in\Omega$ with $\hat c\cap W_\omega\neq \emptyset$ admits $c\in C^i$ satisfying $\hat c\cap W_\omega = c \cap W_\omega$,
        \item\label{lemma:completeness.def:hatC.ii} there is $c\in C^i$ such that $P(\hat c) \subseteq P(c)$.
    \end{enumerate}
    Let $\hat C = (\hat C^i)_{i\in I}$. \smallskip

    Then, $\hat \F = (F,\pi,\X,I,\ms F,\ms C,\hat C)$ defines a stochastic extensive form on $(\Omega,\ms E)$ such that
    \begin{enumerate}
        \item\label{lemma:completeness.property_1} for all $i\in I$, all $\omega\in\Omega$,
        \[ \{ \hat c \cap W_\omega \mid \hat c\in\hat C^i\} \setminus \{\emptyset\} = \{c \cap W_\omega \mid c\in C^i\} \setminus \{\emptyset\}; \]
        \item\label{lemma:completeness.property_2} for all $i\in I$, 
        \[ \{P(\hat c) \mid \hat c\in \hat C^i\} = \{ P(c) \mid c\in C^i\}. \]
    \end{enumerate}

    The lemma remains true if ``stochastic extensive form'' is replaced with ``stochastic pseudo-extensive form'' everywhere.
\end{lemma}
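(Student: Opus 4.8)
The plan is to show that $\hat\F$ extends $\F$ without altering any ``structural'' data, then to prove the key identity that $P(\hat c)$ is a block of the partition $\{P(c)\mid c\in C^i\}$ for every $\hat c\in\hat C^i$, and finally to read off Properties~\ref{lemma:completeness.property_1} and~\ref{lemma:completeness.property_2} and verify the six axioms of Definition~\ref{def:SEF} for $\hat\F$ by reducing each to the corresponding fact about $\F$. First I would note that $C^i\subseteq\hat C^i$ for all $i$: every $c\in C^i$ is $\ms F^i$-$\ms C^i$-adapted by hypothesis and satisfies \ref{lemma:completeness.def:hatC.i} (with $c$ itself) and \ref{lemma:completeness.def:hatC.ii} (with $P(c)\subseteq P(c)$). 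Writing $\hat X^i,\hat\X^i,\hat J,\hat A^i$ for the data attached to $\hat\F$, this gives $X^i\subseteq\hat X^i$; conversely, if $x\in\hat X^i$ choose $\hat c\in\hat C^i$ with $x\in P(\hat c)$ and, by \ref{lemma:completeness.def:hatC.ii}, $c\in C^i$ with $P(\hat c)\subseteq P(c)$, so $x\in P(c)\subseteq X^i$. Hence $\hat X^i=X^i$, whence $\hat\X^i=\X^i$, $\hat J=J$; so $\ms F^i,\ms C^i$ are still an exogenous information structure and a reference choice structure on the (unchanged) index set, all members of $\hat C^i$ are adapted by construction, and injectivity of the evaluation map on $\hat\X^i\bullet\Omega=\X^i\bullet\Omega$ is inherited. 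Thus $\hat\F$ meets the requirements of Definition~\ref{def:SEF} preceding the axioms, and Property~\ref{lemma:completeness.property_1} is immediate (``$\supseteq$'' from $C^i\subseteq\hat C^i$, ``$\subseteq$'' from \ref{lemma:completeness.def:hatC.i}).

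Next I would establish the claim $(\ast)$: for every $\hat c\in\hat C^i$ there is $c\in C^i$ with $P(\hat c)=P(c)$. By Remark~\ref{rmk:prop_information_sets}, Proposition~\ref{prop:information_sets} applies to $\F$; I use that $\{P(c)\mid c\in C^i\}$ partitions $X^i$ (its members being the ``blocks''), that all random moves in a common $\mf p\in\mf P^i$ have the same domain, and that the evaluation map $\X^i\bullet\Omega\to X$ is injective, together with the elementary facts that $P(d)\cap T_\omega$ depends only on $d\cap W_\omega$ and, for non-redundant $d$, is non-empty iff $d\cap W_\omega\neq\emptyset$. To prove $(\ast)$, fix $\hat c$ and, by \ref{lemma:completeness.def:hatC.ii}, pick $c_0\in C^i$ with $P(\hat c)\subseteq Q:=P(c_0)$; set $\Omega_0=\{\omega\mid\hat c\cap W_\omega\neq\emptyset\}$. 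For $\omega\in\Omega_0$, condition \ref{lemma:completeness.def:hatC.i} gives $e_\omega\in C^i$ with $\hat c\cap W_\omega=e_\omega\cap W_\omega$, whence $P(\hat c)\cap T_\omega=P(e_\omega)\cap T_\omega$, a non-empty subset of $Q$; as $P(e_\omega)$ and $Q$ are blocks this forces $P(e_\omega)=Q$, so $P(\hat c)\cap T_\omega=Q\cap T_\omega$. Since $P(\hat c)\cap T_\omega=\emptyset$ for $\omega\notin\Omega_0$, this yields $P(\hat c)=Q\cap F_{\Omega_0}$, so it remains to show $Q\subseteq F_{\Omega_0}$. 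As $\hat c$ is a non-empty, non-redundant union of nodes, $P(\hat c)\neq\emptyset$; pick a node $z\in P(\hat c)\subseteq X^i$ and, by the covering property of $\X$, a random move $\x\in\X^i$ with $z\in\im\x$; then $\X^i$-completeness of $\hat c$ gives $\x^{-1}(P(\hat c))=D_\x$, so $\im\x\subseteq P(\hat c)$ and $D_\x\subseteq\Omega_0$, while writing $Q=\bigcup_{\x'\in\mf p}\im\x'$ for the corresponding $\mf p\in\mf P^i$ and using injectivity of the evaluation map puts $\x\in\mf p$, so all random moves of $\mf p$ have domain $D_\x$ and $\{\omega\mid Q\cap T_\omega\neq\emptyset\}=D_\x\subseteq\Omega_0$, i.e.\ $Q\subseteq F_{\Omega_0}$. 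Hence $P(\hat c)=Q=P(c_0)$, which proves $(\ast)$ and, with $C^i\subseteq\hat C^i$, also Property~\ref{lemma:completeness.property_2}.

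Finally I would verify the axioms for $\hat\F$. Axioms~\ref{def:SEF.weak_separation}, \ref{def:SEF.separation} and~\ref{def:SEF.enough_choices} pass verbatim from $\F$ since $C^i\subseteq\hat C^i$ and $\hat A^i(x)\supseteq A^i(x)$. For Axiom~\ref{def:SEF.P(c)}: if $c,c'\in\hat C^i$ with $P(c)\cap P(c')\neq\emptyset$ then $P(c)=P(c')$ by $(\ast)$ (two meeting blocks), and when $c\cap W_\omega,c'\cap W_\omega\neq\emptyset$, condition \ref{lemma:completeness.def:hatC.i} and the restriction fact give $e,e'\in C^i$ with $e\cap W_\omega=c\cap W_\omega$, $e'\cap W_\omega=c'\cap W_\omega$ and $P(e)=P(e')=P(c)$, so the conclusion follows from Axiom~\ref{def:SEF.P(c)} for $\F$. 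Axiom~\ref{def:SEF.outcomes_faithful} is analogous: given $x\in X$ and available choices $c^i\in\hat A^i(x)$, $i\in J(x)=\hat J(x)$, replace each by $e^i\in A^i(x)\subseteq C^i$ with $e^i\cap W_{\pi(x)}=c^i\cap W_{\pi(x)}$, apply Axiom~\ref{def:SEF.outcomes_faithful} for $\F$, and use $x\subseteq W_{\pi(x)}$ to get $x\cap\bigcap_i c^i=x\cap\bigcap_i e^i\neq\emptyset$. Axiom~\ref{def:SEF.endo_exo_compatible} follows since, by $(\ast)$, any $\hat c\in\hat A^i(\x)\cap\hat A^i(\x')$ has $P(\hat c)=P(c)$ for some $c\in C^i$, whence $c\in A^i(\x)\cap A^i(\x')$ and Axiom~\ref{def:SEF.endo_exo_compatible} for $\F$ yields $\ms F^i_\x=\ms F^i_{\x'}$, $\ms C^i_\x=\ms C^i_{\x'}$. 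Axiom~\ref{def:SEF.choice_completeness} holds because conditions \ref{def:SEF.choice_completeness.i} and~\ref{def:SEF.choice_completeness.ii} relative to $\hat C$ chain with \ref{lemma:completeness.def:hatC.i} and~\ref{lemma:completeness.def:hatC.ii} respectively to give the defining conditions of $\hat C^i$, so every adapted choice admissible under Axiom~\ref{def:SEF.choice_completeness} already lies in $\hat C^i$. This shows $\hat\F$ is a stochastic extensive form; the pseudo-extensive case is word-for-word the same, with Axiom~\ref{def:SEF.weak_separation} in place of Axiom~\ref{def:SEF.separation}. I expect the only genuine difficulty to be the claim $(\ast)$ --- upgrading $P(\hat c)\subseteq P(c_0)$ to equality --- the remaining verifications being routine reductions to Proposition~\ref{prop:information_sets} and the axioms for $\F$.
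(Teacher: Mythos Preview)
Your proposal is correct and follows essentially the same approach as the paper's proof: both hinge on upgrading condition~\ref{lemma:completeness.def:hatC.ii} to the equality $P(\hat c)=P(c)$ for some $c\in C^i$ (your $(\ast)$, the paper's helpful statement (iii)), using the partition structure of Proposition~\ref{prop:information_sets} via Remark~\ref{rmk:prop_information_sets} together with $\X^i$-completeness of $\hat c$, and then verifying the axioms by reducing each to the corresponding axiom for $\F$. The only cosmetic difference is organisational: the paper packages $(\ast)$ together with the observation that the witness of \ref{lemma:completeness.def:hatC.i} already satisfies $P(e_\omega)=P(\hat c)$ into a single statement~(iii), whereas you derive this along the way in your proof of~$(\ast)$ and invoke it implicitly later (e.g.\ when asserting $e^i\in A^i(x)$ in the verification of Axiom~\ref{def:SEF.outcomes_faithful}).
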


By this lemma, completeness implies that choices are fully determined by their structure along any tree $T_\omega$, $\omega\in\Omega$, their immediate predecessor sets, and the informational structure along $\Omega$ given by $\ms F$ and $\ms C$. More precisely, any $C^i$, $i\in I$, is determined by the set of all $c\cap W_\omega$, $c\in C^i$, $\omega\in\Omega$, the family of sets $P(c)$, $c\in C^i$, and the families $\ms F^i$ and $\ms C^i$.

\subsection{Strategies}\label{subs:strategies}

In classical game and decision theory, a strategy is an agent's complete contingent plan of action (see e.g.\ \cite{MasColell1995}). As explained in \cite{AlosFerrer2005}, this can be viewed as a sequential version of acts in the theory of choice under uncertainty going back to Savage (\cite{Savage1972Foundations}). Acts map (given) ``states'' to (chosen) ``consequences''. 

\cite{AlosFerrer2005} takes the view that states are given by moves and consequences by available choices. Concerning ``states'', there is, however, a point of possible confusion which implies that there are different ways acts are formulated in the dynamic setting. One may argue that the point is not really the set of moves, but the information an agent has about them. In \cite{AlosFerrer2005}, as ``states'' are identified with moves, acts are required to assign identical consequences on information endogenous information sets. Although the present paper formally takes the perspective that endogenous information sets appear as the most precise interpretation of states, it perceives both viewpoints as equally convincing and equivalent.

On the other side, the refined partitions approach sees (local) ``consequences'' at a state as members of a partition of the set of (global) consequences (alias outcomes), that is, as choices in the sense of the present text. We see again that action is already implicitly described by choices, and no further structure of action labels or the like is needed, compare the discussion in the previous subsection. 

In order to ensure consistency (alias adaptedness) with respect to endogenous information in the mentioned sense, a strategy needs to assign to any of the agent's endogenous information sets a choice that is available at it. In addition, by restricting to adapted choices the local compatibility, or measurability, of acts with respect to exogenous information can be assured. This leads to the following definition.

\begin{definition}\label{def:strategy}
    Let $(F,\pi,\X,I,\ms F,\ms C,C)$ be a stochastic pseudo-extensive form and $i\in I$ an agent. 
    A \emph{strategy for $i$} is a map $s^i\colon {\mf P^i} \to C^i$ such that for all ${\mf p}\in{\mf P^i}$, $s^i({\mf p}) \in A^i({\mf p})$. 
    
    Let $S^i$ denote the set of strategies for $i$, and let $S = \bigtimes_{i\in I} S^i$. A \emph{strategy profile} is an element of $S$.
\end{definition}

As mentioned before, there are other ways of formally defining strategies which moreover seem more traditional. These interpret moves as states and impose restrictions on strategies in terms of (endogenous) information sets. As the setting of stochastic pseudo-extensive forms includes both moves and random moves, we obtain the following two temporary definitions. Let $i\in I$. 

An \emph{$X$-strategy for $i$} is a map $s^i\colon X^i \to C^i$ such that:
\begin{enumerate}
    \item for all $x\in X^i$, we have $s^i(x) \in A^i(x)$;
    \item for all $x,x'\in X^i$ with $A^i(x) = A^i(x')$, $s^i(x) = s^i(x')$.
\end{enumerate}
Clearly, this is equivalent to saying that for all $c\in C^i$, 
$$ \{x\in X^i \mid s^i(x) = c\} \in \{\emptyset, P(c)\}. $$
This definition is a direct formal generalisation of the corresponding definition in \cite[Subsection~5.2]{AlosFerrer2005} and exactly coincides with it in case of singleton $\Omega$.

An \emph{$\X$-strategy for $i$} is a map $s^i\colon \X^i \to C^i$ such that:
\begin{enumerate}
    \item for all $\x\in\X^i$, we have $s^i(\x) \in A^i(\x)$;
    \item for all $\x,\x'\in\X^i$ with $A^i(\x) = A^i(\x')$, $s^i(\x) = s^i(\x')$.
\end{enumerate}
Denote the set of $X$-strategies for $i$ by $S_X^i$ and the set of $\X$-strategies for $i$ by $S_\X^i$.

Furthermore, consider the natural surjections 
\begin{equation}\label{eq:Xi_surj}
    X^i \surj \X^i \surj {\mf P^i},
\end{equation}
with respect to which any map with domain ${\mf P^i}$ induces a map with domain $\X^i$ and any map with domain $\X^i$ induces a map with domain $X^i$, respectively. Then we have the following result.

\begin{proposition}\label{prop:strategies}
    Let $(F,\pi,\X,I,\ms F,\ms C,C)$ be a stochastic pseudo-extensive form, and $i\in I$ an agent. Then the maps in Equation~\ref{eq:Xi_surj} induce bijections
    $$ S^i \quad \stackrel{\cong}{\longrightarrow}\quad S_\X^i  \quad\stackrel{\cong}{\longrightarrow}\quad  S_X^i. $$  
\end{proposition}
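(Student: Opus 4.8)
The plan is to make the two natural surjections in Equation~\ref{eq:Xi_surj} fully explicit, check that precomposition with them carries strategies to strategies, and then observe that bijectivity is a soft consequence: surjectivity of the quotient maps makes precomposition injective, and the second defining clause of $X$- and $\X$-strategies says precisely that such maps are constant along the relevant fibres, so they factor.

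First I would pin down the surjection $X^i \twoheadrightarrow \X^i$. Given $x\in X^i$, the covering property of $\X$ (Definition~\ref{def:sdf}.\ref{def:sdf.X.cov}) yields $\x\in\X$ and $\omega\in D_\x$ with $\x(\omega)=x$; since $x\in X^i$ there is $c\in C^i$ with $x\in P(c)$, so $\omega\in\x^{-1}(P(c))$, hence $c\in A^i(\x)$ and $\x\in\X^i$. Uniqueness of such $\x$ is exactly the injectivity of the evaluation map $\X^i\bullet\Omega\to X$ that is part of the definition of a $\psi$-\textsc{sef}. Write $\iota\colon X^i\to\X^i$ for the resulting map. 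Because $\X^i$-completeness of $i$'s choices gives $J(\x)=J(\x(\omega))$ and $A^i(\x)=A^i(\x(\omega))$ for all $\x\in\X^i$, $\omega\in D_\x$, we get $\im\x\subseteq X^i$ and $\iota^{-1}(\{\x\})=\im\x$ for every $\x\in\X^i$; in particular $\iota$ is surjective (domains of random moves are nonempty). The second surjection $q\colon\X^i\twoheadrightarrow\mf P^i$ is the quotient onto the partition supplied by Proposition~\ref{prop:information_sets}.\ref{prop:information_sets.exists_mfP}, and on $\mf P^i$ the quantity $A^i(\mf p)$ is well defined.

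Next I would verify that the induced maps $\Phi\colon S^i\to S_\X^i$, $s\mapsto s\circ q$, and $\Psi\colon S_\X^i\to S_X^i$, $t\mapsto t\circ\iota$, land where claimed. For $\Phi$: $s(q(\x))\in A^i(q(\x))=A^i(\x)$, and if $A^i(\x)=A^i(\x')$ then $q(\x)=q(\x')$ by the defining property of $\mf P^i$, so $s\circ q$ satisfies both clauses of an $\X$-strategy. For $\Psi$: choosing $\omega$ with $\iota(x)(\omega)=x$ gives $A^i(\iota(x))=A^i(x)$ via $\X^i$-completeness, so $t(\iota(x))\in A^i(x)$; and $A^i(x)=A^i(x')$ forces $A^i(\iota(x))=A^i(\iota(x'))$, whence $t(\iota(x))=t(\iota(x'))$ since $t$ is an $\X$-strategy. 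Injectivity of $\Phi$ and $\Psi$ is immediate from surjectivity of $q$ and $\iota$.

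For surjectivity I would run the same two arguments in reverse. Given an $X$-strategy $\bar s$, its second clause combined with $A^i(\x(\omega))=A^i(\x)=A^i(\x(\omega'))$ shows $\bar s$ is constant on each fibre $\iota^{-1}(\{\x\})=\im\x$, so it factors as $\bar s=t\circ\iota$ for a unique $t\colon\X^i\to C^i$; the two clauses for $t$ follow by transporting those of $\bar s$ along $\iota$ (again using $A^i(\iota(x))=A^i(x)$), giving $t\in S_\X^i$ with $\Psi(t)=\bar s$. Likewise, an $\X$-strategy $t$ is constant on the blocks of $\mf P^i$ (its blocks are exactly the classes on which $A^i(\cdot)$ is constant), hence factors through $q$ as $t=s\circ q$ with $s\in S^i$ and $\Phi(s)=t$. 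The only genuinely non-formal point is the construction and fibre description of $\iota$, which is where the injectivity-of-evaluation axiom of a $\psi$-\textsc{sef} and $\X^i$-completeness of choices are indispensable; once $\iota$, and the well-definedness of $A^i$ on $\X^i$ and $\mf P^i$ from Proposition~\ref{prop:information_sets}, are in hand, the rest is the elementary fact that precomposition with a surjection is injective and restricts to a bijection on the functions that are constant along its fibres.
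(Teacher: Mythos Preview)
Your proposal is correct and follows essentially the same approach as the paper. The paper's proof is terser: it records the identities $A^i(x)=A^i(p_{X,\X;i}(x))$ and $A^i(\x)=A^i(p_{\X,\mf P;i}(\x))$, notes that the fibres of the composite $X^i\to\mf P^i$ (respectively $\X^i\to\mf P^i$) coincide with the level sets of $A^i$, and then simply invokes the universal property of the quotient in the category of sets; your argument spells out precisely this universal property by hand (constant-on-fibres $\Rightarrow$ factors), and additionally makes the construction of the surjection $X^i\to\X^i$ explicit via the covering axiom and injectivity of evaluation, which the paper leaves implicit.
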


Therefore, in the following, if $s^i$ is a strategy, both its corresponding $\X$- and $X$-strategy are denoted by $s^i$ as well.

\begin{remark}\label{rmk:strategies_adapted}
    Let $(F,\pi,\ms W,\X,I,\ms F,\ms C,C)$ be a stochastic pseudo-extensive form and $i\in I$ an agent. Any strategy for agent $i$ is adapted in the sense that for all $\x\in\X^i$ the choice $s^i(\x)$ is adapted, that is, for all reference choices $c'\in\ms C_\x$, and all $\x'\in\X^i$ that $s^i(\x)$ is available at: 
    $$ \x'^{-1}(P(s^i(\x)\cap c')) \in \ms F_{\x'}^i. $$
    This abstract adaptedness is a consequence of the structure of the underlying stochastic pseudo-extensive form. It is not part of the definition of a strategy and, in that sense, not a primitive of the theory. 

    In the case of action path stochastic pseudo-extensive forms this property is seen to correspond exactly to the adaptedness in the language of the theory of stochastic processes (see Subsection~\ref{subs:AP_SEF.information_histories_adapted_choices} and \cite[Subsection~4.4]{Rapsch2024DecisionA}).
\end{remark}

It is a standard procedure in game and decision theory to extend acts (alias strategies) so that they become maps from states to lotteries over consequences. As discussed in \cite{Anscombe1963Definition}, this is consistent with the theory of subjective probability. It is an elementary insight of game theory that, contrary to what one may naively infer from the single-agent situation, additional randomisation may improve coordination. In the dynamic setting, where states can be seen as endogenous information sets and consequences as available choices, this extension leads to the abstract notion of behaviour strategies. From an abstract point of view, a behaviour strategy is a complete contingent plan of lotteries over action. That is, the agent can draw an action at random at any endogenous information set. In contrast to this, an abstract mixed strategy is a lottery over complete contingent plans of action, that is, over strategies (which are then said ``pure'', in order to distinguish). That is, the agent can draw a ``pure'' strategy at random before the start and then commits to it from the beginning until the end. While for mixed strategies correlation over different endogenous information sets is possible, the lottery draws of a behaviour strategy at distinct endogenous information sets are independent.

Following von Neumann and Morgenstern, lotteries are interpreted as probability measures. The term ``lottery'' describes a typically unpredictable procedure of determining a consequence, for instance, the procedure consisting of observing the winner of a horse race. Yet, a probability measure describes only the statistical distribution of such a procedure's result whatever the meaning of ``statistical'' (frequentist, subjectivist, or other). It thus needs a way of transforming this abstract distribution into a procedure of the above-mentioned sort. In the literature, there are conflicting ways of doing so.\footnote{About the interpretation of randomised strategies, see the detailed discussion in \cite{Luce1989Games}.} 

The basic interpretation starts as follows: go to the horse race in question and observe the result. But then, one might either act blindly according to it or revise the own strategy in view of that new information. The point here is whether the horse race's result is stochastically independent of all exogenous information the agent has at that moment, and in particular, whether the agent has to commit to its result (as if another agent behaved within a mandate given by the original agent, or as if the horse race were some unconscious cognitive process). 

The second interpretation which does not require commitment can be implemented using an exogenous scenario space, as discussed in \cite{Aumann1974Subjectivity,Aumann1987Correlated}. In particular, it can be implemented in stochastic extensive forms by using strategies as defined above and profiles thereof. The randomisation is given in terms of exogenous scenarios, beliefs on these, and the dependence alias correlation structure across endogenous information sets and agents.\footnote{This is discussed in more detail in Section~\ref{sec:well-posedness_equilibrium}, once beliefs have been introduced.} In that sense, strategy profiles in stochastic extensive forms are profiles of correlated strategies whose correlation device and information structure are given by the exogenous scenario space, the exogenous information structures of the agents and beliefs. Also note that for the fundamental solution concept of Nash equilibrium (and its derivatives) the difference between the two interpretations above evaporates (there, commitment to the result of randomisation conforms to personal interest); hence, restricting on the second interpretation does not seem to imply a restriction.

\subsection{Simple examples}\label{subs:SEF_simple-examples}

In this subsection, we recall the three simple examples of stochastic decision forests, including the exogenous information structures admitting recall, reference choices structures, and adapted choices for them introduced in \cite{Rapsch2024DecisionA}. As we demonstrate here, these data do indeed yield stochastic extensive forms. The stochastic decision forest of the basic example is illustrated in \cite[Figure 1]{Rapsch2024DecisionA} which is reprinted as Figure~\ref{fig:simple_sdf} in the appendix.
It indicates \emph{pars pro toto} how finite stochastic extensive form problems can be formalised.

Formally, let $(\Omega,\ms E)$ be the discrete exogenous scenario space with exactly two scenarios, say $\Omega = \{\omega_1,\omega_2\}$, $\omega_1\neq\omega_2$, and $\ms E = \mc P\Omega$. 
Let $W = \Omega \times \{1,2\}^2$ and $\x_0,\x_1,\x_2\colon \Omega \to \mc P(W)$ given by $\x_0(\omega) = \{\omega\} \times \{1,2\}^2$ and $\x_k(\omega) = \{(\omega,k)\}\times \{1,2\}$, for $\omega\in\Omega$ and $k=1,2$. Further, let $F = \{\x_k(\omega) \mid \omega\in\Omega,~k=0,1,2\} \cup \{ \{w\} \mid w\in W\}$, and $\pi\colon F \to\Omega$ be the map sending any node to the first entry of an arbitrary choice among its elements.
The corresponding decision tree $(\Tr,\ge_\Tr)$ is illustrated in \cite[Figure 2]{Rapsch2024DecisionA} and reprinted as Figure~\ref{fig:simple_sdf_Tr} in the appendix. For the sake of simplicity, we consider only one agent $i$, i.e.\ $I = \{i\}$.

Regarding exogenous information, it is shown in \cite[Lemma~3.2]{Rapsch2024DecisionA} that there are exactly five exogenous information structure admitting recall, given by  the following families $\ms F^i = (\ms F^i_\x)_{\x\in\X}$:
\begin{enumerate}
    \item $\ms F^i_\x = \{\Omega,\emptyset\}$ for all $\x\in\X$: at all moves, it is unknown which scenario is realised;
    \item $\ms F^i_{\x_0} = \{\Omega,\emptyset\}$ and one of the following three cases is true:
    \begin{enumerate}
        \item $\ms F^i_{\x_1} = \ms F^i_{\x_2} = \mc P(\Omega)$: only at the second move, it becomes known which scenario is realised, irrespective of which one is the second move;
        \item $\ms F^i_{\x_1} = \mc P(\Omega)$, $\ms F^i_{\x_2} = \{\Omega,\emptyset\}$: $\x_1$ is the only move at that the realised scenario is revealed; an agent with this exogenous information may have interest in choosing (if possible) $\x_1$ rather than $\x_2$ in order to learn, modelling the trade-off \emph{exploration vs.\ exploitation}; that way, problems with partial information and adaptive control can be modelled;
        \item $\ms F^i_{\x_2} = \mc P(\Omega)$, $\ms F^i_{\x_1} = \{\Omega,\emptyset\}$: analogous to the preceding situation;
    \end{enumerate}
    \item $\ms F^i_\x = \mc P(\Omega)$ for all $\x\in\X$: at all moves, the realised scenario is known.
\end{enumerate}

Concerning choices, we formally generalise the following definitions made in \cite{Rapsch2024DecisionA}. Let $M$ be the set of maps $\Omega\to \{1,2\}$. For $k\in \{1,2\}$ and $f,g\in M$, \cite{Rapsch2024DecisionA} defines
\begin{align*}
    c_{f\bullet} =&~ \{(\omega,k',m')\in W \mid k' = f(\omega)\}, \\
    c_{k g} =&~  \{(\omega,k',m')\in W \mid k' = k, ~ m' = g(\omega)\}, \\
    c_{\bullet g} =&~\{(\omega,k',m')\in W \mid  m' = g(\omega)\}.
\end{align*}
$c_{k\bullet}$, $c_{\bullet m}$, and $c_{km}$ are defined by identifying $k,m\in\{1,2\}$ with the constant maps on $\Omega$ with values $k$ and $m$, respectively.  
Further, are defined $\ms C^i_{\x_0} = \{c_{1 \bullet},c_{2 \bullet}\}$, $\ms C^i_{\x_1} = \ms C^i_{\x_2} = \{c_{\bullet1},c_{\bullet2}\}$. We recall that the partitioned structure of these sets, reflecting the discreteness of the situation, should be noted. 

Next, consider the following table which is slightly different from the one in \cite[Subsection~4.3]{Rapsch2024DecisionA} because we are now interested in the consistency requirements defining stochastic extensive form, rather than just in providing a list of adapted choices. The table here reads as follows: Each line specifies a set of subsets of $W$ for one of the five exogenous information structures (\textsc{eis}) from \cite[Lemma~2.3]{Rapsch2024DecisionA} and recalled above; these subsets are classified according to whether they will correspond to choices at the beginning of the ``first period'' (at time $0$) or of the ''second period'' (at time $1$), if perceived as action path \textsc{sdf} according to \cite[Lemma~2.17]{Rapsch2024DecisionA}:
\begin{center}
    \begin{tabular}{r| c c}
     \textsc{eis}& 1st period & 2nd period  \\
     \hline
      1. & $c_{k \bullet}$ : $k\in\{1,2\}$ &$c_{k m}$ : $k,m\in\{1,2\}$ \\
      1. & $c_{k \bullet}$ : $k\in\{1,2\}$ &$c_{\bullet m}$ : $m\in\{1,2\}$ \\
      2.(a) &$c_{k \bullet}$ : $k\in\{1,2\}$ & $c_{k g}$ : $k\in\{1,2\}, g\in M$ \\
      2.(a) &$c_{k \bullet}$ : $k\in\{1,2\}$ & $c_{\bullet g}$ : $g\in M$ \\
      2.(b) & $c_{k \bullet}$ : $k\in\{1,2\}$ & $c_{1g}, c_{2m}$ : $m\in\{1,2\}, g\in M$ \\
      %2.(b) & $c_{k \bullet}$ : $k\in\{1,2\}$ & $c_{\bullet m}$ : $m\in\{1,2\}$ \\ Due to axiom SEF5, this case is excluded.
      2.(c) & $c_{k \bullet}$ : $k\in\{1,2\}$ & $c_{1m}, c_{2g}$ : $m\in\{1,2\}, g\in M$ \\
      %2.(c) & $c_{k \bullet}$ : $k\in\{1,2\}$ & $c_{\bullet m}$ : $m\in\{1,2\}$ \\ Due to axiom SEF5, this case is excluded.
      3. &  $c_{f \bullet}$ : $f\in M$ & $c_{k g}$ : $k\in\{1,2\}, g\in M$ \\
      3. &  $c_{f \bullet}$ : $f\in M$ & $c_{\bullet g}$ : $g\in M$ \\
    \end{tabular}
\end{center}

\begin{lemma}\label{lemma:simple_sef1}
    Let $I$ be a singleton, $i\in I$, $\ms F^i$ be any of the five preceding families, $\ms F = (\ms F^i)$, $\ms C^i$ as defined above, $\ms C = (\ms C^i)$, and $C^i$ be a set of choices corresponding to it via the preceding table, $C = (C^i)$. Then, the tuple $\F = (F,\pi,\X,I,\ms F,\ms C,C)$ defines a stochastic extensive form on $(\Omega,\ms E)$.
\end{lemma}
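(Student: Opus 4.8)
The plan is to check, one by one, the requirements of Definition~\ref{def:SEF} for $\F$. The structural ingredients are essentially inherited from \cite{Rapsch2024DecisionA}: there $(F,\pi,\X)$ (with $\X = \{\x_0,\x_1,\x_2\}$, each $\x_k$ of domain $\Omega$) is shown to be a stochastic decision forest, each of the five families $\ms F^i$ is an exogenous information structure on $\X$, $\ms C^i$ is a reference choice structure on $\X$, and the table of adapted choices in \cite[Subsection~4.3]{Rapsch2024DecisionA} already shows that every choice occurring in one of the eight rows is $\ms F^i$-$\ms C^i$-adapted. What remains is the bookkeeping and the six axioms.

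First I would compute, directly from the definitions, the immediate-predecessor sets of all choices in play: $P(c_{f\bullet}) = \im\x_0$ for every $f\in M$, $P(c_{kg}) = \im\x_k$ for every $k\in\{1,2\}$ and $g\in M$, and $P(c_{\bullet g}) = \im\x_1\cup\im\x_2$ for every $g\in M$ (with the obvious specialisations for constant parameters). It follows that $A^i(\x)\neq\emptyset$ for every $\x\in\X$ and $A^i(x)\neq\emptyset$ for every $x\in X$, so $\X^i = \X$ and $X^i = X$; the evaluation map $\X^i\bullet\Omega\to X$ is injective since the six moves $\x_k(\omega)$ are pairwise distinct; and $\ms C^i_{\x_0} = \{c_{1\bullet},c_{2\bullet}\}$ and $\ms C^i_{\x_1} = \ms C^i_{\x_2} = \{c_{\bullet 1},c_{\bullet 2}\}$ are indeed made of non-redundant, $\X$-complete choices available at the respective random moves (non-redundancy and $\X$-completeness being vacuous here, since every relevant $P(c)$ meets both $T_{\omega_1}$ and $T_{\omega_2}$ and every $\x^{-1}(P(c))$ equals $\emptyset$ or $\Omega$).

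Next I would dispose of Axioms~\ref{def:SEF.P(c)}, \ref{def:SEF.outcomes_faithful}, \ref{def:SEF.separation}, \ref{def:SEF.enough_choices} and~\ref{def:SEF.endo_exo_compatible}. Axiom~\ref{def:SEF.outcomes_faithful} is immediate since $|I|=1$: a choice available at a move $x$ contains a node $y$ with $\emptyset\neq y\subseteq x$. For Axiom~\ref{def:SEF.P(c)}, two choices in $C^i$ have intersecting immediate-predecessor sets only if these coincide --- being one of $\im\x_0$, $\im\x_1$, $\im\x_2$, $\im\x_1\cup\im\x_2$ --- and then, restricted to a root $W_\omega$, they take the parametrised shapes $\x_{f(\omega)}(\omega)$, $\{(\omega,k,g(\omega))\}$, $\{(\omega,1,g(\omega)),(\omega,2,g(\omega))\}$ respectively, which are pairwise equal or disjoint. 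Axiom~\ref{def:SEF.enough_choices} holds because at $\x_0$ the choices $c_{1\bullet},c_{2\bullet}$ (or, for the rows attached to EIS~3, all of the $c_{f\bullet}$) cover every successor node, and below $\x_k(\omega)$ the second-period choices of the given row cover the two terminal singletons. For Axiom~\ref{def:SEF.separation} I would run over the finitely many disjoint node pairs inside a fixed $T_\omega$: a pair straddling the two branches is separated at the root by $c_{1\bullet}$ and $c_{2\bullet}$, and a pair of terminal singletons below $\x_k(\omega)$ is separated at $\x_k(\omega)$ by two second-period choices of the row disagreeing at $\omega$ in the last coordinate (which are disjoint on $W_\omega$ even if not globally); Axiom~\ref{def:SEF.weak_separation} then follows. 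For Axiom~\ref{def:SEF.endo_exo_compatible}, the only pair of random moves to worry about is $(\x_1,\x_2)$; a choice available at both has immediate-predecessor set containing $\im\x_1\cup\im\x_2$, hence is of $c_{\bullet g}$-type, so the pair matters only for rows whose second-period choices are of that type --- but every EIS attached to such a row (namely EIS 1, 2(a), 3) satisfies $\ms F^i_{\x_1} = \ms F^i_{\x_2}$, while $\ms C^i_{\x_1} = \ms C^i_{\x_2}$ by construction.

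The main obstacle is the completeness Axiom~\ref{def:SEF.choice_completeness}, which I would prove by going through the eight rows. Let $c'$ be an $\ms F^i$-$\ms C^i$-adapted choice satisfying conditions~\ref{def:SEF.choice_completeness.i} and~\ref{def:SEF.choice_completeness.ii}; by~\ref{def:SEF.choice_completeness.ii}, $P(c')$ is one of the predecessor sets realised in the row at hand. Using the local facts that $\x_0(\omega)\in P(c')$ precisely when exactly one of $\x_1(\omega),\x_2(\omega)$ is contained in $c'$, and $\x_k(\omega)\in P(c')$ precisely when exactly one of $(\omega,k,1),(\omega,k,2)$ lies in $c'$, together with~\ref{def:SEF.choice_completeness.i}, one pins down each $c'\cap W_\omega$ to one of the parametrised shapes above, hence $c'$ itself to some $c_{f\bullet}$, $c_{kg}$, or $c_{\bullet g}$. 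Finally, adaptedness against $\ms C^i_\x$ translates into the requirement that the corresponding parameter-preimage, $f^{-1}(\{m\})$ or $g^{-1}(\{m\})$, lie in the relevant $\sigma$-algebra $\ms F^i_\x$: for a trivial $\ms F^i_\x$ this forces the parameter to be constant, for $\ms F^i_\x = \mc P\Omega$ it imposes nothing, and in each of the eight cases the resulting set of admissible $c'$ is exactly the list of first- and second-period choices of that row, so $c'\in C^i$. The argument is finite but somewhat fiddly, since one must keep track row by row of which parameters survive adaptedness at which random move.
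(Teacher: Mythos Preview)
Your proposal is correct and follows essentially the same route as the paper's proof: both import the structural ingredients from \cite{Rapsch2024DecisionA}, compute the immediate-predecessor sets explicitly (the paper cites its Lemma~Appendix~A.3 for this), and then verify Axioms~\ref{def:SEF.P(c)}--\ref{def:SEF.choice_completeness} by finite case inspection of the table, with the completeness axiom handled by reconstructing the shape of $c'$ from conditions~\ref{def:SEF.choice_completeness.i} and~\ref{def:SEF.choice_completeness.ii} and then using adaptedness to force the parameter into the admissible range for the given row. Your write-up is in fact slightly more explicit than the paper's (you spell out the local characterisations of $P(c')$ and the $\sigma$-algebra constraints on $f,g$), but the logical structure is identical.
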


\begin{remark}
    Note that for the exogenous information structure 2.(b), the set
    \[ \tilde C^i = \Big\{c_{k \bullet} \mid k\in\{1,2\}\Big \} \cup \Big\{ c_{\bullet m} \mid m\in\{1,2\}\Big\} \]
    gives not rise to a stochastic extensive form because Axiom \ref{def:SEF.endo_exo_compatible} is violated. A similar remark can be made regarding 2.(c). In other words, the exogenous information available at time $1$ would reveal the current random move though the endogenous information at that node would not do so (the endogenous information set would contain both random moves at time $1$). Axiom \ref{def:SEF.endo_exo_compatible} stipulates that such an inconsistency must not arise. 
\end{remark}

Next, the variant of the simple example from \cite{Rapsch2024DecisionA} is recalled. It starts from the preceding example, but identifies the elements $(\omega_1,2,1)$ and $(\omega_1,2,2)$ in $W$ which provides a stochastic decision forest with a random move that is not defined on all of $\Omega$, as illustrated in \cite[Figure 3]{Rapsch2024DecisionA}, reprinted in Figure~\ref{fig:simple_sdf_variant} in the appendix. 
Formally, this leads one to consider $W' = W\setminus\{(\omega_1,2,1),(\omega_1,2,2)\} \cup \{(\omega_1,2)\}$, $\x'_0 = \x_0$, $\x'_1 = \x_1$, and $\x'_2\colon\{\omega_2\} \to \mc P(W')$ given by $\x'_2(\omega_2) = \{(\omega_2,2)\}\times \{1,2\}$. The set of random moves is denoted by $\X' = \{\x'_0,\x'_1,\x'_2\}$. The domains are given by $D_{\x'_0}=\Omega$, $D_{\x'_1} = \Omega$, $D_{\x'_2}=\{\omega_2\}$, and the set of nodes is given by $F'=\{\x'(\omega) \mid \x'\in\X',~\omega\in D_{\x'}\} \cup \{\{w'\}\mid w'\in W'\}$. The projection $\pi'\colon F'\to\Omega$ is the map sending any node to the first entry of an arbitrary choice among its elements. The corresponding decision tree $(\Tr',\ge_{\Tr'})$ is illustrated in \cite[Figure 4]{Rapsch2024DecisionA}, reprinted in Figure~\ref{fig:simple_sdf_variant_Tr} in the appendix.

It is shown in \cite[Lemma~3.3]{Rapsch2024DecisionA} that there are exactly three exogenous information structures admitting recall, given by the three following families $(\ms F^{\prime i}_{\x'})_{\x'\in\X'}$.
In all three cases, one has $\ms F^{\prime i}_{\x'_2} = \{\{\omega_2\},\emptyset\}$. Furthermore:
\begin{enumerate}
    \item $\ms F^{\prime i}_{\x'_0} = \ms F^{\prime i}_{\x'_1} = \{\Omega,\emptyset\}$: again, there could be an exploitation vs.\ exploration trade-off (compare the cases 2(c) above);
    \item $\ms F^{\prime i}_{\x'_0} = \{\Omega,\emptyset\}$, $\ms F^{\prime i}_{\x'_1} =\mc P(\Omega)$: this is similar to case 2(a) above;
    \item $\ms F^{\prime i}_{\x'_0} = \ms F^{\prime i}_{\x'_1} = \mc P(\Omega)$: the realised scenario is known at any move (similar to case 3 above).
\end{enumerate}

Concerning choices, the following definitions are made in \cite{Rapsch2024DecisionA}. Again, $M$ denotes the set of maps $\Omega\to \{1,2\}$. For $k\in \{1,2\}$ and $f,g\in M$, \cite{Rapsch2024DecisionA} defines
\begin{align*}
    c'_{f\bullet} =&~ \{(\omega,k',m'), (\omega,k')\in W \mid k' = f(\omega)\}, \\
    c'_{k g} =&~ \{(\omega,k',m')\in W \mid k' = k, ~ m' = g(\omega)\}, \\
    c'_{\bullet g} =&~ \{(\omega,k',m') \in W \mid  m' = g(\omega)\}.
\end{align*}
$c'_{k\bullet}$, $c'_{\bullet m}$, and $c'_{km}$ are defined by identifying $k,m\in\{1,2\}$ with the constant maps on $\Omega$ with values $k$ and $m$, respectively.  
Further, are defined $\ms C^{\prime i}_{\x'_0} = \{ c'_{1 \bullet},c'_{2 \bullet}\}$, $\ms C^{\prime i}_{\x'_1} = \ms C^{\prime i}_{\x'_2} = \{c'_{\bullet1},c'_{\bullet2}\}$. 

Next, consider the following table. Its interpretation is completely analogous to the one from the basic version of the simple example.
\begin{center}
    \begin{tabular}{r| c c}
     \textsc{eis}& 1st period & 2nd period  \\
     \hline
      1. & $c'_{k \bullet}$ : $k\in\{1,2\}$ & $c'_{k m}$ : $k,m\in\{1,2\}$ \\
      %1. & $c'_{k \bullet}$ : $k\in\{1,2\}$ & $c'_{\bullet m}$ : $m\in\{1,2\}$ \\ Due to axiom SEF5, this case is excluded.
      2. & $c'_{k \bullet}$ : $k\in\{1,2\}$ & $c'_{k g}$ : $k\in\{1,2\}, g\in M$ \\
      %2. & $c'_{k \bullet}$ : $k\in\{1,2\}$ & $c'_{\bullet g}$ : $g\in M$ \\ Due to axiom SEF5, this case is excluded.
      3. & $c'_{f \bullet}$ : $f\in M$ & $c'_{k g}$ : $k\in\{1,2\}, g\in M$ \\
      %3. & $c'_{f \bullet}$ : $f\in M$ & $c'_{\bullet g}$ : $g\in M$ \\ Due to axiom SEF5, this case is excluded.
    \end{tabular}
\end{center}   

\begin{lemma}\label{lemma:simple_sef2}
    Let $I'$ be a singleton, $i\in I'$, $\ms F^{\prime i}$ be any of the three preceding families, $\ms F' = (\ms F^{\prime i})$, $\ms C^{\prime i}$ as defined above, $\ms C = (\ms C^{\prime i})$, and ${C^{\prime i}}$ be the set of choices corresponding to it via the preceding table, $C' = (C^{\prime i})$. Then, the tuple $\F' = (F',\pi',\X',I',\ms F',\ms C',C')$ defines a stochastic extensive form on $(\Omega,\ms E)$.
\end{lemma}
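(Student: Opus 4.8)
The plan is to verify directly the requirements of Definition~\ref{def:SEF} for $\F'$, most of which collapse to short finite checks because $I'$ is a singleton and the forest $(F',\pi',\X')$ has only five moves and seven outcomes. From \cite{Rapsch2024DecisionA} we already know that $(F',\pi',\X')$ is an order consistent, surely non-trivial, maximal stochastic decision forest, that each of the three families $\ms F^{\prime i}$ is an exogenous information structure admitting recall (\cite[Lemma~3.3]{Rapsch2024DecisionA}), that $\ms C^{\prime i}$ is a reference choice structure on $\X' = \{\x'_0,\x'_1,\x'_2\}$, and that every choice in the table is $\ms F^{\prime i}$-$\ms C^{\prime i}$-adapted, hence in particular a choice. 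Since $|I'| = 1$, every move and random move has $i$ active, so $X^{\prime i} = X' = \{\x'_0(\omega_1),\x'_0(\omega_2),\x'_1(\omega_1),\x'_1(\omega_2),\x'_2(\omega_2)\}$, $\X^{\prime i} = \X'$, and the evaluation map $\X^{\prime i}\bullet\Omega\to X'$ is injective by the order isomorphism of \cite[Proposition~2.4]{Rapsch2024DecisionA}. It therefore remains to compute the immediate predecessor sets and to check Axioms~\ref{def:SEF.P(c)}--\ref{def:SEF.choice_completeness}, with Axiom~\ref{def:SEF.weak_separation} strengthened to Axiom~\ref{def:SEF.separation}.

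First I would compute $P(\cdot)$ from the shape of the two trees (recalled in \cite[Figure~3]{Rapsch2024DecisionA}): the root $\x'_0(\omega)$ splits into $\x'_1(\omega)$ and the ``branch~$2$'' node, which is the terminal $\{(\omega_1,2)\}$ for $\omega = \omega_1$ and the move $\x'_2(\omega_2)$ for $\omega = \omega_2$, while $\x'_1(\omega)$ and $\x'_2(\omega_2)$ split into terminal nodes. This yields $P(c'_{f\bullet}) = \im\x'_0$ for every $f\in M$, $P(c'_{1g}) = \im\x'_1$ for every $g\in M$, and $P(c'_{2g}) = \im\x'_2 = \{\x'_2(\omega_2)\}$ for every $g\in M$ (specialising to the constant maps in row~1, and noting that $c'_{2g}$ depends only on $g(\omega_2)$, so there are just two such choices). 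As $\im\x'_0$, $\im\x'_1$, $\im\x'_2$ are pairwise disjoint and cover $X'$, we obtain at once: $\{P(c)\mid c\in C^{\prime i}\}$ partitions $X^{\prime i}$ (first clause of Axiom~\ref{def:SEF.P(c)}); the three sets $A^i(\x'_0)$, $A^i(\x'_1)$, $A^i(\x'_2)$ are pairwise disjoint and are respectively the first-period choices, the $c'_{1g}$, and the $c'_{2g}$, so that Axiom~\ref{def:SEF.endo_exo_compatible} holds vacuously; and $\mf P^{i} = \{\{\x'_0\},\{\x'_1\},\{\x'_2\}\}$ by Proposition~\ref{prop:information_sets}. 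For the second clause of Axiom~\ref{def:SEF.P(c)} one notes that two choices with equal immediate predecessor set are of the same type, and evaluating in a scenario $\omega$ shows that $c\cap W'_\omega$ and $c'\cap W'_\omega$ agree where the underlying branch functions agree on $\omega$ and are disjoint otherwise (the scenario $\omega_1$ contributing $\emptyset$ to both sides in the $c'_{2g}$ case).

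The structural Axioms~\ref{def:SEF.outcomes_faithful}, \ref{def:SEF.separation} and~\ref{def:SEF.enough_choices} I would then settle by running over the five moves and the finitely many disjoint pairs of nodes inside each tree. Axiom~\ref{def:SEF.outcomes_faithful} reduces, since $J(x) = \{i\}$ throughout, to $x\cap c\neq\emptyset$ for $c\in A^i(x)$, which holds because $x\cap c$ is a nonempty union of immediate successors of $x$ whenever $x\in P(c)$. For Axiom~\ref{def:SEF.separation}: if $y,y'$ are disjoint nodes of $T_\omega$, let $x\in T_\omega$ be the least node above both; if $x = \x'_0(\omega)$ separate them by $c'_{1\bullet}$ and $c'_{2\bullet}$, and if $x\in\{\x'_1(\omega),\x'_2(\omega_2)\}$ separate them by the two second-period choices with branch values $1$ and $2$ --- in each case $x\in P(c)\cap P(c')\cap T_\omega$, $x\cap c\supseteq y$, $x\cap c'\supseteq y'$ and $c\cap c'\cap W'_\omega = \emptyset$, and Axiom~\ref{def:SEF.weak_separation} follows a fortiori. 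Axiom~\ref{def:SEF.enough_choices} is immediate: at $\x'_0(\omega)$ the choice $c'_{1\bullet}$ contains $\x'_1(\omega)$ with its terminal successors and $c'_{2\bullet}$ contains the branch-$2$ node with its (zero or two) successors, while at $\x'_1(\omega)$ (resp.\ $\x'_2(\omega_2)$) each terminal successor lies in the second-period choice selecting its branch. All choices used here carry a constant branch function, hence lie in $C^{\prime i}$ in every one of the three EIS cases, since constant maps lie in $M$.

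The one step with genuine content, and the expected main obstacle, is Axiom~\ref{def:SEF.choice_completeness}, which requires classifying \emph{all} adapted choices with a prescribed immediate predecessor set. Let $c'$ be an $\ms F^{\prime i}$-$\ms C^{\prime i}$-adapted choice with $P(c')$ equal to some $P(c)$, $c\in C^{\prime i}$, and with every $c'\cap W'_\omega$ matching an element of $C^{\prime i}$ scenariowise; by the computation above $P(c')\in\{\im\x'_0,\im\x'_1,\im\x'_2\}$, and using non-redundancy and the shape of the trees one finds that $c'$ is necessarily of one of the forms $c'_{f\bullet}$, $c'_{1g}$, $c'_{2g}$ for some branch function. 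If $P(c') = \im\x'_2$ there are only the two listed choices, so $c'\in C^{\prime i}$. If $P(c') = \im\x'_1$ in case~1, or $P(c') = \im\x'_0$ in cases~1 and~2, then the relevant $\ms F^{\prime i}_{\x'}$ is $\{\Omega,\emptyset\}$, and feeding a non-constant branch function into the adaptedness condition against the reference choice $c'_{\bullet m}$ (resp.\ $c'_{m\bullet}$) produces $\x'^{-1}\bigl(P(c'\cap c'_{\bullet m})\bigr) = \{\omega\mid\mathrm{branch}(\omega) = m\}\notin\{\Omega,\emptyset\}$, a contradiction; hence the branch function is constant and $c'\in C^{\prime i}$. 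In every remaining sub-case the EIS in question carries $\mc P(\Omega)$ at the relevant random move, and the table already lists $c'_{f\bullet}$ (resp.\ $c'_{1g}$) for \emph{all} branch functions, so there is nothing left to prove. This gives Axiom~\ref{def:SEF.choice_completeness} and completes the verification that $\F'$ is a stochastic extensive form. Alternatively, one may check Axioms~\ref{def:SEF.P(c)}--\ref{def:SEF.endo_exo_compatible} and then invoke Lemma~\ref{lemma:completeness} on the sub-datum generated by the branch reference choices, after confirming that the completion it returns is exactly the $C^{\prime i}$ of the table.
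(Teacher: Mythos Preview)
Your proposal is correct and follows essentially the same approach as the paper: cite \cite{Rapsch2024DecisionA} for the basic data, then verify each axiom of Definition~\ref{def:SEF} by direct finite inspection, with the only substantive step being the case analysis for Axiom~\ref{def:SEF.choice_completeness}. The cosmetic difference is that you compute $P(\cdot)$ from the tree shape by hand whereas the paper cites \cite[Lemma~Appendix~A.4]{Rapsch2024DecisionA}, and you spell out the adaptedness contradiction against the reference choices more explicitly; one small expository slip is that ``$|I'|=1$ implies $i$ is active everywhere'' is only justified once you have shown that the sets $\im\x'_k$ cover $X'$ as immediate predecessor sets, which you do a paragraph later.
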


\begin{remark}
    Note that for all three exogenous information structures and any $g\in M$, $c'_{\bullet g}$ cannot be the choice of an agent for some stochastic extensive form on $(F',\pi',\X')$. This is because the exogenous information available at time $1$ necessarily reveals the current random move, while the endogenous information at any move at time $1$ does not for $c'_{\bullet g}$ (the endogenous information set of that choice would contain both random moves at time $1$). Axiom \ref{def:SEF.endo_exo_compatible} stipulates that such an inconsistency must not arise. 
\end{remark}

The third example is a representation of Gilboa's interpretation of the absent-minded driver phenomenon in \cite{Gilboa1997Comment} whose \textsc{sdf}, exogenous information structures and adapted choices are well known from \cite{Rapsch2024DecisionA}. For this, let $(\Omega,\ms E)$ an exogenous scenario space and $\rho\colon \Omega \to \{1,2\}$ be an $\ms E$-$\mc P\{1,2\}$-measurable surjection. Further, let $\xi^1,\xi^2$ be $[0,1]$-valued random variables. Suppose that $(\Omega,\ms E)$ is rich enough to admit a probability measure under that $(\rho,\xi^1,\xi^2)$ is independent whose marginals are uniformly distributed on $\{1,2\}$ and $[0,1]$, respectively. 
Further, let $D,H,M$ be three different symbols meaning ``disastrous region'', ``home'' and ``motel'' as in the original story from \cite{Piccione1997Interpretation}. Let $W = \Omega \times \{D,H,M\}$ and $\x_1,\x_2$ be maps defined on the whole of $\Omega$ given by
\[ \x_k(\omega) = \begin{cases} \{\omega\} \times \{D,H,M\}, &\text{ if } \rho(\omega) = k, \\ \{\omega\} \times \{H,M\}, & \text{ if } \rho(\omega) = {3-k}, \end{cases} \qquad k\in\{1,2\}. \]
$F$ is defined as the union of the images of $\x_1$ and $\x_2$ and the set of all singleton sets in $W$. $\pi\colon F \to \Omega$ maps any element of $F$ to the first component of its elements. $(F,\supseteq)$ and $(\Tr,\ge_\Tr)$ are illustrated in Figure \ref{fig:absent_minded_driver_Gilboa_sdf}. Note that $(\Tr,\ge_\Tr)$ is not even a forest, though $\x_1$ and $\x_2$ are maximal elements.

Further, let $I =\{1,2\}$ and $\ms F^i_{\x_i} = \sigma(\xi^i)$, that is, the agents have no information other than their private signals $\xi^i$. 
Let, for $i\in I $:
\[ \op{Ex}_i = \underbrace{[\rho^{-1}(i)\times\{D\}] \cup [\rho^{-1}(3-i)\times\{H\}]\}}_{=\text{``exit''}}, \qquad \op{Ct}_i = \underbrace{[\rho^{-1}(i)\times\{H,M\}] \cup [\rho^{-1}(3-i) \times \{M\}]}_{\text{=``continue''}},\]
and $\ms C_{\x_i} = \{ \op{Ex}_i,\op{Ct}_i\}$.
Futhermore, for any $E\in\ms F^i_{\x_i}$, let
\[ c_i(E) = (W_E \cap \op{Ex}_i)\cup(W_{E^\complement} \cap \op{Ct}_i), \]
the choice of ``agent'' to exit in the event $E$ and to continue in the opposite event $E^\complement$. $E$ might be thought about as an event independent of $\rho$, allowing for individual ``randomisation''. Let $C^i = \{c_i(E) \mid E\in\ms F^i_{\x_i}\}$. That is, at both random moves $\x_i$, $i\in I=\{1,2\}$, the active agent $i$ has two basic choices: ``exit'' and ``continue'', between that $i$ can randomise according to private information. $\ms C^i$ defines a reference choice structure on $\{\x_i\}$ and that $C^i$ is a set of $\ms F^i$-$\ms C^i$-adapted choices, for both $i\in I$. Let $\ms F = (\ms F^i)_{i\in I}$, $\ms C = (\ms C^i)_{i\in I}$, and $C = (C^i)_{i\in I}$. 

\begin{thm}\label{thm:absent_minded_driver_Gilboa_sef}
    $(F,\pi,\X,I,\ms F,\ms C,C)$ defines a stochastic extensive form with perfect recall and imperfect information. 
\end{thm}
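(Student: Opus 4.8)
The plan is to verify, in order, the conditions of Definition~\ref{def:SEF} for $\F=(F,\pi,\X,I,\ms F,\ms C,C)$ with $\X=\{\x_1,\x_2\}$, and then to check perfect recall and imperfect information separately. That $(F,\pi,\X)$ is a stochastic decision forest, that $\ms C^i$ is a reference choice structure on $\{\x_i\}$, and that each $c_i(E)$ is an $\ms F^i$-$\ms C^i$-adapted choice is recalled from \cite{Rapsch2024DecisionA}; that $\ms F^i=(\ms F^i_{\x_i})$ is an exogenous information structure on $\{\x_i\}$ is immediate since $\sigma(\xi^i)\subseteq\ms E$. The key preliminary step is to pin down the combinatorics of $(F,\supseteq)$: in scenario $\omega$ the tree $T_\omega$ has root $\x_{\rho(\omega)}(\omega)=\{\omega\}\times\{D,H,M\}$ with the two children $\{(\omega,D)\}$ and $\x_{3-\rho(\omega)}(\omega)=\{\omega\}\times\{H,M\}$, and the latter node has children $\{(\omega,H)\}$, $\{(\omega,M)\}$. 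Using the explicit description $c_i(E)\cap W_\omega=\op{Ex}_i\cap W_\omega$ if $\omega\in E$ and $c_i(E)\cap W_\omega=\op{Ct}_i\cap W_\omega$ otherwise, I would then compute directly that $P(c_i(E))=\im\x_i$ for every $E\in\sigma(\xi^i)$: one checks that $\x_i(\omega)$ is the least ancestor of a suitable leaf of $c_i(E)$ not contained in $c_i(E)$, that no other move qualifies, and that a terminal node is never an immediate predecessor. Since $\x_i$ is injective, this yields $\X^i=\{\x_i\}$, so that $\ms F^i$ and $\ms C^i$ are, as required, an exogenous information structure and a reference choice structure on $\X^i$ and the evaluation maps $\X^i\bullet\Omega\to X$ are injective; it also gives $J\big(\x_{\rho(\omega)}(\omega)\big)=\{\rho(\omega)\}$ and $J\big(\x_{3-\rho(\omega)}(\omega)\big)=\{3-\rho(\omega)\}$, so that exactly one agent is active at each move.

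With $\X^i=\{\x_i\}$ the remaining axioms reduce to scenariowise statements, which I would read off from the explicit formulas together with $\op{Ex}_i\cap\op{Ct}_i=\emptyset$. Axiom~\ref{def:SEF.endo_exo_compatible} is then vacuous, as is perfect exogenous recall. Axiom~\ref{def:SEF.P(c)} and perfect endogenous recall hold because all choices of $i$ have immediate predecessor set $\im\x_i$ and, scenariowise, two of them restrict either to the same one of the pieces $\op{Ex}_i\cap W_\omega$, $\op{Ct}_i\cap W_\omega$ or to disjoint pieces. Axiom~\ref{def:SEF.outcomes_faithful} holds because each $c_i(E)$ meets $W_\omega$ and only one agent is active at a move; Axiom~\ref{def:SEF.enough_choices} holds because $c_i(\Omega)=\op{Ex}_i$ and $c_i(\emptyset)=\op{Ct}_i$ together dominate every node strictly below $\x_{\rho(\omega)}(\omega)$, respectively $\x_{3-\rho(\omega)}(\omega)$; and the strong separation Axiom~\ref{def:SEF.separation} (hence also Axiom~\ref{def:SEF.weak_separation}) follows by separating any two disjoint nodes of $T_\omega$ with $c=\op{Ex}_i$ and $c'=\op{Ct}_i$ at the move $x\in\{\x_{\rho(\omega)}(\omega),\x_{3-\rho(\omega)}(\omega)\}$ of the agent $i$ that distinguishes them, using $\op{Ex}_i\cap\op{Ct}_i\cap W_\omega=\emptyset$ and $x\in P(\op{Ex}_i)\cap P(\op{Ct}_i)\cap T_\omega$. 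Combined with perfect endogenous recall, the vacuity of perfect exogenous recall gives that $\F$ has perfect recall.

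The step I expect to be the main obstacle is the completeness Axiom~\ref{def:SEF.choice_completeness}, where the set-theoretic ``choice'' data must interact with the $\sigma$-algebra $\sigma(\xi^i)$. Let $c'$ be an $\ms F^i$-$\ms C^i$-adapted choice satisfying~\ref{def:SEF.choice_completeness.i} and~\ref{def:SEF.choice_completeness.ii}. By~\ref{def:SEF.choice_completeness.ii} we have $P(c')=\im\x_i$, so $c'$ is available at $\x_i$ and, by the tree combinatorics, $c'\cap W_\omega\neq\emptyset$ for every $\omega$; by~\ref{def:SEF.choice_completeness.i} this forces $c'\cap W_\omega$ to be $\op{Ex}_i\cap W_\omega$ or $\op{Ct}_i\cap W_\omega$. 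Setting $E:=\{\omega\in\Omega\mid c'\cap W_\omega=\op{Ex}_i\cap W_\omega\}$ gives $c'=c_i(E)$, and it remains to see $E\in\sigma(\xi^i)$. For this I would use the adaptedness of $c'$ at $\x_i$ with the reference choice $\op{Ex}_i\in\ms C^i_{\x_i}$: since $\op{Ex}_i\cap\op{Ct}_i=\emptyset$ one gets $c'\cap\op{Ex}_i=W_E\cap\op{Ex}_i$, whence $P(c'\cap\op{Ex}_i)=\{\x_i(\omega)\mid\omega\in E\}$ by the combinatorics, so that $E=\x_i^{-1}\big(P(c'\cap\op{Ex}_i)\big)\in\ms F^i_{\x_i}=\sigma(\xi^i)$. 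Therefore $c'=c_i(E)\in C^i$, and $\F$ is a stochastic extensive form.

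It remains to show $\F$ is of imperfect information. Here endogenous information is in fact perfect, since $\mf P^i=\{\{\x_i\}\}$ is a singleton partition and $\im\x_1\cap\im\x_2=\emptyset$ (the first components differ); but agent $i$ does not have perfect exogenous information, because $\ms F^i_{\x_i}=\sigma(\xi^i)$ is a \emph{proper} sub-$\sigma$-algebra of $\ms E=\ms E|_{D_{\x_i}}$. To see strictness, take the probability measure $\Prob$ on $(\Omega,\ms E)$ under which $(\rho,\xi^1,\xi^2)$ is independent with $\rho$ uniform on $\{1,2\}$ (hypothesised to exist): then $\rho^{-1}(\{1\})\in\ms E$ has $\Prob$-probability $\tfrac12$, so it cannot belong to $\sigma(\xi^i)$, because any set in $\sigma(\rho)\cap\sigma(\xi^i)$ would be $\Prob$-independent of itself and hence of probability $0$ or $1$. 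Thus no agent has perfect information, so $\F$ is of imperfect information, completing the proof.
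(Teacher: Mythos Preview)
Your proof is correct and follows essentially the same route as the paper's: both arguments identify $P(c_i(E))=\im\x_i$, deduce $\X^i=\{\x_i\}$, and then verify the six axioms and Axiom~\ref{def:SEF.separation} scenariowise from the explicit description of $\op{Ex}_i$ and $\op{Ct}_i$. Your treatment of Axiom~\ref{def:SEF.choice_completeness} is in fact more explicit than the paper's, spelling out why $E=\x_i^{-1}(P(c'\cap\op{Ex}_i))$ lands in $\sigma(\xi^i)$ via adaptedness, and you also supply the perfect-recall and imperfect-information verifications that the paper's proof leaves implicit.

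Two minor points. First, for Axiom~\ref{def:SEF.separation} the paper gives a cleaner case split: if $\{(\omega,D)\}\in\{y,y'\}$ take $x=\x_{\rho(\omega)}(\omega)$, otherwise $\{y,y'\}=\{\{(\omega,H)\},\{(\omega,M)\}\}$ and one takes $x=\x_{3-\rho(\omega)}(\omega)$; your phrase ``the agent $i$ that distinguishes them'' encodes the same split but less transparently. Second, the parenthetical ``(the first components differ)'' is not quite the reason $\im\x_1\cap\im\x_2=\emptyset$: for a common $\omega$ the first components agree, and disjointness comes instead from $\x_1(\omega)$ and $\x_2(\omega)$ being distinct nodes of $T_\omega$ (one is the root $\{\omega\}\times\{D,H,M\}$ and the other the inner node $\{\omega\}\times\{H,M\}$). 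The conclusion is unaffected.
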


\section{Action path stochastic extensive forms}\label{sec:AP_SEF}

In most pieces of the literature, dynamic games are defined by supposing a notion of \emph{time} and specifying outcomes as certain paths of action at instants of time. \cite[Subsection~2.2]{AlosFerrer2005} provides a broad overview for this, including classical textbook definitions as in \cite{Fudenberg1991}, infinite bilateral bargaining in discrete time as in \cite{Rubinstein1982Perfect}, repeated games, the long cheap talk game in \cite{Aumann2003}, and a decision-theoretic interpretation of differential games as in \cite{Dockner2000}. In this series of papers the author desires to generalise this by, first, allowing for a large class of outcome paths and, second, adding a truly stochastic dimension. As an application, studied in the third paper, this will be used to add, in a precise approximate sense, stochastic control in not only in discrete, but also in continuous time (see, e.g.\ \cite{Pham2009Continuous,Bertsekas1996Stochastic,Karatzas1998Methods}) and stochastic differential games (see, e.g.\ \cite{Carmona2018}) without restrictions on the noise in question.

In \cite[Subsections~2.4, 3.4, 4.4]{Rapsch2024DecisionA}, a first part of this approach has been formalised in one abstract and general framework in terms of stochastic decision forests, that allows for general exogenous stochastic noise, therefore going strictly beyond the ``nature'' agent setting. The author insists on what has been said there, namely that ``this framework is based on a specific structure pertaining to all of these examples, namely \emph{time}. Interestingly, time is not included in the abstract formulation of decision forests, and it serves as a particularly strong similarity structure for trees and even branches of one and the same tree'' (\cite[Subsection~2.4, p.\ 15]{Rapsch2024DecisionA}). In this section, the construction is recalled and then it is shown under what conditions, how, and in what sense it gives rise to a stochastic extensive form.\smallskip

\subsection{Preliminaries: Action path stochastic decision forest data}

We start with recalling the construction by summarising \cite[Subsections~2.4, 3.4, 4.4]{Rapsch2024DecisionA}. Let $(\T,\le)$ be a total order admitting a minimum which we denote by $0$. Further, let $I$ be a non-empty finite set, $(\A^i)_{i\in I}$ be a family of non-empty metric spaces, and $\A = \prod_{i\in I} \A^i$ be their topological product, with canonical projections $p^i\colon\A \to \A^i$, $i\in I$. Of course, the case of singleton $I$ is included in this setting.

Let $(\Omega,\ms E)$ be an exogenous scenario space.
Let $W \subseteq \Omega\times\A^\T$ be such that for all $\omega\in\Omega$, there is $f\in\A^\T$ with $(\omega,f)\in W$. An outcome will thus be a pair of an exogenous scenario and a path $f\colon \T \to \A$ of ``action'', and any scenario is required to admit at least one outcome.
For any $t\in\T$ and $\tilde w = (\omega,f)\in\Omega\times\A^\T$, let \[x_t(\tilde w) = x_t(\omega,f) = \{ (\omega',f')\in W \mid \omega' = \omega, ~ f'|_{[0,t)_\T} = f|_{[0,t)_\T}\}.\] 
Let $F=\{x_t(w) \mid t\in\T, ~w\in W\} \cup \{\{w\} \mid w\in W\}$. Further, let $\pi\colon F\to \Omega$ be the unique function mapping any $x\in F$ to the first item of one choice of its elements.

For $(t,f)\in\T\times\A^\T$, let $D_{t,f} = \{\omega \in\Omega\mid |x_t(\omega,f)| \ge 2\}$. This will turn out as the event that $x_t(.,f)$ is a move.
We consider the following assumptions:
\begin{itemize}[label=--]
    \item\hypertarget{Ass:AP.SDF0}{\textbf{Assumption~AP.SDF0.}}~For all $t\in\T$ and $f\in\A^\T$, $D_{t,f}\in\ms E$. 
    \item\hypertarget{Ass:AP.SDF1}{\textbf{Assumption~AP.SDF1.}}~For all $w\in W$ and all $t,u\in\T$ with $t\neq u$ and $x_t(w) = x_u(w)$, we have $x_t(w) = \{w\}$.
    \item\hypertarget{Ass:AP.SDF2}{\textbf{Assumption~AP.SDF2.}}~For all $\omega\in\Omega$ and $\tilde f\in\A^\T$, and for all subsets $\T'\subseteq \T$ satisfying $x_t(\omega,\tilde f) \in F$ for all $t\in \T'$, there is $f\in\A^\T$ with $(\omega,f)\in W$ and $f|_{[0,t)_\T} = \tilde f|_{[0,t)_\T}$ for all $t\in \T'$.
    \item\hypertarget{Ass:AP.SDF3}{\textbf{Assumption~AP.SDF3.}}~For all $t\in\T$ and $f,g\in\A^\T$ such that $D_{t,f},D_{t,g}\neq\emptyset$ and $D_{t,f} \cap D_{t,g} = \emptyset$, there is $u\in [0,t)_\T$ such that $D_{u,f} \cap D_{u,g} \neq\emptyset$ and $f|_{[0,u)_\T} \neq g|_{[0,u)_\T}$.
\end{itemize}
For an interpretation of these conditions, the reader is referred to \cite[Subsection~2.4]{Rapsch2024DecisionA}. 
Further, let $\X$ be the set of maps
\[ \x_t(f) \colon D_{t,f} \to F, \omega \mapsto \x_t(f)(\omega) = x_t(\omega,f), \]
ranging over all $t\in\T$, $f\in \A^\T$ with $D_{t,f} \neq \emptyset$.

The tuple $(I,\A,\T,W)$ is called \emph{action path \textsc{sdf} data} on $(\Omega,\ms E)$ iff Assumptions \hyperlink{Ass:AP.SDF0}{AP.SDF$k$}, $k=0,1,2,3$, are satisfied. 
By \cite[Theorem~2.15]{Rapsch2024DecisionA}, given action path \text{sdf} data denoted as above, $(F,\pi,\X)$ defines an order consistent and maximal \textsc{sdf} on $(\Omega,\ms E)$. $(F,\pi,\X)$ is said to be the \emph{action path \textsc{sdf} induced} by the given action path \textsc{sdf} data. $\A = \prod_{i\in I} \A^i$ is called \emph{action space}, $\T$ its \emph{time axis}, $I$ the \emph{action index set} and $\A^i$ the \emph{$i$-th action space factor} of the action path \textsc{sdf} data, and \textit{a fortiori}, of the induced action path \textsc{sdf}.

For the remainder of this subsection, suppose that Assumptions \hyperlink{Ass:AP.SDF0}{AP.SDF$k$}, $k=0,1,2,3$, are satisfied so that $(F,\pi,\X)$ is an action path \textsc{sdf} on the exogenous scenario space $(\Omega,\ms E)$, with time axis $\T$, action index set $I$ and action space factors $(\A^i)_{i\in I}$. For any $x\in F$, let 
\[\T_x = \{t\in\T \mid \exists w\in x\colon x = x_t(w)\}. \]
By \cite[Lemma~2.9]{Rapsch2024DecisionA}, $\T_x$ is a singleton for any move $x\in X$. Let $\mf t\colon X\to \T$ be the map assigning to any move $x\in X$ the unique element $\mf t(x)$ of $\T_x$. By \cite[Lemma~2.16]{Rapsch2024DecisionA}, it is strictly decreasing and constant on the images of all random moves. Hence, it induces a strictly decreasing map $\X\to\T$ which is also denoted by $\mf t$. Sometimes, it is useful to extend these maps to $F$ and $\T$ in the following way. Let $\hat \T = \T \cup \{\T\}$ and extend $\le$ by letting $t \le \T$ for all $t\in\hat\T$.\footnote{We recall that this is a standard set-theoretic construction, introduced by von Neumann. In our context, it might seem suggestive and traditional to write $\infty$ for the element $\T\in\hat\T$. This, however, can be misleading at times, since $\T$ could already contain elements defined by limits, such as limit ordinals.} Further, let $\mf t(y) = \mf t(\y)  = \T$ for all $y\in F\setminus X$ and $\y\in\Tr \setminus\X$. Clearly, these extensions are still strictly decreasing.

The reader is referred to \cite[Lemma~2.17, Example 2.13, 2.14, 2.18]{Rapsch2024DecisionA} for a list of examples of action path stochastic decision forests (including timing problems, stochastic versions of ``differential games'' in the sense of \cite{AlosFerrer2005}, discrete-time problems, the simple examples from Subsection~\ref{subs:SEF_simple-examples}).

Concerning exogenous information structures for the action path \textsc{sdf} $(F,\pi,\X)$, let us recall here that given an exogenous information structure admitting recall and $f\in\A^\T$ such that all $t\in\T$ satisfy $D_{t,f} = \Omega$, $(\ms F_{\x_t(f)})_{t\in \T}$ defines a filtration in the usual sense of probability theory (a special case of \cite[Lemma~3.4]{Rapsch2024DecisionA}). Conversely, exogenous information structures with recall can be constructed by taking the filtration generated by a process depending in a non-anticipative way on the action path. For formal details, the reader is referred to \cite[Subsection~3.4]{Rapsch2024DecisionA}.

In stochastic extensive forms, action is described by choices and action labels are not necessary. In action path \textsc{sdf}, however, outcomes are paths of points in a space $\A$ which we call actions, and so choices are to be defined as certain sets of action paths. In order to close this loop and make both operations dual to each other, a choice is defined by a specific range of values in $\A$ at a given time, given a certain set of past trajectories of ``actions'' (as elements of $\A$). This is what has been proposed in \cite[Subsection~4.4]{Rapsch2024DecisionA} and the following builds upon.

Let $t\in\T$. For any set $A_{<t}\subseteq \A^{[0,t)_\T}$ and any family $A_t = (A_{t,\omega})_{\omega\in\Omega}\in \mc P(\A)^\Omega$ of subsets of $\A$, let
\[ c(A_{<t},A_t) = \{(\omega,f)\in W \mid f|_{[0,t)_\T} \in A_{<t},~ f(t) \in A_{t,\omega} \}. \]
Following \cite[Subsection~4.4]{Rapsch2024DecisionA}, for any $t\in\T$, $\ms C_t$ is defined as the set of all $c(A_{<t},A_t)$ ranging over all $A_{<t}\subseteq\A^{[0,t)_\T}$ and all families $A_t = (A_{t,\omega})_{\omega\in\Omega}\in \mc P(\A)^\Omega$ of subsets of $\A$ satisfying the following assumptions: 
\begin{itemize}[label=--]
    \item \hypertarget{Ass:AP.C0}{\textbf{Assumption AP.C0.}}~$c(A_{<t},A_t)\neq\emptyset$.
    \item \hypertarget{Ass:AP.C1}{\textbf{Assumption AP.C1.}}~For all $w\in c(A_{<t},A_t)$, there is $w'\in x_t(w)\setminus c(A_{<t},A_t)$. 
    \item \hypertarget{Ass:AP.C2}{\textbf{Assumption AP.C2.}}~For all $f\in\A^\T$ with $f|_{[0,t)_\T} \in A_{<t}$, we have \[x_t(\omega,f)\cap c(A_{<t},A_t) \neq \emptyset\] for all or for no $\omega\in D_{t,f}$.
\end{itemize}
For a detailed interpretation and analysis of these conditions, see \cite[Subsection~4.4, especially Lemmata 4.7, 4.8, 4.9]{Rapsch2024DecisionA}, where it has been shown that the set of immediate predecessors of elements of $\ms C_t$ can be easily calculated, that all elements of $c\in \ms C_t$ define complete and non-redundant choices, and that the time $\mf t(\x)$ of all random moves $\x\in\X$ that $c$ is available at is equal to $t$.

\subsection{Information, history structures, and adapted choices}\label{subs:AP_SEF.information_histories_adapted_choices}

The next steps consist in determining reference choice structures and adapted choices for any $i\in I$. While reference choice structures and adapted choices in \cite[Subsection~4.4]{Rapsch2024DecisionA} have been studied with respect to their consistency with respect to exogenous information, it remains to clarify their consistency with endogenous information. In action path \textsc{sdf}s, endogenous information at time $t$ can be modelled via partitions of histories prior to that $t$, as is clarified in the sequel.

For this, we introduce new notation. For $i\in I$, $t\in\T$, any subset $\T_t\subseteq \T$ with $[0,t)_\T \subseteq \T_t$, and all $f\in\A^{\T_t}$, let $D_{t,f}^i$ be the set of $\omega\in \Omega$ such that there are $f',f''\in\A^\T$ with $(\omega,f'), (\omega,f'')\in W$, $f'|_{[0,t)_\T} = f|_{[0,t)_\T} = f''|_{[0,t)_\T}$ and $p^i \circ f'(t) \neq p^i \circ f''(t)$.  This set will turn out as the event that, given the historic path $f|_{[0,t)_\T}$, agent $i$ can choose. Clearly, $D_{t,f}^i \subseteq D_{t,f}$, if $\T = \T_t$. In Proposition~\ref{prop:Dfti} this is stated in higher generality and it is seen that equality holds true unless $D_{t,f}^i = \emptyset$. Given action path \textsc{sdf} data $(I,\A,\T,W)$ and $i\in I$, let $\tilde\X^i = \{\x_t(f) \mid t\in\T,\,f\in\A^\T\colon D_{t,f}^i \neq \emptyset\}$.

\begin{definition}\label{def:mcH}
    Let $(I,\A,\T,W)$ be action path \textsc{sdf} data on an exogenous scenario space $(\Omega,\ms E)$ as before and $\ms F = (\ms F^i)_{i\in I}$ be a family of exogenous information structures $\ms F^i = (\ms F^i_\x)_{\x\in\tilde\X^i}$ on $\tilde\X^i$, $i\in I$. A \emph{history structure for $i$, given $(I,\A,\T,W,\ms F)$} is a family $\mc H^i = (\mc H^i_t)_{t\in\T}$ such that:
    \begin{enumerate}
        \item\label{def:mcH.partition} for each $t\in\T$, $\mc H^i_t$ is a partition of the set of all $f\in \A^{[0,t)_\T}$ such that $D_{t,f}^i \neq \emptyset$, 
        \item\label{def:mcH.msF_compatible} for all $t\in\T$, $A_{<t}\in \mc H^i_t$, and $f,f'\in A_{<t}$ we have $\ms F_{\x_t(f)}^i = \ms F_{\x_t(f')}^i$.
    \end{enumerate}
\end{definition}

The elements of $\mc H^i$ serve as a basis for the model of endogenous information sets in terms of partitions of histories, consistent with exogenous information. Axiom \ref{def:mcH}.\ref{def:mcH.partition} is meant to ensure that for any $i\in I$ and $t\in\T$, any element of $\mc H^i_t$ partitions the set of possible histories for moves of $i$ at time $t$. Axiom \ref{def:mcH}.\ref{def:mcH.msF_compatible} postulates the consistency with respect to exogenous information. Note that $\ms F_{\x_t(f)}^i = \ms F_{\x_t(f')}^i$ implies $D_{t,f} = D_{t,f'}$.

In the following, we fix, for any $i\in I$, an exogenous information structure $\ms F^i = (\ms F^i_\x)_{\x\in\tilde\X^i}$ and let $\ms F = (\ms F^i)_{i\in I}$. In addition, we fix a family $\mc H = (\mc H^i)_{i\in I}$ of history structures $\mc H^i$ for $i$, given $(I,\A,\T,W,\ms F)$, $i\in I$. For all $t\in\T$, $i\in I$, and $\x\in\tilde\X^i$ with $\mf t(\x) = t$, let $\ms C^i_{\x}$ the set of \emph{all} sets $c(A_{<t},A_t)$ as above such that
\begin{enumerate}
    \item\label{def:msC.1} $A_{<t}\in\mc H^i_t$;
    \item\label{def:msC.2} $A_t = (A_{t,\omega})_{\omega\in\Omega}$ such that there is $A_t^i\in \ms B(\A^i)$ satisfying, for all $\omega\in \Omega$, \[A_{t,\omega} = \begin{cases} (p^i)^{-1}(A^i_t), &\quad \omega\in D_\x, \\ \emptyset, &\quad \omega\notin D_\x; \end{cases}\]
    \item\label{def:msC.3} $c(A_{<t},A_t)\in\ms C_t$; and
    \item\label{def:msC.4} for all $\omega\in D_{\x}$, $\x(\omega) \cap c(A_{<t},A_t) \neq\emptyset$.
\end{enumerate}
These properties are referenced as ($\ms C^i_\x$.\ref{def:msC.1}) etc. In contrast to \cite[Subsection~4.4]{Rapsch2024DecisionA}, we have now specified the history structure implicit in $\ms C^i_\x$. $c(A_{<t},A_t)$ allows for choosing a measurable set of actions in the $i$-th action space factor at the random move $\x$ given an endogenous past $A_{<t}$ at time $t$, as specified by $\mc H$. By \cite[Proposition~4.10]{Rapsch2024DecisionA}, $\ms C^i = (\ms C_\x^i)_{\x\in\tilde\X^i}$ defines a reference choice structure for any $i\in I$. 

\begin{remark}
    By \cite[Lemma~2.17]{Rapsch2024DecisionA}, the ``simple examples'' for stochastic decision forests discussed also in Subsection~\ref{subs:SEF_simple-examples} can be represented as action path \textsc{sdf}. However, the reference choice structures considered in \cite[Lemmata 4.3, 4.5]{Rapsch2024DecisionA}, and recalled in Subsection~\ref{subs:SEF_simple-examples}, are not those obtained by the previous construction. Actually, the previous construction is obtained from the former by element-wise intersecting with the known endogenous past. 
    
    Let us spell this out for the basic version. Denote by $\ms C^{\text{AP}}$ the action path construction and by $\ms C^{\text{orig}}$ the ``original'' definition from \cite{Rapsch2024DecisionA}, recalled in Subsection~\ref{subs:SEF_simple-examples}. Then, recalling the table in Subsection~\ref{subs:SEF_simple-examples}, describing the different combinations of exogenous information structures, reference choice structures and sets of choices:
    \begin{enumerate}
        \item $\ms C^{\text{AP}}_{\x_0} = \ms C^{\text{orig}}_{\x_0}$, and
        \item for both $k=1,2$:
        \begin{enumerate}
            \item if there is a bullet in the second period, that is, for the second, fourth or eighth line of the table, then: $\ms C^{\text{AP}} = \ms C^{\text{orig}}$ 
            \item else, that is, for the first, third, fifth, sixth, seventh line, then: 
            \[ \ms C^{\text{AP}}_{\x_k} = \{ c \cap \im \x_k \mid c\in \ms C^{\text{orig}}_{\x_k} \} = \{ c \cap c_{k\bullet} \mid c\in \ms C^{\text{orig}}_{\x_k} \}. \]
        \end{enumerate}
    \end{enumerate}

    While the action path construction is more accurate in that it precisely conditions on the piece of endogenous information the agent really has, the sets of adapted choices are identical under both specifications of the reference choice structure. After all, reference choice structures are about $\ms B(\A^i)$ rather than about $\A^{[0,t)_\T}$. In \cite{Rapsch2024DecisionA}, therefore, it was not relevant to make that distinction, and for the simple examples it does not really matter. In general, however, the general construction given for action path \textsc{sdf} data given above is preferable, because available actions may vary across different pasts $A_{<t}$ so that a reference choice for a given past may no more be a choice for another past.
\end{remark}

Next, the actual choices prospective agents make are introduced, and, again, we do so with explicit reference to $\mc H$. Let $i\in I$, $A_{<t}\in\mc H^i_t$, $D\in\ms E$, and $g\colon D\to\A^i$. Let $A_t^{i,g} = (A_{t,\omega}^{i,g})_{\omega\in\Omega}$ be given by
\[ A_{t,\omega}^{i,g} = \begin{cases} \{a\in \A \mid p^i(a) = g(\omega) \}, &\quad\omega\in D, \\ \emptyset, &\quad \omega\notin D. \end{cases} \]
Let $c(A_{<t},i,g) = c(A_{<t},A_t^{i,g})$. Very similarly to what has been said in \cite[Subsection~4.4]{Rapsch2024DecisionA}, if this set is an element of $\ms C_t$, it models the choice to take, given an endogenous history in $A_{<t}$, the action $g(\omega)$ in the $i$-th action space factor in scenario $\omega\in D$ at time $t$.

Further assumptions are needed to construct a stochastic (pseudo-) extensive form out of all this. First, as we wish to interpret action indices as agents, simultaneous actions in different action space factors should be independent from each other. Furthermore, we require \emph{in fine} that scenariowise choices available at a move $x = \x(\omega)$, for some $(\x,\omega) \in \X\bullet\Omega$ can be extended a) to a representative class of elements of $\ms C^i_\x$ and b) to adapted choices of the form $c(A_{<t},i,g)$. In this sense, this requires $W$ to be measurable with respect to the Borel $\sigma$-algebra on $\A$. In addition, separation assumptions with respect to $W$ are proposed that makes it possible to separate outcomes via choices in the relevant way.

In the following, let us call a set $\mc M$ \emph{stable under non-trivial intersections} iff for all $A,B\in\mc M$ with $A\cap B\neq\emptyset$, we have $A\cap B\in\mc M$. This property is equivalent to saying that $\mc M \cup \{\emptyset\}$ is stable under intersections. For example, any partition has this property. 

\begin{itemize}[label=--]
    \item\hypertarget{Ass:AP.SEF0}{\textbf{Assumption AP.SEF0.}}~For all subsets $J\subseteq I$, all $t\in\T$, all $(f_j)_{j\in J} \in(\A^{\T})^J$ and $\omega\in \Omega$ with $(\omega,f_j)\in W$ for all $j\in J$ and $f_j|_{[0,t)_\T} = f_{j'}|_{[0,t)_\T}$ for all $j,j'\in J$, there is $f\in \A^\T$ such that 
    \[ \forall j\in J\colon \quad \Big[ p^j \circ f(t) = p^j \circ f_j(t), \quad (\omega,f) \in x_t(\omega,f_j)\Big]~. \]
    \item\hypertarget{Ass:AP.SEF1}{\textbf{Assumption AP.SEF1.}}~For all $i\in I$, $t\in\T$, $f\in\A^\T$ such that $D_{t,f}^i \neq \emptyset$ and the unique $A_{<t}\in\mc H^i_t$ with $f|_{[0,t)_\T}\in A_{<t}$, there is a generator $\ms G(\A^i)$ of $\ms B(\A^i)$ stable under non-trivial intersections such that for all $G\in\ms G(\A^i)$, upon letting $A_t^{i,G} = (A_{t,\omega}^{i,G})_{\omega\in\Omega}$ be given by $A_{t,\omega}^{i,G} = (p^i)^{-1}(G)$ for $\omega\in D_{t,f}$ and $A_{t,\omega}^{i,G} = \emptyset$ for $\omega\notin D_{t,f}$, we have
    \[ c(A_{<t},A_t^{i,G}) \in \ms C_{\x_t(f)}^i. \]
    \item\hypertarget{Ass:AP.SEF2}{\textbf{Assumption AP.SEF2.}}~For all $i\in I$, $t\in\T$, $f\in\A^\T$, $\omega\in D_{t,f}^i$ with $(\omega,f)\in W$, and the unique $A_{<t}\in\mc H^i_t$ satisfying $f|_{[0,t)_\T}\in A_{<t}$, there is a map $g\colon D_{t,f} \to \A^i$ such that $p^i\circ f(t) = g(\omega)$, $c(A_{<t},i,g)\in \ms C_t$, $c(A_{<t},i,g)$ is $\ms F^i$-$\ms C^i$-adapted, and for all $(\omega',{f'}_{<t})\in D_{t,f}\times A_{<t}$ there is $f'\in\A^\T$ satisfying $(\omega',f')\in c(A_{<t},i,g)$ and $f'|_{[0,t)_\T} = {f'}_{<t}$.
    \item\hypertarget{Ass:AP.psi-SEF3}{\textbf{Assumption AP.$\psi$-SEF3.}}~ For all $f,f'\in\A^\T$ and $t_0\in\T$ with $f(t_0)\neq f'(t_0)$ and $\omega\in\Omega$ such that $(\omega,f),(\omega,f')\in W$, there are $t\in\T$ and $i\in I$ such that $t\le t_0$, $p^i\circ f(t) \neq p^i \circ f'(t)$ and $\omega\in D_{t,f}^i \cap D_{t,f'}^i$.
    \item\hypertarget{Ass:AP.SEF3}{\textbf{Assumption AP.SEF3.}}~ For all $f,f'\in\A^\T$ with $f\neq f'$ and $\omega\in\Omega$ such that $(\omega,f),(\omega,f')\in W$, the set
    \[ \{ t\in\T \mid f(t) \neq f'(t)\}\]
    has a minimum.
\end{itemize}

Assumption \hyperlink{Ass:AP.SEF1}{AP.SEF1} is tightly linked to Assumption AP.C3 in \cite[Subsection~4.4]{Rapsch2024DecisionA} (see Theorem~\ref{thm:adapted_choices_mb_functions} later in this section). Hence, if we assume \hyperlink{Ass:AP.SEF1}{AP.SEF1}, then by \cite[Theorem~4.13]{Rapsch2024DecisionA}, provided that $c = c(A_{<t},i,g)\in\ms C_t$, the $\ms F^i$-$\ms C^i$-adaptedness of the choice $c(A_{<t},i,g)$ is equivalent to the $\ms F_\x^i$-measurability of $g|_{D_\x}$ for all $\x\in\tilde\X^i$ that $c$ is available at. Note that Assumption \hyperlink{Ass:AP.SEF3}{AP.SEF3} is satisfied if all $f\in\A^\T$ with $(\omega,f)\in W$ for some $\omega\in\Omega$ are locally right-constant, with respect to the order topology on $\T$.\footnote{By definition, $f$ is locally right-constant iff for every non-maximal $t\in\T$ there is $u\in\T$ with $t<u$ such that $f|_{[t,u)_\T}$ is constant.} This is necessarily true if $(\T,\le)$ is a well-order.

It has been discussed in some detail in \cite[Subsection~4.4]{Rapsch2024DecisionA} that these choices can model dependence on exogenous and endogenous information independently from each other. For instance, if $A_{<t} = \{f|_{[0,t)_\T}\}$ for some $f\in\A^\T$ such that $D_{t,f}^i\neq\emptyset$, for an action index $i\in I$, the prospective agent corresponding to $i$ who can choose $c(A_{<t},i,g)$ can make her or his choice dependent on whether the past actions are described by $f$ or not -- while $\ms F_{\x_t(f)}$ could well be very coarse allowing only a small amount of functions $g$, that is, a weak dependence on the exogenous scenario $\omega$. Similar remarks can be made in the opposite case and in mixed regimes. This means that ``open'' and ``closed loop'' decision making and control can be understood with respect to endogenous and exogenous information independently: in the example just mentioned, the loop could be closed with respect to endogenous information, but need not be with respect to exogenous information. See \cite[Subsection~4.4]{Rapsch2024DecisionA} for more details, Proposition~\ref{thm:link_endogenous_information_H} below for the link between $\mc H$ and the endogenous information structure, and \cite[Section~3]{Rapsch2024DecisionA} for more details regarding exogenous information structures.\smallskip

For each $i\in I$, let $C^i$ be the set of all sets of the form $c(A_{<t},i,g)$ where $t\in\T$, $A_{<t} \in \mc H^i_t$, $D\in\ms E$, $g\colon D\to \A^i$ such that $c(A_{<t},i,g)\in\ms C_t$, $c(A_{<t},i,g)$ is $\ms F^i$-$\ms C^i$-adapted, and for all $(\omega,f_{<t})\in D \times A_{<t}$ there is $f\in \A^\T$ with $(\omega,f)\in c(A_{<t},i,g)$ and $f|_{[0,t)_\T} = f_{<t}$. Let $C=(C^i)_{i\in I}$. This concludes the construction of the necessary data.

\subsection{Action path stochastic extensive form data}

We briefly fix some names for the objects just constructed, before showing that they provide stochastic (pseudo-) extensive forms whose properties are directly related to corresponding properties of the data.

\begin{definition}\label{def:SEF_data}
    Let $(\Omega,\ms E)$ be an exogenous scenario space.
    \begin{enumerate}
        \item \emph{Action path stochastic pseudo-extensive form ($\psi$-\textsc{sef}) data} on $(\Omega,\ms E)$ are defined to be a tuple
        \[ (\ast) \qquad \D = \big (I,\A,\T,W,\ms F,\mc H\big) \]
        such that $(I,\A,\T,W)$ is action path \textsc{sdf} data on $(\Omega,\ms E)$, $\ms F = (\ms F^i)_{i\in I}$ is a family of exogenous information structures $\ms F^i$ on $\tilde\X^i$, $i\in I$, and $\mc H = (\mc H^i)_{i\in I}$ is a family of history structures $\mc H^i$ for $i\in I$, such that Assumptions \hyperlink{Ass:AP.SEF0}{AP.SEF$k$}, $k=0,1,2$, and \hyperlink{Ass:AP.psi-SEF3}{AP.$\psi$-SEF3} are satisfied.
        If $\D$ is action path $\psi$-\textsc{sef} data on $(\Omega,\ms E)$, then its entries are denoted as in $(\ast)$.
        \item \emph{Action path stochastic extensive form (\textsc{sef}) data} on $(\Omega,\ms E)$ are action path $\psi$-\textsc{sef} data satisfying Assumption \hyperlink{Ass:AP.SEF3}{AP.SEF3}.
        \item Suppose that $\D$ is action path $\psi$-\textsc{sef} data on $(\Omega,\ms E)$. Associate to it $\ms C = (\ms C^i)_{i\in I}$ and $C = (C^i)_{i\in I}$ in the way defined above. The tuple
            \[ \F = (F,\pi,\X,I,\ms F,\ms C,C) \]
        is said to be \emph{the \textsc{sef} candidate induced by $\D$}. The term ``\textsc{sef} candidate'' can be replaced with ``(pseudo-) stochastic extensive form'' if $\F$ satisfies the respective property.
    \end{enumerate}
\end{definition}

Before clarifying whether \textsc{sef} data induce \textsc{sef}, whether pseudo or not, respectively, we explain the meaning of the sets $D_{t,f}^i$. For this define, for all $t\in\T$, all sets of time $\T_t \subseteq \T$ with $[0,t)_\T\subseteq \T_t$, all $f\in\A^{\T_t}$, the set
\[ \hat D_{t,f} = \{ \omega\in\Omega \mid \exists f',f'' \in\A^\T\colon (\omega,f'), (\omega,f'')\in W,~ f'|_{[0,t)_\T} = f|_{[0,t)_\T} = f''|_{[0,t)_\T},~ f'(t) \neq f''(t)\}. \]
\begin{proposition}\label{prop:Dfti}
    Let $(\Omega,\ms E)$ be an exogenous scenario space, $\D$ be action path $\psi$-\textsc{sef} data on it and $\F$ be the induced action path $\psi$-\textsc{sef} candidate. Then, the following statements hold true.
    \begin{enumerate}
        \item\label{prop:Dfti.hatDft=unionDfti} For all $t\in\T$, sets of time $\T_t \subseteq \T$ with $[0,t)_\T\subseteq \T$,  $f\in\A^{\T_t}$, we have
    \[ \hat D_{t,f} = \bigcup_{i\in I} D_{t,f}^i. \]
        \item\label{prop:Dfti.hatDft_subseteq_Dft} For all $t\in\T$ and $f\in\A^\T$, we have $\hat D_{t,f} \subseteq D_{t,f}$.
        \item\label{prop:Dfti.Dfti_nonempty} For all $t\in\T$, $f\in\A^\T$, and $i\in I$, we have:
        \[ \Big[ D_{t,f} \neq \emptyset \text{ and } \x_t(f) \in \X^i \Big] \quad \Longleftrightarrow \quad D_{t,f}^i \neq \emptyset, \]
        and if either side of the equivalence holds true, then $D_{t,f} = D_{t,f}^i$.
    \end{enumerate}
\end{proposition}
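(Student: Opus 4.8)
The plan is to unwind the definitions of the sets $D_{t,f}^i$, $\hat D_{t,f}$ and $D_{t,f}$ and exploit Assumption \hyperlink{Ass:AP.psi-SEF3}{AP.$\psi$-SEF3} for the non-trivial inclusions. For part~\ref{prop:Dfti.hatDft=unionDfti}, the inclusion $D_{t,f}^i \subseteq \hat D_{t,f}$ for each $i$ is immediate: if $\omega\in D_{t,f}^i$ with witnesses $f',f''\in\A^\T$ agreeing with $f$ on $[0,t)_\T$ and $p^i\circ f'(t)\neq p^i\circ f''(t)$, then in particular $f'(t)\neq f''(t)$, so $\omega\in\hat D_{t,f}$. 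Hence $\bigcup_{i\in I} D_{t,f}^i \subseteq \hat D_{t,f}$. For the reverse inclusion, take $\omega\in\hat D_{t,f}$ with witnesses $f',f''\in\A^\T$, so $(\omega,f'),(\omega,f'')\in W$, both restrict to $f|_{[0,t)_\T}$ on $[0,t)_\T$, and $f'(t)\neq f''(t)$. Set $t_0 = t$ in Assumption \hyperlink{Ass:AP.psi-SEF3}{AP.$\psi$-SEF3}: there are $u\in\T$ with $u\le t$ and $i\in I$ such that $p^i\circ f'(u)\neq p^i\circ f''(u)$ and $\omega\in D_{u,f'}^i \cap D_{u,f''}^i$. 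If $u<t$, then since $f'|_{[0,t)_\T} = f''|_{[0,t)_\T}$ we would have $p^i\circ f'(u) = p^i\circ f''(u)$, a contradiction; so $u = t$, giving $\omega\in D_{t,f'}^i = D_{t,f}^i$ (the last equality because $D_{\cdot,\cdot}^i$ depends only on the restriction to $[0,t)_\T$, and $f'|_{[0,t)_\T} = f|_{[0,t)_\T}$, keeping in mind $[0,t)_\T\subseteq\T_t$). Thus $\omega\in\bigcup_{i\in I} D_{t,f}^i$.

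For part~\ref{prop:Dfti.hatDft_subseteq_Dft}, suppose $\omega\in\hat D_{t,f}$ for $f\in\A^\T$, with witnesses $f',f''$ as above and $f'(t)\neq f''(t)$. Then $(\omega,f')$ and $(\omega,f'')$ are distinct elements of $W$ (they differ at time $t$), and both lie in $x_t(\omega,f) = \{(\omega',g)\in W\mid \omega'=\omega,\ g|_{[0,t)_\T}=f|_{[0,t)_\T}\}$ since $f'|_{[0,t)_\T} = f|_{[0,t)_\T} = f''|_{[0,t)_\T}$. Hence $|x_t(\omega,f)|\ge 2$, i.e.\ $\omega\in D_{t,f}$.

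For part~\ref{prop:Dfti.Dfti_nonempty}, fix $t\in\T$, $f\in\A^\T$, $i\in I$. First I treat ``$\Leftarrow$'': if $D_{t,f}^i\neq\emptyset$, then by part~\ref{prop:Dfti.hatDft=unionDfti} (with $\T_t = \T$) and part~\ref{prop:Dfti.hatDft_subseteq_Dft} we get $\emptyset\neq D_{t,f}^i\subseteq\hat D_{t,f}\subseteq D_{t,f}$, so $D_{t,f}\neq\emptyset$; moreover $\x_t(f)\in\X$, and since $D_{t,f}^i\neq\emptyset$ there is some $\omega$ and witnesses $f',f''$ agreeing with $f$ on $[0,t)_\T$ with $p^i\circ f'(t)\neq p^i\circ f''(t)$, so the choice $c(A_{<t},i,g)$ for suitable $g$ (as in Assumption \hyperlink{Ass:AP.SEF2}{AP.SEF2}) is available at $\x_t(f)$ and belongs to $C^i$, witnessing $i\in J(\x_t(f))$, i.e.\ $\x_t(f)\in\X^i$; the precise derivation of $i\in J(\x_t(f))$ will invoke Assumption \hyperlink{Ass:AP.SEF2}{AP.SEF2} to produce an actual choice $c\in C^i$ with $\x_t(f)^{-1}(P(c))\neq\emptyset$. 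For ``$\Rightarrow$'', assume $D_{t,f}\neq\emptyset$ and $\x_t(f)\in\X^i$, so there is $c\in C^i$ with $\x_t(f)^{-1}(P(c))\neq\emptyset$; by construction $c = c(A_{<t'},i,g)$ for some $t'$, and the available-at analysis for action-path choices from \cite[Subsection~4.4]{Rapsch2024DecisionA} (in particular that $\mf t$ of any random move $c$ is available at equals $t'$, combined with $\mf t(\x_t(f)) = t$) forces $t' = t$; unwinding what it means for $g$ to model a genuine choice in the $i$-th factor at a move then yields two outcomes over a common $\om\in D_{t,f}$ differing in the $i$-th coordinate at time $t$, i.e.\ $D_{t,f}^i\neq\emptyset$. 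Finally, the equality $D_{t,f} = D_{t,f}^i$ when either side is non-empty follows by combining the two inclusions $D_{t,f}^i\subseteq\hat D_{t,f}\subseteq D_{t,f}$ with the reverse: any $\omega\in D_{t,f}$ gives two distinct $(\omega,f'),(\omega,f'')\in x_t(\omega,f)$ differing at some time $\ge t$; locating the earliest time of disagreement and feeding it into Assumption \hyperlink{Ass:AP.psi-SEF3}{AP.$\psi$-SEF3} (with $t_0$ that earliest time, which must be $\ge t$, and noting agreement below $t$) produces an action index $j$ with $\omega\in D_{t,f}^j$; when $\x_t(f)\in\X^i$, uniqueness of the active structure at a move (via Axiom~\ref{def:SEF.endo_exo_compatible} and Proposition~\ref{prop:information_sets}, forcing $J(\x_t(f))$ to be well-behaved and, in this action-path setting, a singleton $\{i\}$ whenever $I$ indexes the distinct factors) forces $j = i$.

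The main obstacle will be part~\ref{prop:Dfti.Dfti_nonempty}, specifically the careful bookkeeping needed to extract, from the mere existence of a choice $c\in C^i$ available at $\x_t(f)$, two concrete outcomes differing in the $i$-th action coordinate at time $t$ over a common scenario $\omega\in D_{t,f}$ — this requires tracing through the definition of $c(A_{<t},i,g)$, Assumption~\hyperlink{Ass:AP.C1}{AP.C1} (the existence of an excluded successor), and the available-at characterisation from \cite[Subsection~4.4]{Rapsch2024DecisionA}, while also pinning down that the time index of $c$ is exactly $t$. The final equality $D_{t,f} = D_{t,f}^i$ requires additionally the observation that in action path $\psi$-\textsc{sef} candidates the set $J(\x)$ at a move is determined by which action space factors genuinely vary there, so that $D_{t,f}^i$ cannot be a proper non-empty subset of $D_{t,f}$; this is where Assumption~\hyperlink{Ass:AP.psi-SEF3}{AP.$\psi$-SEF3} does the real work, upgrading ``some coordinate varies'' to ``the $i$-th coordinate varies, on the same event''.
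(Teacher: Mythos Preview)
Your plan for Parts~\ref{prop:Dfti.hatDft=unionDfti} and~\ref{prop:Dfti.hatDft_subseteq_Dft} is correct, though for Part~\ref{prop:Dfti.hatDft=unionDfti} you are overcomplicating matters by invoking AP.$\psi$-SEF3. The paper simply uses that $\A = \prod_{i\in I}\A^i$: if $f'(t)\neq f''(t)$ then there exists some $i$ with $p^i\circ f'(t)\neq p^i\circ f''(t)$, and the \emph{same} witnesses $f',f''$ already certify $\omega\in D_{t,f}^i$. No separation axiom is needed.

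For Part~\ref{prop:Dfti.Dfti_nonempty}, the two implications of the equivalence are essentially as in the paper (AP.SEF2 for ``$\Leftarrow$'', the representation $c=c(A_{<t},i,g)$ together with AP.C1 for ``$\Rightarrow$''). However, your argument for the final equality $D_{t,f}=D_{t,f}^i$ has a genuine gap. You take $\omega\in D_{t,f}$, find two outcomes differing at some time $t_0\ge t$, apply AP.$\psi$-SEF3 to obtain $u\le t_0$ and $j\in I$ with $\omega\in D_{u,f'}^j$, and then claim $u=t$ and $j=i$. Neither step is justified: AP.$\psi$-SEF3 only yields $u\le t_0$, so while you can exclude $u<t$ (the paths agree there) you cannot exclude $t<u\le t_0$, and $D_{u,f'}^j$ is not $D_{t,f}^j$. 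More seriously, your assertion that $J(\x_t(f))$ is a singleton is simply false in this framework, which explicitly permits several agents to act simultaneously at a move (this is precisely what Assumption AP.SEF0 and Axiom~\ref{def:SEF}.\ref{def:SEF.outcomes_faithful} are designed to accommodate).

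The paper's route to the equality is different and much cleaner: once you have one choice $c\in C^i$ available at $\x_t(f)$, $\X^i$-completeness of $c$ forces $x_t(\omega,f)\in P(c)$ for \emph{every} $\omega\in D_{t,f}$, not just the one $\omega$ you started with. Then the AP.C1 argument you already sketched for ``$\Rightarrow$'' (producing $f'$ outside $c$ and $f''$ inside $c$ with $p^i\circ f'(t)\neq p^i\circ f''(t)$) runs verbatim at each such $\omega$ and yields $\omega\in D_{t,f}^i$ directly, giving $D_{t,f}\subseteq D_{t,f}^i$.
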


\begin{corollary}\label{cor:tildeXi=Xi}
    Let $(\Omega,\ms E)$ be an exogenous scenario space, $\D$ be action path $\psi$-\textsc{sef} data on it and $\F$ be the induced action path $\psi$-\textsc{sef} candidate. Then, for all $i\in I$, $\tilde\X^i = \X^i$. \hfill \qed
\end{corollary}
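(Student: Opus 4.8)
The plan is to read the corollary off part~\ref{prop:Dfti.Dfti_nonempty} of Proposition~\ref{prop:Dfti}, which already packages exactly the equivalence that is needed. Recall that, by construction, $\X$ consists precisely of the maps $\x_t(f)$ with $t\in\T$, $f\in\A^\T$ and $D_{t,f}\neq\emptyset$ (each such $\x_t(f)$ having domain $D_{t,f}$); that $\X^i=\{\x\in\X\mid i\in J(\x)\}$; and that $\tilde\X^i$ consists of the maps $\x_t(f)$ with $t\in\T$, $f\in\A^\T$ and $D_{t,f}^i\neq\emptyset$. So it suffices to verify the two inclusions $\tilde\X^i\subseteq\X^i$ and $\X^i\subseteq\tilde\X^i$.

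First I would prove $\tilde\X^i\subseteq\X^i$. Given $\x\in\tilde\X^i$, fix a pair $(t,f)$ with $\x=\x_t(f)$ and $D_{t,f}^i\neq\emptyset$; the right-to-left implication in part~\ref{prop:Dfti.Dfti_nonempty} of Proposition~\ref{prop:Dfti} then yields $D_{t,f}\neq\emptyset$ (so that $\x_t(f)$ really is an element of $\X$) together with $\x_t(f)\in\X^i$, i.e.\ $\x\in\X^i$. Conversely, for $\X^i\subseteq\tilde\X^i$ I would take $\x\in\X^i\subseteq\X$, pick any pair $(t,f)$ with $\x=\x_t(f)$ and $D_{t,f}\neq\emptyset$, and apply the left-to-right implication of the same part; since $\x_t(f)=\x\in\X^i$, this gives $D_{t,f}^i\neq\emptyset$, which is precisely the condition placing $\x=\x_t(f)$ in $\tilde\X^i$.

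I do not expect any genuine obstacle. The only point worth a remark is that the pair $(t,f)$ representing a given random move $\x=\x_t(f)$ need not be unique; this is harmless, since in each direction one only needs to produce, or to use, a single such pair, and all the conditions invoked ($D_{t,f}\neq\emptyset$, $\x_t(f)\in\X^i$, $D_{t,f}^i\neq\emptyset$) depend solely on the map $\x_t(f)$ and on the restriction $f|_{[0,t)_\T}$, never on the chosen representative. If desired, the sharper equality $D_{t,f}=D_{t,f}^i$, which part~\ref{prop:Dfti.Dfti_nonempty} of Proposition~\ref{prop:Dfti} also records on either side of the equivalence, can be noted in passing, though it is not needed for the statement $\tilde\X^i=\X^i$ itself.
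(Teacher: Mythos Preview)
Your proposal is correct and matches the paper's approach exactly: the paper itself gives no explicit proof for this corollary, simply appending a \qed to signal that it is an immediate consequence of Proposition~\ref{prop:Dfti}.\ref{prop:Dfti.Dfti_nonempty}, and your two inclusions spell out precisely that inference.
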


Note, however, that it is easy to construct action path \textsc{sef} data with $\hat D_{t,f} \subsetneq D_{t,f}$ for some $(t,f)\in\T\times\A^\T$, see Example \ref{ex:SEF_with_Dtf_neq_hatDtf}. Hence, the induced \textsc{sef} candidate satisfies $\bigcup_{i\in I} \X^i \subsetneq \X$, that is, provided it defines an \textsc{sef} (which it indeed does, by the following theorem), there are random moves no agent is active at. 

After these preparations, we can now state the central result of this section.

\begin{thm}\label{thm:AP_sef}
    Let $(\Omega,\ms E)$ be an exogenous scenario space and consider action path stochastic (pseudo-) extensive form data $\D$ on it and let $\F$ be the \textsc{sef} candidate induced by $\D$ as in Definition~\ref{def:SEF_data}. Then $\F$ is a stochastic (pseudo-) extensive form, respectively.
\end{thm}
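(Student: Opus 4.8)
The plan is to run through the defining clauses of Definition~\ref{def:SEF} for the \textsc{sef} candidate $\F = (F,\pi,\X,I,\ms F,\ms C,C)$ induced by $\D$. The ``framework'' clauses are already in place: $(F,\pi,\X)$ is a (moreover order consistent and maximal) stochastic decision forest by \cite[Theorem~2.15]{Rapsch2024DecisionA}; $\tilde\X^i = \X^i$ by Corollary~\ref{cor:tildeXi=Xi}, so each $\ms F^i$ is an exogenous information structure on $\X^i$ and, by \cite[Proposition~4.10]{Rapsch2024DecisionA}, each $\ms C^i$ is a reference choice structure on $\X^i$; every $c(A_{<t},i,g)\in C^i$ is $\ms F^i$-$\ms C^i$-adapted by construction, and under Assumption~\hyperlink{Ass:AP.SEF1}{AP.SEF1} this adaptedness is, by Theorem~\ref{thm:adapted_choices_mb_functions}, the $\ms F^i_\x$-measurability of the relevant restrictions of $g$; and the evaluation map $\X^i\bullet\Omega\to X$ is injective as a restriction of the order-isomorphism $\Tr\bullet\Omega\cong(F,\supseteq)$ of \cite[Proposition~2.4]{Rapsch2024DecisionA}. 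What remains is Axioms~\ref{def:SEF.P(c)}--\ref{def:SEF.choice_completeness}, and in the non-pseudo case Axiom~\ref{def:SEF.separation}. Throughout I would use the explicit descriptions from \cite[Subsection~4.4]{Rapsch2024DecisionA} of the immediate-predecessor set of a choice $c(A_{<t},i,g)$ and of the times of the random moves it is available at: these are the moves $x_t(\omega,h)$ with $h|_{[0,t)_\T}$ in the history block $A_{<t}\in\mc H^i_t$ and $\omega$ in a suitable event, all at time $t$.

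The combinatorial axioms are then routine. For Axiom~\ref{def:SEF.P(c)}: two choices of the same agent $i$ with intersecting immediate-predecessor sets share a move $x_t(\omega,h)$, which forces equal time $t$ and --- since $\mc H^i_t$ is a partition --- equal history block, after which ``equal or disjoint on $W_\omega$'' is just single-valuedness of the defining maps. For Axiom~\ref{def:SEF.outcomes_faithful}: at a move $x = x_t(\omega,f)$ and a profile $(c^j)_{j\in J(x)}$ of choices available at $x$, the extendability clause in the definition of $C^j$ provides outcomes $(\omega,f_j)\in c^j$ with $f_j|_{[0,t)_\T} = f|_{[0,t)_\T}$, and Assumption~\hyperlink{Ass:AP.SEF0}{AP.SEF0} splices the $f_j$ into a single $f^\ast$ with $p^j\circ f^\ast(t) = p^j\circ f_j(t)$ for all $j$, so $(\omega,f^\ast)\in x\cap\bigcap_{j} c^j$. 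For Axiom~\ref{def:SEF.enough_choices}: given $x = x_t(\omega,f)$ with $i\in J(x)$, one first notes $\omega\in D^i_{t,f}$ via Proposition~\ref{prop:Dfti} (as $i\in J(x) = J(\x_t(f))$ gives $\x_t(f)\in\X^i$ and $x$ is a move), and then, for $y\subsetneq x$ with representative $(\omega,h)\in y$, Assumption~\hyperlink{Ass:AP.SEF2}{AP.SEF2} applied to $(i,t,h,\omega)$ produces $g$ with $p^i\circ h(t) = g(\omega)$ and $c(A_{<t},i,g)\in C^i$, which is available at $x$ and contains $y$. For Axiom~\ref{def:SEF.endo_exo_compatible}: a choice available at both $\x_t(f)$ and $\x_{t'}(f')$ forces $t = t'$ and $f|_{[0,t)_\T}, f'|_{[0,t)_\T}$ into the same block of $\mc H^i_t$; then Definition~\ref{def:mcH}.\ref{def:mcH.msF_compatible} gives $\ms F^i_{\x_t(f)} = \ms F^i_{\x_t(f')}$ (hence $D_{t,f} = D_{t,f'}$), and $\ms C^i_{\x_t(f)} = \ms C^i_{\x_t(f')}$ because $\ms C^i_\x$ depends only on $t$, that block, and $D_\x$.

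The separation axioms are where I expect the real work. Given disjoint nodes $y,y'$ in a common tree $T_\omega$, pick representatives $(\omega,f)\in y$, $(\omega,f')\in y'$; disjointness forces $f$ and $f'$ to disagree strictly before the times of $y$ and $y'$ (so $y$ and $y'$ lie strictly below their common ancestor at any such disagreement time). In the \textsc{sef} case, Assumption~\hyperlink{Ass:AP.SEF3}{AP.SEF3} furnishes a first disagreement time $t_0$, so $x := x_{t_0}(\omega,f) = x_{t_0}(\omega,f')$ is a move with $y,y'\subseteq x$; then Assumption~\hyperlink{Ass:AP.psi-SEF3}{AP.$\psi$-SEF3} at $t_0$ yields $t\le t_0$ and $i\in I$ with $p^i\circ f(t)\neq p^i\circ f'(t)$ and $\omega\in D^i_{t,f}\cap D^i_{t,f'}$, and minimality of $t_0$ forces $t = t_0$, hence $f|_{[0,t_0)_\T} = f'|_{[0,t_0)_\T}$. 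Applying Assumption~\hyperlink{Ass:AP.SEF2}{AP.SEF2} to $(i,t_0,f,\omega)$ and to $(i,t_0,f',\omega)$ (same history block, same domain $D_{t_0,f} = D_{t_0,f'}$) produces $c = c(A_{<t_0},i,g)$ and $c' = c(A_{<t_0},i,g')$ in $C^i$ with $g(\omega) = p^i\circ f(t_0)\neq p^i\circ f'(t_0) = g'(\omega)$; then $y\subseteq x\cap c$, $y'\subseteq x\cap c'$, $c\cap c'\cap W_\omega = \emptyset$ and $x\in P(c)\cap P(c')\cap T_\omega$, i.e.\ Axiom~\ref{def:SEF.separation}. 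In the pseudo case Assumption~\hyperlink{Ass:AP.SEF3}{AP.SEF3} is unavailable, but the same construction at \emph{any} disagreement time $t$ strictly below the times of $y,y'$, using only \hyperlink{Ass:AP.psi-SEF3}{AP.$\psi$-SEF3} and \hyperlink{Ass:AP.SEF2}{AP.SEF2}, still yields $c,c'\in C^i$ with $y\subseteq c$, $y'\subseteq c'$ and $c\cap c'\cap W_\omega = \emptyset$ --- the last because either the two history blocks differ (so $c\cap c' = \emptyset$) or $g(\omega)\neq g'(\omega)$ --- which is Axiom~\ref{def:SEF.weak_separation}. The fiddly part is verifying the ``strictly below'' claim, including the case of terminal $y$ or $y'$, for which the time map $\mf t$ is not single-valued.

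Finally, for the completeness Axiom~\ref{def:SEF.choice_completeness} I would invoke Lemma~\ref{lemma:completeness}: having checked Axioms~\ref{def:SEF.P(c)}--\ref{def:SEF.endo_exo_compatible} (and~\ref{def:SEF.separation} in the non-pseudo case), it suffices to show $C^i = \hat C^i$ for the completed family $\hat C$ of that lemma, since then $\F = \hat\F$ is a stochastic (pseudo-)extensive form. The inclusion $C^i\subseteq\hat C^i$ is immediate. Conversely, for $\hat c\in\hat C^i$, condition~\ref{lemma:completeness.def:hatC.ii} pins down, through $P(\hat c)\subseteq P(c_0)$ with $c_0\in C^i$, a time $t$ and a history block $A_{<t}$ common to all scenariowise pieces of $\hat c$ (using again that $\mc H^i_t$ is a partition), while condition~\ref{lemma:completeness.def:hatC.i} forces $\hat c\cap W_\omega$ to have the form ``$p^i$-component $g'(\omega)$ at time $t$, on histories in $A_{<t}$'' for a single value $g'(\omega)$; thus $\hat c = c(A_{<t},i,g')$, and membership of $\hat c$ in $C^i$ (lying in $\ms C_t$, $\ms F^i$-$\ms C^i$-adaptedness, the extendability clause) is inherited from the choices of $C^i$ matching $\hat c$ on the various $W_\omega$ together with $P(\hat c) = P(c_0)$. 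This completes the verification, and hence the proof.
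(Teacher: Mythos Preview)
Your plan is essentially the paper's own proof, axiom by axiom, and is correct in outline. Two points deserve comment.

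First, your treatment of Axiom~\ref{def:SEF.endo_exo_compatible} is too quick. The claim that ``$\ms C^i_\x$ depends only on $t$, the history block, and $D_\x$'' is true but not immediate from the definition: condition~($\ms C^i_\x$.\ref{def:msC.4}) reads $\x(\omega)\cap c\neq\emptyset$ for all $\omega\in D_\x$, and for $\x=\x_t(f)$ this asks for outcomes with the \emph{specific} history $f|_{[0,t)_\T}$, not merely some history in the block. The paper shows the equivalence of this condition for $\x_t(f)$ and $\x_t(f')$ (same block, same domain) by a genuine argument: take $(\tilde\omega,\tilde f)\in\x(\tilde\omega)\cap c_1$, use \hyperlink{Ass:AP.C1}{AP.C1} to witness $\tilde\omega\in D^i_{t,\tilde f}$, then invoke \hyperlink{Ass:AP.SEF2}{AP.SEF2} to produce a choice $c_2\in C^i$ through $(\tilde\omega,\tilde f)$ whose extendability clause transports the relevant $p^i$-value over to the history $f'|_{[0,t)_\T}$, yielding a point of $\x'(\tilde\omega)\cap c_1$. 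This is the only place in the proof where \hyperlink{Ass:AP.SEF2}{AP.SEF2} is used in a way your sketch does not anticipate.

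Second, two minor deviations from the paper that do not affect correctness. For Axiom~\ref{def:SEF.separation} you route through \hyperlink{Ass:AP.psi-SEF3}{AP.$\psi$-SEF3} after \hyperlink{Ass:AP.SEF3}{AP.SEF3}; the paper skips this, since at the first disagreement time $t$ one reads off $i$ with $p^i\circ f(t)\neq p^i\circ f'(t)$ and gets $\omega\in D^i_{t,f}$ directly from the definition. For Axiom~\ref{def:SEF.choice_completeness} you pass through Lemma~\ref{lemma:completeness} and show $C^i=\hat C^i$; the paper verifies the axiom directly. The two arguments are the same in content: one still constructs $g'(\omega)=g_\omega(\omega)$ from the scenariowise matches $c_\omega=c(A_{<t},i,g_\omega)$, shows $c'=c(A_{<t},i,g')$, and checks $c'\in\ms C_t$ plus the extendability clause by borrowing from the $c_\omega$.
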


The reader is referred to Subsection~\ref{subs:SDF} as well as \cite[Examples 2.13, 2.14, 2.15, 3.5, 4.11, 4.12, 4.14]{Rapsch2024DecisionA} for concrete examples of stochastic decision forests, exogenous information structures, and adapted choices in the action path case. In the third paper of the present series, specific applications in the realm of continuous time will be discussed.

Next, we explicitly compute the sets of available choices and information sets for action path $\psi$-\textsc{sef}.
\begin{proposition}\label{prop:compute_Ai(x)_mfp}
    Let $(\Omega,\ms E)$ be an exogenous scenario space, $\D$ be action path $\psi$-\textsc{sef} data on it, $\F$ be the induced action path $\psi$-\textsc{sef}, and $i\in I$.
    \begin{enumerate}
        \item\label{prop:compute_Ai(x)_mfp.Ai(x)} For all $\x\in\X$ and $t\in\T$, $f\in\A^\T$ with $\x = \x_t(f)$, we have:
        \[ A^i(\x) = \{c\in C^i \mid \exists A_{<t}\in \mc H^i_t~\exists g\colon D_{t,f}\to \A^i\colon c = c(A_{<t},i,g), \, f|_{[0,t)_\T}\in A_{<t}\}. \]
        \item\label{prop:compute_Ai(x)_mfp.mfp} The function mapping any pair $(t,A_{<t})$, where $t\in\T$ and $A_{<t}\in\mc H^i_t$, to
        \[ \mf p = \{\x_t(f) \mid f\in\A^\T\colon f|_{[0,t)_\T} \in A_{<t} \} \]
        is well-defined, injective and has image $\mf P^i$.
    \end{enumerate}
\end{proposition}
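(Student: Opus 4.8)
The plan is to prove part~\ref{prop:compute_Ai(x)_mfp.Ai(x)} first by unwinding the definitions, then derive part~\ref{prop:compute_Ai(x)_mfp.mfp} from part~\ref{prop:compute_Ai(x)_mfp.mfp}'s combinatorics together with Proposition~\ref{prop:information_sets}. For part~\ref{prop:compute_Ai(x)_mfp.Ai(x)}, fix $\x = \x_t(f) \in \X$ and recall that, by Corollary~\ref{cor:tildeXi=Xi}, $\x\in\X^i$ iff $D^i_{t,f}\neq\emptyset$, in which case $D_{t,f}=D^i_{t,f}$ by Proposition~\ref{prop:Dfti}\ref{prop:Dfti.Dfti_nonempty}. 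By definition $A^i(\x) = \{c\in C^i \mid \x^{-1}(P(c))\neq\emptyset\}$, and since all elements of $C^i$ are $\X^i$-complete this equals $\{c\in C^i \mid \x^{-1}(P(c)) = D_\x\}$, i.e.\ the choices $c\in C^i$ available at $\x$. Now any $c\in C^i$ has the form $c = c(A_{<u},i,g)$ for some $u\in\T$, $A_{<u}\in\mc H^i_u$, and $g$ on some $D\in\ms E$ with the stated properties. The first step is to invoke \cite[Subsection~4.4, Lemmata~4.7--4.9]{Rapsch2024DecisionA} (the immediate-predecessor computation for elements of $\ms C_u$) to see that every random move $c$ is available at has time $\mf t(\cdot) = u$; hence availability at $\x=\x_t(f)$ forces $u=t$. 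The second step is to check that, given $u=t$, availability of $c(A_{<t},i,g)$ at $\x_t(f)$ is equivalent to $f|_{[0,t)_\T}\in A_{<t}$: the inclusion $\x_t(f)^{-1}(P(c))\subseteq D_{t,f}$ being all of $D_{t,f}$ means precisely that $x_t(\omega,f)\in P(c)$ for all $\omega\in D_{t,f}$, and by the explicit description of $P(c(A_{<t},i,g))$ (a union of the moves $x_t(\cdot,f')$ with $f'|_{[0,t)_\T}\in A_{<t}$, intersected appropriately with the domain $D$ of $g$), this holds iff $f|_{[0,t)_\T}\in A_{<t}$. Combining the two steps gives exactly the claimed description of $A^i(\x)$.

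For part~\ref{prop:compute_Ai(x)_mfp.mfp}, consider the map $\Phi$ sending $(t,A_{<t})$, with $A_{<t}\in\mc H^i_t$, to the set $\mf p = \{\x_t(f)\mid f\in\A^\T,\ f|_{[0,t)_\T}\in A_{<t}\}$. \textbf{Well-definedness:} one must check $\mf p\subseteq \X^i$ and that $\mf p$ is exactly one block of the partition $\mf P^i$ from Proposition~\ref{prop:information_sets}\ref{prop:information_sets.exists_mfP}. For the first point, since $A_{<t}\in\mc H^i_t$ every $f'$ with $f'|_{[0,t)_\T}\in A_{<t}$ satisfies $D^i_{t,f'}\neq\emptyset$ by Definition~\ref{def:mcH}\ref{def:mcH.partition}, so $\x_t(f')\in\tilde\X^i=\X^i$ by Corollary~\ref{cor:tildeXi=Xi}; moreover all these random moves give the same $D^i_{t,f'}=D_{t,f'}$ since $f'|_{[0,t)_\T}$ ranges over a single $\mc H$-block and Definition~\ref{def:mcH}\ref{def:mcH.msF_compatible} pins down $\ms F^i$ hence $D_{t,\cdot}$ on that block. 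For the ``single block of $\mf P^i$'' claim, use part~\ref{prop:compute_Ai(x)_mfp.Ai(x)}: by the displayed formula, $A^i(\x_t(f))$ depends on $(t,f)$ only through $t$ and the $\mc H^i_t$-block of $f|_{[0,t)_\T}$, so any two random moves in $\mf p$ have equal available-choice sets, and conversely if $\x_u(f')\notin\mf p$ then either $u\neq t$ (different times, hence by the computation of immediate predecessors no common available choice, so $A^i(\x_u(f'))\neq A^i(\x_t(f))$) or $u=t$ but $f'|_{[0,t)_\T}$ lies in a different block, whence the formula again yields a different set (here one uses that $\mc H^i_t$ is a \emph{partition}, so distinct blocks are disjoint, and that $A_{<t}$ does appear as the history datum of at least one $c\in C^i$ available there — which is where Assumptions~\hyperlink{Ass:AP.SEF1}{AP.SEF1} and~\hyperlink{Ass:AP.SEF2}{AP.SEF2} enter, guaranteeing the relevant choices actually lie in $C^i$). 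Thus $\mf p\in\mf P^i$. \textbf{Injectivity:} from $\mf p$ one recovers $t = \mf t(\x)$ for any $\x\in\mf p$ (well-defined by \cite[Lemma~2.16]{Rapsch2024DecisionA}), and then $A_{<t} = \{f|_{[0,t)_\T}\mid \x_t(f)\in\mf p\}$, so distinct $(t,A_{<t})$ give distinct images. \textbf{Surjectivity:} given any $\mf p\in\mf P^i$, pick $\x\in\mf p$, write $\x=\x_t(f)$, set $t=\mf t(\x)$ and let $A_{<t}$ be the $\mc H^i_t$-block of $f|_{[0,t)_\T}$; part~\ref{prop:compute_Ai(x)_mfp.Ai(x)} shows $\Phi(t,A_{<t})$ and $\mf p$ contain a common random move with the same available choices, hence are the same block of $\mf P^i$.

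The main obstacle I expect is the precise bookkeeping in part~\ref{prop:compute_Ai(x)_mfp.Ai(x)} linking ``$c\in C^i$ is available at $\x_t(f)$'' to ``$u=t$ and $f|_{[0,t)_\T}\in A_{<t}$'': one must be careful that $c$'s defining history block $A_{<u}$ lives in $\mc H^i_u$ while availability is a statement about immediate predecessors $P(c)$, and reconciling these requires the explicit $P(c(A_{<t},i,g))$ computation from \cite{Rapsch2024DecisionA} together with Assumption~\hyperlink{Ass:AP.SEF2}{AP.SEF2} (so that the history block is genuinely ``filled out'' by outcomes over all of $D_{t,f}\times A_{<t}$) and the $\X^i$-completeness of choices (so that availability at the random move and at its image moves coincide). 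Once this equivalence is nailed down, part~\ref{prop:compute_Ai(x)_mfp.mfp} is essentially a formal consequence of Proposition~\ref{prop:information_sets} and the partition property of $\mc H^i_t$.
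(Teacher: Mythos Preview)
Your plan is correct and follows essentially the same route as the paper. Two small points worth flagging. First, in Part~\ref{prop:compute_Ai(x)_mfp.Ai(x)} the statement requires $g\colon D_{t,f}\to\A^i$, so you must also conclude that the domain $D$ of $g$ in any representation $c=c(A_{<u},i,g)$ equals $D_{t,f}$; the paper isolates this as a separate lemma (Lemma~\ref{lemma:AP_sef_compute_D}), and its proof uses the ``filling-out'' clause built into the \emph{definition} of $C^i$ (for all $(\omega,f_{<t})\in D\times A_{<t}$ there is $f$ with $(\omega,f)\in c$ and $f|_{[0,t)_\T}=f_{<t}$) together with \hyperlink{Ass:AP.C1}{AP.C1} and \hyperlink{Ass:AP.C2}{AP.C2}, rather than \hyperlink{Ass:AP.SEF2}{AP.SEF2} as you suggest. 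Second, the paper packages the explicit form of $P(c)$ for $c\in C^i$ into its own Lemma~\ref{lemma:AP_sef_compute_P(c)} (deriving it from \cite[Lemma~4.8]{Rapsch2024DecisionA} plus the $C^i$ filling-out clause), whereas you cite the more general \cite[Lemmata~4.7--4.9]{Rapsch2024DecisionA} directly; both work, but the paper's lemma gives the cleaner description you implicitly use. For Part~\ref{prop:compute_Ai(x)_mfp.mfp} your organisation (show each $\Phi(t,A_{<t})$ is a single $\mf P^i$-block, then injectivity and surjectivity) is slightly more direct than the paper's (show $\im\Phi$ is \emph{a} partition of $\X^i$, then identify it with $\mf P^i$), but the content is the same.
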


In the next theorem, which can be seen as a direct consequence of \cite[Theorem~4.13]{Rapsch2024DecisionA}, we clarify how the structural assumption of adaptedness of choices relates to the measure-theoretic concept of measurability of maps. 
\begin{thm}\label{thm:adapted_choices_mb_functions}
    Let $(\Omega,\ms E)$ be an exogenous scenario space, $\D$ be action path $\psi$-\textsc{sef} data on it, $\F$ be the induced action path $\psi$-\textsc{sef}, and $i\in I$. Further, let $t\in\T$, $A_{<t}\in\mc H^i_t$, $D\in\ms E$ and $g\colon D\to \A^i$ be a map such that $c = c(A_{<t},i,g)\in\ms C_t$. 
    Then we have:
    \begin{enumerate}
        \item\label{thm:adapted_choices_mb_functions.non_red_compl} $c$ is a non-redundant and $\X^i$-complete choice.
        \item\label{thm:adapted_choices_mb_functions.Dx_subseteq_D} For all $\x\in\X^i$ that $c$ is available at, we have $D_\x \subseteq D$.
        \item\label{thm:adapted_choices_mb_functions.mb} $c$ is $\ms F^i$-$\ms C^i$-adapted iff for all $\x\in\X^i$ that $c$ is available at, $g|_{D_\x}$ is $\ms F^i_\x$-measurable.
    \end{enumerate}
\end{thm}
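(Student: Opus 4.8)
The plan is to derive the three assertions from results already in place: parts~\ref{thm:adapted_choices_mb_functions.non_red_compl} and~\ref{thm:adapted_choices_mb_functions.Dx_subseteq_D} from the structural analysis of the sets $\ms C_t$ carried out in \cite[Subsection~4.4]{Rapsch2024DecisionA} and recalled above, and part~\ref{thm:adapted_choices_mb_functions.mb} from \cite[Theorem~4.13]{Rapsch2024DecisionA} combined with Corollary~\ref{cor:tildeXi=Xi}. For part~\ref{thm:adapted_choices_mb_functions.non_red_compl}: by hypothesis $c = c(A_{<t},i,g) = c(A_{<t},A_t^{i,g})$ lies in $\ms C_t$, so by \cite[Lemmata~4.7--4.9]{Rapsch2024DecisionA} the choice $c$ is non-redundant and complete and every $\x\in\X$ at which $c$ is available satisfies $\mf t(\x) = t$; since $\X^i\subseteq\X$, completeness of $c$ immediately yields $\X^i$-completeness. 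For part~\ref{thm:adapted_choices_mb_functions.Dx_subseteq_D}: let $\x\in\X^i$ be a random move at which $c$ is available. By the above $\mf t(\x) = t$, so $\x = \x_t(f)$ for some $f\in\A^\T$, with $D_\x = D_{t,f}$ by the definition of $\x_t(f)$, and availability gives $\x(\omega) = x_t(\omega,f)\in P(c)$ for all $\omega\in D_\x$. Now any $x\in P(c)$ satisfies $x\cap c\neq\emptyset$: the witnessing node $y$ in the definition of $P(c)$ has $y\subseteq c$ and, since $x\in\uparrow x\subseteq\uparrow y$ forces $y\subseteq x$, also $\emptyset\neq y\subseteq x\cap c$. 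Hence for each $\omega\in D_\x$ there is $(\omega,f')\in x_t(\omega,f)\cap c$, which forces $f'(t)\in A_{t,\omega}^{i,g}$, so $A_{t,\omega}^{i,g}\neq\emptyset$, and by the definition of $A_t^{i,g}$ this means $\omega\in D$; therefore $D_\x = D_{t,f}\subseteq D$.

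For part~\ref{thm:adapted_choices_mb_functions.mb}: since $\D$ is action path $\psi$-\textsc{sef} data, Assumption \hyperlink{Ass:AP.SEF1}{AP.SEF1} holds, and, as made explicit in the paragraph preceding the theorem, this is exactly what supplies, for each relevant history prefix, the generator of $\ms B(\A^i)$, stable under non-trivial intersections, that serves as the hypothesis (Assumption~AP.C3 of \cite[Subsection~4.4]{Rapsch2024DecisionA}) under which \cite[Theorem~4.13]{Rapsch2024DecisionA} applies. Applying that theorem to $c = c(A_{<t},i,g)\in\ms C_t$ gives: $c$ is $\ms F^i$-$\ms C^i$-adapted if and only if $g|_{D_\x}$ is $\ms F^i_\x$-measurable for every $\x\in\tilde\X^i$ at which $c$ is available. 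Corollary~\ref{cor:tildeXi=Xi} identifies $\tilde\X^i$ with $\X^i$, which is the assertion.

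I expect the only genuinely non-mechanical step to be the reconciliation in part~\ref{thm:adapted_choices_mb_functions.mb} between Assumption \hyperlink{Ass:AP.SEF1}{AP.SEF1} --- phrased in terms of action paths $f$ and the now-explicit history structure $\mc H$ --- and \cite[Theorem~4.13]{Rapsch2024DecisionA}, which is phrased in terms of random moves: one must verify that the block $A_{<t}\in\mc H^i_t$ fixed in the definition of $c$ is exactly the one in which the prefixes of the $\x$ at which $c$ is available lie, and that the generator that AP.SEF1 attaches to the prefix $f|_{[0,t)_\T}$ is the one needed locally at $\x = \x_t(f)$. Given that $\mf t(\x) = t$ for all such $\x$ and Corollary~\ref{cor:tildeXi=Xi}, this is routine, and everything else reduces to quoting the cited results.
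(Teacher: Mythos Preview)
Your proposal is correct and follows essentially the same route as the paper: both reduce the theorem to \cite[Theorem~4.13]{Rapsch2024DecisionA}, using Assumption \hyperlink{Ass:AP.SEF1}{AP.SEF1} to supply the generator hypothesis (AP.C3) needed for the ``only if'' direction of part~\ref{thm:adapted_choices_mb_functions.mb}, and identifying $\tilde\X^i$ with $\X^i$ via Corollary~\ref{cor:tildeXi=Xi} (the paper does this implicitly, invoking Proposition~\ref{prop:Dfti} to obtain $D_{t,f}^i\neq\emptyset$ from $\x\in\X^i$). The only cosmetic difference is that the paper cites parts 1--3 of \cite[Theorem~4.13]{Rapsch2024DecisionA} directly for parts~\ref{thm:adapted_choices_mb_functions.non_red_compl}, \ref{thm:adapted_choices_mb_functions.Dx_subseteq_D}, and the ``if'' half of~\ref{thm:adapted_choices_mb_functions.mb}, whereas you unpack parts~\ref{thm:adapted_choices_mb_functions.non_red_compl} and~\ref{thm:adapted_choices_mb_functions.Dx_subseteq_D} by hand via \cite[Lemmata~4.7--4.9]{Rapsch2024DecisionA} and a direct immediate-predecessor argument.
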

The point of this theorem is to explain the usual measurability assumption on random action in terms of the decision-theoretic concept of adapted choices.\footnote{Of course, this measurability requirement is not sufficient to imply $c\in C^i$ in general, because, in addition, the latter requires that for all $(\omega,f_{<t})\in D \times A_{<t}$ there is $f\in \A^\T$ with $(\omega,f)\in c(A_{<t},i,g)$ and $f|_{[0,t)_\T} = f_{<t}$. Without this requirement, first, $c$ could be available at $\x_t(f)$, but not at $\x_t(f')$, for some $f,f'\in\A^\T$ with $f|_{[0,t)_\T},f'|_{[0,t)_\T}\in A_{<t}$, and, second, $D$ could be too large.}\smallskip

We conclude this subsection (and section) by classifying the endogenous information structure of an action path \textsc{sef} in terms of $\mc H$.
For $t,u\in\T$ with $t<u$ let $p_{u,t}$ be the restriction
\[ \A^{[0,u)_\T} \to \A^{[0,t)_\T}, ~ f\mapsto f|_{[0,t)_\T}. \]

\begin{thm}\label{thm:link_endogenous_information_H}
    Let $(\Omega,\ms E)$ be an exogenous scenario space, $\D$ be action path $\psi$-\textsc{sef} data on it, $\F$ be the induced action path $\psi$-\textsc{sef}, and $i\in I$. Then the following statements hold true:
    \begin{enumerate}
        \item\label{thm:link_endogenous_information_H.perfect_recall} $i$ admits perfect endogenous recall iff for all $t,u\in\T$ with $t<u$, all $A_{<t}\in\mc H^i_t$, all $A_{<u}\in\mc H^i_u$, we have 
        \begin{itemize}[label=--]
            \item either: a) $(\mc P p_{u,t})(A_{<u}) \subseteq A_{<t}$ and b) all $f,f'\in A_{<u}$ satisfy $p^i \circ f(t) = p^i\circ f'(t)$,
            \item or: $(\mc P p_{u,t})(A_{<u}) \cap A_{<t} = \emptyset$.
        \end{itemize}
        \item\label{thm:link_endogenous_information_H.perfect_information} $i$ has perfect endogenous information iff for all $t\in\T$, all $A\in\mc H^i_t$ are singletons such that $A\cap A' = \emptyset$ for all $j\in I\setminus\{i\}$ and $A'\in \mc H^j_t$.
    \end{enumerate}
\end{thm}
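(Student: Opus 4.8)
The plan is to first recast perfect endogenous recall in laminar form: $i$ admits it iff, for every $\omega\in\Omega$, the family $\{c\cap W_\omega\mid c\in C^i\}\setminus\{\emptyset\}$ is laminar (any two members nested or disjoint), since $c\cap c'\cap W_\omega = (c\cap W_\omega)\cap(c'\cap W_\omega)$ and the defining condition says exactly that two such traces with nonempty intersection are nested. As every $c\in C^i$ equals some $c(A_{<t},i,g)$ and $W_\omega$ is the set of outcomes in scenario $\omega$, one has $c\cap W_\omega=\emptyset$ unless $\omega$ lies in the domain of $g$, in which case $c\cap W_\omega=\{(\omega,f)\in W\mid f|_{[0,t)_\T}\in A_{<t},\ p^i(f(t))=g(\omega)\}$. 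For sufficiency I would take $c_1=c(A_{<t},i,g)$ and $c_2=c(A_{<u},i,h)$ in $C^i$ with $t\le u$ (after a swap) and $(\omega,f^\ast)\in c_1\cap c_2\cap W_\omega$. If $t=u$, then $f^\ast|_{[0,t)_\T}$ lies in both $A_{<t}$ and $A_{<u}$, which are therefore the same block of $\mc H^i_t$, and $g(\omega)=p^i(f^\ast(t))=h(\omega)$, so $c_1\cap W_\omega=c_2\cap W_\omega$. If $t<u$, then $f^\ast|_{[0,t)_\T}\in(\mc P p_{u,t})(A_{<u})\cap A_{<t}$, so this set is nonempty; the dichotomy forces $(\mc P p_{u,t})(A_{<u})\subseteq A_{<t}$ and constancy of $p^i(\,\cdot\,(t))$ on $A_{<u}$, whence every $(\omega,f)\in c_2\cap W_\omega$ satisfies $f|_{[0,t)_\T}\in A_{<t}$ and $p^i(f(t))=p^i(f^\ast(t))=g(\omega)$, i.e.\ lies in $c_1\cap W_\omega$, giving $c_2\cap W_\omega\subseteq c_1\cap W_\omega$.

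\textbf{The necessity direction (contrapositive).} Suppose the dichotomy fails for some $t<u$ and blocks $A_{<t},A_{<u}$: the set $(\mc P p_{u,t})(A_{<u})\cap A_{<t}$ is nonempty, yet $(\mc P p_{u,t})(A_{<u})\not\subseteq A_{<t}$ or $p^i(\,\cdot\,(t))$ is non-constant on $A_{<u}$. I would pick $f^0_{<u}\in A_{<u}$ with $f^0_{<t}:=f^0_{<u}|_{[0,t)_\T}\in A_{<t}$ and invoke Assumption~\hyperlink{Ass:AP.SEF2}{AP.SEF2} at time $u$ along an outcome through $f^0_{<u}$ to get $c_2=c(A_{<u},i,h)\in C^i$ with domain $D^i_{u,f^0_{<u}}$ and the continuation property of that assumption; then invoke it again at time $t$ along an outcome of $c_2$ restricting to $f^0_{<u}$ at $u$ (legitimate since its $[0,t)_\T$-restriction is $f^0_{<t}\in A_{<t}$) to get $c_1=c(A_{<t},i,\tilde g)\in C^i$ with domain $D^i_{t,f^0_{<t}}$ and $\tilde g(\omega_0)=p^i$ of that outcome at $t$. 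By Proposition~\ref{prop:Dfti} (both sets being nonempty) and because a move at time $u$ forces a move at time $t$, $D^i_{u,f^0_{<u}}\subseteq D_{t,f^0_{<t}}=D^i_{t,f^0_{<t}}$, so fixing $\omega_0$ in the smaller domain the chosen outcome lies in $c_1\cap c_2\cap W_{\omega_0}$. Finally I would exploit whichever clause fails: if $(\mc P p_{u,t})(A_{<u})\not\subseteq A_{<t}$, a continuation outcome of $c_2$ restricting to a bad history at $u$ lies in $c_2\cap W_{\omega_0}$ but not $c_1$; if $p^i(\,\cdot\,(t))$ is non-constant on $A_{<u}$, two continuation outcomes of $c_2$ with distinct $p^i$-values at $t$ cannot both meet $p^i(\,\cdot\,(t))=\tilde g(\omega_0)$, so one lies in $c_2\cap W_{\omega_0}\setminus c_1$. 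In the reverse direction, Assumption~\hyperlink{Ass:AP.C1}{AP.C1} applied to the common outcome yields a sibling outcome still restricting to $f^0_{<u}$ at $u$ that drops out of $c_2$ while staying in $c_1$, so $c_1\cap W_{\omega_0}\setminus c_2\neq\emptyset$ as well. Hence $c_1\cap W_{\omega_0}$ and $c_2\cap W_{\omega_0}$ are not nested and $i$ lacks perfect endogenous recall.

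\textbf{Expected main obstacle.} The delicate point is exactly the scenario bookkeeping just used: $c_1$ and $c_2$ are produced by separate applications of Assumption~\hyperlink{Ass:AP.SEF2}{AP.SEF2}, and a priori there need be no scenario at which their $W_\omega$-traces meet at all. The resolution is to build $c_1$ along a \emph{restriction} of an $A_{<u}$-history, which both pins the time-$t$ block to $A_{<t}$ and, via Proposition~\ref{prop:Dfti}, forces the domain inclusion $D^i_{u,f^0_{<u}}\subseteq D^i_{t,f^0_{<t}}$; after that the continuation properties of Assumptions~\hyperlink{Ass:AP.SEF2}{AP.SEF2} and~\hyperlink{Ass:AP.C1}{AP.C1} deliver the non-nested traces routinely.

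\textbf{Perfect endogenous information.} For the second assertion I would treat the two defining clauses separately. By Proposition~\ref{prop:compute_Ai(x)_mfp}.\ref{prop:compute_Ai(x)_mfp.mfp}, $\mf P^i$ is parametrised bijectively by pairs $(t,A_{<t})$ through $(t,A_{<t})\mapsto\{\x_t(f)\mid f|_{[0,t)_\T}\in A_{<t}\}$, and since $\x_t(f)$ depends only on $f|_{[0,t)_\T}$ (with $D_{t,f}\neq\emptyset$ for $f|_{[0,t)_\T}$ in a block of $\mc H^i_t$), each such $\mf p$ has the same cardinality as $A_{<t}$; hence all of $i$'s endogenous information sets are singletons iff all blocks of all $\mc H^i_t$ are. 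For the image condition I would first note that $\im\x\cap\im\x'\neq\emptyset$ forces $\x=\x'$, because a shared node is a move, hence has a unique time and a single scenario, and $x_t(\omega,f)=x_t(\omega,f')$ at a move forces $f|_{[0,t)_\T}=f'|_{[0,t)_\T}$, i.e.\ $\x_t(f)=\x_t(f')$. Thus ``$\im\x\cap\im\x'=\emptyset$ for all $j\neq i$, $\x\in\X^i$, $\x'\in\X^j$'' is equivalent to $\X^i\cap\X^j=\emptyset$ for $j\neq i$, which by Corollary~\ref{cor:tildeXi=Xi} and the description $\X^i=\{\x_t(f)\mid D^i_{t,f}\neq\emptyset\}$ says that no time-$t$ history makes both $D^i_{t,\cdot}$ and $D^j_{t,\cdot}$ nonempty; since a history lies in a block of $\mc H^i_t$ precisely when $D^i_{t,\cdot}\neq\emptyset$, this is exactly disjointness of every block of $\mc H^i_t$ from every block of $\mc H^j_t$, $j\neq i$. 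Conjoining the two equivalences yields the characterisation.
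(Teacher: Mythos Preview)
Your proposal is correct and uses the same toolkit as the paper (the explicit form of $c\cap W_\omega$ for $c\in C^i$, Assumption~\hyperlink{Ass:AP.SEF2}{AP.SEF2} to manufacture choices, Assumption~\hyperlink{Ass:AP.C1}{AP.C1} for the asymmetry, and Proposition~\ref{prop:compute_Ai(x)_mfp} for Part~\ref{thm:link_endogenous_information_H.perfect_information}). The only structural difference is in the necessity direction of Part~\ref{thm:link_endogenous_information_H.perfect_recall}: the paper argues directly---assuming perfect recall and $(\mc P p_{u,t})(A_{<u})\cap A_{<t}\neq\emptyset$, it builds $c_t^\omega,c_u^\omega$ via \hyperlink{Ass:AP.SEF2}{AP.SEF2}, uses the recall hypothesis plus a witness from $D^i_{u,f}$ to force the inclusion $c_t^\omega\cap W_\omega\supseteq c_u^\omega\cap W_\omega$, and then reads off (a) and (b) from the continuation property of $c_u^\omega$---whereas you run the contrapositive and exhibit elements in both symmetric differences. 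Your ``main obstacle'' paragraph correctly identifies the one nontrivial alignment step (building $c_1$ along a restriction of an $A_{<u}$-history so that Proposition~\ref{prop:Dfti} yields $D^i_{u,f^0_{<u}}\subseteq D^i_{t,f^0_{<t}}$), and your use of \hyperlink{Ass:AP.C1}{AP.C1} to get $c_1\cap W_{\omega_0}\setminus c_2\neq\emptyset$ is legitimate since every $c\in C^i$ lies in the relevant $\ms C_t$. For Part~\ref{thm:link_endogenous_information_H.perfect_information} you spell out explicitly why $\im\x\cap\im\x'\neq\emptyset$ forces $\x=\x'$ in the action-path setting, which the paper uses implicitly when it rewrites the image condition as $\X^i\cap\X^j=\emptyset$; otherwise the two arguments coincide.
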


That is, the finer $\mc H^i$ is, the more precise endogenous information agent $i$ has, and vice versa; and $i$ admits perfect recall iff, under the identification introduced by the projection operator $p$, $\mc H^i_t$ is essentially a flow of refined partitions in $t$ and $i$ recalls past decisions.

\section{Well-posedness and equilibrium}\label{sec:well-posedness_equilibrium}

A crucial property of extensive forms is what the author suggests calling ``well-posedness'', namely, that conditional on any history, any strategy profile induces a unique outcome and that any outcome can be attained this way. In this section, the concept of histories from \cite[p.\ 219]{AlosFerrer2008} is reformulated within the setting of decision forests. We relate this to the new concept of random histories in the case of stochastic decision forests $(F,\pi,\X)$ which, in the order consistent, surely non-trivial, and maximal case, turn out as the histories of the associated decision tree $(\Tr,\ge_\Tr)$ and corresponds to the scenario-wise principal up-sets of random moves. Then, well-posedness is formulated following \cite{AlosFerrer2008} and characterised in terms of well-posedness of the scenario-wise classical extensive forms. As a result, notions of preferences on strategy profiles and equilibrium concepts can be defined in well-posed stochastic extensive forms because consequences of strategy profiles always exist and are unique.

In the three final subsections, examples of well- and ill-posed stochastic extensive forms are discussed, first of rather pedagogic nature, second in the case of action path outcomes. Moreover, it is emphasised using examples why the nature representation of dynamic noise is not sufficient in general, an issue that the theory of games in stochastic extensive form can resolve.

\subsection{Histories}

Let us start with extending the definition of histories from \cite[p.\ 219]{AlosFerrer2008} to decision forests, following \cite[Section~4.4]{AlosFerrer2016}. For this, let us remind the reader that a subset $h\subseteq N$ of a poset $(N,\ge)$ is \emph{upward closed} iff for all $x\in h$, we have $\uparrow x \subseteq h$.

\begin{definition}[\cite{AlosFerrer2016}]\label{def:history}
    Let $(F,\ge)$ be a decision forest. A \emph{history in $(F,\ge)$} is a non-empty, non-maximal and upward closed chain. The set of histories in $(F,\ge)$ is generically denoted by $H$.
\end{definition}

Actually, two histories may be equivalent in that they have the same sets of maximal chains containing them, respectively. This motivates the following lemma and definition.

\begin{lemma}\label{lemma:closed_history}
    Let $(F,\ge)$ be a decision forest and let $h\in H$ be a history.
    \begin{enumerate}
        \item\label{lemma:closed_history.1} There is a unique upward closed chain $\overline h$ in $(F,\ge)$ a) satisfying $\overline h \subseteq w$ for all maximal chains $w$ in $(F,\ge)$ with $h\subseteq w$ and b) maximal with respect to set inclusion among all chains in $(F,\ge)$ satisfying Property a).
        \item\label{lemma:closed_history.2} We have
        \[ \overline h = \bigcap \{ \uparrow x\mid x\in F\colon h\subseteq \uparrow x\}. \]
        \item\label{lemma:closed_history.3} For any further history $h'\in H$, we have $\overline {h'} = \overline h$ iff for any maximal chain $w$ in $(F,\ge)$ we have
        \[ h'\subseteq w \quad \Longleftrightarrow\quad h\subseteq w. \]
        \item\label{lemma:closed_history.4} We have $\overline h = h \cup\{\inf h\}$ if $h$ admits an infimum, and $\overline h = h$ otherwise.
    \end{enumerate}
\end{lemma}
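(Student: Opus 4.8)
The plan is to set $\overline h := \bigcap U$, where $U := \{\uparrow x \mid x\in F,\, h\subseteq\uparrow x\}$, i.e.\ $U$ is the family of up-sets $\uparrow x$ of lower bounds $x$ of $h$, and then to verify items \ref{lemma:closed_history.1}--\ref{lemma:closed_history.4} for this $\overline h$. Two preliminary observations carry most of the load. First, $U\neq\emptyset$: by the Hausdorff maximality principle extend the chain $h$ to a maximal chain $w$; since $h$ is non-maximal, $w\setminus h\neq\emptyset$, and any $z\in w\setminus h$ is comparable with every $y\in h$ (both lie in the chain $w$) but cannot satisfy $z\ge y$ (that would give $z\in\uparrow y\subseteq h$), so $z<y$ for all $y\in h$; hence $z$ is a lower bound of $h$ and $\uparrow z\in U$. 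The same argument shows that for \emph{any} chain $w$ with $h\subseteq w$, every element of $w\setminus h$ is a lower bound of $h$. Second, I would record two elementary facts about decision forests: (L1) every maximal chain $w$ is upward closed, for if $z\in w$ then $w\cup\uparrow z$ is again a chain (every element of the chain $\uparrow z$ is comparable with every element of $w$), so $\uparrow z\subseteq w$ by maximality; and (L2) if $y<z$ in $F$ there is a maximal chain $w'$ with $z\in w'$, $y\notin w'$, because the separation axiom gives a maximal chain meeting $\{y,z\}$ in a single point, and the case ``$y\in w'$, $z\notin w'$'' is impossible by (L1), as $z\in\uparrow y\subseteq w'$.

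For items \ref{lemma:closed_history.1} and \ref{lemma:closed_history.2}: since $U\neq\emptyset$, $\overline h=\bigcap U$ is a subset of a chain, hence a chain; it is an intersection of up-sets, hence upward closed; and $h\subseteq\uparrow x$ for each member of $U$, so $h\subseteq\overline h$. That $\overline h$ satisfies Property a) is checked as follows: given a maximal chain $w\supseteq h$ and $z\in\overline h$, every $y\in w$ is comparable with $z$ (if $y\in h$ then $y,z$ both lie in some member $\uparrow x$ of $U$; if $y\in w\setminus h$ then $\uparrow y\in U$, so $z\ge y$), hence $z\in w$ by maximality of $w$. The decisive step is to show that every chain $c$ satisfying Property a), i.e.\ with $c\subseteq w$ for all maximal chains $w\supseteq h$, is contained in $\overline h$: fix $z\in c$ and a lower bound $x$ of $h$; extending the chain $h\cup\{x\}$ to a maximal chain $w_0$, we get $z\in w_0$ (as $c\subseteq w_0$) and $x\in w_0$, so $z$ and $x$ are comparable; if $z<x$, then (L2) yields a maximal chain $w'$ with $x\in w'$, $z\notin w'$, while (L1) gives $w'\supseteq\uparrow x\supseteq h$, so Property a) for $c$ forces $z\in w'$, a contradiction; hence $z\ge x$, and as $x$ was an arbitrary lower bound of $h$, $z\in\bigcap U=\overline h$. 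Thus $\overline h$ is the $\subseteq$-greatest chain satisfying Property a), which gives the maximality required in b); and if $c$ is any $\subseteq$-maximal chain satisfying a), then $c\subseteq\overline h$ while $\overline h$ satisfies a), so $c=\overline h$, giving uniqueness. Item \ref{lemma:closed_history.2} is then just the definition of $\overline h$.

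For item \ref{lemma:closed_history.3}: if $h$ and $h'$ lie in exactly the same maximal chains, then Property a) for $h$ makes $\overline h$ a chain contained in every maximal chain through $h'$, so the ``greatest chain'' characterisation applied to $h'$ gives $\overline h\subseteq\overline{h'}$, and symmetrically $\overline{h'}\subseteq\overline h$; conversely, if $\overline h=\overline{h'}$ and $w$ is a maximal chain, then $h\subseteq w$ implies $\overline h\subseteq w$ by Property a), hence $h'\subseteq\overline{h'}=\overline h\subseteq w$, and symmetrically. For item \ref{lemma:closed_history.4}: if $\ell:=\inf h$ exists in $F$, then $\uparrow\ell\in U$ and $\uparrow\ell\subseteq\uparrow x$ for every lower bound $x$ of $h$, so $\overline h=\uparrow\ell$; and $\uparrow\ell=h\cup\{\ell\}$, since a node $z>\ell$ with $z\notin h$ satisfies $z\not\ge y$ for all $y\in h$ (else $z\in\uparrow y\subseteq h$) and is comparable with each such $y$ inside the chain $\uparrow\ell$, hence $z<y$ for all $y\in h$, making $z$ a lower bound of $h$ strictly above $\ell$, contradicting $\ell=\inf h$; thus $\overline h=h\cup\{\inf h\}$. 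If $\inf h$ does not exist, any $z\in\overline h\setminus h$ is, by the same comparability argument, a lower bound of $h$ and also an upper bound of the set of all lower bounds of $h$ (since $z\in\uparrow x$ for each such $x$), so $z=\inf h$, which is impossible; hence $\overline h=h$.

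The step I expect to be the main obstacle is the ``decisive step'' in the second paragraph, namely that every chain contained in all maximal chains through $h$ already lies in $\bigcap U$; this is the only place where the \emph{separation} axiom of a decision forest is genuinely needed, through the branching lemma (L2). Everything else is order-theoretic bookkeeping, modulo the routine appeals to the Hausdorff maximality principle used to guarantee $U\neq\emptyset$ and to produce the auxiliary maximal chains.
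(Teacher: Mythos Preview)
Your proof is correct. The organisation differs from the paper's in one respect worth noting: you take $\overline h := \bigcap U$ (the intersection of the principal up-sets $\uparrow x$ over all lower bounds $x$ of $h$) as the \emph{definition} and then verify Property~a) directly, whereas the paper defines $\overline h := \bigcap W(h)$ (the intersection of all maximal chains through $h$), for which Property~a) and maximality are immediate, and only afterwards establishes the equality $\bigcap W(h)=\bigcap U$ as item~\ref{lemma:closed_history.2}. Your ``decisive step''---that every chain contained in all maximal chains through $h$ already lies in $\bigcap U$---is exactly the paper's inclusion $\bigcap W(h)\subseteq\bigcap U$, and both arguments rest on the separation axiom of decision forests via what you isolate as (L2). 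So the two routes trade effort: yours gets item~\ref{lemma:closed_history.2} for free but must work for item~\ref{lemma:closed_history.1}; the paper's does the reverse. The proofs of items~\ref{lemma:closed_history.3} and~\ref{lemma:closed_history.4} are essentially the same in both.
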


\begin{definition}\label{def:closed_history}
    Let $(F,\ge)$ be a decision forest and let $h\in H$ be a history. $\overline h$ is said the \emph{closure of $h$}. $h$ is said \emph{closed} iff $h=\overline h$.
\end{definition}

From the perspective of outcome generation, it seems to suffice to describe historical dependence in terms of closed histories. This is confirmed by Lemma~\ref{lemma:R_continuous_in_h} in the following subsection.\smallskip

Within the realm of stochastic decision forests the exogenous information agents can condition on is described based on random moves rather than on moves. This is particularly true in the order consistent case which motivates the following concept of histories compatible with random moves.

\begin{definition}\label{def:random_history}
    Let $(F,\pi,\X)$ be an order consistent stochastic decision forest on an exogenous scenario space $(\Omega,\ms E)$. A \emph{(closed) random history in $(F,\pi,\X)$ }is a map $\h$ with domain $D_\h\in\ms E \setminus \{\emptyset\}$ such that $\h(\omega)$ is a (closed, respectively) history in $(T_\omega,\supseteq)$ for all $\omega\in D_\h$ and such that for all $\x\in\X$ admitting $\omega\in D_\x\cap D_\h$ with $\x(\omega)\in\h(\omega)$, we have $D_\x \supseteq D_\h$ and $\x(\omega') \in \h(\omega')$ for all $\omega'\in D_\h$. The set of random histories in $(F,\pi,\X)$ is generically denoted by $\H$.
\end{definition} 

\begin{lemma}\label{lemma:histories_canonical_inj}
    Let $(F,\pi,\X)$ be a stochastic decision forest on an exogenous scenario space $(\Omega,\ms E)$. Then is an injection $X\inj H$ associating to any move $x\in X$ the closed history $\uparrow x$. Moreover, if $(F,\pi,\X)$ is order consistent, there is an injection $\X\inj \H$ associating to any random move $\x\in\X$ the closed random history 
    \[ \h \colon D_\x \to H, \, \omega \to \uparrow \x(\omega). \]
\end{lemma}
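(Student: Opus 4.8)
The plan is to treat the two injections separately; the second one is obtained by running the first fibrewise over $\Omega$ and then invoking order consistency to secure the extra compatibility clause in Definition~\ref{def:random_history}. \emph{The map $X\to H$.} For a move $x\in X$ I would check that $\uparrow x$ is a closed history in $(F,\supseteq)$. It is a chain since $(F,\supseteq)$ is a forest; it is non-empty because $x\in\uparrow x$; it is upward closed because $z\supseteq x$ implies $\uparrow z\subseteq\uparrow x$; and it is not a maximal chain because, $x$ being a move (non-terminal), there is $y$ with $x\supsetneq y$, and then $\uparrow x\cup\{y\}$ is a chain strictly containing $\uparrow x$, as every element of $\uparrow x$ contains $x$, hence contains $y$. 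Closedness follows from Lemma~\ref{lemma:closed_history}(\ref{lemma:closed_history.4}): $x$ is the minimum, hence the infimum, of $\uparrow x$ and lies in $\uparrow x$, so $\overline{\uparrow x}=\uparrow x$. Injectivity is immediate, since $\uparrow x$ determines $x$ as its minimum.

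\emph{The map $\X\to\H$.} Assume order consistency, fix $\x\in\X$, and set $D_\h:=D_\x$ and $\h(\omega):=\uparrow\x(\omega)$ for $\omega\in D_\x$; here $D_\x\in\ms E\setminus\{\emptyset\}$ by Definition~\ref{def:sdf}, and $\pi\circ\x=\id_{D_\x}$ forces $\x(\omega)\in T_\omega$. I would first observe that the principal up-set $\uparrow\x(\omega)$ taken in $F$ already lies inside $T_\omega$: by Lemma~\ref{lemma:partion_of_forest}, any $z\supseteq\x(\omega)$ lies in the connected component of $\x(\omega)$, namely $T_\omega$. Hence $\uparrow\x(\omega)$ coincides with the principal up-set of $\x(\omega)$ in $(T_\omega,\supseteq)$, and, since $\x(\omega)$ is a move of the (decision) tree $T_\omega$ by Lemma~\ref{lemma:partion_of_forest}, the argument of the first part applied inside $T_\omega$ shows that $\h(\omega)$ is a closed history in $(T_\omega,\supseteq)$.

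The crucial point is the remaining defining clause of a random history, and this is exactly where order consistency is used. Suppose $\y\in\X$ admits some $\omega\in D_\y\cap D_\h$ with $\y(\omega)\in\h(\omega)$, i.e.\ $\y(\omega)\supseteq\x(\omega)$. Applying order consistency of $\X$ (Definition~\ref{def:sdf.addon}) to the pair $(\y,\x)$ yields $\y\ge_\X\x$, i.e.\ $D_\y\supseteq D_\x=D_\h$ and, for every $\omega'\in D_\h$, $\y(\omega')\supseteq\x(\omega')$, that is, $\y(\omega')\in\uparrow\x(\omega')=\h(\omega')$. This is precisely what Definition~\ref{def:random_history} demands, so $\h$ is a closed random history. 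For injectivity, if two random moves induce the same random history, then their common domain equals both of their domains, and on each $\omega$ in it their values have equal principal up-sets, hence equal minima; so the two random moves agree as maps.

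The only genuine obstacle is recognising that the somewhat technical fibrewise-compatibility clause in Definition~\ref{def:random_history} is literally order consistency transported through the assignment $\x\mapsto(\omega\mapsto\uparrow\x(\omega))$; everything else is routine bookkeeping about principal up-sets, minima, and connected components of a forest.
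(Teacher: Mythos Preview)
Your proof is correct and follows essentially the same approach as the paper's: verify that $\uparrow x$ is a closed history via Lemma~\ref{lemma:closed_history}(\ref{lemma:closed_history.4}), use order consistency to establish the fibrewise compatibility clause of Definition~\ref{def:random_history}, and deduce injectivity from minima of principal up-sets. If anything, your argument is slightly more thorough than the paper's, which does not explicitly record that $\uparrow\x(\omega)$, taken in $F$, already lies in $T_\omega$ and hence is a history in $(T_\omega,\supseteq)$ as required by the definition of a random history.
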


Via these injections, we consider $X$ as a subset of $H$ and, if applicable, $\X$ as a subset of $\H$. Moreover, if $(F,\pi,\X)$ is surely non-trivial, then, via the natural injection $\Omega\to X,\,\omega\mapsto W_\omega$, $\Omega$ can be seen as a subset of $X$ and thus of $H$, too.\smallskip

The following proposition establishes that random histories in an order consistent, surely non-trivial, and maximal \textsc{sdf} and histories in the induced decision tree compatible with at least one scenario naturally correspond to each other. 

\begin{proposition}\label{prop:random_histories}
    Let $(F,\pi,\X)$ be an order consistent, surely non-trivial, and maximal stochastic decision forest on an exogenous scenario space $(\Omega,\ms E)$ and $(\Tr,\ge_\Tr)$ be the induced decision tree. Denote the set of histories in $(\Tr,\ge_\Tr)$ by $H_\Tr$. Let $f$ be the map with domain $\H$ associating to any $\h\in\H$ the set of $\x\in\X$ such that there is $\omega\in D_\x\cap D_\h$ with $\x(\omega)\in \h(\omega)$. Then, the following statements hold true.
    \begin{enumerate}
        \item\label{prop:random_histories.image_H_Tr} The image of $f$ is given by the set of a) all $h_\Tr\in H_\Tr$ that are non-maximal chains in $(\X,\ge_\X)$, and b) all maximal chains $h_\Tr$ in $(\X,\ge_\X)$ admitting non-empty $D\in\ms E\setminus\{\emptyset\}$ with $D\subseteq \bigcap_{\x\in h_\Tr} D_\x$ such that for any $\omega\in D$ there is $w\in W_\omega$ with $w\in\bigcap_{\x\in h_\Tr} \x(\omega)$. 
        \item\label{prop:random_histories.faithful} $f$ is faithful in that for all $\h_1,\h_2\in\H$ with $f(\h_1) = f(\h_2)$ there is $\h\in\H$ with $D_{\h_1}\cup D_{\h_2}\subseteq D_\h$ and $\h|_{D_{\h_k}} = \h_k$ for both $k=1,2$.
        \item\label{prop:random_histories.closed} For any closed random history $\h\in\H$, $f(\h)$ is a closed history in $(\Tr,\ge_\Tr)$. 
    \end{enumerate}
\end{proposition}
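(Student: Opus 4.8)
The plan is to route everything through two structural facts about $f$, which I would record first. \emph{Fact (i):} for every $\h\in\H$ and every $\omega\in D_\h$, the history $\h(\omega)$ consists only of moves and equals $\{\x(\omega)\mid \x\in f(\h)\}$. The inclusion of $\{\x(\omega)\mid\x\in f(\h)\}$ into $\h(\omega)$ is exactly the consistency clause in Definition~\ref{def:random_history}; the reverse holds because every move of $T_\omega$ lying in $\h(\omega)$ is $\x(\omega)$ for a unique $\x$ (the covering property of $\X$, plus injectivity of the evaluation map, which holds since $\Tr\bullet\Omega\to F$ is an order isomorphism by \cite[Proposition~2.4]{Rapsch2024DecisionA}), and such an $\x$ lies in $f(\h)$ by definition; and $\h(\omega)$ contains no terminal node since it is an upward closed, non-maximal chain, hence is entirely made of moves. \emph{Fact (ii):} $f(\h)$ is always a non-empty upward closed chain in $(\X,\ge_\X)$. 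It is a chain because for $\x_1,\x_2\in f(\h)$ and $\omega\in D_\h\subseteq D_{\x_1}\cap D_{\x_2}$ the nodes $\x_1(\omega),\x_2(\omega)$ lie in the chain $\h(\omega)$, and order consistency of $\X$ upgrades the resulting inclusion to comparability; it is upward closed because $\x'\ge_\X\x\in f(\h)$ gives $\x'(\omega)\supseteq\x(\omega)\in\h(\omega)$ for $\omega\in D_\h$, and upward closedness of $\h(\omega)$ in $T_\omega$ then yields $\x'(\omega)\in\h(\omega)$; and it is non-empty because for $\omega_0\in D_\h$ the root $W_{\omega_0}$ is a move (as $(F,\pi,\X)$ is surely non-trivial) lying in $\h(\omega_0)$.

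For Part~\ref{prop:random_histories.image_H_Tr}, ``image $\subseteq$'' follows from Fact (ii): since the non-moves of $\Tr$ are exactly its minimal elements, all $\ge_\Tr$-ancestors of elements of $f(\h)$ are moves, hence in $\X$, so $f(\h)$ is also upward closed in $\Tr$; if it is non-maximal in $(\X,\ge_\X)$ it is therefore a history of $\Tr$, i.e.\ of type~a), while if it is maximal in $(\X,\ge_\X)$ then taking $D=D_\h$ and, for $\omega\in D$, the outcome $w\in W_\omega$ on any play extending $\h(\omega)$, one gets $w\in\bigcap_{\x\in f(\h)}\x(\omega)$, so $f(\h)$ is of type~b). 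For the reverse inclusion I reconstruct $\h$ from $h_\Tr$: in case~a) non-maximality yields $\x^\ast\in\X\setminus h_\Tr$ with $h_\Tr\cup\{\x^\ast\}$ a chain, and upward closedness of $h_\Tr$ forces $\x^\ast\le_\X\x$ for all $\x\in h_\Tr$, so $D_{\x^\ast}\subseteq\bigcap_{\x\in h_\Tr}D_\x$; put $D_\h:=D_{\x^\ast}$ and $\h(\omega):=\{\x(\omega)\mid\x\in h_\Tr\}$; in case~b) put $D_\h:=D$ and $\h(\omega):=\{\x(\omega)\mid\x\in h_\Tr\}$. In both cases one checks, using order consistency, injectivity of evaluation and upward closedness (resp.\ maximality) of $h_\Tr$, that each $\h(\omega)$ is a non-maximal upward closed chain of moves in $T_\omega$ — non-maximal because $\x^\ast(\omega)$, resp.\ a separating or terminal node below $\h(\omega)$ on the play through the witness $w$, sits strictly below it — hence a history; that $\h$ obeys the consistency clause; and that $f(\h)=h_\Tr$ (``$\supseteq$'' from $D_\h\subseteq D_\x$ for $\x\in h_\Tr$, ``$\subseteq$'' from injectivity of evaluation).

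Part~\ref{prop:random_histories.faithful} is then short: if $f(\h_1)=f(\h_2)=:h_\Tr$, Fact (i) gives $\h_1(\omega)=\{\x(\omega)\mid\x\in h_\Tr\}=\h_2(\omega)$ for $\omega\in D_{\h_1}\cap D_{\h_2}$, while for $\x\in h_\Tr$ the consistency clauses of $\h_1,\h_2$ give $D_\x\supseteq D_{\h_1}\cup D_{\h_2}$; hence $\h:=\h_1\cup\h_2$ with $D_\h:=D_{\h_1}\cup D_{\h_2}$ is a well-defined random history (the consistency clause is verified scenariowise as for $\h_1$ or $\h_2$) restricting to $\h_k$ on $D_{\h_k}$, which is the asserted faithfulness. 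For Part~\ref{prop:random_histories.closed}, note first that $f(\h)$ is a history of $\Tr$ for every $\h\in\H$ (a non-empty upward closed chain by Fact (ii), and non-maximal in $\Tr$ even in the type-b) case since the witness outcome supplies a terminal node of $\Tr$ strictly below it). Now let $\h$ be closed; by part~\ref{lemma:closed_history.4} of Lemma~\ref{lemma:closed_history} it suffices to show that when $\inf_\Tr f(\h)$ exists it lies in $f(\h)$. Fixing $\omega_0\in D_\h$, evaluation at $\omega_0$ is, by \cite[Proposition~2.4]{Rapsch2024DecisionA}, an order isomorphism from $\{\y\in\Tr\mid\omega_0\in D_\y\}$ onto $(T_{\omega_0},\supseteq)$ carrying $f(\h)$ onto $\h(\omega_0)$. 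If $f(\h)$ has a minimum we are done; otherwise $\h(\omega_0)$ has no minimum, hence, being closed, no infimum, for each $\omega_0\in D_\h$. One then excludes every shape of $\y^\ast:=\inf_\Tr f(\h)$: it cannot be a leaf (it would be the unique lower bound of $f(\h)$, forcing $\h(\omega_0)$ to have a terminal infimum, impossible for a history); it cannot be a move meeting $D_\h$ (via the fibrewise isomorphism it would give $\inf_{T_{\omega_0}}\h(\omega_0)=\y^\ast(\omega_0)$); and it cannot be a move with $D_{\y^\ast}\cap D_\h=\emptyset$, because a node strictly below $\h(\omega_0)$ (which exists since $\h(\omega_0)$ is non-maximal), read as the random move or leaf it represents, is a lower bound of $f(\h)$ not $\le_\Tr\y^\ast$, contradicting that $\y^\ast$ is the greatest lower bound. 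So $\inf_\Tr f(\h)$ does not exist in this case, and either way $f(\h)$ equals its closure.

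I expect the main obstacle to be the reverse inclusion of Part~\ref{prop:random_histories.image_H_Tr}: verifying that the reconstructed $\h(\omega)=\{\x(\omega)\mid\x\in h_\Tr\}$ is a genuine, non-maximal history rather than a complete play. This is where sure non-triviality, maximality of the stochastic decision forest (to rule out further random moves strictly below $h_\Tr$) and the separation property of decision forests (to turn the class-b) witness into a node strictly below $\h(\omega)$) must all be combined, and it is also the delicate point behind the domain bookkeeping in the infimum argument for Part~\ref{prop:random_histories.closed}, which is a more mechanical instance of the same difficulty.
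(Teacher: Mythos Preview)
Your proof is correct and follows essentially the same route as the paper: your Fact~(i) is precisely the auxiliary lemma the paper isolates separately (Lemma~\ref{lemma:random_histories_h_f(h)}), your Fact~(ii) and the reconstruction $\h(\omega)=\{\x(\omega)\mid\x\in h_\Tr\}$ with $D_\h=D_{\x^\ast}$ resp.\ $D_\h=D$ match the paper's argument for Part~\ref{prop:random_histories.image_H_Tr} step for step, and your Part~\ref{prop:random_histories.faithful} is identical. The only presentational difference is Part~\ref{prop:random_histories.closed}: the paper argues directly that if $\inf_\Tr f(\h)$ exists then $\omega\in D_{\inf f(\h)}$ for every $\omega\in D_\h$ (via any lower bound of $f(\h)$ living over $\omega$), so $(\inf f(\h))(\omega)=\inf\h(\omega)\in\h(\omega)$ by closedness, whence $\inf f(\h)\in f(\h)$; your contrapositive case split on the shape of $\y^\ast$ unpacks exactly this reasoning.

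One small remark on your closing commentary: maximality of the \textsc{sdf} is not used to ``rule out further random moves below $h_\Tr$'' but only to guarantee that $(\Tr,\ge_\Tr)$ is a decision tree so that $H_\Tr$ makes sense, and the separation property of decision forests is not needed for the class-b) reverse inclusion---the terminal node $\{w\}$ supplied by the witness already sits strictly below $\h(\omega)$.
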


\subsection{Induced outcomes and well-posedness}

In extensive form theory, the outcome of strategic interaction is defined as the consequence of local decision making compatible with the rules defined by the tree and choice structures. More precisely, ``the'' outcome $w$ compatible with a strategy profile $s$, given a history $h$, is characterised by the fact that it is not discarded by any of the strategy profile's choices at moves $x$ succeeding the history $h$ and containing $w$. Yet, beyond the realm of standard finite or discrete-time situations, such an outcome need not exist, or there may be several of them, and there may even be outcomes that can never be attained that way (see, i.a.\ \cite{Simon1989,Stinchcombe1992,AlosFerrer2008}). In a well-posed extensive form model of decision making, these three problems must not arise.

From a decision-theoretic point of view, this implies two things. First, it is relevant to formally define well-posedness of (stochastic) extensive forms. This is done in the following definition. As the theory exposed here is based on the refined partitions approach developed by Alós-Ferrer and Ritzberger (see, e.g.\ \cite{AlosFerrer2016}), this definition here is necessarily structurally similar to the definitions in \cite[p.\ 228]{AlosFerrer2008} and \cite[pp.\ 102, 105, 106]{AlosFerrer2016}. For the sake of accessability, a notation similar to that used in these texts is used here as well. 

Second, additional assumptions are required to make an extensive form well-posed. This is discussed below the following definition.

\begin{definition}\label{def:sef_well-posed}
    Let $\F = (F,\pi,\X,I,\ms F,\ms C,C)$ be a stochastic pseudo-extensive form on an exogenous scenario space $(\Omega,\ms E)$. 

    \begin{enumerate}
        \item\label{def:sef_well-posed.induced_outcome} Let $R = R(\F)$ be the map assigning to all triples $w\in W$, $s\in S$, $h\in H$ with $w\in \bigcap h$ the set
        \[ R(w,s \mid h) = \bigcap \Big\{ s^i(x) \mid x \in X,\,i\in J(x)\colon~ w\in x \subseteq \bigcap h \Big\}. \]

        Given such data, $w$ is said \emph{compatible with $s$ given $h$} iff $ w \in R(s,w \mid h) $.
        \item\label{def:sef_well-posed.well_posed} $\F$ is called \emph{well-posed} iff 
        \begin{enumerate}
            \item\label{def:sef_well-posed.well_posed.onto_outcomes} for all $h\in H$ and all $w\in\bigcap h$, there is $s\in S$ such that $w$ is compatible with $s$ given $h$;
            \item\label{def:sef_well-posed.well_posed.existence} for all $s\in S$ and all $h\in H$, there is $w\in \bigcap h$ that is compatible with $s$ given $h$;
            \item\label{def:sef_well-posed.well_posed.uniqueness} for all $s\in S$ and all $h\in H$, there is at most one $w\in \bigcap h$ that is compatible with $s$ given $h$, and in this case, $R(w,s \mid h) = \{w\}$.
        \end{enumerate}

        \item\label{def:sef_well-posed.well_posed.outcome_map} Suppose that $\F$ is well-posed. Then, for any strategy profile $s\in S$ and history $h\in H$, the unique outcome $w\in \bigcap h$ compatible with $s$ given $h$ is said \emph{outcome induced by $s$ given $h$}, and denoted by $w = \Out(s\mid h)$. The corresponding map $\Out(.\mid .)\colon S\times H\to W$ is called \emph{(induced) outcome map}. %For any strategy profile $s\in S$ and any random history $\h$, let $\Out(s \mid \h)$ denote the map $D_\h \to W, \,\omega \mapsto \Out(s \mid \h(\omega))$.
    \end{enumerate}
\end{definition}

Note that $R(w,s \mid h)$ describes the set of outcomes that are compatible with both the outcome -- or decision path associated to -- $w$ and the strategy profile $s$ given the history $h$. By definition, $w$ is compatible with $s$ given $h$ iff it is compatible with itself and $s$ given $h$. A minimum requirement on the result of strategic interaction in terms of $s$ and given history $h$ is not to be discarded by $s$, given $h$, that is, to be compatible with $s$ given $h$.

Property \ref{def:sef_well-posed.well_posed.onto_outcomes} means that given any history $h$, any outcome yet undiscarded (by $h$) can be compatible with some strategy profile. If the other two existence and uniqueness properties are satisfied so that the outcome map can be defined at all, Property \ref{def:sef_well-posed.well_posed.onto_outcomes} is equivalent to that for any history $h\in H$ we have $\im\Out(. \mid h) = \bigcap h$. Property \ref{def:sef_well-posed.well_posed.existence} describes existence of compatible outcomes for all strategy profiles given any history. Property \ref{def:sef_well-posed.well_posed.uniqueness} describes uniqueness of compatible outcomes $w$ for all strategy profiles $s$ given any history $h$ and, moreover, of outcomes compatible with such triplets $(w,s,h)$.

Note that in the special case of classical extensive forms this corresponds to the definition of ``induced'' outcomes and conditional versions of conditions (A0), (A1), (A2) in \cite[Chapter~5, pp.\ 102, 105, 106]{AlosFerrer2016} though the term ``well-posed'' is not used therein, to the best of the author's knowledge. As explained above, it is of utmost importance to understand when a stochastic extensive form is well-posed, and a main contribution of \cite{AlosFerrer2008,AlosFerrer2011Comment}, comprehensively presented in \cite{AlosFerrer2016}, is to characterise this in the case of classical extensive forms, namely in terms of order-theoretic properties of the underlying decision tree.

The next lemma confirms formally that, as claimed in the preceding subsection, it is sufficient to restrict to closed histories in what concerns outcome generation.

\begin{lemma}\label{lemma:R_continuous_in_h}
    Let $\F = (F,\pi,\X,I,\ms F,\ms C,C)$ be a stochastic pseudo-extensive form on an exogenous scenario space $(\Omega,\ms E)$. Furthermore, let $s\in S$ be a strategy profile, $h\in H$ be a history, and $w\in W=\bigcup F$ be an outcome. Then,
    \[ R(w,s \mid h) = R(w,s \mid \overline h).\]
\end{lemma}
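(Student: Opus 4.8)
The plan is to reduce the asserted identity to one order-theoretic observation: although $\bigcap h$ and $\bigcap\overline h$ may genuinely differ as subsets of $W$, the collection of \emph{nodes} of $F$ contained in $\bigcap h$ coincides with the collection of nodes contained in $\bigcap\overline h$, and the intersection defining $R(w,s\mid\cdot)$ only involves moves contained in its argument.

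First I would dispose of the trivial case where $h$ is already closed, so that $\overline h = h$. Otherwise, by Lemma~\ref{lemma:closed_history}.\ref{lemma:closed_history.4}, $h$ admits an infimum $y := \inf h$ in $(F,\supseteq)$ with $y \notin h$ and $\overline h = h \cup \{y\}$. Since $y$ is a lower bound of $h$ in $(F,\supseteq)$, i.e.\ $y \subseteq z$ for every $z \in h$, one has $y \subseteq \bigcap h$, and therefore $\bigcap\overline h = \big(\bigcap h\big) \cap y = y$.

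The heart of the argument is then the claim that, for every $x \in F$, $x \subseteq \bigcap h$ holds if and only if $x \subseteq \bigcap\overline h$. The backward implication is immediate from $\bigcap\overline h = y \subseteq \bigcap h$. For the forward one, if $x \subseteq \bigcap h$ then $x \subseteq z$ for all $z \in h$, so $x$ is a lower bound of $h$ in $(F,\supseteq)$, hence $x \subseteq y$ since $y = \inf h$ is the greatest lower bound of $h$; that is, $x \subseteq y = \bigcap\overline h$. Applying this with $x$ restricted to moves and conjoined with the conditions $w \in x$ and $i \in J(x)$, the two index sets of pairs $(x,i)$ over which $R(w,s\mid h)$ and $R(w,s\mid\overline h)$ are defined are literally equal, so the two intersections coincide. (If $w \notin \bigcap h$, both index sets are empty, since $w \in x \subseteq \bigcap h$ would force $w \in \bigcap h$, so that degenerate case is covered as well.)

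I expect the only subtle point to be the temptation to try to prove $\bigcap h = \bigcap\overline h$, which is false in general: $\bigcap\overline h = \inf h$ can be a proper subset of $\bigcap h$. What is true, and all that is actually needed, is that the nodes of $F$ lying below the chain $h$ in $(F,\supseteq)$ are precisely those lying below $\inf h$ — which is nothing more than the defining property of the infimum. Everything else is routine unwinding of the definition of $R$.
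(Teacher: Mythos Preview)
Your proof is correct and follows essentially the same route as the paper's own argument: reduce to the equivalence $x\subseteq\bigcap h \Leftrightarrow x\subseteq\bigcap\overline h$ for nodes $x$, handle the trivial case $h=\overline h$, and in the nontrivial case invoke Lemma~\ref{lemma:closed_history}.\ref{lemma:closed_history.4} to write $\overline h=h\cup\{\inf h\}$ and use the defining property of the infimum. Your additional remarks (explicitly computing $\bigcap\overline h=\inf h$, the parenthetical on $w\notin\bigcap h$, and the warning about the false identity $\bigcap h=\bigcap\overline h$) are correct elaborations but not needed for the core argument.
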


Next, we discuss under what conditions an extensive form is well-posed. First, we consider the property whether any undiscarded outcome can be attained.
\begin{thm}\label{thm:well_posed.onto_outcomes}
    For any stochastic extensive form $\F$, Property \ref{def:sef_well-posed}.\ref{def:sef_well-posed.well_posed.onto_outcomes} is satisfied.
\end{thm}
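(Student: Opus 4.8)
The plan is to prove the weakest of the three well-posedness conditions directly, by building, for a given history $h\in H$ and outcome $w\in\bigcap h$, a strategy profile $s=(s^i)_{i\in I}$ that never discards $w$ along $h$. By Definition~\ref{def:sef_well-posed}.\ref{def:sef_well-posed.induced_outcome}, $w$ is compatible with $s$ given $h$ iff $w\in s^i(x)$ for every move $x\in X$ and every agent $i\in J(x)$ with $w\in x\subseteq\bigcap h$; so it suffices to exhibit an $s$ with this property. Since, by Definition~\ref{def:strategy} and Proposition~\ref{prop:strategies}, a strategy for $i$ may be regarded as a map $X^i\to C^i$ constant on the blocks of the partition $\{P(c)\mid c\in C^i\}$ of $X^i$ (Proposition~\ref{prop:information_sets}), the construction reduces to this: for each agent $i$ and each block $\mf p\in\mf P^i$ that contains a move $x$ with $w\in x\subseteq\bigcap h$, choose $s^i(\mf p)\in A^i(\mf p)$ with $w\in s^i(\mf p)$; for every other block, choose $s^i(\mf p)\in A^i(\mf p)$ arbitrarily (this is possible since $A^i(\mf p)=A^i(\x)\neq\emptyset$ for any $\x\in\mf p\subseteq\X^i$). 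The only substantive point is the existence of a choice containing $w$ at the relevant blocks.

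The key lemma is therefore: \emph{if $x\in X$ is a move, $i\in J(x)$ and $w\in x$, then there is $c\in A^i(x)$ with $w\in c$.} I would argue as follows. Because $x$ is a move it is non-terminal, so $\downarrow x\neq\{x\}$ and there is a node $y_0\in F$ with $y_0\subsetneq x$. Write $w$ also for its decision path, i.e.\ the maximal chain $\{z\in F\mid w\in z\}$ in $(F,\supseteq)$; it contains $x$. I claim it contains a node $y$ with $y\subsetneq x$. Indeed, otherwise every $z$ on this chain would be comparable with $x$ yet not strictly below it, hence $z\supseteq x\supsetneq y_0$; and $y_0$ is not itself on the chain (else $y_0\supseteq x$). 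So adjoining $y_0$ would produce a chain strictly larger than $w$, contradicting maximality. For such a $y$ we have $y\in\downarrow x\setminus\{x\}$ and $w\in y$, and Axiom~\ref{def:SEF.enough_choices} applied to $x$, $i$, $y$ yields $c\in A^i(x)$ with $c\supseteq y\ni w$, so $w\in c$.

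To assemble the profile, fix such a $c$ for one witnessing move in each relevant block. There is no conflict between witnesses within a block: if $x\in P(c)$ with $w\in c$ and $x'$ is another move in the same block $\mf p$, then $x'\in P(c)$ as well (Axiom~\ref{def:SEF.P(c)} together with Proposition~\ref{prop:information_sets}), so $c\in A^i(x')$, and the single $c$ serves the whole block. Using the axiom of choice over blocks and agents this defines $s$ as in the first paragraph, and then $w\in s^i(x)$ for every move $x$ and agent $i\in J(x)$ with $w\in x\subseteq\bigcap h$, whence $w\in R(w,s\mid h)$ and $w$ is compatible with $s$ given $h$; the degenerate case in which no such pair $(x,i)$ exists is trivial since $\bigcap\emptyset$ contains $w$.

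The one genuinely non-routine step is the existence of a node strictly below $x$ on $w$'s decision path, obtained above from maximality of that chain; the rest is bookkeeping with the information-set partition and a single invocation of Axiom~\ref{def:SEF.enough_choices}. I note that the strong separation Axiom~\ref{def:SEF.separation} is never used, so the same argument also establishes the statement for stochastic pseudo-extensive forms.
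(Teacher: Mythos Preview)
Your proof is correct and follows essentially the same approach as the paper: use Axiom~\ref{def:SEF.enough_choices} to find, at each relevant move, a choice containing $w$, and then package these into a strategy profile. The only organizational difference is in how the extension to a full strategy is handled. The paper observes, via the Heraclitus property (Lemma~\ref{lemma:Heraclitus_property}), that two distinct moves in $X_0^i=\{x\in X^i\mid w\in x\subseteq\bigcap h\}$ lie on a common decision path and hence cannot share an information set, so the partial map $s_0^i$ on $X_0^i$ extends without conflict via Lemma~\ref{lemma:extension_of_X_strategies}. You instead work directly on information sets and note that a single choice $c$ with $w\in c$ serves the whole block containing a witnessing move; this avoids invoking Heraclitus explicitly but is in effect the same mechanism. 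Your observation that Axiom~\ref{def:SEF.separation} is never used, so the argument covers stochastic pseudo-extensive forms as well, is also correct.
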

In case of classical extensive forms and the unconditional version of Property \ref{def:sef_well-posed}.\ref{def:sef_well-posed.well_posed.onto_outcomes}, this basic fact has already been established in \cite[Theorem~5.1]{AlosFerrer2016}.

Furthermore, we classify existence and uniqueness. For this, the classification from the classical case in \cite{AlosFerrer2008,AlosFerrer2011Comment} and \cite{AlosFerrer2016} is applied. Therefore, it remains to describe existence and uniqueness in terms of adequate existence and uniqueness properties for suitable classical extensive forms. In fact, these classical pseudo-extensive forms are given by the scenario-wise truncations of the given stochastic pseudo-extensive form:

\begin{proposition}\label{prop:scwise_SEF}
    Let $(F,\pi,\X,I,\ms F,\ms C,C)$ be a stochastic (pseudo-) extensive form on some exogenous scenario space $(\Omega,\ms E)$ and $\omega\in\Omega$. Then, $(T_\omega,I,C_\omega)$ is a classical (pseudo-) extensive form, respectively, where $C_\omega = (C_\omega^i)_{i\in I}$ and
    $$ C_\omega^i = \{ c\cap W_\omega \mid c\in C^i\} \setminus \{\emptyset\}, \qquad i\in I. $$
\end{proposition}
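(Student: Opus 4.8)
The plan is to realise $(T_\omega,I,C_\omega)$ as the underlying data of a genuine stochastic (pseudo-)extensive form on the singleton exogenous scenario space $(\{\omega\},\mc P\{\omega\})$, which is precisely what Definition~\ref{def:SEF} asks of a classical (pseudo-)extensive form. For the auxiliary ingredients I would take the most economical choices: let $\pi_\omega\colon T_\omega\to\{\omega\}$ be the constant map, let $\X_\omega=\{\x_x\mid x\in X(T_\omega)\}$ with $\x_x\colon\{\omega\}\to X(T_\omega)$, $\x_x(\omega)=x$, and — since $\{\omega\}$ carries only the $\sigma$-algebra $\{\emptyset,\{\omega\}\}$ — set $\ms F^i_{\omega,\x}=\{\emptyset,\{\omega\}\}$ (forced) and $\ms C^i_{\omega,\x}=\emptyset$ for all $i$ and $\x$, with $\ms F_\omega=(\ms F^i_\omega)_{i\in I}$, $\ms C_\omega=(\ms C^i_\omega)_{i\in I}$. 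First I would record that $(T_\omega,\supseteq)$ is a rooted tree by Lemma~\ref{lemma:partion_of_forest}, that its maximal chains are exactly the maximal chains of $(F,\supseteq)$ lying in $T_\omega$, and that $T_\omega$ is its own representation by decision paths — hence a decision tree on $W_\omega$ — by \cite[Theorem~1.7]{Rapsch2024DecisionA} (``a forest of decision trees on sets is a decision forest on their union''). Together with surjectivity of $\pi$ and the obvious covering by $\X_\omega$, this makes $(T_\omega,\pi_\omega,\X_\omega)$ a stochastic decision forest; and each $c\cap W_\omega$ with $c\in C^i$ equals $\bigcup\{x\in\downarrow_F c\mid x\in T_\omega\}$ (using $c=\bigcup\downarrow_F c$ and pairwise disjointness of the $W_{\omega'}$), so $C^i_\omega$ consists of choices of $T_\omega$.

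The technical heart is a localisation lemma for the immediate-predecessor operator: for every choice $c$ of $\F$,
\[ P_{T_\omega}(c\cap W_\omega)=P(c)\cap T_\omega, \]
with $P_{T_\omega}$ computed inside $(T_\omega,\supseteq)$ and the convention $P(\emptyset)=\emptyset$. This rests on two structural facts about connected components of a forest of $V$-posets: $T_\omega$ is upward closed in $(F,\supseteq)$, so $\uparrow_F x=\uparrow_{T_\omega}x$ and $\downarrow_F x=\downarrow_{T_\omega}x$ for $x\in T_\omega$; and the sets $W_{\omega'}$ are pairwise disjoint, so any node contained in $c$ and meeting $W_\omega$ is already a node of $T_\omega$ contained in $c\cap W_\omega$. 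Unwinding the definition of $P$ then gives both inclusions. From this I would harvest the standard consequences: $X(T_\omega)=X(F)\cap T_\omega$; non-redundancy of $c$ in $\F$ transfers to $c\cap W_\omega$ in $T_\omega$; and, since $\x^{-1}(P(c))\subseteq D_\x=\{\omega\}$ is automatically $\emptyset$ or $D_\x$, $\X^i$-completeness is automatic, the active-agent sets agree ($J_\omega(x)=J(x)$ for $x\in T_\omega$), and $\X^i_\omega=\{\x_x\mid x\in X^i\cap T_\omega\}$. In particular every element of $C^i_\omega$ is a non-redundant, $\X^i_\omega$-complete, and — there being no reference choices to test against — $\ms F^i_\omega$-$\ms C^i_\omega$-adapted choice, and the evaluation map $\X^i_\omega\bullet\{\omega\}\to X(T_\omega)$ is injective by construction; so all the ``preliminary'' requirements of Definition~\ref{def:SEF} hold.

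It then remains to check Axioms~\ref{def:SEF.P(c)}--\ref{def:SEF.choice_completeness}. Axioms~\ref{def:SEF.endo_exo_compatible} and~\ref{def:SEF.choice_completeness} are automatic in the singleton case — all the $\ms F^i_{\omega,\x}$ coincide and all the $\ms C^i_{\omega,\x}$ coincide, so \ref{def:SEF.endo_exo_compatible} holds, and condition~\ref{def:SEF.choice_completeness.i} already forces $c'\in C^i_\omega$ since $W_\omega$ is the whole outcome set, so \ref{def:SEF.choice_completeness} holds vacuously — which is exactly the remark made after Definition~\ref{def:SEF} that these two axioms are redundant for classical forms. The remaining Axioms~\ref{def:SEF.P(c)}, \ref{def:SEF.outcomes_faithful}, \ref{def:SEF.enough_choices}, and the separation axiom (\ref{def:SEF.weak_separation} in the pseudo case, \ref{def:SEF.separation} in the \textsc{sef} case) I would each deduce from the corresponding axiom of $\F$ by intersecting the witnessing moves and choices with $T_\omega$, resp.\ $W_\omega$, and invoking the localisation lemma together with the identities $x\subseteq W_\omega$ for $x\in T_\omega$ and $(c_1\cap W_\omega)\cap(c_2\cap W_\omega)=c_1\cap c_2\cap W_\omega$; at each step one checks that the restricted $\F$-witnesses still lie in $C^i_\omega$, non-emptiness coming from the fact that they contain a node of $T_\omega$. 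This shows $(T_\omega,\pi_\omega,\X_\omega,I,\ms F_\omega,\ms C_\omega,C_\omega)$ is a stochastic (pseudo-)extensive form on $(\{\omega\},\mc P\{\omega\})$, hence $(T_\omega,I,C_\omega)$ is a classical (pseudo-)extensive form. The one genuinely delicate point is the localisation lemma and its attendant bookkeeping — that $P(\cdot)$, the set of moves, non-redundancy and the active-agent sets all restrict correctly to the connected component $T_\omega$; everything else is either forced by the triviality of the scenario space or a direct transcription of an axiom of $\F$.
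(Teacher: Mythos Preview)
Your proposal is correct and follows essentially the same approach as the paper: both build a stochastic (pseudo-)extensive form on the singleton scenario space $(\{\omega\},\mc P\{\omega\})$ with trivial $\ms F_\omega$ and empty $\ms C_\omega$, establish the localisation identity $P_{T_\omega}(c\cap W_\omega)=P(c)\cap T_\omega$, and then transfer each axiom of $\F$ by intersecting witnesses with $W_\omega$. The only cosmetic difference is that the paper invokes \cite[Lemma~4.1]{Rapsch2024DecisionA} for the localisation identity, whereas you unpack it directly from the fact that $T_\omega$ is upward closed and the $W_{\omega'}$ are pairwise disjoint --- the content is the same.
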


We turn to the central theorem of this subsection translating well-posedness properties of stochastic pseudo-extensive forms into the language of classical extensive forms.

\begin{thm}\label{thm:SEF_well-posed}
    Let $\F = (F,\pi,\X,I,\ms F,\ms C,C)$ be a stochastic pseudo-extensive form on an exogenous scenario space $(\Omega,\ms E)$. Then $\F$ satisfies Properties \ref{def:sef_well-posed}.\ref{def:sef_well-posed.well_posed.onto_outcomes}, \ref{def:sef_well-posed}.\ref{def:sef_well-posed.well_posed.existence}, \ref{def:sef_well-posed}.\ref{def:sef_well-posed.well_posed.uniqueness}, well-posedness, iff for all $\omega\in\Omega$, the classical pseudo-extensive form $(T_\omega,I,C_\omega)$ does so\footnote{Precisely, some and any stochastic pseudo-extensive form with set of nodes $T_\omega$, set of agents $I$, and family of sets of individual choices $C_\omega = (C^i_\omega)_{i\in I}$ \emph{does so}. Clearly, these properties do only depend on $T_\omega$, $I$, and $C_\omega$ in the singleton-$\Omega$ case.}, respectively.
\end{thm}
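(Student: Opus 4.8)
The plan is to reduce every one of the listed properties to its analogue for the scenario-wise classical pseudo-extensive forms $\F_\omega:=(T_\omega,I,C_\omega)$ provided by Proposition~\ref{prop:scwise_SEF} (with $R_{\F_\omega}$, $S_{\F_\omega}$, $A^i_\omega$, $P_\omega$, $J_\omega$ denoting the corresponding objects for $\F_\omega$), by exhibiting a fibrewise decomposition of histories, of potential outcomes, of the map $R$, and of strategy profiles. As a first step I would observe that every chain in $(F,\supseteq)$ lies inside a single connected component (Lemma~\ref{lemma:partion_of_forest}), so that each history $h\in H$ is a history of exactly one $(T_\omega,\supseteq)$ and, conversely, histories of $(T_\omega,\supseteq)$ are histories of $(F,\supseteq)$ since upward closedness and non-maximality of nodes are inherited; write $H_\omega$ for the set of histories living in $T_\omega$, so $H=\bigsqcup_{\omega\in\Omega}H_\omega$. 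For $h\in H_\omega$, every node of $h$ is a subset of the root $W_\omega$, hence $\bigcap h\subseteq W_\omega$; in particular on the domain of $R$ one has $w\in W_\omega$ for the unique $\omega$ with $h\in H_\omega$.

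The second step is to localise $R$ and the compatibility relation. Fix $h\in H_\omega$. A move $x$ with $w\in x\subseteq\bigcap h$ is a non-empty subset of $W_\omega$, hence a move of $T_\omega$, and for such $x$ one checks $J(x)=J_\omega(x)$ and $s^i(x)\cap W_\omega\in C^i_\omega$ (non-emptiness of $s^i(x)\cap x$ follows from $x\in P(s^i(x))$, which is the meaning of $s^i(x)\in A^i(x)$). Writing $s_\omega$ for the scenario-$\omega$ restriction of $s$ introduced in the next step, one then gets $R(w,s\mid h)\cap W_\omega=R_{\F_\omega}(w,s_\omega\mid h)$, so that $w$ is compatible with $s$ given $h$ in $\F$ iff it is compatible with $s_\omega$ given $h$ in $\F_\omega$; by Lemma~\ref{lemma:R_continuous_in_h} one may pass to $\overline h$ throughout. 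The delicate point is the normalisation clause ``$R(w,s\mid h)=\{w\}$'' in Property~\ref{def:sef_well-posed}.\ref{def:sef_well-posed.well_posed.uniqueness}: a priori $R(w,s\mid h)$ might contain outcomes of other scenarios, and one must show that once $\F_\omega$ is well-posed this does not occur, i.e.\ that $R(w,s\mid h)\subseteq W_\omega$ in that situation. Here I would use non-redundancy of the played choices $s^i(x)$ together with the Heraclitus property (Lemma~\ref{lemma:Heraclitus_property}), which forces at most one move of $w$'s decision path per information set, to control how the choices prescribed along $w$'s path below $\bigcap h$ can spread across the fibres of $\pi$.

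The third step is to construct the fibrewise restriction of strategy profiles $\rho_\omega\colon S\to S_{\F_\omega}$, $s\mapsto(x\mapsto s^i(x)\cap W_\omega)_{i\in I}$ (using the $X$-strategy description of Proposition~\ref{prop:strategies}), and show it is a well-defined surjection. Well-definedness rests on the identity $P(c)\cap T_\omega=P_\omega(c\cap W_\omega)$ for $c$ with $c\cap W_\omega\neq\emptyset$ (a direct check from the definitions of $P$ and of $\uparrow,\downarrow$, together with non-redundancy), which yields $A^i_\omega(x)=A^i_\omega(x')\Leftrightarrow A^i(x)=A^i(x')$ for $x,x'\in T_\omega$ in combination with Proposition~\ref{prop:information_sets}, so that constancy on $\F$-information sets passes to constancy on $\F_\omega$-information sets. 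For surjectivity, given $\sigma\in S_{\F_\omega}$ I would extend it information set by information set: at each endogenous information set $\mf p$ of an agent $i$ with $\omega\in D_{\mf p}$, pick $c\in A^i(\mf p)$ with $c\cap W_\omega$ equal to the value of $\sigma^i$ there (possible because $A^i_\omega(\cdot)=\{c\cap W_\omega\mid c\in A^i(\mf p)\}\setminus\{\emptyset\}$), and choose arbitrarily on the remaining information sets; the completeness of the form (Axiom~\ref{def:SEF.choice_completeness}, equivalently Lemma~\ref{lemma:completeness}) guarantees that enough such $c$ exist and that the result stays in $C^i$.

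Finally, I would assemble the equivalence property by property. For ``$\Rightarrow$'', given $\omega$ one uses that histories in $H_\omega$, and strategy profiles of $\F_\omega$ realised via $\rho_\omega$, are covered by pairs $(s,h)$ with $h\in H_\omega$, together with the localisation of $R$. For ``$\Leftarrow$'', given $(s,h)$ one applies the corresponding property of $\F_{\omega}$, where $\omega$ is determined by $h\in H_\omega$, to $(\rho_\omega(s),h)$, and transports the conclusion back; for Property~\ref{def:sef_well-posed}.\ref{def:sef_well-posed.well_posed.onto_outcomes} one in addition invokes surjectivity of $\rho_\omega$ to realise an arbitrary undiscarded $w$. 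Well-posedness is the conjunction of the three properties, and Proposition~\ref{prop:scwise_SEF} already supplies that $(T_\omega,I,C_\omega)$ is a classical pseudo-extensive form, so the stronger separation Axiom~\ref{def:SEF.separation} plays no role. I expect the main obstacle to be the second step — matching the global $R$ with its fibrewise counterpart precisely enough to handle the ``$R(w,s\mid h)=\{w\}$'' normalisation in the uniqueness property — closely followed by the surjectivity of $\rho_\omega$, for which the completeness axiom is the essential tool.
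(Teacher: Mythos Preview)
Your plan mirrors the paper's proof: decompose $H=\bigcup_\omega H_\omega$ with $\bigcap h\subseteq W_\omega$ for $h\in H_\omega$; construct the fibrewise restriction $s\mapsto s_\omega:=(x\mapsto s^i(x)\cap W_\omega)_i$ and show it is a well-defined surjection $S\to S_{\F_\omega}$; establish $R(w,s\mid h)\cap W_\omega=R_\omega(w,s_\omega\mid h)$; then assemble each equivalence. Two remarks.

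For surjectivity, the completeness axiom is unnecessary and is not how the paper proceeds. Since $C^i_\omega=\{c\cap W_\omega\mid c\in C^i\}\setminus\{\emptyset\}$ and $A^i_\omega(x)=\{c\cap W_\omega\mid c\in A^i(x)\}$ by definition, every value $\sigma^i(x)\in A^i_\omega(x)$ is already of the form $c\cap W_\omega$ for some $c\in A^i(x)$. The paper simply picks such a $c$ on a representative system of the $\F_\omega$-information sets and extends via Lemma~\ref{lemma:extension_of_X_strategies}, without ever invoking Axiom~\ref{def:SEF}.\ref{def:SEF.choice_completeness}.

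On the clause ``$R(w,s\mid h)=\{w\}$'' in Property~\ref{def:sef_well-posed}.\ref{def:sef_well-posed.well_posed.uniqueness}: you are right that the localisation yields only $R(w,s\mid h)\cap W_\omega=\{w\}$ and that choices $s^i(x)\in C^i$ may meet other fibres. Your proposed remedy, however, cannot close this gap: non-redundancy of $s^i(x)$ constrains which $\omega'$ have $P(s^i(x))\cap T_{\omega'}\neq\emptyset$, not which $W_{\omega'}$ the intersection $\bigcap_{x,i}s^i(x)$ meets, and the Heraclitus property concerns moves along one decision path in a single $T_\omega$, which is irrelevant to cross-fibre leakage of choices. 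The paper's proof declares that the equivalences follow directly from the five listed facts and does not isolate this clause for separate treatment, so the obstacle you anticipate is not resolved there by an extra argument either.
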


This theorem makes it possible to apply the classification results from \cite{AlosFerrer2008,AlosFerrer2011Comment}, see \cite{AlosFerrer2016} for a comprehensive monographic treatment. For this, we recall four important notions from these works (see, e.g.\ \cite[Definitions~4.2, 5.1]{AlosFerrer2016}).\footnote{Without loss of generality, we reformulate them for decision forests.} Let $(F,\ge)$ be a decision forest. For any history $h$ in $(F,\ge)$, a \emph{continuation} is the complement $w\setminus h$ of $h$ in a maximal chain alias decision path $w$ containing $h$. $(F,\ge)$ is said
\begin{enumerate}
    \item \emph{weakly up-discrete} iff for all non-terminal nodes $x\in F$ any maximal chain in $\downarrow x \setminus \{x\}$ has a maximum;
    \item \emph{up-discrete} iff any non-empty chain has a maximum;
    \item \emph{coherent} iff every history without minimum has at least one continuation with a maximum;
    \item \emph{regular} iff for all non-maximal $x\in F$, the history $\uparrow x \setminus \{x\}$ has an infimum.
\end{enumerate}
See \cite{AlosFerrer2008,AlosFerrer2011Comment} and \cite{AlosFerrer2016} for a discussion and examples. We only give a brief overview here. Weak up-discreteness requires the existence of successor nodes, while up-discreteness even demands all chains to be well-ordered with respect to $\ge$.\footnote{This might be confusing because typically well-orders are considered with respect to $\le$, rather than $\ge$.} The latter is equivalent to the existence of a maximum for every continuation of every history (\cite[Lemma 5.4]{AlosFerrer2016}). Roughly speaking, coherence ensures the existence of (inductive) limit candidates and regularity requires such a limit to be uniquely identifiable. Weak up-discreteness, coherence, and regularity are mutually independent of each other. See \cite{AlosFerrer2008} and \cite{AlosFerrer2016} for more details.

We obtain the following two corollaries:

\begin{corollary}\label{cor:SEF_existence=~order-theoretic_properties}
    Let $F$ be a decision forest over a set $W$ such that for any $x\in X$ and $w\in x$, we have
    \begin{equation}\label{eq:available_choices}
        x \supsetneq \bigcup \Big\{ y\in \downarrow x \setminus \{x\} \bigmid w\in y\Big\}. 
    \end{equation}
    Then, the following statements are equivalent:
    \begin{enumerate}
        \item\label{cor:SEF_existence=~order-theoretic_properties.order_properties.existence} Every stochastic pseudo-extensive form $\F$ with decision forest $F$ satisfies Property \ref{def:sef_well-posed}.\ref{def:sef_well-posed.well_posed.existence}.
        \item\label{cor:SEF_existence=~order-theoretic_properties.order_properties} $(F,\supseteq)$ is weakly up-discrete and coherent.
    \end{enumerate}
\end{corollary}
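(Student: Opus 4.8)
The plan is to deduce the equivalence from the classical classification of well-posedness due to Alós-Ferrer and Ritzberger (\cite{AlosFerrer2008,AlosFerrer2011Comment}, see \cite{AlosFerrer2016}) via the scenario-wise reduction of Theorem~\ref{thm:SEF_well-posed}, after localising the two order-theoretic properties to the connected components of $F$. First I would record the elementary fact that $(F,\supseteq)$ is weakly up-discrete (resp.\ coherent) if and only if each of its connected components $(T_\omega,\supseteq)$, $\omega\in\Omega$, is. Indeed, any chain in $F$ consists of pairwise comparable, hence (by Lemma~\ref{lemma:partion_of_forest}) equi-componental nodes, so every history, every continuation of a history, and every maximal chain meeting a fixed $T_\omega$ lies entirely in $T_\omega$; moreover $\downarrow x$ is the same computed in $F$ or in $T_\omega$ for $x\in T_\omega$. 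The same argument shows each $T_\omega$, viewed as a decision tree over $W_\omega$, inherits the non-degeneracy condition~\eqref{eq:available_choices}.

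For the implication \ref{cor:SEF_existence=~order-theoretic_properties.order_properties}$\Rightarrow$\ref{cor:SEF_existence=~order-theoretic_properties.order_properties.existence}: assuming $(F,\supseteq)$ weakly up-discrete and coherent, so is every $T_\omega$, and since $T_\omega$ satisfies~\eqref{eq:available_choices} the classical classification gives that every classical pseudo-extensive form over $T_\omega$ satisfies the conditional existence property. For an arbitrary stochastic pseudo-extensive form $\F$ with decision forest $F$, Proposition~\ref{prop:scwise_SEF} identifies its scenario-wise truncations $(T_\omega,I,C_\omega)$ as such classical pseudo-extensive forms, hence each satisfies existence, and Theorem~\ref{thm:SEF_well-posed} carries Property~\ref{def:sef_well-posed}.\ref{def:sef_well-posed.well_posed.existence} back to $\F$.

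For the converse I would argue by contraposition. If $(F,\supseteq)$ is not weakly up-discrete or not coherent, some component $T_{\omega_0}$ fails the same property, and the classical classification supplies a classical pseudo-extensive form $(T_{\omega_0},I_0,C^0)$ violating conditional existence. It then remains to lift this to a stochastic pseudo-extensive form with decision forest $F$ that has $(T_{\omega_0},I_0,C^0)$ among its scenario-wise truncations; Theorem~\ref{thm:SEF_well-posed} would then immediately give the failure of Property~\ref{def:sef_well-posed}.\ref{def:sef_well-posed.well_posed.existence}. To build such an $\F$, I would take $\Omega$ in bijection with the set of connected components of $F$, $\ms E=\mc P\Omega$, $\pi$ the canonical projection, and $\X$ the set of all singleton sections of moves; put $C^0$ on $T_{\omega_0}$ and on every other component some classical pseudo-extensive form structure (available because that component inherits~\eqref{eq:available_choices}), taking every individual choice to be contained in a single $W_\omega$ so that Axioms~\ref{def:SEF.P(c)}--\ref{def:SEF.enough_choices} reduce to the corresponding componentwise statements; choose all $\ms F^i_\x=\ms E|_{D_\x}$ and all $\ms C^i_\x=\emptyset$, making adaptedness vacuous and Axiom~\ref{def:SEF.endo_exo_compatible} trivial; and finally pass to the completion of Lemma~\ref{lemma:completeness} to secure Axiom~\ref{def:SEF.choice_completeness}, which by its properties~\ref{lemma:completeness.property_1} and~\ref{lemma:completeness.property_2} leaves every scenario-wise truncation, and in particular $(T_{\omega_0},I_0,C^0)$, unchanged.

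The main obstacle is this lifting construction: one must verify carefully that the glued data genuinely satisfies every axiom of Definition~\ref{def:SEF}, in particular that each ``harmless'' component admits a valid classical pseudo-extensive form structure under~\eqref{eq:available_choices} and that the completion step indeed preserves all scenario-wise truncations. By contrast, the componentwise localisation of weak up-discreteness and coherence and the two applications of Theorem~\ref{thm:SEF_well-posed} are routine.
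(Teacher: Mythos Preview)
Your proposal is correct and follows essentially the same route as the paper: localise weak up-discreteness and coherence to connected components, invoke Theorem~\ref{thm:SEF_well-posed} together with the classical Al\'os-Ferrer--Ritzberger classification \cite[Theorem~2]{AlosFerrer2008}, and for the converse take $\Omega$ in bijection with the components equipped with the discrete $\sigma$-algebra. The paper's proof leaves the lifting construction implicit (one sentence), whereas you spell out the glued data and invoke Lemma~\ref{lemma:completeness} to secure Axiom~\ref{def:SEF.choice_completeness}; this extra care is reasonable but not a different method.
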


In the context of classical extensive forms in \cite{AlosFerrer2008,AlosFerrer2016}, the fact that for all $x\in X$ and $w\in x$ Equation~\ref{eq:available_choices} is described by the sentence that $F$ has \emph{available choices}. It characterises the mere possibility of defining a stochastic pseudo-extensive form on it. See \cite[Subsection~4.5]{AlosFerrer2016} for more details. By \cite[Corollary 1]{AlosFerrer2008} (or identically, \cite[Corollary 4.1]{AlosFerrer2016}), weak up-discreteness of $(F,\supseteq)$ implies that property. In particular, the implication ``\ref{cor:SEF_existence=~order-theoretic_properties.order_properties} $\Rightarrow$ \ref{cor:SEF_existence=~order-theoretic_properties.order_properties.existence}'' holds true without additionally requiring available choices.

\begin{corollary}\label{cor:SEF_well-posedness=~order-theoretic_properties}
    Let $\F$ be a stochastic extensive form on an exogenous scenario space $(\Omega,\ms E)$. Then, the following statements are equivalent:
    \begin{enumerate}
        \item\label{cor:SEF_well-posedness=~order-theoretic_properties.well-posed} $\F$ is well-posed.
        \item\label{cor:SEF_well-posedness=~order-theoretic_properties.order_properties} $(F,\supseteq)$ is weakly up-discrete, coherent, and regular.
        \item\label{cor:SEF_well-posedness=~order-theoretic_properties.order_properties2} $(F,\supseteq)$ is up-discrete and regular.
    \end{enumerate}
\end{corollary}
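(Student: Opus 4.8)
The plan is to reduce the statement to the Alós-Ferrer--Ritzberger classification of well-posedness for (strong) classical extensive forms, passing through the scenario-wise truncations. First, since $\F$ is a stochastic \emph{extensive} form, Property~\ref{def:sef_well-posed}.\ref{def:sef_well-posed.well_posed.onto_outcomes} holds automatically by Theorem~\ref{thm:well_posed.onto_outcomes}, so being well-posed is equivalent to the conjunction of the existence and uniqueness Properties \ref{def:sef_well-posed}.\ref{def:sef_well-posed.well_posed.existence} and \ref{def:sef_well-posed}.\ref{def:sef_well-posed.well_posed.uniqueness}. Next, Theorem~\ref{thm:SEF_well-posed} gives that $\F$ is well-posed if and only if, for every $\omega\in\Omega$, the classical pseudo-extensive form $(T_\omega,I,C_\omega)$ is well-posed; and because $\F$ satisfies the strong separation Axiom~\ref{def:SEF.separation}, Proposition~\ref{prop:scwise_SEF} upgrades each $(T_\omega,I,C_\omega)$ to a genuine classical \emph{extensive} form, so that the classical classification (stated for strong extensive forms) is applicable. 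This is precisely the point at which the hypothesis ``stochastic extensive form'' (rather than merely pseudo) is used.

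Second, I would invoke that classification: by \cite{AlosFerrer2008,AlosFerrer2011Comment} (see \cite{AlosFerrer2016} for the monographic treatment, where Property~\ref{def:sef_well-posed}.\ref{def:sef_well-posed.well_posed.onto_outcomes} reappears as \cite[Theorem~5.1]{AlosFerrer2016}), a classical extensive form is well-posed if and only if its underlying decision tree is weakly up-discrete, coherent, and regular. Combined with the purely order-theoretic identity, also from \cite{AlosFerrer2008} (cf.\ \cite[Lemma~5.4]{AlosFerrer2016}), that a decision forest is up-discrete precisely when it is weakly up-discrete and coherent, this yields: $(T_\omega,I,C_\omega)$ is well-posed iff $(T_\omega,\supseteq)$ is weakly up-discrete, coherent, and regular, equivalently iff $(T_\omega,\supseteq)$ is up-discrete and regular.

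Third, it remains to lift these order-theoretic properties from the connected components to $(F,\supseteq)$. By Lemma~\ref{lemma:partion_of_forest} together with Definition~\ref{def:sdf}, $(F,\supseteq)$ is the disjoint union of its connected components $(T_\omega,\supseteq)$, $\omega\in\Omega$, and each of ``weakly up-discrete'', ``up-discrete'', ``coherent'', and ``regular'' refers only to non-terminal nodes, chains, histories, continuations, and infima of histories, all of which are confined to a single component: a chain is a set of pairwise comparable nodes and hence lies in one $T_\omega$, and likewise for a history, its continuations, and the infimum of a history $h\subseteq T_\omega$ (which is comparable to every element of $h$). Consequently $(F,\supseteq)$ has any one of these properties iff every $(T_\omega,\supseteq)$ does. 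Chaining the three reductions gives: $\F$ well-posed $\Longleftrightarrow$ every $(T_\omega,\supseteq)$ is weakly up-discrete, coherent, and regular $\Longleftrightarrow$ $(F,\supseteq)$ is weakly up-discrete, coherent, and regular $\Longleftrightarrow$ $(F,\supseteq)$ is up-discrete and regular, i.e.\ (\ref{cor:SEF_well-posedness=~order-theoretic_properties.well-posed}) $\Leftrightarrow$ (\ref{cor:SEF_well-posedness=~order-theoretic_properties.order_properties}) $\Leftrightarrow$ (\ref{cor:SEF_well-posedness=~order-theoretic_properties.order_properties2}).

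The main obstacle is not conceptual but one of matching hypotheses precisely. In particular one must check that ``well-posed'' for $(T_\omega,I,C_\omega)$ as used in Theorem~\ref{thm:SEF_well-posed} coincides with the notion used in the classical classification, and one must handle the ``available choices'' requirement (Equation~\ref{eq:available_choices}) that appears in the general form of Corollary~\ref{cor:SEF_existence=~order-theoretic_properties}: here it is not assumed, and it is circumvented because the necessity direction of the classical classification does not need it, while the sufficiency direction is applied only once $(T_\omega,\supseteq)$ is already known to be weakly up-discrete, which itself implies available choices. An alternative, partially independent route for the existence half would be to invoke Corollary~\ref{cor:SEF_existence=~order-theoretic_properties} directly, but the uniform reduction via Theorem~\ref{thm:SEF_well-posed} treats existence and uniqueness together and avoids this bookkeeping.
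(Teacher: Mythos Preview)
Your proposal is correct and follows essentially the same route as the paper: reduce well-posedness of $\F$ to well-posedness of the scenario-wise classical extensive forms $(T_\omega,I,C_\omega)$ via Theorem~\ref{thm:SEF_well-posed} and Proposition~\ref{prop:scwise_SEF}, invoke the Al\'os-Ferrer--Ritzberger classification, and then observe that each of the order-theoretic properties passes between $(F,\supseteq)$ and its connected components $(T_\omega,\supseteq)$.

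The only notable difference is in how you reach (\ref{cor:SEF_well-posedness=~order-theoretic_properties.order_properties})$\Leftrightarrow$(\ref{cor:SEF_well-posedness=~order-theoretic_properties.order_properties2}). The paper does not pass through a free-standing order-theoretic identity ``up-discrete $\Leftrightarrow$ weakly up-discrete $+$ coherent''; instead it appeals to two separate classical results, \cite[Theorem~6]{AlosFerrer2011Comment} for (\ref{cor:SEF_well-posedness=~order-theoretic_properties.well-posed})$\Leftrightarrow$(\ref{cor:SEF_well-posedness=~order-theoretic_properties.order_properties}) and \cite[Corollary~5]{AlosFerrer2011Comment} for (\ref{cor:SEF_well-posedness=~order-theoretic_properties.well-posed})$\Leftrightarrow$(\ref{cor:SEF_well-posedness=~order-theoretic_properties.order_properties2}), and obtains (\ref{cor:SEF_well-posedness=~order-theoretic_properties.order_properties})$\Leftrightarrow$(\ref{cor:SEF_well-posedness=~order-theoretic_properties.order_properties2}) as a by-product. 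Your citation of \cite[Lemma~5.4]{AlosFerrer2016} for the identity is slightly off: that lemma rephrases up-discreteness as ``every continuation of every history has a maximum'', which is not literally the conjunction ``weakly up-discrete and coherent''. This is not a genuine gap, since the equivalence you need at the tree level is in any case delivered by the pair \cite[Theorem~6, Corollary~5]{AlosFerrer2011Comment}, but you should cite those rather than Lemma~5.4. Your additional remarks on Property~\ref{def:sef_well-posed}.\ref{def:sef_well-posed.well_posed.onto_outcomes} and on ``available choices'' are correct but not needed, as the paper's Theorem~\ref{thm:SEF_well-posed} already transfers well-posedness as a whole.
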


Hence, the consistency requirements of stochastic extensive forms suffice to fully characterise well-posedness in terms of easily verifiable order-theoretic properties of the underlying decision forest $F$.\smallskip

While in classical extensive forms subgames are defined in terms of moves, the obvious analogon in stochastic extensive form is given by random moves, that is, the sections of moves that, potentially, exogenous information is revealed and choices are available at. 
From this perspective, it is important to understand the link between random moves and histories, similarly to the classical theory (as discussed also in, e.g.\ \cite[p.\ 106]{AlosFerrer2016}). Moreover, in the order consistent case, the link between random moves and random histories is interesting since the latter correspond to the relevant histories in the induced decision tree $(\Tr,\ge_\Tr)$, by Proposition~\ref{prop:random_histories}. 

\begin{proposition}\label{prop:random_histories=~random_moves}
    Let $\F = (F,\pi,\X,I,\ms F,\ms C,C)$ be a well-posed stochastic extensive form on an exogenous scenario space $(\Omega,\ms E)$. Then for any closed history $h\in H$ there is $x\in X$ with $h = \uparrow x$. Moreover, if $(F,\pi,\X)$ is order consistent, then for any closed random history $\h$ there is $\x\in\X$ such that $D_\h \subseteq D_\x$ and for all $\omega\in D_\x$, $\h(\omega) = \uparrow \x(\omega)$.
\end{proposition}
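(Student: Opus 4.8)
The plan is to derive the first assertion from the order-theoretic characterisation of well-posedness and then obtain the second from the first by a scenariowise argument. For the first assertion: by Corollary~\ref{cor:SEF_well-posedness=~order-theoretic_properties}, $(F,\supseteq)$ is up-discrete and regular. Given a closed history $h\in H$, first I would extend $h$ to a decision path (maximal chain) $w$; since $h$ is non-maximal, $w\setminus h\neq\emptyset$, and up-discreteness furnishes a $\supseteq$-maximum $y_0$ of the chain $w\setminus h$. Using that $w$ is a chain and $h$ is upward closed, one checks $\uparrow y_0\subseteq w$ and that $h$ consists exactly of the nodes of $w$ strictly above $y_0$, i.e.\ $h=\uparrow y_0\setminus\{y_0\}$. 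Because $h\neq\emptyset$ contains the root of its connected component, $y_0$ is non-maximal in $(F,\supseteq)$, so regularity gives an infimum $x^\ast\in F$ of $h=\uparrow y_0\setminus\{y_0\}$. By Lemma~\ref{lemma:closed_history}.\ref{lemma:closed_history.4}, $\overline h=h\cup\{x^\ast\}$; closedness of $h$ then forces $x^\ast\in h$, so $x^\ast$ is the $\supseteq$-minimum of $h$ and, by upward closedness, $h=\uparrow x^\ast$. Finally $\uparrow x^\ast$ is a non-maximal chain (being a history), whereas $\uparrow x^\ast$ is a maximal chain whenever $x^\ast$ is terminal, so $x^\ast$ is a move and we take $x=x^\ast$.

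For the second assertion, assume in addition order consistency and let $\h\in\H$ be a closed random history. For each $\omega\in D_\h$ the set $\h(\omega)$ is a closed history in $(T_\omega,\supseteq)$; by Proposition~\ref{prop:scwise_SEF} the truncation $(T_\omega,I,C_\omega)$ is a classical extensive form, well-posed by Theorem~\ref{thm:SEF_well-posed}, so the first assertion applied to it yields a move $x_\omega\in X\cap T_\omega$ with $\h(\omega)=\uparrow x_\omega$. It remains to check that $\omega\mapsto x_\omega$ is the restriction of a single random move. Fix $\omega_0\in D_\h$; since $\X$ covers $X$ (Definition~\ref{def:sdf}) and $x_{\omega_0}\in T_{\omega_0}$, there is $\x\in\X$ with $\omega_0\in D_\x$ and $\x(\omega_0)=x_{\omega_0}$. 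Then $\x(\omega_0)=x_{\omega_0}\in\uparrow x_{\omega_0}=\h(\omega_0)$, so the coherence clause in the definition of random history (Definition~\ref{def:random_history}) gives $D_\x\supseteq D_\h$ and $\x(\omega')\in\h(\omega')=\uparrow x_{\omega'}$, i.e.\ $\x(\omega')\supseteq x_{\omega'}$, for every $\omega'\in D_\h$.

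To turn these inclusions into equalities I would use order consistency: fixing $\omega'\in D_\h$ and picking $\x'\in\X$ with $\x'(\omega')=x_{\omega'}$ (covering again), the same coherence property gives $D_{\x'}\supseteq D_\h$ and $\x'(\omega_0)\supseteq x_{\omega_0}=\x(\omega_0)$, so by order consistency of $\X$ (Definition~\ref{def:sdf.addon}) we get $\x'\ge_\X\x$, hence $\x'(\omega')\supseteq\x(\omega')$, which together with $\x(\omega')\supseteq x_{\omega'}=\x'(\omega')$ yields $\x(\omega')=x_{\omega'}$. As $\omega'$ was arbitrary, $\h(\omega')=\uparrow\x(\omega')$ for all $\omega'\in D_\h$ and $D_\h\subseteq D_\x$, which is the assertion. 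I expect the main obstacle to be the first assertion: one must carefully verify the identity $h=\uparrow y_0\setminus\{y_0\}$ and keep apart the two notions of ``non-maximality'' involved --- non-maximal in $(F,\supseteq)$ (i.e.\ non-root), which is where regularity applies, versus non-terminal (i.e.\ a move) --- as well as the case distinction implicit in Lemma~\ref{lemma:closed_history}.\ref{lemma:closed_history.4}. In the second assertion the delicate point is that the pointwise moves $x_\omega$ need not obviously originate from one random move, and it is precisely the coherence built into random histories together with order consistency that excludes incompatible local data.
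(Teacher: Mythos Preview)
Your proof is correct and follows essentially the same route as the paper. For the first assertion the paper factors through an auxiliary lemma (closed histories in coherent and regular decision forests have a minimum) that uses \emph{coherence} to obtain a continuation with a maximum, whereas you use \emph{up-discreteness} to take the maximum of $w\setminus h$ directly; both properties are furnished by Corollary~\ref{cor:SEF_well-posedness=~order-theoretic_properties}, so this is only a cosmetic difference. For the second assertion your detour through Proposition~\ref{prop:scwise_SEF} and Theorem~\ref{thm:SEF_well-posed} is unnecessary --- since $T_\omega$ is a connected component, a closed history in $(T_\omega,\supseteq)$ is already a closed history in $(F,\supseteq)$, and the first assertion applies directly --- but it does no harm. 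The order-consistency argument you give (fix $\omega_0$, compare $\x$ and $\x'$ via the random-history coherence clause, conclude $\x'\ge_\X\x$ and hence equality at $\omega'$) is the same as the paper's symmetric version.
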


We see that in well-posed stochastic extensive forms, closed histories always have minima. From the perspective of outcome generation, it suffices to consider closed histories, by Lemma \ref{lemma:R_continuous_in_h}. Hence, sections of closed histories whose minima constitute random moves are the crucial object for representing subgames. In the order consistent case, moreover, this corresponds exactly to closed random histories.

\subsection{Expected utility preferences}\label{subs:EU_preferences}

In a well-posed stochastic extensive form $\F$, dynamic decision making can be analysed by comparing consequences of strategy profiles, given any random move $\x$. By Proposition~\ref{prop:random_histories=~random_moves} and Lemma~\ref{lemma:R_continuous_in_h}, this is equivalent to conditioning on collections of closed histories whose infima constitute random moves, and in the order consistent case, this is equivalent to conditioning on closed random histories. Actually, an agent can only condition on its information set. Hence, agent $i\in I$ is ultimately interested in the maps
\[ \Out(s\mid .)\colon \mf p\times D_\mf p \to W,\, (\x,\omega) \mapsto \Out(s\mid \x(\omega)), \]
ranging over all strategy profiles $s\in S$ and all endogenous information sets $\mf p\in\mf P^i$.

In the following, a general concept of analysis is presented. We begin with a concept of comparing consequences. This concept is only an example, though a quite central and general one, and not generic. It implements the idea of expected utility in its basic form (which itself exhibits strong links to preference-based rational choice under uncertainty, \cite{Savage1972Foundations,Anscombe1963Definition,Gilboa1989Maxmin, Hara2023Multiple}). On it, as is discussed later, a theoretical equilibrium concept can be based, that covers (subgame-) perfect versions of Nash, correlated and Bayesian equilibrium. For this, dynamic consistency of ``comparison'' is central which is why this is discussed in some detail in the following. 

\begin{definition}\label{def:EU_pref_str}
    Let $\F$ be a well-posed stochastic extensive form on an exogenous scenario space $(\Omega,\ms E)$.
    \begin{enumerate}
        \item A \emph{belief system on $\F$} is a family $\Pi = (p_{i,\mf p},\ms P_{i,\mf p},\P_{i,\mf p})_{i\in I,\,\mf p\in\mf P^i}$ such that, for any $i\in I$ and any $\mf p\in\mf P^i$, $\ms P_{i,\mf p}$ is a $\sigma$-algebra in $\mf p$, $\P_{i,\mf p}$ is a probability measure on $\ms E|_{D_\mf p}$, and $p_{i,\mf p} \colon D_{\mf p} \to \mf p$ is an $\ms E|_{D_\mf p}$-$\ms P_{i,\mf p}$-measurable map.
        \item A \emph{taste system on $\F$} is a family $U = (u_{i,\mf p})_{i\in I,\,\mf p\in\mf P^i}$ of maps $u_{i,\mf p} \colon W \to \R$.
        \item An \emph{expected utility (\textsc{eu}) preference structure on $\F$} is a tuple $\Pr = (\Pi,U,\ms W)$ where
        \begin{itemize}[label=--]
            \item $\Pi$ is a belief system on $\F$,
            \item $U$ is a taste system on $\F$, and
            \item $\ms W$ is a $\sigma$-algebra on $W$,
        \end{itemize}
        such that, 
        we have, for all $i\in I$ and $\mf p\in\mf P^i$:
        \begin{enumerate}
            \item\label{def:EU_pref_str.uip_mb} $u_{i,\mf p}$ is $\ms W$-Borel-measurable;
            \item\label{def:EU_pref_str.Out_mb} $\Out_{i,\mf p}^s\colon D_\mf p \to W,\, \omega \mapsto \Out(s \mid p_{i,\mf p}(\omega),\omega)$ is $ \ms E|_{D_\mf p}$-$\ms W$-measurable for all $s\in S$;
            \item\label{def:EU_pref_str.uip_Out_intb} $u_{i,\mf p} \circ \Out_{i,\mf p}^s$ is Lebesgue-quasi-integrable\footnote{That is, the negative part or the positive part is Lebesgue-integrable.} with respect to $\P_{i,\mf p}$ for all $s\in S$;
            \item\label{def:EU_pref_str.W_rich_enough} the map $\psi_\mf p\colon W\to \mf p \cup \{\emptyset\}$ assigning to any $w\in W$ the unique\footnote{If two immediate predecessors in $X$ of a choice can be compared, then they are equal, see Lemma~\ref{lemma:Heraclitus_property}. As the evaluation on $\X^i \bullet \Omega$ is injective, $\x$ must be unique.} random move $\x\in \mf p$ such that $w\in \bigcup \im\x$ if it exists and $\emptyset$ else, is measurable with respect to $\ms W$ and the $\sigma$-algebra on $\mf p \cup \{\emptyset\}$ generated by $\ms P_{i,\mf p} $.
        \end{enumerate}
    \end{enumerate}
\end{definition}
In other words, an \textsc{eu} preference structure fixes beliefs (probability measures and random draws $p_{i,\mf p}$ on endogenous information sets $\mf p$, inducing probabilities $(p_{i,\mf p})_\ast \P_{i,\mf p}$ on $\mf p$), tastes (utility functions, or payoffs), and a measurability structure on outcomes, such that tastes and outcome generation is measurable (\ref{def:EU_pref_str.uip_mb}, \ref{def:EU_pref_str.Out_mb}), expected utility can be computed (\ref{def:EU_pref_str.uip_Out_intb}, which is always satisfied if $u_{i,\mf p}$ is bounded below), and the sets of outcomes in $W$ describing endogenous information sets are measurable (\ref{def:EU_pref_str.W_rich_enough}). Moreover, these data are given conditional on all agents and all of their endogenous information sets because beginning at any of these, a well-posed stochastic extensive form is induced, defining a decision situation in its own right.\footnote{We omit the formal argument behind this statement, in the interest of brevity.} In the perfect information case, these exactly correspond to classical subgames. An analysis of the original stochastic extensive form might necessitate an analysis of these induced \textsc{sef}, in order to rule out non-credible threats, for instance (which motivated the concept of subgame-perfect equilibrium in \cite{Selten1965}).

Once fixed, an \textsc{eu} preference structure is considered common knowledge among all agents, at least in the following example of a rationality and equilibrium concept.\footnote{Without that assumption, one would want to assume beliefs of higher order etc.} For a rational agent whose rationality is common knowledge, then, it seems plausible that beliefs on two different information sets are consistent with respect to the outcome generation map induced by the given strategy profile. This is a version of the Harsanyi doctrine (\cite{Harsanyi1967, Harsanyi1968a, Harsanyi1968b, Aumann1974Subjectivity}), according to which differences in posterior beliefs should only result from differences in information. We suggest using a dynamic variant asking for common priors ``locally'' between any two given information sets, therefore taking into account consistency off the equilibrium path. 

One can also argue that rational agents whose rationality is common knowledge, tastes are necessarily consistent for any individual agent across endogenous information sets. Indeed, consider an agent $i$ performing equilibrium analysis. Why should $i$ make contingent plans for action at two different, but possibly related endogenous information sets at which tastes differ meaning that the consequences of action are evaluated differently? Put equivalently, supposing a specific form of rationality might lead us to split dynamically inconsistent ``agents'' into ``multiple selves'' (see \cite{Gilboa1997Comment,Strotz1955Myopia}). These notions of consistency are formalised as follows.

\begin{definition}\label{def:consistent_EU_pref_str}
    Let $\F$ be a well-posed stochastic extensive form on an exogenous scenario space $(\Omega,\ms E)$, $\Pi$ be a belief system and $U$ be a taste system on $\F$. Moreover, let $\ms W$ be a $\sigma$-algebra on $W$.
    \begin{enumerate}
        \item Let $\Pi$ and $\ms W$ satisfy Assumptions~\ref{def:EU_pref_str.Out_mb} and~\ref{def:EU_pref_str.W_rich_enough} in Definition~\ref{def:EU_pref_str}, and let $s\in S$ be a strategy profile. The pair $(\Pi,s)$ is said \emph{dynamically consistent} iff for all sets $J \subseteq \bigcup_{i\in I} \{i\} \times \mf P^i$ with at most two elements,
        \begin{enumerate}
            \item\label{def:consistent_EU_pref_str.beliefs.p} for all $(i,\mf p_i),(j,\mf p_j)\in J$, all $\omega\in D_{\mf p_i}$ with
            \[ (\ast) \qquad \psi_{j,\mf p_j} \circ \Out^s_{i,\mf p_i}(\omega) \neq\emptyset, ~\text{ and }~ p_{i,\mf p_i}(\omega)(\omega) \supseteq \psi_{j,\mf p_j} \circ \Out^s_{i,\mf p_i}(\omega)(\omega) \]
            satisfy
            \[ p_{j,\mf p_j}(\omega) = \psi_{j,\mf p_j} \circ \Out^s_{i,\mf p_i}(\omega); \]
            \item for all $(i,\mf p_i),(j,\mf p_j)\in J$ the set 
            \[ E_{i,j} = \{ \omega\in D_{\mf p_i} \cap D_{\mf p_j} \mid p_{i,\mf p_i}(\omega)(\omega) \supseteq  p_{j,\mf p_j}(\omega)(\omega) \} \]
            is an event, i.e.\ satisfies $E_{i,j} \in \ms E$;
            \item\label{def:consistent_EU_pref_str.beliefs.common_prior} there is a \emph{common prior for $J$}, that is, a probability measure $\P$ on $(\Omega,\ms E)$ such that for all $((i,\mf p_i),(j,\mf p_j))\in J^2$ with $(i,\mf p_i)\neq(j,\mf p_j)$ we have $\P(D_{\mf p_i}\cup D_{\mf p_j}) = 1$, and the events
            \[  E_{\neg i,j} = D_{\mf p_j} \setminus E_{i,j}, \quad E_{i,j}^s = \{ \omega \in E_{i,j} \mid \psi_{j,\mf p_j} \circ \Out^s_{i,\mf p_i}(\omega) \neq \emptyset \} \]
            and any $E\in \ms E|_{D_{\mf p_j}}$ satisfy
            \[ \P_{j,\mf p_j}(E) \cdot \P(E_{\neg i,j} \cup E_{i,j}^s) = \P(( E_{\neg i,j} \cup E_{i,j}^s)\cap E). \]
        \end{enumerate}
        \item $U$ is said \emph{dynamically consistent} iff for all $i\in I$, there is a map $u_i\colon W \to \R$ such that for any $\mf p\in\mf P^i$, we have $u_{i,\mf p} = u_i$.
        \item Provided $\Pr = (\Pi,U,\ms W)$ is an \textsc{eu} preference structure, and given a strategy profile $s\in S$, $(\Pr,s)$ it is said \emph{dynamically consistent} iff both $(\Pi,s)$ and $U$ are dynamically consistent.
    \end{enumerate}
\end{definition}

Some of the technical details of the definition may require some verbal explanation. Given $i, j\in I$, $\mf p_i\in\mf P^{i}$, $\mf p_j\in\mf P^j$ as above, then $\psi_{j,\mf p_j}\circ \Out^s_{i,\mf p_i}(\omega)$ equals a) the random move in $\mf p_j$ reached by strategy profile $s$ when starting in information set $\mf p_i$ in scenario $\omega$ and random move $p_{i,\mf p_i}(\omega)$, if it exists (which is the case precisely iff $(\ast)$ holds true); b) the unique $\x\in\X^{j}$ such that $\x(\omega)$ strictly precedes $p_{i,\mf p_i}(\omega)(\omega)$ in case it exists (which is the case of the left-hand side in $(\ast)$ being true, but not the right-hand side); c) the empty set meaning that neither of the preceding two things exist, that is, in scenario $\omega$, $\mf p_j$ does not precede $\mf p_i$ nor is $\mf p_j$ reached through $s$ starting from $\mf p_i$, given the belief of being at move $p_{i,\mf p_i}(\omega)$ in $\mf p_i$. 

Moreover, $E_{i,j}$ is the set of scenarios at that information set $\mf p_i$ precedes information set $\mf p_j$, according to the ``random move beliefs'' $p_{i,\mf p_i}$ and $p_{j,\mf p_j}$; $E_{\neg i,j}$ is the set of scenarios in $D_{\mf p_j}$ at that $\mf p_i$ does not precede $\mf p_j$ in that sense, and thus, where probabilities cannot be explained by outcome generation given $\mf p_i$; and $E_{i,j}^s$ is the set of scenarios at that $\mf p_j$ is actually reached via $s$ given $\mf p_i$, according to $p_{i,\mf p_i}$. Note that, if Property \ref{def:consistent_EU_pref_str.beliefs.p} is satisfied, then, by the Heraclitus Property in Lemma \ref{lemma:Heraclitus_property},
\[ E_{i,j}^s = \{\omega\in D_{\mf p_i} \cap D_{\mf p_j} \mid \exists \x\in\mf p_j\colon\, p_{i,\mf p_i}(\omega)(\omega) \supseteq \x(\omega), ~ \psi_{j,\mf p_j} \circ \Out^s_{i,\mf p_i}(\omega) \neq \emptyset\}. \]

Thus, dynamic consistency of $(\Pi,s)$ requires that for all two information sets of some agents, which can but need not be identical, first, in all scenarios $\omega$ where the second is reached via $s$ starting from the first and given the belief about actual position within the first, the belief at the second evaluated in $\omega$ equals exactly the attained random move; second, the set $E_{i,j}$ of scenarios at that information set $\mf p_i$, according to the ``belief'' $p_{i,\mf p_i}$, precedes information set $\mf p_j$ in any way, is an event; third, there is a common prior $\P$ on $(\Omega,\ms E)$ such that for both distinct pairs $((i,\mf p_i),(j,\mf p_j))$ the posterior $\P_{j,\mf p_j}$ is equal to the conditional probability given the event that $\mf p_i$ does not precede $\mf p_j$, or it does and $s$ leads from $\mf p_i$ to $\mf p_j$, always according to the belief $p_{i,\mf p_i}$ on the realised random move in $\mf p_i$. Note that this can be a trivially void statement for one of the two pairs, but not for both because
\[ E_{\neg i,j} \cup E_{i,j}^s \cup E_{\neg j,i} \cup E_{j,i}^s = D_{\mf p_i} \cup D_{\mf p_j},\]
and $\P$ is concentrated on $D_{\mf p_i} \cup D_{\mf p_j}$. However, Part \ref{def:consistent_EU_pref_str.beliefs.common_prior} can be a void statement in some situations, for instance, if
\[ \P_{i,\mf p_i}(E_{i,j}) = 1, \qquad \P_{i,\mf p_i}(E_{i,j}^s) = 0. \]
In that case, $\P = \iota_\ast \P_{i,\mf p_i}$, where $\iota$ is the inclusion map $D_{\mf p_i} \inj \Omega$, does the job.
This corresponds to the case where, essentially, at $\mf p_i$, $\mf p_i$ is believed to precede $\mf p_j$ although the outcome of $s$ is believed to almost never reach $\mf p_j$ out of $\mf p_i$. Then, the relation between the beliefs $\P_{i,\mf p_i}$ and $\P_{j,\mf p_j}$ is unaffected by condition \ref{def:consistent_EU_pref_str.beliefs.common_prior} applied to $\{(i,\mf p_i),(j,\mf p_j)\}$.

The definition above is very general, and it simplifies in specific situations. For instance, in many situations endogenous information sets can be ordered in the weak sense that for all $i,j\in I$ and $\mf p_i\in \mf P^i$ and $\mf p_j\in\mf P^j$ admitting $\omega\in D_{\mf p_i} \cap D_{\mf p_j}$, $\hat\x_i\in\mf p_i$ and $\hat\x_j\in\mf p_j$ such that $\hat\x_i(\omega) \supseteq \hat\x_j(\omega)$, then $D_{\mf p_i}\supseteq D_{\mf p_j}$ and for all $\omega\in D_{\mf p_j}$ and $\x_j\in\mf p_j$, there is $\x_i\in\mf p_i$ with $\x_i(\omega) \supseteq \x_j(\omega)$. In that case, for distinct $(i,\mf p_i),(j,\mf p_j)$ as in the hypothesis of \ref{def:consistent_EU_pref_str.beliefs.common_prior}, we have for $(i,\mf p_i)$ preceding $(j,\mf p_j)$ as in the preceding sentence that $E_{\neg j,i} \cup E_{j,i}^s = D_{\mf p_i}$. Hence, for any $E\in\ms E|_{D_{\mf p_i}}$, 
\[ \P(E) = \P_{i,\mf p_i}(E) \cdot \P(D_{\mf p_i}) = \P_{i,\mf p_i}(E). \]
Thus, condition \ref{def:consistent_EU_pref_str.beliefs.common_prior} reduces to the unidirectional Bayesian updating rule given by
\[ \P_{j,\mf p_j}(E) \cdot \P_{i,\mf p_i}(E_{\neg i,j} \cup E_{i,j}^s) = \P_{i,\mf p_i}(( E_{\neg i,j} \cup E_{i,j}^s)\cap E), \qquad E\in\ms E|_{D_{\mf p_j}}. \]
If, moreover, perfect endogenous information is given, and $\mf p_i = \{\x_i\}$, $\mf p_j = \{\x_j\}$ with $\x_i >_\X \x_j$, we have $E_{i,j} = D_{\x_j}$, $E_{\neg i,j} = \emptyset$, so that condition \ref{def:consistent_EU_pref_str.beliefs.common_prior} reduces to the statement:
\[ \P_{j,\mf p_j}(E) \cdot \P_{i,\mf p_i}(\psi_{j,\mf p_j} \circ \Out^s_{i,\mf p_i}\neq \emptyset) = \P_{i,\mf p_i}(\{\psi_{j,\mf p_j} \circ \Out^s_{i,\mf p_i}(\omega) \neq \emptyset\}\cap E), \qquad E\in\ms E|_{D_{\mf p_j}}. \]

From this discussion it becomes obvious that dynamic consistency of belief systems is not a very strong notion if endogenous information sets are ``small'' relative to the amount of random moves eligible for forming an endogenous information set, provided the exogenous scenario space is large and belief probability measures are non-atomic, because then $\P_{i,\mf p_i}(\psi_{j,\mf p_j} \circ \Out^s_{i,\mf p_i}\neq \emptyset)$ is likely to be zero in most of the cases. 

If these conditions are satisfied, the problem is actually even deeper: there may be essentially no non-trivial way for making $\Out_{i,\mf p_i}^s$ measurable. Consider action path \textsc{sef} data with $\A = [0,1]$, $\T = \{0,1\}$, and perfect endogenous information on the exogenous scenario space given by the unit interval $\Omega = [0,1]$ with Borel $\sigma$-algebra $\ms E = \ms B([0,1])$. Suppose that the agent $i$ active at the root $\x_0$ has full information available, i.e.\ $\ms F^i_{\x_0} = \ms E$. Let $g\colon \Omega \to \A$ be the identity on $[0,1]$ and $h\colon \A\to\A$ be a non-measurable function. Then, the strategy profile $s$ mapping $\x_0$ to the random action $g$ and the move $\x_1(f)$, where $f\in\A^\T$ with $f(0) = a$, to the deterministic (alias constant ``random'') action $h(a)$, for any $a\in[0,1]$, has a problematic outcome map: $p_{i,\{\x_0\}}$ maps any $\omega\in\Omega$ to $\x_0$, and hence $\Out^s_{i,\{\x_0\}}(\omega) = (\omega,(0,\omega),(1,h(\omega)))$. If we wished to allow for utility functions on $W$ depending in a non-trivial, Borel-measurable on the third component of an outcome $w\in W = \Omega\times \A^\T$, for instance, the map $u\colon W\to [0,1]$ mapping any $w = (\omega,f)$ to $f(1)$, then $h = u\circ \Out^s_{i,\{\x_0\}}$ would have to be measurable -- which it is not.\footnote{From the purely mathematical perspective, this problem has been known for long from the perspective of randomisation. Mathematically, the preceding example is largely similar to Aumann's ``attacker--defender'' example in \cite{Aumann1964Mixed} and its version in \cite[Example 6.7]{AlosFerrer2016}. We refer to the Subsection~\ref{subs:randomisation} for a discussion of this aspect and the related literature. However, we take a somewhat different view on the problem on the level of decision theory, as explained in the following paragraph.} 

We thus clearly see that in well-posed stochastic extensive forms there is a tension between non-triviality of endogenous information (agents having information about past behaviour) and non-triviality of \textsc{eu} preference structures (non-atomic beliefs thereby requiring uncountable $\Omega$, and complex utility functions). The main point, however, is not about the existence of such a tension, but the fact that it is precisely described in terms of formal conditions and existing non-trivial examples, like action path \textsc{sef} data. Precisely, the tension does not imply that discrete structures are necessary for extensive form theory to be sensible and non-empty. On the contrary, from the preceding discussion and the remainder of the present section it becomes clear that relevant well-posed action path \textsc{sef} data with coarse endogenous information sets may well admit non-trivial \textsc{eu} preference structures, with non-atomic beliefs and interesting utility functions. Although this eventually breaks downs if endogenous information becomes finer, these observations suggests an approximation theory, by approximating \textsc{sef} with fine endogenous information with refining sequences of \textsc{sef} with coarse endogenous information. 

\subsection{Dynamic rationality and equilibrium}\label{subs:equilibrium}

Having provided a model of beliefs and tastes with natural and sufficient measurability properties, a general concept of ``dynamic rationality'' for well-posed stochastic extensive forms can be formulated (see, e.g.\ \cite[Section~9.C]{MasColell1995}).\footnote{Note that the term ``sequential'' is replaced with the more suitable and general one of ``dynamic''.} It is based on the classical idea of best response to correct beliefs about other agents' strategies, conditional on all induced decision situations given through endogenous information sets, given the specified beliefs and tastes about endogenous and exogenous events which are common knowledge. As mentioned earlier, the conditioning on induced stochastic extensive forms is supposed to model the dynamic aspect of decision making (decisions are taken at any information set). As a prime example of this, this may rule out empty threats. 
As outlined in the motivation of dynamic consistency, by adding the latter, an equilibrium concept of general scope obtains.

\begin{definition}\label{def:dynamic_rationality}
    Let $\F$ be a well-posed stochastic extensive form on an exogenous scenario space $(\Omega,\ms E)$, and let $\Pr = (\Pi,U,\ms W)$ be an \textsc{eu} preference structure on $\F$. 
    For any $i\in I$, any $\mf p\in\mf P^i$, let $\E_{i,\mf p}$ denote the conditional expectation operator with respect to $\P_{i,\mf p}$ given $\ms F_\mf p^i$, that is, 
    \[ \E_{i,\mf p} = \E^{\P_{i,\mf p}}[. \mid {\ms F}_{\mf p}^i], \]
    where $\E^\mu$ denotes the Lebesgue integral operator with respect to a given measure $\mu$. For any strategy profile $s\in S$, $i\in I$ and $\mf p\in\mf P^i$, let $\pi_{i,\mf p}(s) = \E_{i,\mf p}[u_{i,\mf p}\circ\Out_{i,\mf p}^s]. $
    
    A strategy profile $s\in S$ is said \emph{dynamically rational given $\Pr$} iff for all $i\in I$, all $\mf p\in\mf P^i$, and all $\tilde s\in S$ with $\tilde s^{-i} = s^{-i}$, we have
    \[ \pi_{i,\mf p}(s) \ge \pi_{i,\mf p}(\tilde s). \]

    Let $s\in S$ be a strategy profile. Then, $(s,\Pr)$ is said \emph{in equilibrium} iff it is dynamically rational given $\Pr$ and $(\Pr,s)$ is dynamically consistent. 
\end{definition}

As discussed in the following remark, this equilibrium concept is a generalised model of perfect Bayesian equilibrium, including several other well-known classical concepts within one extensive form framework.
\begin{remark}[Perfect Bayesian, subgame-perfect Nash and correlated equilibrium]
    Let $\F$ be a well-posed stochastic extensive form on an exogenous scenario space $(\Omega,\ms E)$ and $\Pr = (\Pi,U,\ms W)$ be an \textsc{eu} preference structure. Let $s\in S$.
    \begin{enumerate}
        \item Let us consider the static case, that is, $\X$ is a singleton. Denote its unique element by $\x$. Dynamic consistency and perfect recall is trivially satisfied, of course. In Harsanyi's setting of \cite{Harsanyi1968a}, $s$ is a ``Bayesian equilibrium point'' iff $(s,\Pr)$ is in equilibrium according to the preceding definition. In the current game-theoretic language, $(\F,\Pr)$ yields a generalisation of Bayesian games and the equilibrium property corresponds to a generalisation of Bayesian Nash equilibrium\footnote{According to \cite{MasColell1995,Escude2023Advanced}. In Aumann's setting of \cite{Aumann1974Subjectivity}, this is simply called ``equilibrium point''. Following the framework of \cite{Aumann1987Correlated}, this might be called ``Bayes rational at almost all states of the world''.} with respect to the ``information structure'' given by the exogenous scenario space $(\Omega,\ms E)$, with common prior $\P = \P_{i,\{\x\}}$ shared by all agents $i\in I$, and the individual ``signal'' $\sigma$-algebras $(\ms F_\x^i)_{i\in I}$. By weakening the dynamical consistency requirement on $(\Pi,s)$ to hold only if $i=j$, see Definition~\ref{def:consistent_EU_pref_str}, one obtains the more general case with subjective priors $\P_i = \P_{i,\{x\}}$, $i\in I$. In particular, with $(\Omega,\ms E)$ serving as a correlation device, $s$ can be seen as a correlated equilibrium, with respect to the subjective priors $\P_i$ -- though this equilibrium is typically formulated with respect to a common prior (see the discussion in \cite{Aumann1987Correlated}). As a special case, we obtain the Nash equilibrium (\cite{Nash1950}, in mixed and in pure strategies, see Subsection~\ref{subs:randomisation} for a possible meaning of this).
        \item The preceding point generalises to the dynamic setting. Then, we obtain a non-trivial generalisation of perfect Bayesian equilibrium for a generalised model of dynamic Bayesian games (see, e.g.\ \cite{Fudenberg1991,MasColell1995}). It is a non-trivial generalisation of Bayesian games and perfect Bayesian equilibrium, because general exogenous dynamic uncertainty can be handled that can not be seen as the outcome of a nature agent's decision making (for instance, Brownian noise in continuous time). 
        \item Under the hypothesis of perfect information, we obtain the subgame-perfect equilibrium (\cite{Selten1965}). If only perfect endogenous information and perfect recall are supposed instead of perfect information, we obtain a non-trivial generalisation of subgame-perfect equilibrium to the case of imperfect exogenous information.    
    \end{enumerate}
\end{remark}

Next, we comment on the multiple-selves approach to dynamic decision making.

\begin{remark}[Multiple selves]\label{rmk:multiple_selves_sef}
    Dynamic rationality rules out empty threats, eventually corresponding to certain plans that are optimal today, but suboptimal tomorrow.\footnote{More precisely, suboptimal in some possible future which the threat -- non-credibly -- attempts to deter others from bringing about.}  
    In the case of a dynamically inconsistent taste system, it may also rule out plans that are suboptimal today, but optimal tomorrow.
    
    A classical attempt for resolving both points is the so-called \emph{multiple selves} approach (going back to \cite{Strotz1955Myopia}, at least; see \cite{Piccione1997Interpretation,Gilboa1997Comment} in the context of the absent-minded driver story). One idea behind this says that strategically it does not make much sense to suppose an agent making contingent plans for the future, for its future self has the ability to revise it. In the inconsistent case, he might really do this. From that perspective, by definition, an agent should only act once. According to this principle, the following analysis can be performed in our setting.

    Let $\F$ be a well-posed stochastic extensive form on an exogenous scenario space $(\Omega,\ms E)$ and $\Pr = (\Pi,U,\ms W)$ be an \textsc{eu} preference structure. Let $s\in S$.
    
    Replace $\F$ with the stochastic extensive form 
    \[ \F = (F,\pi,\X,\hat I,\hat{\ms F},\hat{\ms C},\hat C),\]
    and $\Pr = (\Pi,U,\ms W)$ with $\hat{\Pr} = (\hat \Pi,\hat U,\ms W)$,
    where 
    \begin{itemize}[label=--]
        \item $\hat I =\{(i,\mf p) \mid i\in I,\, \mf p\in\mf P^i\}$;
        \item $\hat{\ms F}^{(i,\mf p)}_\x = \ms F^i_{\mf p}$ for all $i\in I$, $\mf p\in\mf P^i$, $\x\in\mf p$;\footnote{One can show that with the choices $\hat C$ defined below, the set of random moves for $(i,\mf p)$ is given by $\hat \X^{(i,\mf p)} = \mf p$. We recall that it does not matter how $\ms F$ and $\ms C$ are chosen outside of this set. This footnote hence also applies to the following point.}
        \item $\hat{\ms C}^{(i,\mf p)}_\x = \ms C^i_{\mf p}$, for all $i\in I$, $\mf p\in\mf P^i$, $\x\in\mf p$;
        \item $\hat C^{(i,\mf p)} = A^i(\mf p)$, for all $i\in I$, $\mf p\in\mf P^i$;
        \item $\hat\Pi$ is the family assigning to any $(i,\mf p)\in \hat I$ and the only endogenous information set of that agent, $\mf p$, the triple $(p_{i,\mf p},\ms P_{i,\mf p},\P_{i,\mf p})$;
        \item $\hat U$ is the family assigning to any $(i,\mf p)\in \hat I$ and the only endogenous information set of that agent, $\mf p$, the function $u_{i,\mf p}$.
    \end{itemize}
    That is, each agent is split into separate ``incarnations'' of itself at all of its endogenous information set. We do not bother the reader with the verification of the claim that this yields again a well-posed stochastic extensive form on $(\Omega,\ms E)$ and an \textsc{eu} preference structure on it. Note that the new structure trivially satisfies perfect recall and dynamic consistency of the taste system. 

    Now, dynamic rationality and equilibrium as defined above can be analysed in $(\hat\F,\hat\Pr)$.
\end{remark}

The expected utility framework, as implemented above, is attractive for several reasons, including its equivalence to a class of well-chosen rationality axioms on decision making. Nevertheless, it has been challenged on diverse grounds: in view of explicit experimental evidence, for its theoretical lack of aversion of Knightian uncertainty, for the assumption of complete preferences, for its tendency to overlook gain-loss-asymmetry, and other things (\cite{Ellsberg1961Uncertainties, Aumann1962Utility, Kahneman1979Prospect, Gilboa1989Maxmin, Schmeidler1989Subjective, Bewley2002Knightian, Hara2023Multiple}, for overviews see \cite{Gilboa2009Theory, Etner2012Decision}). Stochastic extensive form theory is not made for expected utility, and alternative preference structures or ways of modelling decision making may be defined on it as well. The author has chosen the classical expected utility framework because of its historic importance, its current use and the fact that the proposed alternatives often depart from it.

At this point, the author wants to mention that the terms ``decision problem'' and ``game'' are not introduced formally because their use in the literature is not unambiguous, and probably for good reasons. Sometimes, these terms just describe a phenomenon of strategic interaction or a non-formalised description of it (e.g.\ one can know the absent-minded driver's decision problem without knowing its extensive form description using graphs as proposed in \cite{Piccione1997Interpretation}). Sometimes, these terms are meant as a formal description of states, consequences, agents and choices (as in \cite{AlosFerrer2005} where the term ``extensive decision problem'' is used for the classical pseudo extensive forms of the present text). Sometimes, these terms are used only if, in addition to the formal description of states, consequences, agents and choices, payoffs, utilities, beliefs, preferences or the like are provided (as typically the case in stochastic control theory). In other contexts, one might be even more radical and argue that in order to be a game it must be played, that the play must be in equilibrium, and that there cannot be any other equilibrium (because the model could not explain why the latter is not played). Implicitly, such an argument also requires well-posedness. From this perspective, a game (with stochastic extensive form characteristics) would be a well-posed stochastic extensive form equipped with an equilibrium concept such there exists a unique equilibrium. The present text does not suggest any of these views being superior compared to others, and its concepts are compatible with all of them.

\subsection{Randomisation}\label{subs:randomisation}

Randomisation, that is, extending the exogenous scenario space and the information structure on it, is a common procedure in many fields. It underlies Nash's idea of equilibrium existence by introducing lotteries over strategies (the latter called ``pure'' because deterministic, that is, essentially defined on a singleton exogenous scenario space); it is a fundamental way of representing choice under uncertainty through the theory of expected utility following \cite{Neumann1944, Savage1972Foundations, Anscombe1963Definition, Aumann1974Subjectivity}; it underlies the implementation of decision rules as Bayesian equilibria and the revelation principle in mechanism design; it underlies the weak solution concept for stochastic differential equations, stochastic control problems and stochastic differential games. Mixed (and then also behaviour and pure strategies) can be defined in the spirit of \cite{Aumann1974Subjectivity}.
Certainly, this is not the right place to develop a general theory of randomisation of stochastic extensive forms. Let us restrict to some comments.

In the non-stochastic setting of finite games à la Nash, randomness enters only for ``mixing'' individual strategies and not as a correlation device. This typically happens under the common prior assumption, and so all strategies are objective. When one assumes that under the common prior the exogenous information structures of different agents are mutually independent, all strategy profiles are necessarily mixed. If, moreover, one assumes that the exogenous information available at different endogenous information sets is mutually independent, then any strategy profile is behavioural. 

By contrast, the point of most stochastic games lies in correlation given, for instance, by what one calls a state process that agents may have partial information about. Then pure, behavioural and mixed strategies are rather the exception than the rule. This is even more so if one does not assume common priors (see again the discussion in \cite{Aumann1974Subjectivity,Aumann1987Correlated}). Yet, similarly to the content of \cite[Assumption II]{Aumann1974Subjectivity}, one might wish $\ms E$ and $\Pi$ such as to assure reasonable options of behavioural strategy profiles $s$ (for instance, such that $\P_{i,\mf p}$ is non-atomic on $\ms F_{\mf p}^{i,s}$ for all $i\in I$, $\mf p\in\mf P^i$). 

Yet, the ``more'' randomisation is possible, the finer must be $\ms E$, hence, the less probability measures on it. 
Actually, this problem is not only related to randomisation, but also to tastes: the less trivial taste shall be, the finer must be $\ms W$, thus the finer must be $\ms P_{i,\mf p}$ and $\ms E$, and again, the smaller must be the set of admissible beliefs $\P_{i,\mf p}$. 
It is well-known from the case of classical extensive forms that without finiteness (or countability) assumptions, this creates a non-trivial trade-off between the richness of both randomisation procedures and tastes and this trade-off is well understood (Aumann in \cite{Aumann1961Borel,Aumann1963Choosing,Aumann1964Mixed}, also see the discussion by Alós-Ferrer and Ritzberger in \cite[Subsection~6.4.3]{AlosFerrer2016}, in particular Example 6.7 therein). In Subsection~\ref{subs:EU_preferences}, we have re-interpreted this trade-off in terms of a tension between non-triviality of endogenous information and non-triviality of \textsc{eu} preference structures.

\subsection{Simple examples}

Let us first consider some equilibria for a randomised version of the stochastic extensive forms $\F$ from Subsection~\ref{subs:SEF_simple-examples}, Lemma~\ref{lemma:simple_sef1}. Obviously, already in this simple case, there is a large range of possibilities, which moreover are well-known and analysed in other formalisations. Thus, the following selection can only be illustrative of the generality of our approach, and attempts to clarify its mechanics.

\begin{example}
    We consider a randomised version in that we let $(\Omega,\ms E)$ be a general exogenous scenario space and $\rho\colon \Omega \to \{1,2\}$ an $\ms E$-$\mc P\{1,2\}$-measurable surjection. We suppose it rich enough to admit real-valued random variables $\xi^k$, $k=0,1,2$ and a probability measure under which these four random variables are independent, $\rho$ is $\frac 12$-Bernoulli-distributed, and each $\xi^k$ is uniformly distributed on $[0,1]$. The $\xi^k$ represent randomisation devices available at the random moves $\x_k$, respectively.
    
    We let $I$ be a singleton, $\T = \{0,1\}$, $\A = \{1,2\}$, and $W = \Omega \times \A^\T$, and consider the associated action path \textsc{sdf} $(F,\pi,\X)$ which gives us the \textsc{sdf} from the first simple example (see \cite[Lemma~2.17]{Rapsch2024DecisionA}) if $\rho$ is also injective and $\omega_k$ denotes the preimage of $k\in\{1,2\}$ under $\rho$. Note that an element of $W$ is essentially a triple $(\omega,k,m) \in \Omega \times \{1,2\}\times\{1,2\}$. In reminiscence of the notation used for the simple example, denote the root of $(\X,\ge_\X)$ by $\x_0$ and let, for $k\in\{1,2\}$, $\x_k$ be a shorthand for the random move mapping any $\omega\in\Omega$ to $\x_k(\omega) = \Omega \times \{k\} \times\{1,2\}$ (which is nothing else than $\x_1(f)$ for any $f\colon \T\to \A$ with $f(0) = k$ when written in action path notation). 
    
    In view of Remark \ref{rmk:multiple_selves_sef} on the multiple selves approach, we let $\hat I = \{i,j\}$ be a set with two distinct elements, representing two agents, $i$ acting at time $0$, and $j$ at time $1$ -- so to speak, two independent copies of the unique action index $\in I$ above. Consider the map $u\colon W\to \R$ given by $u(\omega,f) = (-1)^{\rho(\omega) + f(0) + f(1)}$, which is a generalised form of the payoff known from ``matching pennies'', with two pennies whose two sides show $1$ and $2$, and where the fact whether matching means win or lose depends on the realisation of the random variable $\rho$. 
    
    Suppose that the taste system $U$ satisfies with $u_i = -u$ and $u_j = u$. We fix a prior belief, alias a probability measure on $(\Omega,\ms E)$, $\P$ such that $\rho$ and $\xi^k$, $k=0,1,2$, are $\P$-independent, and $\P(\rho = 1) = p$ for given $p\in [0,1]$. Given a strategy profile $s\in S$, let us call an \textsc{eu} preference structure $\Pr$ \emph{suitable} iff its taste system is $U$, $\P_{i,\{\x_0\}} = \P$ and $(\Pi,s)$ is dynamically consistent. Note that such a $\Pi$ can be constructed out of the data $\P$ and $s$ in any case, though not necessarily in a unique way. Also note that for any $s\in S$ and $\x\in\X$, $s^i(\x)$ can be identified with its random action alias an $\ms F^i(\x)$-measurable, $\A$-valued random variable. The independence of the signals means that we only consider ``mixed'' strategies in the traditional sense (no correlation) which is not a much of a restriction here anyway because of the zero-sum structure.

    \begin{enumerate}
        \item First suppose that the agents' information and choices are essentially given by the first line of the table defining $C$ in Subsection~\ref{subs:SEF_simple-examples}, so that, in particular, $\ms F$ essentially corresponds to case 1, i.e.\ no information about $\rho$, and, at time $1$, agent $j$ recalls $i$'s decision made at time $0$. That is, $\ms F^i_{\x_0} = \sigma(\xi^0)$, $\ms F^j_{\x_k} = \sigma(\xi^k)$, for $k=1,2$, and $\mc H^j_1 = \{\{(0,1)\},\{(0,2)\}\}$. 
    
        Let $s\in S$ be a strategy profile. 
        If $p=\frac 12$, then for any $s\in S$ and any suitable \textsc{eu} preference structure $\Pr$, $(s,\Pr)$ is in equilibrium, with expected utility zero. If $p>\frac 12$, then for $s\in S$ and any suitable \textsc{eu} preference structure $\Pr$, $(s,\Pr)$ is in equilibrium iff
        \[ \P_{j,\{\x_1\}}[s^j(\x_1) = 2] = \P_{j,\{\x_2\}}[s^j(\x_2) = 1] = 1. \]
        Its expected utility is $1-2p$ for $i$ and $2p-1$ for $j$, since $j$ can fully react to $i$'s action whatever the latter is (even if $i$ did randomise and even though $j$ cannot observe $i$'s personal randomisation signal $\xi^0$). 
        If $p<\frac 12$, conversely, then for $s\in S$ and any suitable \textsc{eu} preference structure $\Pr$, $(s,\Pr)$ is in equilibrium iff
        \[ \P_{j,\{\x_1\}}[s^j(\x_1) = 1] = \P_{j,\{\x_2\}}[s^j(\x_2) = 2] = 1. \]
        Its expected utility is $2p-1$ for $i$ and $1-2p$ for $j$.
        
        \item If we follow the second line of the table, the exogenous information structure remains similar, but agent $j$ cannot observe the choice made by agent $i$ -- which is more similar to the original ``matching pennies'' game. Then, again, $\ms F^i_{\x_0} = \sigma(\xi^0)$, but $\ms F^j_{\x_1} = \ms F^j_{\x_2} = \sigma(\xi^1,\xi^2)$ and $\mc H^j_1 = \{\{0\} \times\A\}$, as the agent receives the same signal at $\x_1$ and $\x_2$. Let $\mf p = \{\x_1,\x_2\}$.
        
        Let $s\in S$ and $\Pr$ be a suitable \textsc{eu} preference structure. If $p=\frac 12$, then $(s,\Pr)$ is in equilibrium without further restriction. If $p\neq\frac 12$, then $(s,\Pr)$ is in equilibrium iff
        \[ \P[s^i(\x_0) = 1] = \P_{j,\mf p}[s^j(\mf p) = 1] = \frac 12. \]
        In any case, equilibrium expected utility is equal to $0$ for both agents.

        \item Let us next consider the situation where agent $j$ has a further advantage by having full exogenous information, which corresponds to line 4, and in particular \textsc{eis} 2.(a). That is, $\ms F^i_{\x_0} = \sigma(\xi^0)$, but $\ms F^j_{\x_1} = \ms F^j_{\x_2} = \sigma(\rho,\xi^1,\xi^2)$ and $\mc H^j_1 = \{\{0\} \times\A\}$. Let again $\mf p = \{\x_1,\x_2\}$.

        Let $s\in S$ and $\Pr$ be a suitable \textsc{eu} preference structure. Let $\alpha = \P[s^i(\x_0) = 1]$ and for $k=1,2$ let $\beta_k\in [0,1]$ a number such that $\P[s^j(\mf p) = 1 \mid \rho = k] = \beta_k$, $\P$-almost surely. Then the expected utility of $i$, given $s^i(\x_0)$ is equal to
        \[ \pi_{i,\{\x_0\}}(s) = -(-1)^{s^i(\x_0)}\,\big(p(2\beta_1-1)+(1-p)(1-2\beta_2)\big), \]
        and the expected utility of $j$ given $\rho$ and $s^j(\mf p)$ is equal to
        \[ \pi_{j,\mf p}(s) = (1-2\alpha) (-1)^{\rho + s^j(\mf p)}. \]
        
        We conclude that $(s,\Pr)$ is in equilibrium iff 
        \[ \alpha = \frac 12, \qquad p(2\beta_1-1)+(1-p)(1-2\beta_2) = 0. \]
        For if $\alpha = \frac 12$, then any strategy for $j$ is a best response, but mixing is a best response for $i$ only if the right-hand side equality is satisfied. If $\alpha >\frac 12$, then $s^j$ is a best response iff $s^j(\mf p) = 3-\rho$, $\P$-almost surely. In this case, $\beta_1 = 0$ and $\beta_2 = 1$, so that $p(2\beta_1-1)+(1-p)(1-2\beta_2) = -1$. But then any best response $s^i$ by $i$ satisfies $s^i(\x_0) = 2$ $\P$-almost surely, whence $\alpha = 0$, a contradiction. A similar contradiction arises when assuming equilibrium with $\alpha < \frac 12$.

        Expected utility in equilibrium is almost surely zero for both agents. Hence, the informational advantage for $j$ (as compared to the preceding point) does not pay out.

        \item Let us now vary the preceding examples and consider a case where agent $i$ can determine what exogenous information agent $j$ obtains, which is a variation on the theme of exploration and exploitation. For this, we transcribe the fifth line with \textsc{eis} 2.(b). That is, $\ms F^i_{\x_0} = \sigma(\xi^0)$, but $\ms F^j_{\x_1} = \sigma(\rho,\xi^1)$, $\ms F^j_{\x_2} = \sigma(\xi^2)$ and $\mc H^j_1 = \{\{(0,1)\},\{(0,2)\}\}$.

        Let $s\in S$ and $\Pr$ be a suitable \textsc{eu} preference structure. Let $\alpha = \P[s^i(\x_0) = 1]$, and for $k=1,2$ let $\beta_k\in [0,1]$ a number such that $\P_{j,\{\x_1\}}[s^j(\x_1) = 1 \mid \rho = k] = \beta_k$, $\P_{j,\{\x_1\}}$-almost surely, and let $\gamma = \P_{j,\{\x_2\}}[s^j(\x_2) = 1]$. Expected utilities are given by
        \begin{align*}
            \pi_{i,\{\x_0\}}(s) =  - (-1)^{s^i(\x_0)}\, \Big( &1\{s^i(\x_0)=1\}\,\big(p(2\beta_1-1)+(1-p)(1-2\beta_2)\big) \\
            &+ 1\{s^i(\x_0)=2\}\,(1-2\gamma)(1-2p)\Big)
        \end{align*}
        for $i$, and
        \[ \pi_{j,\{\x_1\}}(s) = -(-1)^{\rho+s^j(\x_1)}, \qquad \pi_{j,\{\x_2\}}(s) = (1-2p)(-1)^{s^j(\x_2)} \]
        for $j$.

        We conclude that $(s,\Pr)$ is in equilibrium iff
        \[  \alpha \cdot p(1-p) = 0, \qquad \beta_1=0,~\beta_2=1, \qquad \begin{cases}
            \gamma = 0, &\text{ if } p < \frac 12, \\
            \gamma = 1, &\text{ if } p > \frac 12.
        \end{cases}. \]
        Note that $\alpha \cdot p(1-p) = 0$ means nothing else than: if $\rho$ is believed to be truly random, then $\alpha = 0$. In the deterministic case ($p=0$ or $p=1$), no restriction on $\alpha$ needs to be imposed; agent $i$ is completely indifferent in equilibrium.
        
        %%% Comment: The proof is completely similar though slightly longer to that from the preceding point, and [NOT] omitted here.
        The proof is straightforward by backwards induction. The best response property of $s$ for agent $j$ implies that $\beta_1=0$ and $\beta_2=1$, and
        \[ \begin{cases}
            \gamma = 0, &\text{ if } p < \frac 12, \\
            \gamma = 1, &\text{ if } p > \frac 12.
        \end{cases}.\]
        Then, given this strategy of agent $j$, the expected utility of agent $i$ writes as
        \[  \pi_{i,\{\x_0\}}(s) =  - (-1)^{s^i(\x_0)}\, \big( -1\{s^i(\x_0)=1\}\,+ 1\{s^i(\x_0)=2\}\,|1-2p|\big) = -|1-2p|^{s^i(\x_0)-1},  \]
        and the best responses of $i$ are obviously the ones claimed above.

        Expected utility in equilibrium is almost surely $-|1-2p|$ for $i$ and $|1-2p|$ for $j$. If $j$ had access to full information about $\rho$ at $\x_1$ as well, it would be easy to see, $j$ could realise the expected utility of $1$ in equilibrium. In the present asymmetric case, however, provided $0<p<1$, agent $i$ can force $j$ to the random move with less exogenous information, reducing $j$'s expected equilibrium utility. Hence, $j$ cannot realise more expected utility in equilibrium than in the situation with no relevant information on $\rho$ at all (see the first case above). 
        \end{enumerate}
\end{example}

To close this subsection, let us consider the absent-minded driver stochastic extensive form discussed earlier, see Theorem~\ref{thm:absent_minded_driver_Gilboa_sef}. Following \cite{Piccione1997Interpretation}, we let taste $u$ of both agents by identically given by
\[ (\ast) \qquad u(\omega,D) = 0, \quad u(\omega,H) = 4, \quad u(\omega,M) = 1, \]
for $\omega\in\Omega$, and in equilibrium we suppose some form of dynamic consistency of strategy profile and belief. Here, we use the equilibrium notion introduced in the present text, restricting to those belief systems $\Pi$ whose common prior for $\{(1,\{\x_1\}),(2,\{\x_2\})\}$ is given by a fixed probability measure $\P$ with respect to that $(\rho,\xi^1,\xi^2)$ are independent and $\xi^1,\xi^2$ are both uniformly distributed on $[0,1]$.

Note that for any such $\P$ and any strategy profile $s\in S$, there is a belief system $\Pi$ such that $(\Pi,s)$ is dynamically consistent and $\P$ is a common prior for $\{(1,\{\x_1\}),(2,\{\x_2\})\}$. Moreover, $\Pi$ then necessarily satisfies $(\xi^i)_\ast\P_{i,\{\x_i\}} = (\xi^i)_\ast \P$, for both $i\in I$; or in other words:
\[ \P_{i,\{\x_i\}}(E) = \P(E), \qquad \text{ for all } E\in \ms F^i_{\x_i}. \]
$\Pi$ is essentially uniquely determined (i.e.\ ``along the equilibrium path'').

Also note that a strategy profile $s\in S$ corresponds to a pair $(E_1,E_2)$ of events $E_i\in\ms F_{\x_i}^i$ via $s^i(\x_i) = c_i(E_i)$, $i\in I$, and this correspondence is a bijection $S \to \ms F_{\x_1}^1 \times \ms F_{\x_2}^2$. In the sequel, we thus identify strategy profiles with their respective images under this bijection. $E_i$ is the event of agent $i$ exiting, $i\in I$.

Hence, fix $u$ and $\P$ as above. Given $s\in S$, represented by $(E_1,E_2)$ as above, and $p\in[0,1]$, call an \textsc{eu} preference structure \emph{suitable} iff $u_{i,\{\x_i\}} = u$ for both $i\in I$, $(\Pi,s)$ is dynamically consistent, and
\[ \P(\rho = 1) = \frac 12, \qquad \P(E_1) = \P(E_2) = 1-p. \]
Thus, $\frac 12$ is the prior probability of $\{\rho = 1\}$, and $p$ is the prior probability of continuing at either intersection, given the intersection. In particular, we deliberately focus on equilibria with 1) prior $\P$ under that $\rho$ is uniformly distributed, and 2) $\P(E_1) = \P(E_2)$. This selection of equilibria is natural in view of the symmetry of the problem, all the more in the case of a physical person split into two abstract agents. The indifference principle (a.k.a.\ principle of insufficient reason) further underpins that choice.

\begin{thm}\label{thm:absent_minded_driver_Gilboa_sef_well-posed}
    Consider the absent-minded driver \textsc{sef} introduced previously, defined on exogenous scenario space $(\Omega,\ms E)$. Let $p\in[0,1]$, let $s\in S$ be a strategy profile and let $\Pr$ be a suitable \textsc{eu} preference structure. Then, $(s,\Pr)$ is in equilibrium iff $p = \frac 23$.
\end{thm}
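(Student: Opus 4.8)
The plan is to reduce ``$(s,\Pr)$ in equilibrium'' to dynamic rationality at the two endogenous information sets, to read off the beliefs that dynamic consistency with the fixed common prior $\P$ imposes, and then, at each information set, to compare the conditional expected payoffs of the two available actions, ``exit'' and ``continue''. Every $T_\omega$ is a finite tree (five nodes), so each scenariowise classical pseudo-extensive form $(T_\omega,I,C_\omega)$ is well-posed, whence $\F$ is well-posed by Theorem~\ref{thm:SEF_well-posed}. Since $\X^i=\{\x_i\}$, one has $\mf P^i=\{\{\x_i\}\}$ for $i\in I=\{1,2\}$; thus the taste system ($u_{i,\{\x_i\}}=u$ for both $i$) is trivially dynamically consistent, while dynamic consistency of $(\Pi,s)$ is built into the definition of ``suitable''. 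Hence $(s,\Pr)$ is in equilibrium iff $s$ is dynamically rational given $\Pr$, i.e.\ iff for each $i$ and every $\tilde s$ with $\tilde s^{-i}=s^{-i}$ one has $\pi_{i,\{\x_i\}}(s)\ge\pi_{i,\{\x_i\}}(\tilde s)$ $\P_{i,\{\x_i\}}$-a.s.

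\emph{Beliefs.} I would apply Definition~\ref{def:consistent_EU_pref_str} to $J=\{(1,\{\x_1\}),(2,\{\x_2\})\}$. As $p_{i,\{\x_i\}}(\omega)=\x_i$ and $\x_i(\omega)\supseteq\x_j(\omega)$ precisely when $\rho(\omega)=i$, we get $E_{1,2}=\{\rho=1\}$, $E_{\neg1,2}=\{\rho=2\}$, and, by the outcome computation below, for $\omega\in\{\rho=1\}$ one has $\psi_{2,\{\x_2\}}\circ\Out^s_{1,\{\x_1\}}(\omega)\ne\emptyset$ iff agent~$1$ continues, so $E^s_{1,2}=\{\rho=1\}\cap E_1^\complement$, of $\P$-measure $p/2$ (since $E_1\in\ms F^1_{\x_1}=\sigma(\xi^1)$ is $\P$-independent of $\rho$ and $\P(E_1^\complement)=p$). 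For the ordered pair $((1,\{\x_1\}),(2,\{\x_2\}))$, the common-prior requirement (part~\ref{def:consistent_EU_pref_str.beliefs.common_prior} of Definition~\ref{def:consistent_EU_pref_str}), with prior $\P$, forces $\P_{2,\{\x_2\}}=\P(\,\cdot\mid\{\rho=2\}\cup(\{\rho=1\}\cap E_1^\complement))$, so that $\P_{2,\{\x_2\}}(\rho=2)=\tfrac1{1+p}$ and $\P_{2,\{\x_2\}}(\rho=1)=\tfrac p{1+p}$; the reversed pair yields the symmetric statement for agent~$1$. Since the conditioning event lies in $\sigma(\rho,\xi^{3-i})$ while $\xi^i$ is $\P$-independent of $(\rho,\xi^{3-i})$, the signal $\xi^i$ stays independent of $(\rho,\xi^{3-i})$ under $\P_{i,\{\x_i\}}$; hence the $\P_{i,\{\x_i\}}$-conditional law of $(\rho,\xi^{3-i})$ given $\ms F^i_{\x_i}$ is the non-random one just described, and $\P_{i,\{\x_i\}}(E_i)=\P(E_i)=1-p$.

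\emph{Outcome map and best responses.} When $\rho(\omega)=i$ the node $\x_i(\omega)$ is the root, and along $\uparrow\x_i(\omega)$ the profile $s$ yields $(\omega,D)$ (utility $0$) if $s^i$ exits, and otherwise moves on to $\x_{3-i}(\omega)$, producing $(\omega,H)$ (utility $4$) or $(\omega,M)$ (utility $1$) according as $s^{3-i}$ exits or continues. When $\rho(\omega)=3-i$ the set $\bigcap\uparrow\x_i(\omega)=\x_i(\omega)$ contains no move besides $\x_i(\omega)$, so the outcome depends on $s^i$ only: $(\omega,H)$ (utility $4$) on exit, $(\omega,M)$ (utility $1$) on continue; this also shows $\psi_{2,\{\x_2\}}\circ\Out^s_{1,\{\x_1\}}(\omega)=\emptyset$ exactly for $\omega\in\{\rho=1\}\cap E_1$. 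Combining this with the beliefs, and with the fact that the reached agent $3-i$ continues with probability $p$, agent $i$'s $\P_{i,\{\x_i\}}$-conditional expected utility at $\{\x_i\}$ is the constant $U^i_{\mathrm{ex}}=\tfrac p{1+p}\cdot4=\tfrac{4p}{1+p}$ if $i$ exits and $U^i_{\mathrm{ct}}=\tfrac1{1+p}(4-3p)+\tfrac p{1+p}=\tfrac{4-2p}{1+p}$ if $i$ continues. Since $\tilde s^i$ ranges over all $\sigma(\xi^i)$-measurable exit events and $\P_{i,\{\x_i\}}(E_i)=1-p$, one has $\pi_{i,\{\x_i\}}(s)=U^i_{\mathrm{ex}}$ on $E_i$ and $=U^i_{\mathrm{ct}}$ on $E_i^\complement$, while the best deviation attains $\max(U^i_{\mathrm{ex}},U^i_{\mathrm{ct}})$; so $s$ is dynamically rational for $i$ iff $U^i_{\mathrm{ex}}=U^i_{\mathrm{ct}}$ when $0<p<1$, $U^i_{\mathrm{ex}}\ge U^i_{\mathrm{ct}}$ when $p=0$, and $U^i_{\mathrm{ct}}\ge U^i_{\mathrm{ex}}$ when $p=1$. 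As $U^i_{\mathrm{ex}}-U^i_{\mathrm{ct}}=\tfrac{6p-4}{1+p}$, the interior case gives $p=\tfrac23$, and both boundary cases fail ($U^i_{\mathrm{ex}}=0<4=U^i_{\mathrm{ct}}$ at $p=0$; $U^i_{\mathrm{ex}}=2>1=U^i_{\mathrm{ct}}$ at $p=1$). By the symmetry $1\leftrightarrow2$, $\rho\leftrightarrow3-\rho$ of the data the condition for agent~$1$ coincides with that for agent~$2$; therefore $(s,\Pr)$ is in equilibrium iff $p=\tfrac23$, and at $p=\tfrac23$ both conditional payoffs equal $\tfrac85$, so every admissible $s$ is then in equilibrium.

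The step I expect to be most delicate is the belief computation: unwinding the abstract events $E_{i,j}$, $E_{\neg i,j}$, $E^s_{i,j}$ of Definition~\ref{def:consistent_EU_pref_str} to recover, from the common-prior formula, the Piccione--Rubinstein ``visit-counting'' reweighting $\tfrac1{1+p}$ of the probability of being at the first intersection, together with the independence bookkeeping that makes $\pi_{i,\{\x_i\}}(\tilde s)$ constant on each of $E_i$ and $E_i^\complement$. The outcome-map case analysis and the final arithmetic are routine.
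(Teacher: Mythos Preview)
Your proof is correct and follows essentially the same route as the paper: identify $E_{i,j}=\{\rho=i\}$, $E_{\neg i,j}=\{\rho=j\}$, $E^s_{i,j}=\{\rho=i\}\cap E_i^\complement$, derive the posterior $\P_{j,\{\x_j\}}=\P(\,\cdot\mid\{\rho=j\}\cup(\{\rho=i\}\cap E_i^\complement))$ with $\P_{j,\{\x_j\}}(\rho=j)=\tfrac1{1+p}$, and compare the conditional payoffs $\tfrac{4p}{1+p}$ (exit) and $\tfrac{4-2p}{1+p}$ (continue). The only organisational difference is that the paper argues by contradiction (for $p<\tfrac23$ the unique best response forces $p=1$, for $p>\tfrac23$ it forces $p=0$), whereas you split into the interior case $0<p<1$ (where both $E_i$ and $E_i^\complement$ have positive $\P_{i,\{\x_i\}}$-measure, forcing equality) and the two degenerate boundaries; both lead to the same conclusion and the same equilibrium value $\tfrac85$.
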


Hence, we find that the absent-minded driver paradox from \cite{Piccione1997Interpretation} evaporates, confirming the \emph{ex ante} optimal strategy of continuing independently with probability $\frac 23$ at each intersection, which is in a similar way the conclusion in \cite{Aumann1997Absent,Gilboa1997Comment} as well.

\subsection{Well-posedness of action path stochastic extensive forms}

Action path \textsc{sef} data provide a large class of well-posed stochastic extensive forms. In this framework, the three order-theoretic properties characterising well-posedness of a stochastic extensive form are implied by well-orderedness of time, as the following result shows in high generality.

\begin{thm}\label{thm:AP_sef_well-posed}
    Let $\F$ be the stochastic extensive form induced by action path \textsc{sef} data with well-ordered time $\T$. Then, $\F$ is well-posed.
\end{thm}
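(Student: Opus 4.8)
The plan is to invoke Corollary~\ref{cor:SEF_well-posedness=~order-theoretic_properties}, which reduces well-posedness of the stochastic extensive form $\F$ to three order-theoretic properties of the underlying decision forest $(F,\supseteq)$: up-discreteness and regularity (equivalently: weak up-discreteness, coherence, and regularity). So the task is to show that all three hold when the time axis $\T$ is well-ordered. Throughout I would work scenario-wise, since $(F,\supseteq)$ decomposes into the trees $(T_\omega,\supseteq)$, $\omega\in\Omega$, and each order-theoretic property is about chains, which live inside a single $T_\omega$; moreover by Theorem~\ref{thm:AP_sef} we already know $\F$ is a genuine stochastic extensive form, so $F$ is a decision forest on $W$ with available choices in the sense of Equation~\ref{eq:available_choices}. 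The key structural tool is the strictly decreasing time map $\mf t\colon F\to\hat\T$ (extended so that $\mf t(y)=\T$ on non-moves), established in Subsection~\ref{subs:AP_SEF.information_histories_adapted_choices}: along any chain $c$ in $(F,\supseteq)$, the nodes are linearly ordered by $\supseteq$, hence by $\mf t$, so $\mf t|_c$ is an order-isomorphism onto a subset of $\hat\T$.

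For \emph{up-discreteness}, I would take any non-empty chain $c\subseteq F$. Its image $\mf t(c)\subseteq\hat\T$ is non-empty; since $(\T,\le)$ is well-ordered and $\hat\T=\T\cup\{\T\}$ keeps that property (adding a top element preserves well-ordering), $\mf t(c)$ has a least element $t_0$. Because $\mf t$ is strictly decreasing, the least element of $\mf t(c)$ corresponds to the $\supseteq$-maximal element of $c$, and since $\mf t$ is injective on chains this maximal element is the (unique) maximum of $c$ in $(F,\supseteq)$. This gives up-discreteness directly, which in particular entails weak up-discreteness. For \emph{regularity}, I would take a non-maximal $x\in F$ and the history $\uparrow x\setminus\{x\}$; I must produce an infimum in $(F,\supseteq)$. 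If $x$ is itself the root $W_\omega$ of its tree (for $\pi(x)=\omega$), the history is empty and $W_\omega$ is the infimum. Otherwise, consider $\mf t(\uparrow x\setminus\{x\})\subseteq\T$; this is a non-empty set of times strictly below $\mf t(x)$ (using strict monotonicity and $x$ non-terminal so $\mf t(x)\in\T$). Its least element $t_1$ exists by well-ordering, and the corresponding node is the $\supseteq$-maximum of $\uparrow x\setminus\{x\}$, hence its infimum in $(F,\supseteq)$ — note that regularity as stated asks for an infimum of the history \emph{in $F$}, and the maximum of the history (which exists, being its least time-value) is exactly that.

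For \emph{coherence} the argument is slightly different: I must show every history $h$ without a minimum has a continuation with a maximum. But by up-discreteness every non-empty chain has a maximum, and a history is a non-empty chain; the relevant subtlety is that a history need not have a \emph{minimum} (a $\supseteq$-least element). A continuation of $h$ is $w\setminus h$ for a maximal chain (decision path) $w\supseteq h$; I would take any $w$ extending $h$ and show $w\setminus h$ has a maximum. Here the action-path structure enters: a maximal chain $w$ in $T_\omega$ corresponds to an outcome $(\omega,f)\in W$, and the nodes on $w$ below all of $h$ are the $x_t(\omega,f)$ for $t$ ranging over times strictly exceeding every $\mf t(y)$, $y\in h$ — equivalently $t$ in the up-set of $\sup\mf t(h)$ in $\T$. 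By well-ordering this up-set, if non-empty, has a least element $t^*$, giving the $\supseteq$-maximum $x_{t^*}(\omega,f)$ of $w\setminus h$; if the up-set is empty then $h$ itself is cofinal in $w$ and, because $\mf t(h)$ as a non-empty well-ordered set has a least element, $h$ has a maximum — but a history with a maximum automatically has... actually I would just note that in this case $w\setminus h$ consists of the single terminal node $\{(\omega,f)\}$ (or is empty), which trivially has a maximum. Either way coherence holds.

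The main obstacle I anticipate is \emph{not} any of the three properties in isolation — each follows cleanly from well-ordering plus the strictly decreasing $\mf t$ — but rather the bookkeeping around the distinction between histories, closed histories, and the precise role of terminal nodes and the artificial top element $\T\in\hat\T$: I must be careful that when I pass between ``least time in $\mf t(c)$'' and ``$\supseteq$-maximum of $c$'', the node realizing that least time genuinely lies in $c$ (true since $\mf t$ is injective on chains and $c$ is a chain of $F$) and is genuinely a node of $F$ (true by definition of $F$). A secondary point requiring care is verifying that Corollary~\ref{cor:SEF_well-posedness=~order-theoretic_properties} applies — i.e.\ that $\F$ is indeed a stochastic extensive form, not merely a pseudo-extensive form — but this is exactly Theorem~\ref{thm:AP_sef} applied to action path \textsc{sef} data (which satisfy Assumption~\hyperlink{Ass:AP.SEF3}{AP.SEF3}, automatically true for well-orders as remarked after that assumption). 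Once these technical points are dispatched, the result is immediate.
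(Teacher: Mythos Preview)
Your plan for up-discreteness is correct and matches the paper's argument. Your coherence argument is also sound but superfluous: once up-discreteness and regularity are established, Corollary~\ref{cor:SEF_well-posedness=~order-theoretic_properties} (equivalence of \ref{cor:SEF_well-posedness=~order-theoretic_properties.well-posed} and \ref{cor:SEF_well-posedness=~order-theoretic_properties.order_properties2}) gives well-posedness without a separate coherence check.

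The regularity argument, however, is wrong. You take the \emph{least} element $t_1$ of $\mf t(\uparrow x\setminus\{x\})$; since $\mf t$ is strictly decreasing, the node at time $t_1$ is the $\supseteq$-\emph{maximum} of the history --- in fact the root $W_\omega$. You then assert that this maximum is the infimum of the history. That is false: the infimum of a set in $(F,\supseteq)$ is its greatest lower bound, not its greatest element. What you would need is the $\supseteq$-\emph{minimum} of the history, i.e.\ the node at the \emph{greatest} time in $\mf t(\uparrow x\setminus\{x\})$. But well-ordering of $\T$ does not furnish suprema: when $\mf t(x)$ is a limit element (say $\mf t(x)=\omega$ in $\T=\omega+1$), the set of times strictly below it has no maximum, so $\uparrow x\setminus\{x\}$ has no minimum, and the infimum you seek is not inside the history at all --- it is $x$ itself.

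Proving that $x$ is the infimum is where the real work lies, and the paper's argument is different from yours. One fixes an outcome $(\omega,f)\in x$, takes an arbitrary lower bound $y$ of $\uparrow x\setminus\{x\}$ and an arbitrary $(\omega,f')\in y$, and shows $f'|_{[0,\mf t(x))_\T}=f|_{[0,\mf t(x))_\T}$ (so $(\omega,f')\in x$, hence $y\subseteq x$). The key step uses well-ordering to extract a \emph{minimal} disagreement time $t$ with $f(t)\neq f'(t)$; if $t<\mf t(x)$, then because $\uparrow x\setminus\{x\}$ has no minimum one can find $u$ with $t<u<\mf t(x)$ and $x_u(\omega,f)\in\uparrow x\setminus\{x\}$, and then $(\omega,f')\notin x_u(\omega,f)$ contradicts $y\subseteq x_u(\omega,f)$. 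The terminal case $x=\{(\omega,f)\}$ (which your write-up omits: you assume ``$x$ non-terminal so $\mf t(x)\in\T$'', but terminal nodes are non-maximal too) requires the same minimal-disagreement-time trick. Well-ordering enters not through the time map on the history, but through the minimum of $\{t\in\T : f(t)\neq f'(t)\}$ --- precisely the content of Assumption~\hyperlink{Ass:AP.SEF3}{AP.SEF3}.
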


This theorem has important implications. Stochastic decision problems and games in discrete time or with time structure like in the long cheap talk game can be equipped with general stochastic noise, within an extensive form theory. On the other hand, nothing is said about continuous time; and indeed, in \cite{AlosFerrer2008}, it has been shown that certain classical action path pseudo-extensive forms in continuous time are not well-posed. This however does not preclude any action path classical or stochastic extensive form in continuous time from being well-posed, as shown in \cite{AlosFerrer2015}. Thus well-ordered time is a sufficient, but not a necessary condition for well-posedness. This is tightly related to the question of the decision-theoretic meaning of ``stochastic differential games''. These two points are discussed in the third part of the present series.

\subsection{Discussion on nature representations}

Traditionally, stochastic games in discrete time are modelled in what the author suggests calling \emph{nature representation}. That is, in a traditional extensive form model, using an additional agent called ``nature'' with perfect recall using a fixed mixed strategy. In the terminology of this article, this would entail an action path stochastic extensive form, in order to allow for that randomisation, and an agent with perfect recall. Provided the action space is regular enough (a Borel space), in that setting, any probability measure on discrete-time path space can be represented as the outcome of ``nature's'' strategy. This follows from classical disintegration results in probability theory (see \cite[Chapter~3]{Kallenberg2021}).

In continuous time, however, this representation can fail because the nature representation does not always yield a well-posed stochastic extensive form. There are exceptions: for instance, locally right-constant jump processes can be implemented like this (see the third paper for a remark on how to do so using stochastic action-reaction \textsc{sef}). An example of failure, however, is Brownian motion. Brownian motion is supported on paths of low regularity, paths that are far from being locally right-constant. Hence, a straightforward nature representation would suggest to take the action path pseudo-extensive form data $W=\Omega\times\A^\T$, for a suitable exogenous scenario space $(\Omega,\ms E)$, $\A=\R$, $\T=\R_+$, and singleton $I = \{\text{``nature''}\}$, such that ``nature'' admits perfect endogenous recall. By Theorem \ref{thm:link_endogenous_information_H}, perfect (endogenous) recall requires $\mc H^{\text{``nature''}}_t$ to contain only singletons, $t\in\T$. The corresponding decision forest is not weakly up-discrete, and, indeed, using well-known counterexamples from \cite{Simon1989, Stinchcombe1992, AlosFerrer2016}, one can easily see that the induced action path pseudo-\textsc{sef} is not well-posed.

This issue might be circumvented by accepting the ``nature'' agent to admit imperfect recall. For instance, in the just-mentioned example one may instead consider the other extreme, i.e.\ assume $\mc H^{\text{``nature''}}_t$ to be a singleton for all $t\in\T$. Then, we easily obtain well-posedness (compare \cite[Example~5.5]{AlosFerrer2016}, which can be directly adapted for a single agent). However, while the modeller might fix ``nature's'' strategy, he or she cannot fix the personal agents' ones. Strategies would be functions of information sets, and therefore, adaptedness of ``personal'' agents' strategies would lack an explanation in the case of the non-discrete ``nature'' action space $\A$. While it seems already odd that the ``nature'' agent is forced to be forgetful on the grounds of extensive form well-posedness, we argue that the category of perfect recall is irrelevant for generating exogenous randomness or information in the first place. Moreover, the nature representation does not help to explain adaptedness of the actual, ``personal'' agents' strategic behaviour. We conclude that stochastic extensive forms provide a strict generalisation of existing extensive form theory: it contains classical extensive forms, and although many stochastic extensive forms can be nature-represented in classical extensive form, this is not the case for an important class of extensive forms with interesting noise, such as Brownian motion.

\section*{Conclusion}
\addcontentsline{toc}{section}{Conclusion}

It is possible to implement general stochastic processes as background noise on refined partitions-based extensive forms without encountering outcome generation problems for a ``nature'' agent, while allowing for a rigorous decision-theoretic interpretation of the relationship between endogenous and exogenous information and choices. This is an improvement over the existing state of the art in both refined partitions-based decision and game theory (e.g.\ \cite{AlosFerrer2016,AlosFerrer2015}) and stochastic control and differential games theory (e.g.\ \cite{Pham2009Continuous,Karatzas1998Methods,Carmona2018,Cohen2015Stochastic}). This has been achieved by abandoning the assumption of a ``nature'' agent and instead constructing a theory of stochastic extensive forms based on stochastic decision forests. This provides a consistent theory of refined partitions-based dynamic choice under uncertainty, adapted to a mechanism of exogenous information revelation that generalises filtrations from probability theory. Moreover, the measurability assumptions on choices typically made in the literature can be understood as a conceptual necessity rather than a technical one. Randomisation -- whether behavioural, mixed and correlated -- arises naturally and can be explained within this context. 

The stochastic extensive form defines the rules. Then, given arbitrary ``subgames'' -- identified as random moves and closely linked to closed (random) histories -- agents can freely choose, using elementary Savage acts as strategies, thereby generating outcomes. Well-posed stochastic extensive forms are ultimately those for which this structure becomes fully meaningful. Well-posedness can be classified transparently, analogous to the results found in \cite{AlosFerrer2008}. Furthermore, dynamic rationality and a generalised form of perfect Bayesian equilibrium can be defined on this basis, providing an abstract, general, and interpretable solution concept. We note that stochastic extensive forms resemble a generalisation of dynamic Bayesian games; however, a key aspect of the former is their independence from any nature representation. As a consequence, they can accomodate complex exogenous randomness, such as Brownian motion and beyond. 

The present article demonstrates that general probability, the extensive form and choice under uncertainty can be unified into a simple, abstract concept of broad generality, requiring minimal special structure to formulate familiar concepts of rationality and equilibrium in a consistent manner. Moreover, we observe that the boundaries to this abstract concept are delimited in terms of a trade-off: between the flexibility of choice and the richness of information on the one hand, and well-posedness and the existence of appropriate (e.g.\ \textsc{eu}) preference structures and solution concepts on the other.

The generality of this idea is emphasised by the many applications covered. In the text many simple pedagogical examples are provided. Moreover, we note that the theory allows to model the absent-minded driver's story without decision-theoretic paradox, very much as in Gilboa's approach to the problem. However, the universality of the theory is perhaps best highlighted by the general model of action path stochastic extensive forms presented in this article, defined by a small number of easily verifiable conditions. This model can serve as a unified decision-theoretic foundation for a broad class of stochastic decision problems, particularly in well-ordered time. Moreover, as will be demonstrated in the third paper, it constitutes one step toward explaining stochastic decision problems in continuous time approximately.

In the literature on continuous-time and differential games, whether stochastic or not, there has been discussion on how the concept of subgame-perfect Nash equilibrium can be faithfully implemented. A particular example of this is the timing game, and a well-known approach to addressing this issue has been developed in \cite{Fudenberg1985,Riedel2017,Steg2018}. Since this strand of the literature avoids using extensive forms, ad hoc definitions of this equilibrium concept are employed, which, from a decision-theoretic perspective, do not constitute an equilibrium concept of an extensive form game in continuous time whatsoever, but rather one based on stacked strategic form games, justified by an approach in ``discrete time with an infinitesimally fine grid'' (\cite{Simon1987,Simon1989}). Moreover, in the stochastic case, the question arises as to whether one should be able to condition on histories stopped at stopping times with respect to the filtration-like exogenous information available, a question which has been identified and addressed in \cite{Riedel2017}. Another open question is why this is decision-theoretically appropriate, and whether an answer can be given in terms of extensive form theory.

In the third paper of the present series, these issues will be investigated from the perspective of stochastic extensive form theory, using the action path formulation. This approach promises to provide a foundation for subgame-perfect -- and, from a more general stochastic perspective, perfect Bayesian equilibrium -- in continuous time, within a general stochastic setting, in a specific approximate sense, but strictly stronger than what has been done in the cited works. Instead of interpreting ``continuous-time games'' as ``games'' in ``discrete time with an infinitesimally fine grid'', they will be viewed as limits of well-posed games in continuous time. This approximation will apply at the level of trees and choices, not merely of payoffs. While payoffs are derived objects, trees and choices are primitives and constitute the crucial decision-theoretic components. Only such a procedure can demonstrate why a concept like that of stochastic differential games is well-placed within the framework of stochastic extensive forms -- be it by a limiting process. Therefore, such an approximation will be instrumental in understanding the decision making process described by the model.

\section*{Acknowledgements}
\addcontentsline{toc}{section}{Acknowledgements}
First and foremost, the author is particularly indebted to his doctoral supervisor, Christoph Knochenhauer, who read and commented on several earlier versions of this paper, encouraged him to pursue this project, and actively supported him in presenting its contents at workshops. Special thanks are due to Frank Riedel for the helpful discussion we had in Berlin in 2022, especially for providing useful insights regarding the game-theoretic literature. The author also wants to thank Elizabeth Baldwin, Miguel Ballester, and Samuel N.\ Cohen for inspiring discussions on decision and control theory during his stay in Oxford in 2023. The author is also grateful to the attendants of the workshops and seminars where he had the opportunity to present and discuss earlier versions of this particular project, in Oxford, Kiel, Berlin, Palaiseau, including Daniel Andrei, Karolina Bassa, Peter E.\ Caines, Fanny Cartellier, Sebastian Ertel, Ali Lazrak, Kristoffer Lindensjö, Christopher Lorenz, Berenice Anne Neumann, Manos Perdikakis, Jan-Henrik Steg, Peter Tankov, and Jacco Thijssen, for their questions and comments. 
Partial funding by Deutsche Forschungsgemeinschaft (DFG) through, first, the \href{https://gepris.dfg.de/gepris/projekt/410208580}{IRTG 2544} Stochastic Analysis in Interaction, Project ID: 410208580, and, second, under Germany's Excellence Strategy, the \href{https://gepris.dfg.de/gepris/projekt/390685689}{EXC 2046/1} The Berlin Mathematics Research Center MATH+, Project ID: 390685689, is gratefully acknowledged. 

\phantomsection
\addcontentsline{toc}{section}{References}
\bibliography{main}
\bibliographystyle{plain}

\newpage
\appendix
\section{Figures}

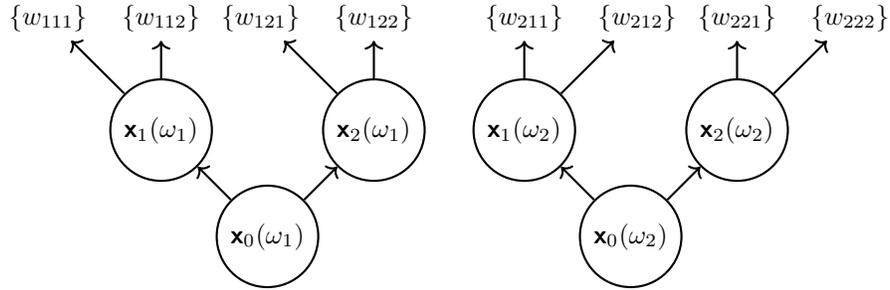
\begin{figure}[h!]
    \centering
    \begin{tikzpicture}[node distance={20mm}, thick, main/.style = {draw, circle}] 
    \node[main] (1) {$\x_0(\omega_1)$}; 
    \node[main] (2) [above left of=1] {$\x_1(\omega_1)$}; 
    \node[main] (3) [above right of=1] {$\x_2(\omega_1)$}; 
    \node[main] (4) [right of=3] {$\x_1(\omega_2)$}; 
    \node[main] (5) [below right of=4] {$\x_0(\omega_2)$};
    \node[main] (6) [above right of=5] {$\x_2(\omega_2)$};
    \node[] (8) [above=0.5cm of 2] {$\{w_{112}\}$};
    \node[] (7) [left=0.2cm of 8] {$\{w_{111}\}$};
    \node[] (10) [above=0.5cm of 3] {$\{w_{122}\}$};
    \node[] (9) [left=0.2cm of 10] {$\{w_{121}\}$};
    \node[] (11) [above=0.5cm of 4] {$\{w_{211}\}$};
    \node[] (12) [right=0.2cm of 11] {$\{w_{212}\}$};
    \node[] (13) [above=0.5cm of 6] {$\{w_{221}\}$};
    \node[] (14) [right=0.2cm of 13] {$\{w_{222}\}$};
    \draw[->] (1) -- (2); 
    \draw[->] (1) -- (3); 
    \draw[->] (5) -- (4); 
    \draw[->] (5) -- (6); 
    \draw[->] (2) -- (7); 
    \draw[->] (2) -- (8); 
    \draw[->] (3) -- (9); 
    \draw[->] (3) -- (10); 
    \draw[->] (4) -- (11); 
    \draw[->] (4) -- (12); 
    \draw[->] (6) -- (13); 
    \draw[->] (6) -- (14); 
    \end{tikzpicture} 
    \caption{A simple stochastic decision forest represented as a directed graph, with $w_{\ell km} = (\omega_\ell,k,m)$, for $(\ell,k,m)\in \{1,2\}^3$. Moves are indicated by circles.}
    \label{fig:simple_sdf}
\end{figure}

\begin{figure}[h!]
    \centering
    \begin{tikzpicture}[node distance={20mm}, thick, main/.style = {draw, circle}] 
    \node[main] (1) {$\x_0$}; 
    \node[main] (2) [above left=0.5cm and 2.3cm of 1] {$\x_1$}; 
    \node[main] (3) [above right=0.5cm and 2.3cm of 1] {$\x_2$};
    \node[] (9) [above=0.5cm of 2] {$\{w_{212}\}_{\{\omega_2\}}$};
    \node[] (8) [left=0cm of 9] {$\{w_{112}\}_{\{\omega_1\}}$};
    \node[] (7) [left=0cm of 8] {$\{w_{111}\}_{\{\omega_1\}}$};
    \node[] (10) [right=0cm of 9] {$\{w_{211}\}_{\{\omega_2\}}$};
    \node[] (12) [above=0.5cm of 3] {$\{w_{122}\}_{\{\omega_1\}}$};
    \node[] (13) [right=0cm of 12] {$\{w_{221}\}_{\{\omega_2\}}$};
    \node[] (11) [left=0cm of 12] {$\{w_{121}\}_{\{\omega_1\}}$};
    \node[] (14) [right=0cm of 13] {$\{w_{222}\}_{\{\omega_2\}}$};
    \draw[->] (1) -- (2); 
    \draw[->] (1) -- (3); 
    \draw[->] (2) -- (7); 
    \draw[->] (2) -- (8); 
    \draw[->] (2) -- (9); 
    \draw[->] (2) -- (10); 
    \draw[->] (3) -- (11); 
    \draw[->] (3) -- (12); 
    \draw[->] (3) -- (13); 
    \draw[->] (3) -- (14); 
    \end{tikzpicture} 
    \caption{The decision tree $(\Tr,\ge_\Tr)$ for the simple stochastic decision forest, with $w_{\ell km} = (\omega_\ell,k,m)$, for $(\ell,k,m)\in \{1,2\}^3$. (Random) moves are indicated by circles. Elements of $\Tr \setminus \X$, of the form $\{(\omega,\{w\})\}$ and seen as maps $\omega\mapsto \{w\}$, are denoted by $\{w\}_{\{\omega\}}$.}
    \label{fig:simple_sdf_Tr}
\end{figure}
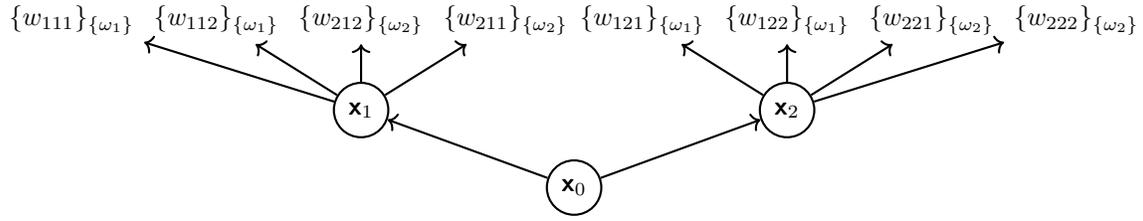
\newpage

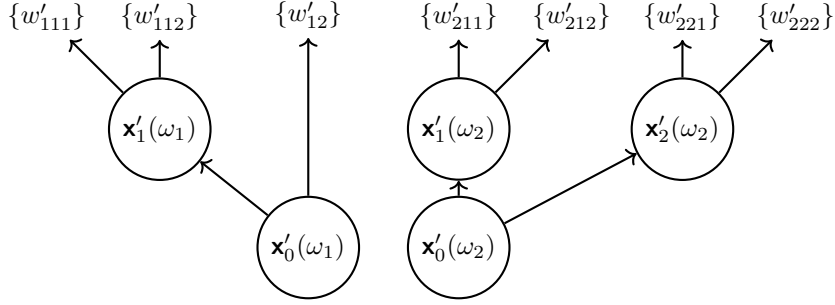
\begin{figure}[h!]
    \centering
    \begin{tikzpicture}[node distance={20mm}, thick, main/.style = {draw, circle}] 
    \node[main] (1) {$\x'_0(\omega_1)$}; 
    \node[main] (2) [above left=0.6cm and 1cm of 1] {$\x'_1(\omega_1)$}; 
    \node[] (3) [above=2.1cm of 1] {$\{w'_{12}\}$}; 
    \node[main] (5) [right of=1] {$\x'_0(\omega_2)$};
    \node[main] (4) [above=0.2cm of 5] {$\x'_1(\omega_2)$}; 
    \node[main] (6) [above right=0.6cm and 2cm of 5] {$\x'_2(\omega_2)$};
    \node[] (8) [above=0.5cm of 2] {$\{w'_{112}\}$};
    \node[] (7) [left=0.2cm of 8] {$\{w'_{111}\}$};
    \node[] (11) [above=0.5cm of 4] {$\{w'_{211}\}$};
    \node[] (12) [right=0.2cm of 11] {$\{w'_{212}\}$};
    \node[] (13) [above=0.5cm of 6] {$\{w'_{221}\}$};
    \node[] (14) [right=0.2cm of 13] {$\{w'_{222}\}$};
    \draw[->] (1) -- (2); 
    \draw[->] (1) -- (3); 
    \draw[->] (5) -- (4); 
    \draw[->] (5) -- (6);  
    \draw[->] (2) -- (7); 
    \draw[->] (2) -- (8); 
    \draw[->] (4) -- (11); 
    \draw[->] (4) -- (12); 
    \draw[->] (6) -- (13); 
    \draw[->] (6) -- (14); 
    \end{tikzpicture} 
    \caption{A variant of the simple stochastic decision forest in Figure~\ref{fig:simple_sdf} represented as a directed graph, with $w'_{\ell km} = (\omega_\ell,k,m)$, for all triples $(\ell,k,m)\in\{1,2\}^3$ with $(\omega_\ell,k,m)\in W'$, and $w'_{12} = (\omega_1,2)$. Moves are indicated by circles.}
    \label{fig:simple_sdf_variant}
\end{figure}

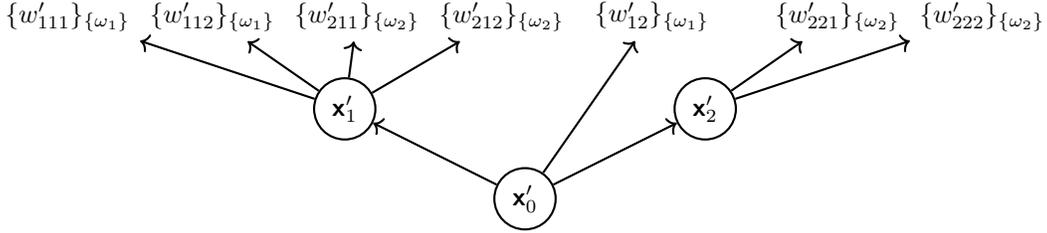
\begin{figure}[h!]
    \centering
    \begin{tikzpicture}[node distance={20mm}, thick, main/.style = {draw, circle}] 
    \node[main] (1) {$\x'_0$}; 
    \node[main] (2) [above left=0.6cm and 1.8cm of 1] {$\x'_1$}; 
    \node[] (3) [above right=1.8cm and 0.5cm of 1] {$\{w'_{12}\}_{\{\omega_1\}}$};  
    \node[main] (6) [above right=0.6cm and 1.8cm of 1] {$\x'_2$};
    \node[] (8) [above left=0.6cm and 0.5cm of 2] {$\{w'_{112}\}_{\{\omega_1\}}$};
    \node[] (7) [left=0cm of 8] {$\{w'_{111}\}_{\{\omega_1\}}$};
    \node[] (11) [right=0cm of 8] {$\{w'_{211}\}_{\{\omega_2\}}$};
    \node[] (12) [right=0cm of 11] {$\{w'_{212}\}_{\{\omega_2\}}$};
    \node[] (13) [above right=0.6cm and 0.5cm of 6] {$\{w'_{221}\}_{\{\omega_2\}}$};
    \node[] (14) [right=0cm of 13] {$\{w'_{222}\}_{\{\omega_2\}}$};
    \draw[->] (1) -- (2); 
    \draw[->] (1) -- (3); 
    \draw[->] (1) -- (6);  
    \draw[->] (2) -- (7); 
    \draw[->] (2) -- (8); 
    \draw[->] (2) -- (11); 
    \draw[->] (2) -- (12); 
    \draw[->] (6) -- (13); 
    \draw[->] (6) -- (14); 
    \end{tikzpicture} 
    \caption{The decision tree $(\Tr',\ge_{\Tr'})$ for the variant of the simple stochastic decision forest, with $w'_{\ell km} = (\omega_\ell,k,m)$, for all triples $(\ell,k,m)\in \{1,2\}^3$ with $(\omega_\ell,k,m) \in W'$, and $w'_{12} = (\omega_1,2)$. (Random) moves are indicated by circles. Elements of $\Tr' \setminus \X'$, of the form $\{(\omega,\{w'\})\}$ and seen as maps $\omega\mapsto \{w'\}$, are denoted by $\{w'\}_{\{\omega\}}$.}
    \label{fig:simple_sdf_variant_Tr}
\end{figure}

\begin{figure}[h!]
    \centering
    \begin{tikzpicture}[node distance={20mm}, thick, main/.style = {draw, circle}] 
    \node[main] (11) {$\x_1(\omega_1)$}; 
    \node[] (1D) [left=0.5 of 11] {$\{w_{1D}\}$};
    \node[main] (12) [above=0.5cm of 11] {$\x_2(\omega_1)$}; 
    \node[] (1H) [left=0.5 of 12] {$\{w_{1H}\}$};
    \node[] (1M) [above=0.5cm of 12] {$\{w_{1M}\}$}; 
    \node[main] (22) [right=0.5cm of 11] {$\x_2(\omega_2)$}; 
    \node[] (2D) [right=0.5 of 22] {$\{w_{1D}\}$};
    \node[main] (21) [above=0.5cm of 22] {$\x_1(\omega_2)$}; 
    \node[] (2H) [right=0.5 of 21] {$\{w_{1H}\}$};
    \node[] (2M) [above=0.5cm of 21] {$\{w_{1M}\}$}; 
    \draw[->] (11) -- (12); 
    \draw[->] (11) -- (1D); 
    \draw[->] (12) -- (1H); 
    \draw[->] (12) -- (1M);  
    \draw[->] (22) -- (21); 
    \draw[->] (22) -- (2D); 
    \draw[->] (21) -- (2H); 
    \draw[->] (21) -- (2M);  
    \end{tikzpicture} 
    \qquad
    \begin{tikzpicture}[node distance={20mm}, thick, main/.style = {draw, circle}] 
        \node[] (0) {};
        \node[main] (1) [above=0.5 of 0] {$\x_1$};
        \node[] (1D) [left=0.5cm of 1] {$\{w_{1D}\}_{\{\omega_1\}}$};
        \node[] (1H) [above left=0.5cm and 0.5cm of 1] {$\{w_{1H}\}_{\{\omega_1\}}$};
        \node[] (1M) [above=0.5cm of 1] {$\{w_{1M}\}_{\{\omega_1\}}$};  
        \node[main] (2) [right=1 of 1] {$\x_2$};
        \node[] (2D) [right=0.5cm of 2] {$\{w_{2D}\}_{\{\omega_2\}}$};
        \node[] (2H) [above right=0.5cm and 0.5cm of 2] {$\{w_{2H}\}_{\{\omega_2\}}$};
        \node[] (2M) [above=0.5cm of 2] {$\{w_{2M}\}_{\{\omega_2\}}$}; 
        \draw[->] (1) -- (1D);
        \draw[->] (1) -- (1H);
        \draw[->] (1) -- (1M);
        \draw[->] (1) -- (2M);
        \draw[->] (1) -- (2H);
        \draw[->] (2) -- (2M);
        \draw[->] (2) -- (1H);
        \draw[->] (2) -- (1M);
        \draw[->] (2) -- (2D);
        \draw[->] (2) -- (2H);
    \end{tikzpicture}
    \caption{The absent-minded driver \textsc{sdf}, following Gilboa: $(F,\supseteq)$ represented as a directed graph, in case $\rho$ is injective, with $\omega_\ell = \rho^{-1}(\ell)$, $w_{\ell S} = (\omega_\ell,S)$, for all $\ell\in \{1,2\}$ and symbols $S$ such that $(\omega,S)\in W$ (left), $(\Tr,\ge_\Tr)$ where elements of $\Tr \setminus \X$, of the form $\{(\omega,\{w\})\}$ and seen as maps $\omega\mapsto \{w\}$, are denoted by $\{w\}_{\{\omega\}}$ (right). Non-minimal elements are indicated by circles, respectively.}
    \label{fig:absent_minded_driver_Gilboa_sdf}
\end{figure}
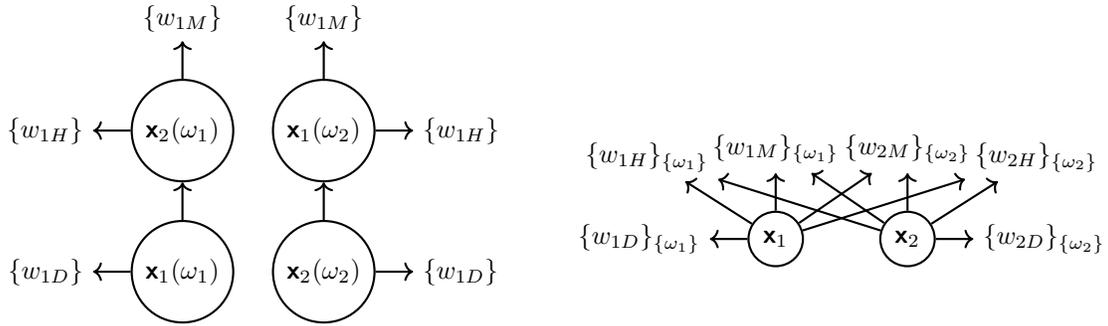

\newpage
\section{Proofs}\label{sec:proofs}

\subsection{Section~\ref{sec:Stochastic extensive forms}}\label{subsec:appendix.proofs.1}

\begin{lemma}\label{lemma:Xi_random_moves}
    Let $\F$ be a stochastic pseudo-extensive form on some exogenous scenario space $(\Omega,\ms E)$, and $i\in I$. Then, 
    \[ X^i = \{ \x(\omega) \mid \x\in \X^i,~\omega\in D_\x\}. \]
\end{lemma}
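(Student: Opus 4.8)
The plan is to prove the two inclusions separately, after first recording the key preliminary fact that lets one move back and forth between a random move $\x\in\X^i$ and its individual values $\x(\omega)$. Namely: since $\ms C^i$ is a reference choice structure on $\X^i$ and, by the definition of a stochastic (pseudo-)extensive form, every $c\in C^i$ is $\ms F^i$-$\ms C^i$-adapted, each such $c$ is in particular $\X^i$-complete. Consequently, as observed just before Definition~\ref{def:SEF}, $A^i(\x)=A^i(\x(\omega))$, and hence $i\in J(\x)$ iff $i\in J(\x(\omega))$, for every $\x\in\X^i$ and every $\omega\in D_\x$.

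For the inclusion ``$\supseteq$'' I would take $\x\in\X^i$ and $\omega\in D_\x$ and set $x:=\x(\omega)$. By Definition~\ref{def:sdf}, random moves are sections of moves, so $x\in X$; and $i\in J(\x(\omega))$ by the preliminary, since $i\in J(\x)$. Hence $x\in X^i$, so $x$ belongs to the left-hand set.

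For ``$\subseteq$'' let $x\in X^i$, so that $A^i(x)\neq\emptyset$, i.e.\ there is $c\in C^i$ with $x\in P(c)$. By the covering property of $\X$ in Definition~\ref{def:sdf} there are $\x\in\X$ and $\omega\in D_\x$ with $\x(\omega)=x$. Then $\omega\in\x^{-1}(P(c))$, so $\x^{-1}(P(c))\neq\emptyset$, i.e.\ $c\in A^i(\x)$; hence $i\in J(\x)$ and $\x\in\X^i$, which exhibits $x=\x(\omega)$ as an element of the right-hand set.

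I do not expect a genuine obstacle here: the only step that needs care is upgrading ``$c$ is available at $\x$'' in the weak sense $\x^{-1}(P(c))\neq\emptyset$ to ``$c$ is available at $\x(\omega)$ for every $\omega\in D_\x$'', and this is exactly what $\X^i$-completeness of adapted choices provides. Everything else is unwinding the definitions of $X^i$, $\X^i$, $J(\cdot)$, $A^i(\cdot)$ and the covering axiom for random moves.
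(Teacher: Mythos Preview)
Your proof is correct and follows essentially the same route as the paper's: the inclusion ``$\subseteq$'' uses the covering axiom for $\X$ together with the definition of $A^i(\x)$, and ``$\supseteq$'' uses $\X^i$-completeness of the $\ms F^i$-$\ms C^i$-adapted choices in $C^i$ to pass from $A^i(\x)\neq\emptyset$ to $A^i(\x(\omega))\neq\emptyset$. The only cosmetic difference is that you front-load the $\X^i$-completeness observation as a preliminary, whereas the paper invokes it in place.
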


\begin{proof}
    Let $x\in X^i$. Then, by Axiom~\ref{def:sdf}.\ref{def:sdf.X.cov}, there is $\x\in\X$ and $\omega\in D_\x$ with $x=\x(\omega)$. Hence, $A^i(x)\subseteq A^i(\x)$, by definition of $A^i(.)$. As $A^i(x)\neq\emptyset$ by assumption, $A^i(\x)\neq\emptyset$, whence $\x\in\X^i$.

    Conversely, let $\x\in\X^i$ and $\omega\in D_\x$. Then, $A^i(\x(\omega)) = A^i(\x) \neq\emptyset$, by $\X^i$-completeness of all elements of $C^i$. Hence, $\x(\omega)\in X^i$.
\end{proof}

Letting $\X^i \bullet \Omega = \{(\x,\omega) \in \X\bullet\Omega \mid \x\in\X^i\}$, the lemma states that $X^i = \mc P\ev(\X^i \bullet \Omega)$, just as $X = \mc P\ev(\X \bullet \Omega)$ and $F = \mc P\ev(\Tr \bullet \Omega)$.

\begin{proof}[Proof of Proposition~\ref{prop:information_sets}]
    Let $\F = (F,\pi,\X,I,\ms F,\ms F,C)$ be a tuple as in Definition~\ref{def:SEF} satisfying Axioms~\ref{def:SEF}.$k$, $k=1,\dots,5$, but not necessarily Axiom~\ref{def:SEF}.\ref{def:SEF.choice_completeness}.\smallskip

    (Ad \ref{prop:information_sets.P(c)_partition}):~ First, any $c\in C^i$ is non-empty. Hence, there is $\omega\in\Omega$ with $c\cap W_\omega\neq\emptyset$. By non-redundancy, $P(c)\cap T_\omega\neq\emptyset$, whence $P(c)\neq\emptyset$. 
    
    Next, let $x\in X^i$. Hence, $A^i(x) \neq \emptyset$ so that there is $c\in C^i$ such that $x\in P(c)$. 

    Then, let $c,c'\in C^i$ such that $P(c) \cap P(c') \neq \emptyset$. Then, by Axiom~\ref{def:SEF}.\ref{def:SEF.P(c)}, $P(c) = P(c')$.\smallskip

    (Ad \ref{prop:information_sets.A(x)_partition}):~ First, note that $A^i(x)\neq\emptyset$ for all $x\in X^i$, by definition.
    
    As $A^i(\x(\omega)) = A^i(\x)$ for all $\x\in\X^i$ and $\omega\in D_\x$, Lemma~\ref{lemma:Xi_random_moves} easily implies that 
    \[ \{A^i(x) \mid x\in X^i\} = \{A^i(\x) \mid \x\in\X^i\}. \]

    Let $c\in C^i$. As $c$ is non-empty, there is $\omega\in\Omega$ with $c\cap W_\omega \neq \emptyset$. By non-redundancy, $P(c) \cap T_\omega \neq \emptyset$. Hence, there is $x\in T_\omega \subseteq F$ with $x\in P(c)$. It follows easily from the definition of $P(c)$ that $x\in X$, whence $c\in A^i(x)$, $i\in J(x)$, and $x\in X^i$.

    Let $x,x'\in X^i$ such that $A^i(x) \cap A^i(x') \neq \emptyset$. Let $c\in A^i(x) \cap A^i(x')$. Hence $x,x'\in P(c)$. Let $c'\in A^i(x)$. Then $x\in P(c) \cap P(c')$. Hence, using Axiom~\ref{def:SEF}.\ref{def:SEF.P(c)} we obtain $x'\in P(c) = P(c')$. Thus, $c'\in A^i(x')$. We conclude that $A^i(x) \subseteq A^i(x')$. The same argument can be repeated with the roles of $x$ and $x'$ reversed, whence in total $A^i(x) = A^i(x')$.\smallskip

    (Ad \ref{prop:information_sets.P(c)=P(c')}):~ Let $c,c'\in C^i$. By Axiom~\ref{def:SEF}.\ref{def:SEF.P(c)}, $P(c) = P(c')$ is true iff there is $x\in P(c) \cap P(c')$. Such an $x$ must be a move by definition of immediate predecessors. Hence, this statement is equivalent to saying that $c,c'\in A^i(x)$.\smallskip

    (Ad \ref{prop:information_sets.A(x)=A(x')}):~ Let $x,x'\in X^i$. Using Part~\ref{prop:information_sets.A(x)_partition} just proved before, $A^i(x) = A^i(x')$ is equivalent to the existence of $c \in A^i(x) \cap A^i(x')$, or put equivalently, $x,x'\in P(c)$ for some $c\in C^i$.\smallskip

    (Ad \ref{prop:information_sets.exists_mfP}):~ It suffices to show that there is a unique equivalence relation $\sim$ on $\X^i$ satisfying $\x\sim\x'$ iff $A^i(\x) = A^i(\x')$, for all $\x,\x'\in \X^i$. Uniqueness is trivial. Concerning existence, there clearly is a binary relation $\sim$ with the preceding property. Moreover, reflexivity, symmetry, and transitivity are trivial. Hence, $\sim$ is an equivalence relation.\smallskip

    (Ad \ref{prop:information_sets.Bij_mfP_P(c)}):~ Let $\Phi$ be the map with domain $\mf P^i$ given by
    \[ \forall \mf p \in \mf P^i\colon\qquad \Phi(\mf p) = \bigcup_{\x\in\mf p} \im \x.\]
    
    Concerning the claim about both the codomain and the image of $\Phi$, we prove the following helpful statement: 
    \[ (\ast) \qquad \forall \mf p\in\mf P^i\forall \x_0\in\mf p\forall c_0\in A^i(\x_0)\colon ~\Phi(\mf p) = P(c_0) . \]

    For the proof, let $\mf p \in \mf P^i$, $\x_0\in\mf p$, and $c_0\in A^i(\x_0)$. 
    By definition of $\mf P^i$, $A^i(\x) = A^i(\x') \neq \emptyset$ for all $\x,\x'\in\mf p$. By Parts~\ref{prop:information_sets.P(c)=P(c')} and~\ref{prop:information_sets.A(x)_partition}, proven above, we obtain that 
    \[ (\dagger) \qquad \forall \x,\x'\in \mf p \forall c\in A^i(\x) \forall c'\in A^i(\x')\colon ~ P(c) = P(c') . \]

    First, let $x\in \Phi(\mf p)$. Take $\x\in\mf p$ and $\omega\in D_\x$ such that $x = \x(\omega)$. As $\x\in\mf p$, there is $c\in A^i(\x)$. By $(\dagger)$, $P(c) = P(c_0)$, whence $x\in P(c) = P(c_0)$. This shows that $\Phi(\mf p)\subseteq P(c_0)$.

    Second, let $x\in P(c_0)$. By Axiom~\ref{def:sdf}.\ref{def:sdf.X.cov}, there is $\x\in\X$ and $\omega\in D_\x$ such that $x = \x(\omega)$. Thus $c_0\in A^i(\x)$ and $\x\in\X^i$. By Part~\ref{prop:information_sets.A(x)=A(x')}, proven before, $A^i(\x) = A^i(\x_0)$ must hold true. As $\x_0\in\mf p$, we infer that $\x\in\mf p$ as well. This shows that $P(c_0) \subseteq \Phi(\mf p)$. We conclude that $\Phi(\mf p)=P(c_0)$, and the proof of $(\ast)$ is complete.

    Now, regarding the codomain of $\Phi$, let $\mf p\in\mf P^i$. As $\mf P^i$ is a partition, $\mf p$ is non-empty. Hence, we can choose $\x_0\in\mf p$. As $\x_0\in\X^i$, there is $c_0\in A^i(\x_0)$. By $(\ast)$, $\Phi(\mf p) = P(c_0)$. Hence, the codomain is of the claimed form.

    Next, we determine the image of $\Phi$. Let $c_0\in C^i$. By Part~\ref{prop:information_sets.P(c)_partition}, proven before, there is $x_0\in X^i$ such that $x_0\in P(c_0)$. By Lemma~\ref{lemma:Xi_random_moves}, there are $\x_0\in\X^i$ and $\omega\in D_\x$ with $x_0 = \x_0(\omega)$. In particular, $c_0\in A^i(\x_0)$. By construction of $\mf P^i$, there is $\mf p\in\mf P^i$ such that $\x_0\in\mf p$. Then, statement $(\ast)$ implies that $\Phi(\mf p) = P(c_0)$. Hence, the image of $\Phi$ is given by the set of all $P(c)$, $c\in C^i$.

    It remains to prove injectivity. Let $\mf p,\mf p'\in \mf P^i$ such that $\Phi(\mf p) = \Phi(\mf p')$. As shown just before, there is $c\in C^i$ such that $\Phi(\mf p) = P(c) = \Phi(\mf p')$. By Part~\ref{prop:information_sets.P(c)_partition}, there is $x\in P(c)$. Hence, there are representatives $\x\in\mf p$ and $\x'\in\mf p'$ of both endogenous information sets and $\omega \in D_\x \cap D_{\x'}$ such that $\x(\omega) = x = \x'(\omega)$. By Parts~\ref{prop:information_sets.A(x)_partition} and~\ref{prop:information_sets.A(x)=A(x')}, we infer that $A^i(\x) = A^i(\x')$, whence $\mf p = \mf p'$. We conclude that $\Phi$ is injective.\smallskip

    (Ad \ref{prop:information_sets.msC_msF_const_on_mfP}):~ Let $\mf p\in\mf P^i$ and $\x,\x'\in \mf p$. Then, by definition of $\mf P^i$, $A^i(\x) = A^i(\x')\neq \emptyset$. Hence, by Axiom~\ref{def:SEF}.\ref{def:SEF.endo_exo_compatible}, $\ms F_\x^i = \ms F_{\x'}^i$ and $\ms C_\x^i=\ms C_{\x'}^i$. As a consequence,
    $ D_\x = \bigcup \ms F_\x^i = \bigcup \ms F_{\x'}^i = D_{\x'}$.
\end{proof}

\begin{lemma}\label{lemma:perfect_endo_information_implies_perfect_endo_recall}
    Let $\F$ be a stochastic pseudo-extensive form on some exogenous scenario space $(\Omega,\ms E)$ and $i\in I$ an agent. If $i$ has perfect endogenous (exogenous) information, then $i$ admits perfect endogenous (exogenous, respectively) recall.
\end{lemma}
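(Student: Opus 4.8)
The statement is the conjunction of two implications; the exogenous one is routine and the endogenous one carries all the content. For the exogenous implication, recall that $i$ having perfect exogenous information means $\ms F^i_\x=\ms E|_{D_\x}$ for all $\x\in\X^i$, while $\ms F^i$ admitting recall means $E\cap D_{\x'}\in\ms F^i_{\x'}$ whenever $\x\ge_\X\x'$ and $E\in\ms F^i_\x$. Since $\x\ge_\X\x'$ forces $D_{\x'}\subseteq D_\x$, writing $E=A\cap D_\x$ with $A\in\ms E$ yields $E\cap D_{\x'}=A\cap D_{\x'}\in\ms E|_{D_{\x'}}=\ms F^i_{\x'}$; so this implication is immediate from the definitions.

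For the endogenous implication, fix $\omega\in\Omega$ and $c,c'\in C^i$ with $c\cap c'\cap W_\omega\neq\emptyset$; the goal is to show $c\cap W_\omega$ and $c'\cap W_\omega$ are nested. The plan is to pass to the scenario-wise classical pseudo-extensive form $(T_\omega,I,C_\omega)$ of Proposition~\ref{prop:scwise_SEF} and first establish the following: since every element of $\mf P^i$ is a singleton, part~\ref{prop:information_sets.Bij_mfP_P(c)} of Proposition~\ref{prop:information_sets} gives $P(c)=\im\x$ for a single random move $\x\in\X^i$, hence $P(c)\cap T_\omega$ is either empty or $\{\x(\omega)\}$; and non-redundancy of $c$ together with $c\cap W_\omega\neq\emptyset$ rules out the empty case. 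So $P(c)\cap T_\omega=\{x\}$ and $P(c')\cap T_\omega=\{x'\}$ for unique moves $x,x'\in T_\omega$, and — by the standard description of choices in the refined-partitions framework (\cite{AlosFerrer2005}, transcribed to stochastic decision forests in \cite[Section~4]{Rapsch2024DecisionA}) — $c\cap W_\omega\subseteq x$ and $c'\cap W_\omega\subseteq x'$.

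Given this, I would pick $w\in c\cap c'\cap W_\omega$; then $w\in x$ and $w\in x'$, so $x$ and $x'$ both lie on the decision path through $w$ and are therefore comparable, one containing the other. If $x=x'$, then $c,c'\in A^i(x)$, so $P(c)\cap P(c')\neq\emptyset$, and Axiom~\ref{def:SEF.P(c)} applied in the scenario $\omega$ forces $c\cap W_\omega=c'\cap W_\omega$, the alternative $c\cap c'\cap W_\omega=\emptyset$ being excluded. If instead $x\supsetneq x'$ (the other subcase being symmetric), then $x'\in\downarrow x\setminus\{x\}$, so Axiom~\ref{def:SEF.enough_choices} provides $c''\in A^i(x)$ with $x'\subseteq c''$; since $w\in x'\subseteq c''$ and $w\in c$, the choices $c$ and $c''$ both lie in $A^i(x)$ and meet on $W_\omega$, so Axiom~\ref{def:SEF.P(c)} gives $c\cap W_\omega=c''\cap W_\omega\supseteq x'\supseteq c'\cap W_\omega$. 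In either case the two sets are nested, which is perfect endogenous recall. Note that only the singleton part of perfect endogenous information is used; the disjointness-across-agents part plays no role in this single-agent property.

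The only nontrivial input is the fact cited in the previous paragraph, that inside a single tree $T_\omega$ a choice of agent $i$ lies below the unique node of $T_\omega$ belonging to its immediate-predecessor set. This is precisely the kind of bookkeeping about the interplay of $c$, $\downarrow c$ and $P(c)$ carried out in \cite[Section~4]{Rapsch2024DecisionA} (following \cite{AlosFerrer2005}), and it is the one place where care is genuinely needed — in particular when the underlying decision forest admits no immediate successors, so that one cannot simply invoke ``$c$ is a union of immediate successors of $x$''. Everything else reduces to unwinding the definitions and two applications of Axiom~\ref{def:SEF.P(c)}.
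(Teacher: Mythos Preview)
Your proof is correct and follows essentially the same route as the paper's: reduce to a fixed scenario $\omega$, use perfect endogenous information together with Proposition~\ref{prop:information_sets}.\ref{prop:information_sets.Bij_mfP_P(c)} and non-redundancy to get $P(c)\cap T_\omega=\{x\}$ and $P(c')\cap T_\omega=\{x'\}$, observe $c\cap W_\omega\subseteq x$ and $c'\cap W_\omega\subseteq x'$, compare $x$ and $x'$ along the decision path through a witness $w$, and then split into the cases $x=x'$ (Axiom~\ref{def:SEF.P(c)} alone) and $x\supsetneq x'$ (Axiom~\ref{def:SEF.enough_choices} followed by Axiom~\ref{def:SEF.P(c)}). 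The paper phrases the containment $c\cap W_\omega\subseteq x$ as following ``directly from the definition of immediate predecessors'' after invoking \cite[Lemma~4.1]{Rapsch2024DecisionA} for $P(c\cap W_\omega)=P(c)\cap T_\omega$; your identification of this as the one genuinely nontrivial bookkeeping step is apt. Your framing reference to Proposition~\ref{prop:scwise_SEF} is harmless but unnecessary, since your argument works directly with $c\cap W_\omega$ and $P(c)\cap T_\omega$ without invoking that later result.
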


\begin{proof}[Proof of Lemma~\ref{lemma:perfect_endo_information_implies_perfect_endo_recall}]
    Let $\F$ be a $\psi$-\textsc{sef}, and $i\in I$ an agent.\smallskip
    
    (Ad endogenous case):~ Suppose that $i$ has perfect endogenous information. Let $c,c'\in C^i$ and $\omega\in\Omega$ such that $c\cap c'\cap W_\omega\neq \emptyset$. By non-redundancy and \cite[Lemma~4.1]{Rapsch2024DecisionA}, we get $P(c\cap W_\omega) = P(c) \cap T_\omega \neq \emptyset$, and similarly $P(c'\cap W_\omega) = P(c') \cap T_\omega \neq \emptyset$. Proposition~\ref{prop:information_sets}, Part~\ref{prop:information_sets.Bij_mfP_P(c)}, and the fact that $i$ has perfect information, we infer the existence of $x,x'\in X^i$ such that
    \[ P(c\cap W_\omega) =  P(c) \cap T_\omega = \{x\}, \qquad P(c'\cap W_\omega) = P(c') \cap T_\omega = \{x'\}. \]
    Directly from the definition of immediate predecessors, we infer that $x\supseteq c\cap W_\omega$ and $x'\supseteq c'\cap W_\omega$. Hence, $x\cap x' \neq \emptyset$. By the representation by decision paths inherent in the definition of decision forests (compare Definition~\ref{def:decision_forest}, and with more details \cite[Definition 1.3]{Rapsch2024DecisionA}), we have $x\supseteq x'$ or $x'\supseteq x$. 

    Without loss of generality, we assume that $x\supseteq x'$. If $x=x'$, then $P(c) \cap P(c') \neq \emptyset$, whence by Axiom~\ref{def:SEF}.\ref{def:SEF.P(c)} $P(c) = P(c')$ and $c \cap W_\omega = c' \cap W_\omega$. 
    
    If $x\supsetneq x'$, then Axiom~\ref{def:SEF}.\ref{def:SEF.enough_choices} implies the existence of $\tilde c\in A^i(x)$ such that $\tilde c \supseteq x'$. As $c\in A^i(x)$, Proposition~\ref{prop:information_sets}, Part~\ref{prop:information_sets.P(c)=P(c')} implies $P(c) = P(\tilde c)$. Moreover, $\tilde c \supseteq x' \supseteq c'\cap W_\omega$, whence
    \[ c\cap \tilde c \cap W_\omega \supseteq c \cap c' \cap W_\omega \neq \emptyset. \]
    Axiom~\ref{def:SEF}.\ref{def:SEF.P(c)} yields $c\cap W_\omega = \tilde c \cap W_\omega$. We conclude that $c\cap W_\omega \supseteq c'\cap W_\omega$.\smallskip

    (Ad exogenous case):~ Suppose that $i$ has perfect exogenous information. Then, for any $\x,\x'\in \X^i$ with $\x\ge_\X \x'$ and any $E\in{\ms F}_\x^i$, we have $E\in\ms E$ and, hence, $E\cap D_{\x'} \in \ms E|_{D_{\x'}} = {\ms F}^i_{\x'}$.    
\end{proof}

\begin{proof}
    [Proof of Lemma~\ref{lemma:Heraclitus_property}]
    The proof is completely analogous to the proofs of \cite[Proposition 13]{AlosFerrer2005} and \cite[Proposition 4.1]{AlosFerrer2016}. Nevertheless, we give a proof here, both because of the different formal setting and for the reader's convenience.\smallskip

    (Ad \ref{lemma:Heraclitus_property.X}):~ Let $x,x'\in X$ such that $A^i(x) \cap A^i(x') \neq \emptyset$ and $x\supseteq x'$. In particular, $i\in J(x) \cap J(x')$.
    If $x\neq x'$, then, by Axiom~\ref{def:SEF.enough_choices} there would be $c'\in A^i(x)$ such that $c'\supseteq x'$.    
    By Proposition~\ref{prop:information_sets}, Part~\ref{prop:information_sets.A(x)_partition}, $A^i(x) = A^i(x')$, and thus $c'\in A^i(x')$ as well. In other words, $x'\in P(c')$. Hence, by definition of the immediate predecessor operator, there would be $y'\in\downarrow c'$ such that
     \[ \uparrow x' = \uparrow y' \setminus \downarrow c'. \]
    As $x'\in\downarrow c'$, this is a contradiction. Hence, the assumption was false and we conclude that $x=x'$.\smallskip

    (Ad \ref{lemma:Heraclitus_property.rmX}):~ Let $\x,\x'\in \X$ such that $A^i(\x)\cap A^i(\x') \neq \emptyset$ and $\x\ge_\X \x'$. In particular, $\x,\x'\in \X^i$. 
    Then, there is $\omega\in D_{\x'}$, and we have $\x(\omega) \supseteq \x'(\omega)$ and $A^i(\x(\omega)) \cap A^i(\x'(\omega))\neq \emptyset$. By the statement of the first part, just proven before, we get $\x(\omega) = \x'(\omega)$. Since $\x,\x'\in\X^i$ and evaluation $\X^i\bullet \Omega\to X$ is injective, we obtain $\x = \x'$.
\end{proof}

\begin{proof}
    [Proof of Lemma~\ref{lemma:completeness}]
    Let $\F = (F,\pi,\X,I,\ms F,\ms C,C)$ be a tuple satisfying the conditions defining a stochastic pseudo-extensive form on some exogenous scenario space $(\Omega,\ms E)$ possibly except Axiom~\ref{def:SEF.choice_completeness}, according to Definition~\ref{def:SEF}. For any $i\in I$, let $\hat C^i$ be as defined in the lemma's statement.\smallskip
    
    (Helpful statement \ref{lemma:completeness.def:hatC.iii}):~ To start, let us prove that Properties~\ref{lemma:completeness}.\ref{lemma:completeness.def:hatC.i} and~\ref{lemma:completeness}.\ref{lemma:completeness.def:hatC.ii} can be strengthened. Namely:
    \begin{enumerate}[label=(iii)]
        \item\label{lemma:completeness.def:hatC.iii} For all $i\in I$, all $\hat c\in \hat C^i$ and $\omega\in\Omega$ with $\hat c \cap W_\omega \neq \emptyset$, there is $c\in C^i$ such that $\hat c\cap W_\omega = c \cap W_\omega$ and $P(\hat c) = P(c)$.
    \end{enumerate}
    For the proof, let $i\in I$, $\hat c\in \hat C^i$ and $\omega\in\Omega$ be such that $\hat c \cap W_\omega \neq \emptyset$. By definition of $\hat C^i$, Property~\ref{lemma:completeness.def:hatC.i}, there is $c \in C^i$ such that $\hat c \cap W_\omega = c \cap W_\omega$. By \cite[Lemma~4.1]{Rapsch2024DecisionA}, we have
    \[ P(c) \cap T_\omega = P(c\cap W_\omega) = P(\hat c \cap W_\omega) = P(\hat c) \cap T_\omega, \]
    and by non-redundancy of $\hat c$ these sets are non-empty. Let $\mf P^i$ the partition of $\X^i$ according to Remark~\ref{rmk:prop_information_sets} and Proposition~\ref{prop:information_sets}, Part~\ref{prop:information_sets.exists_mfP}. Then, there is $\mf p\in\mf P^i$ such that
    \[ P(c) \cap T_\omega = \{\x(\omega) \mid \x\in\mf p\colon~ \omega \in D_\x\}. \]
    As $\hat c$ is an $\X^i$-complete choice and $\omega\in D_\x = D_\mf p$ for all $\x\in\mf p$ by Remark~\ref{rmk:prop_information_sets} and Proposition~\ref{prop:information_sets}, Part~\ref{prop:information_sets.msC_msF_const_on_mfP}, we obtain that
    \[ P(\hat c) \supseteq \{\x(\omega') \mid \x\in\mf p, ~\omega'\in D_\x\} = P(c). \]
    By definition of $\hat C^i$, Property~\ref{lemma:completeness.def:hatC.ii}, there is $c_1\in C^i$ such that $P(\hat c) \subseteq P(c_1)$. Hence, $\emptyset \neq P(c) \subseteq P(c_1)$ and thus, by Axiom~\ref{def:SEF}.\ref{def:SEF.P(c)}, $P(c) = P(c_1)$, whence $P(c) = P(\hat c)$ which proves \ref{lemma:completeness.def:hatC.iii}.\smallskip
    
    (Ad Properties~\ref{lemma:completeness}.\ref{lemma:completeness.property_1} and~\ref{lemma:completeness}.\ref{lemma:completeness.property_2}):~
    Both properties follow from the trivial inclusion $C^i \subseteq \hat C^i$ for all $i\in I$ and Property~\ref{lemma:completeness.def:hatC.iii}. \smallskip

    (Ad basic properties in Definition~\ref{def:SEF})~ For each $i\in I$, $\hat C^i$ is a set of choices, by construction. 
    
    For the remainder of the proof, the operators and sets associated to $\hat C$ and $(F,\pi,\X)$ will be denoted with a hat on top. That is, for $\hat C$ and $(F,\pi,\X)$, denote the set of choices in $\hat C^i$ available at a move $x\in X$ or random move $\x\in\X$ by $\hat A^i(x)$ or $\hat A^i(\x)$, respectively. Furthermore, for $x\in X$ and $\x\in\X$, let
    \[ \hat J(x) = \{i\in I \mid \hat A^i(x) \neq\emptyset\}, \qquad \hat J(\x) = \{i\in I \mid \hat A^i(\x) \neq\emptyset\}, \]
    and, for $i\in I$, let
    \[ \hat X^i = \{x\in X \mid i\in \hat J(x)\}, \qquad \hat\X^i = \{\x\in\X \mid i\in \hat J(\x)\}, \qquad \hat\X^i\bullet \Omega = \{(\x,\omega)\in\hat\X^i\times\Omega \mid \omega\in D_\x\}. \]
    Then, clearly, all $x\in X$, $\x\in\X$, and $i\in I$ satisfy $A^i(x) \subseteq \hat A^i(x)$ and $A^i(\x) \subseteq \hat A^i(\x)$, $J(x) = \hat J(x)$ and $J(\x) = \hat J(\x)$. In particular, $X^i = \hat X^i$ and $\X^i = \hat\X^i$.

    As a consequence, $\ms F$ is a family of exogenous information structures on $\hat\X^i$, $i\in I$, and $\ms C$ is a family of reference choice structures on $\hat\X^i$, $i\in I$. By hypothesis, for any $i\in I$, any element of $\hat C^i$ is $\ms F^i$-$\ms C^i$-adapted, and the evaluation $\hat\X^i \bullet \Omega = \X^i \bullet \Omega \to X$ is injective.\smallskip   
    
    (Ad Axiom~\ref{def:SEF}.\ref{def:SEF.P(c)}):~ Let $i\in I$ and $\hat c,\hat c'\in \hat C^i$ such that $P(\hat c) \cap P(\hat c') \neq \emptyset$. Then, by Property~\ref{lemma:completeness.def:hatC.iii}, there are $c,c'\in C^i$ with $P(c) = P(\hat c)$ and $P(c') = P(\hat c')$. Hence, by Axiom~\ref{def:SEF}.\ref{def:SEF.P(c)} applied to $\F$, 
    \[ P(\hat c) = P(c) = P(c') = P(\hat c'). \]
    Furthermore, let $\omega\in\Omega$ such that $\hat c \cap \hat c' \cap W_\omega \neq \emptyset$. According to Property~\ref{lemma:completeness.def:hatC.iii}, $c$ and $c'$ can be chosen such that $\hat c \cap W_\omega = c \cap W_\omega$ and $\hat c' \cap W_\omega = c' \cap W_\omega$. In particular we get $c\cap c' \cap W_\omega \neq \emptyset$. As $P(c) = P(c')$, Axiom~\ref{def:SEF}.\ref{def:SEF.P(c)} applied to $\F$ yields 
    \[ \hat c \cap W_\omega = c\cap W_\omega = c' \cap W_\omega = \hat c'\cap W_\omega. \]

    (Ad Axiom~\ref{def:SEF}.\ref{def:SEF.outcomes_faithful}):~ Let $x\in X$ and $(\hat c^i)_{i\in J(x)} \in \bigtimes_{i\in J(x)} \hat C^i$. Let $\omega = \pi(x)$. Then, by definition of $\hat C$, there is $(c^i)_{i\in J(x)} \in \bigtimes_{i\in J(x)} C^i$ such that for all $i\in J(x)$ we have $\hat c^i \cap W_\omega = c^i \cap W_\omega$. As $x \in T_\omega$, we have $W_\omega \supseteq x$, whence
    \[ x \cap \bigcap_{i\in J(x)} \hat c^i = x \cap \bigcap_{i\in J(x)} (\hat c^i \cap W_\omega) = x \cap \bigcap_{i\in J(x)} (c^i \cap W_\omega) = x \cap \bigcap_{i\in J(x)} c^i \neq \emptyset, \]
    because $\F$ satisfies \ref{def:SEF}.\ref{def:SEF.outcomes_faithful}.

    (Ad Axiom~\ref{def:SEF}.\ref{def:SEF.weak_separation}):~ Let $y,y'\in F$ with $\pi(y) = \pi(y')$ and $y\cap y' = \emptyset$. Let $\omega = \pi(y)$. Then, by Axiom~\ref{def:SEF}.\ref{def:SEF.weak_separation} applied to $\F$, there are $i\in I$ and $c,c'\in C^i$ with $c \supseteq y$, $c' \supseteq y'$ and $c\cap c' \cap W_\omega = \emptyset$. By Property~\ref{lemma:completeness.def:hatC.iii} of $\hat C^i$, there are $\hat c, \hat c'\in \hat C^i$ such that $\hat c\cap W_\omega = c\cap W_\omega$ and $\hat c' \cap W_\omega = c'\cap W_\omega$ and $P(\hat c) = P(c)$ as well as $P(\hat c') = P(c')$. Hence,
    \[ \hat c \supseteq \hat c \cap W_\omega = c \cap W_\omega \supseteq y \cap W_\omega = y, \]
    and similarly, $\hat c' \supseteq y'$. Moreover, 
    \[ \hat c \cap \hat c' \cap W_\omega = c \cap c' \cap W_\omega = \emptyset. \]

    (Ad Axiom~\ref{def:SEF}.\ref{def:SEF.separation}):~ For the proof of this axiom's validity, suppose that $\F$ satisfies Axiom~\ref{def:SEF}.\ref{def:SEF.separation}. We ought to show that $\hat\F$ does as well. 
    
    Let $y,y'\in F$ with $\pi(y) = \pi(y')$ and $y\cap y' = \emptyset$. Let $\omega = \pi(y)$. Then, by Axiom~\ref{def:SEF}.\ref{def:SEF.separation} applied to $\F$, there are $x\in X$, $i\in I$ and $c,c'\in C^i$ with $x\cap c \supseteq y$, $x\cap c' \supseteq y'$, $c\cap c' \cap W_\omega = \emptyset$ and $x\in P(c) \cap P(c) \cap T_\omega$. By Property~\ref{lemma:completeness.def:hatC.iii} of $\hat C^i$, there are $\hat c, \hat c'\in \hat C^i$ such that $\hat c\cap W_\omega = c\cap W_\omega$ and $\hat c' \cap W_\omega = c'\cap W_\omega$, and $P(\hat c) = P(c)$ as well as $P(\hat c') = P(c')$. Hence,
    \[ x\cap \hat c = x \cap \hat c \cap W_\omega = x \cap c \cap W_\omega = x \cap c \supseteq y, \]
    and similarly, $x \cap \hat c' \supseteq y'$. Moreover, 
    \[ \hat c \cap \hat c' \cap W_\omega = c \cap c' \cap W_\omega = \emptyset, \]
    and $x \in P(c) \cap P(c') \cap T_\omega = P(\hat c) \cap P(\hat c') \cap T_\omega$.\smallskip

    (Ad Axiom~\ref{def:SEF}.\ref{def:SEF.enough_choices}):~ Let $x\in X$, $i\in J(x)$ and $y\in \downarrow x \setminus \{x\}$. By Axiom~\ref{def:SEF}.\ref{def:SEF.enough_choices} applied to $\F$, we obtain $c\in A^i(x)$ with $c\supseteq y$. Let $\omega = \pi(x)$. Property~\ref{lemma:completeness.def:hatC.iii} of $\hat C$ ensures the existence of $\hat c\in\hat C^i$ such that $\hat c \cap W_\omega = c\cap W_\omega$ and $P(\hat c) = P(c)$. Hence, $x\in P(\hat c)$ alias $\hat c\in \hat A^i(x)$. Moreover, 
    \[ \hat c \cap W = c \cap W_\omega \supseteq y \cap W_\omega = y. \]

    (Ad Axiom~\ref{def:SEF}.\ref{def:SEF.endo_exo_compatible}):~ Let $i\in I$ and $\x,\x'\in\X$ such that $\hat A^i(\x) \cap \hat A^i(\x')\neq\emptyset$. Hence, there is $\hat c\in \hat C^i$ such that $\x(\omega),\x'(\omega') \in P(\hat c)$ for some (and any) $\omega\in D_\x$ and $\omega'\in D_{\x'}$. Following Property~\ref{lemma:completeness.def:hatC.iii} of $\hat C$, there is $c\in C^i$ such that $P(\hat c) = P(c)$. Hence, $c\in A^i(\x) \cap A^i(\x')$, whence $\ms F^i_\x = \ms F^i_{\x'}$ and $\ms C^i_\x = \ms C^i_{\x'}$. \smallskip

    (Ad Axiom~\ref{def:SEF}.\ref{def:SEF.choice_completeness}):~ Let $i\in I$ and $\hat c'$ an $\ms F^i$-$\ms C^i$-adapted choice satisfying Properties~\ref{def:SEF}.\ref{def:SEF.choice_completeness.i} and~\ref{def:SEF}.\ref{def:SEF.choice_completeness.ii}, that is:
    \begin{enumerate}[label=(\roman*)]
        \item any $\omega\in\Omega$ with $\hat c'\cap W_\omega \neq \emptyset$ admits $\hat c\in \hat C^i$ with $\hat c'\cap W_\omega = \hat c \cap W_\omega$,
        \item and there is $\hat c\in \hat C^i$ with $P(\hat c') = P(\hat c)$,
    \end{enumerate}
    Let $\omega\in \Omega$ be such that $\hat c'\cap W_\omega$. Take $\hat c\in\hat C^i$ with $\hat c'\cap W_\omega = \hat c \cap W_\omega$. By definition of $\hat C$, there is $c\in C^i$ with $\hat c\cap W_\omega = c\cap W_\omega$, whence $\hat c' \cap W_\omega = c\cap W_\omega$. 

    Further, take $\hat c\in\hat C^i$ such that $P(\hat c') = P(\hat c)$. By Property~\ref{lemma:completeness.def:hatC.iii} of $\hat C$ there is $c\in C^i$ such that $P(\hat c) = P(c)$, implying $P(\hat c') \subseteq P(c)$. 

    We conclude that $\hat c'\in\hat C^i$ which shows that the axiom is satisfied by $\F'$.
\end{proof}

\begin{proof}
    [Proof of Proposition~\ref{prop:strategies}]
    Let $i\in I$. Denote the surjections $X^i \surj \X^i$ and $\X^i \to \mf P^i$ by $p_{X,\X;i}$ and $p_{\X,\mf P;i}$ respectively. By Proposition~\ref{prop:information_sets}, we have for all $x\in X^i$ and $\x\in\X^i$:
    \[ A^i(x) = A^i(p_{X,\X;i}(x)), \qquad A^i(\x) = A^i(p_{\X,\mf P;i}(\x)). \]
    Moreover, for all $x,x'\in X^i$ we have
    \[ (p_{\X,\mf P;i}\circ p_{X,\X;i})(x) = (p_{\X,\mf P;i}\circ p_{X,\X;i})(x') \qquad \Longleftrightarrow \qquad A^i(x) = A^i(x'), \]
    and similarly, for all $\x,\x'\in \X^i$ we have
    \[ p_{\X,\mf P;i}(\x) = p_{\X,\mf P;i}(\x') \qquad \Longleftrightarrow \qquad A^i(\x) = A^i(\x'). \]
    The claim follows from this using the universal property of the quotient in the category of sets.
\end{proof}

\begin{proof}
    [Proof of Lemma~\ref{lemma:simple_sef1}]
    Let $(F,\pi,\X,I,\ms F,\ms C,C)$ be a tuple satisfying the hypothesis of the lemma. Let $i$ be the unique element of $I$. Recall the table defining $C$.
    
    According to \cite[Lemma~2.6]{Rapsch2024DecisionA}, $(F,\pi,\X)$ defines a stochastic decision forest on $(\Omega,\ms E)$. By \cite[Lemma~3.2]{Rapsch2024DecisionA}, $\ms F^i$ defines an exogenous information structure for it. By \cite[Lemma~4.3]{Rapsch2024DecisionA}, $\ms C^i$ defines a reference choice structure for it, and by \cite[Lemma~4.4]{Rapsch2024DecisionA}, the elements of $C^i$ are $\ms F^i$-$\ms C^i$-adapted choices.

    (Ad Axiom~\ref{def:SEF}.\ref{def:SEF.P(c)}):~ The immediate predecessor sets of all choices in $C^i$ have been explicitly calculated in \cite[Lemma Appendix A.3]{Rapsch2024DecisionA}. It follows from the results of that lemma that for all $c,c'\in C^i$, $P(c)$ and $P(c')$ can only non-trivially intersect if $c$ and $c'$ are denoted in the same column of the table. Given this, closer inspection of the table reveals that if $P(c)$ and $P(c')$ non-trivially intersect, then $P(c) = P(c')$, and $c\cap W_\omega$ and $c'\cap W_\omega$ are either disjoint or equal, for all $\omega\in\Omega$.\smallskip

    (Ad Axiom~\ref{def:SEF}.\ref{def:SEF.outcomes_faithful}):~ There is only one agent $i\in I$ here, hence this axiom is trivially satisfied. Indeed, if $x\in X$ and $c\in C^i$ such that $i\in J(x)$, then $x\in P(c)$. Hence, there is $y\in \downarrow c$ such that $\uparrow x = \uparrow y \setminus \downarrow c$. Hence, $x\cap c \supseteq y \cap c = y \neq \emptyset$.\smallskip

    (Ad Axiom~\ref{def:SEF}.\ref{def:SEF.separation}):~ Let $y,y'\in T_\omega$ for some $\omega\in\Omega$ with $y\cap y' = \emptyset$. Then, $y$ and $y'$ are moves at time $1$ or terminal nodes. As $T_\omega$ is a finite tree, there is a $\supseteq$-minimal $x\in F$ with $x\supseteq y\cup y'$. In particular, $x$ must be a move with $\pi(x) = \omega$. Let $i\in I$. For any possible values of $x$ and $y$, there are $c,c'\in A^i(x)$ such that $x\cap c \supseteq y$ and $x\cap c' \supseteq y'$, as evident from the table and \cite[Lemma Appendix A.3]{Rapsch2024DecisionA}.\smallskip

    (Ad Axiom~\ref{def:SEF}.\ref{def:SEF.enough_choices}):~ Let $x\in X$, $i\in J(x)$, and $y\in\downarrow x \setminus \{x\}$. As evidenced by the table and \cite[Lemma Appendix A.3]{Rapsch2024DecisionA}, there is $c\in A^i(x)$ such that $c \supseteq y$.\smallskip

    (Ad Axiom~\ref{def:SEF}.\ref{def:SEF.endo_exo_compatible}):~ Upon consulting the table, we infer using \cite[Lemma Appendix A.3]{Rapsch2024DecisionA} that $\x_1$ and $\x_2$ are the only two random moves sharing an available choice $c$, and this only in the cases given by th second, fourth, and eighth line of the table. In these cases, we have indeed $\ms F^i_{\x_1} = \ms F^i_{\x_2}$ (\textsc{eis} = 1, 2(a), or 3). We also have $\ms C^i_{\x_1} = \ms C^i_{\x_2}$.\smallskip

    (Ad Axiom~\ref{def:SEF}.\ref{def:SEF.choice_completeness}):~ Let $c'$ be an $\ms F^i$-$\ms C^i$-adapted choice satisfying \ref{def:SEF}.\ref{def:SEF.choice_completeness.i} and~\ref{def:SEF}.\ref{def:SEF.choice_completeness.ii}. From the latter two properties and \cite[Lemma Appendix A.3]{Rapsch2024DecisionA}, we infer that 1) $P(c') = \im \x_0$ and $c' = c_{f\bullet}$ for some $f\in M$, or 2a) we are in line two, four, or eight and $P(c') = \im\x_1 \cup \im\x_2$ and $c' = c_{\bullet g}$ for some $g\in M$, or 2b) we are in line one, three, five, six, or seven and $P(c') \in \{\im\x_1,\im\x_2\}$ and $c' = c_{kg}$ for some $k=1,2$ and $g\in M$.
    Given this, the $\ms F^i$-$\ms C^i$-adaptedness implies that $c'$ has to be one of the entries in the given line of the table, thus an element of $C^i$.
\end{proof}
    
\begin{proof}
    [Proof of Lemma~\ref{lemma:simple_sef2}]
    The proof of this lemma is highly analogous to that of Lemma~\ref{lemma:simple_sef1} just above. Let $(F',\pi',\X',I',\ms F',\ms C',C')$ be a tuple satisfying the hypothesis of the lemma. Let $i$ be the unique element of $I'$. Recall the table defining $C'$.
    
    According to \cite[Lemma~2.7]{Rapsch2024DecisionA}, $(F',\pi',\X')$ defines a stochastic decision forest on $(\Omega,\ms E)$. By \cite[Lemma~3.3]{Rapsch2024DecisionA}, $\ms F^{\prime i}$ defines an exogenous information structure for it. By \cite[Lemma~4.5]{Rapsch2024DecisionA}, $\ms C^{\prime i}$ defines a reference choice structure for it, and by \cite[Lemma~4.6]{Rapsch2024DecisionA}, the elements of $C^{\prime i}$ are $\ms F^{\prime i}$-$\ms C^{\prime i}$-adapted choices.

    (Ad Axiom~\ref{def:SEF}.\ref{def:SEF.P(c)}):~ The immediate predecessor sets of all choices in $C^{\prime i}$ have been explicitly calculated in \cite[Lemma Appendix A.4]{Rapsch2024DecisionA}. It follows from the results of that lemma that for all $c,c'\in C^{\prime i}$, $P(c)$ and $P(c')$ can only non-trivially intersect if $c$ and $c'$ are denoted in the same column of the table. Given this, closer inspection of the table reveals that if $P(c)$ and $P(c')$ non-trivially intersect, then $P(c) = P(c')$, and $c\cap W_\omega$ and $c'\cap W_\omega$ are either disjoint or equal, for all $\omega\in\Omega$.\smallskip

    (Ad Axiom~\ref{def:SEF}.\ref{def:SEF.outcomes_faithful}):~ There is only one agent $i\in I'$ here, hence this axiom is trivially satisfied. Exactly the same abstract argument can be used as in the proof of Lemma~\ref{lemma:simple_sef1}, Part~\ref{def:SEF}.\ref{def:SEF.outcomes_faithful}, just above.\smallskip

    (Ad Axiom~\ref{def:SEF}.\ref{def:SEF.separation}):~ Let $y,y'\in T'_\omega$ for some $\omega\in\Omega$ with $y\cap y' = \emptyset$. Then, $y$ and $y'$ are moves at time $1$ or terminal nodes. As $T_\omega$ is a finite tree, there is a $\supseteq$-minimal $x\in F$ with $x\supseteq y\cup y'$. In particular, $x$ must be a move with $\pi(x) = \omega$. Let $i\in I$. For any possible values of $x$ and $y$, there are $c,c'\in A^i(x)$ such that $x\cap c \supseteq y$ and $x\cap c' \supseteq y'$, as evident from the table and \cite[Lemma~Appendix~A.4]{Rapsch2024DecisionA}.\smallskip

    (Ad Axiom~\ref{def:SEF}.\ref{def:SEF.enough_choices}):~ Let $x\in X$, $i\in J(x)$, and $y\in\downarrow x \setminus \{x\}$. As evidenced by the table and \cite[Lemma Appendix A.4]{Rapsch2024DecisionA}, there is $c\in A^i(x)$ such that $c \supseteq y$.\smallskip

    (Ad Axiom~\ref{def:SEF}.\ref{def:SEF.endo_exo_compatible}):~ Upon consulting the table, we infer using \cite[Lemma Appendix A.4]{Rapsch2024DecisionA} that there are no two random moves sharing an available choice $c$. Hence, this axiom is trivially satisfied.\smallskip

    (Ad Axiom~\ref{def:SEF}.\ref{def:SEF.choice_completeness}):~ Let $c'$ be an $\ms F^{\prime i}$-$\ms C^{\prime i}$-adapted choice satisfying \ref{def:SEF}.\ref{def:SEF.choice_completeness.i} and~\ref{def:SEF}.\ref{def:SEF.choice_completeness.ii}. From the latter two properties and \cite[Lemma Appendix A.4]{Rapsch2024DecisionA}, we infer that 1) $P(c') = \im \x'_0$ and $c' = c'_{f\bullet}$ for some $f\in M$, or 2) $P(c') \in \{\im\x'_1,\im\x'_2\}$ and $c' = c'_{kg}$ for some $k=1,2$ and $g\in M$.
    Given this, the $\ms F^{\prime i}$-$\ms C^{\prime i}$-adaptedness implies that $c'$ has to be one of the entries in the given line of the table, thus an element of $C^{\prime i}$.
\end{proof}

\begin{proof}[Proof of Theorem~\ref{thm:absent_minded_driver_Gilboa_sef}]
    It has been demonstrated in the ``Simple examples'' subsections in \cite{Rapsch2024DecisionA} that $(F,\pi,\X)$ is a stochastic decision forest, that $\ms F^i$ is an exogenous information structure, that $\ms C^i$ is a reference choice structure, and that $C^i$ is a set of $\ms F^i$-$\ms C^i$-adapted choices, for both $i\in I$.\smallskip

    (Ad Axiom~\ref{def:SEF}.\ref{def:SEF.P(c)}):~ Recall that, for all $i\in I=\{1,2\}$, $E\in\ms F^i_{\x_i}$, we have
    \[ P(c_i(E)) = \im \x_i, \]
    and $\{\im \x_1, \im\x_2\}$ defines a (though not ``order consistent'') partition of $X$.
    Hence, for $c,c'\in C^i$, $P(c)$ and $P(c')$ can only non-trivially intersect if $c$ and $c'$ are available at the same random move, and in that case, $P(c) = P(c')$. If this is the case, then for all $\omega\in\Omega$, $c\cap W_\omega$ and $c'\cap W_\omega$ can both only equal ``exit'' or ``continue'' in scenario $\omega$, i.e.\ $\op{Ex}_i \cap W_\omega$ or $\op{Ct}_i \cap W_\omega$, and thus are either equal or disjoint.\smallskip

    (Ad Axiom~\ref{def:SEF}.\ref{def:SEF.outcomes_faithful}):~ At any move, there is exactly one active agent $i\in I$, hence this axiom is trivially satisfied. Exactly the same abstract argument can be used as in the proof of Lemma~\ref{lemma:simple_sef2}, Part~\ref{def:SEF}.\ref{def:SEF.outcomes_faithful}, just above.\smallskip

    (Ad Axiom~\ref{def:SEF}.\ref{def:SEF.separation}):~ Let $y,y'\in T_\omega$ for some $\omega\in\Omega$ with $y\cap y' = \emptyset$. Then, $\{(\omega,D)\}\in\{y,y'\}$, or, $\{\{(\omega,H)\},\{(\omega,M)\}\} = \{y,y'\}$. In the first case, take $x = \x_{\rho(\omega)}(\omega)$, and in the second, take $x=\x_{3-\rho(\omega)}(\omega)$. In both cases, we can take the two choices $c$ and $c'$ to ``continue'' and to ``exit'', or conversely, that are available at $x$, i.e.\ $x\in P(c)\cap P(c') \cap W_\omega$, such that $c\cap x \supseteq y$ and $c'\cap x \supseteq y'$. By construction, $c\cap c' = \emptyset$.\smallskip

    (Ad Axiom~\ref{def:SEF}.\ref{def:SEF.enough_choices}):~ Let $x\in X$, $i\in J(x)$, and $y\in\downarrow x \setminus \{x\}$. Then $x=\x_i(\omega)$, for some $\omega\in\Omega$. If $\rho(\omega) = i$, $y$ can be any terminal node in $T_\omega$; else, $y\in\{\{(\omega,H)\},\{(\omega,M)\}\}$. By appropriately choosing $c$ to be either ``continue'' or ``exit'', we clearly get $c\supseteq y$ in any of these cases.\smallskip

    (Ad Axiom~\ref{def:SEF}.\ref{def:SEF.endo_exo_compatible}):~ As $\X^1$ and $\X^2$ are singletons, this condition is trivially satisfied.\smallskip

    (Ad Axiom~\ref{def:SEF}.\ref{def:SEF.choice_completeness}):~ Let $i\in I=\{1,2\}$ and $c'$ be an $\ms F^{i}$-$\ms C^{i}$-adapted choice satisfying \ref{def:SEF}.\ref{def:SEF.choice_completeness.i} and~\ref{def:SEF}.\ref{def:SEF.choice_completeness.ii}. We infer that $P(c') = \im\x_i$ from the latter one. By the first condition, $c'$ must consist in ```continue'' or ``exit'' in every scenario. By adaptedness, and the definition of $\ms C^i$, both ``continue'' and ``exit'' must be chosen on $\ms F^i_{\x_i}$-measurable events. Hence, $c' = c_i(E)$ for some $E\in\ms F_{\x_i}^i$. Thus, $c'\in C^i$.
\end{proof}

\subsection{Section~\ref{sec:AP_SEF}}\label{subsec:appendix.proofs.2}

As a preparation for proving Theorem~\ref{thm:AP_sef}, we consider the following lemmata. The first lemma relates the domain $D$ of the ``random action'' $g$ to the domain of those random moves that the corresponding choice is available at in view of the second lemma. 

\begin{lemma}\label{lemma:AP_sef_compute_D}
    Let $\D$ be action path $\psi$-\textsc{sef} data on an exogenous scenario space $(\Omega,\ms E)$, $i\in I$, and $c\in C^i$. Let $t\in\T$, $A_{<t} \in \mc H_t^i$, $D\in\ms E$, $g\colon D\to\A^i$ such that $c = c(A_{<t},i,g)$. Then, for any $f\in \A^\T$ with $f|_{[0,t)_\T}\in A_{<t}$, we have
    \[ D = D_{t,f}. \]
\end{lemma}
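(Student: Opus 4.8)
The plan is to unwind the definitions of $C^i$, of $c(A_{<t},i,g)$, and of the sets $D_{t,f}$, and to show that the two inclusions $D \subseteq D_{t,f}$ and $D_{t,f} \subseteq D$ both hold whenever $f|_{[0,t)_\T} \in A_{<t}$. Recall first that, by the construction of $C^i$ in Subsection~\ref{subs:AP_SEF.information_histories_adapted_choices}, the datum $c = c(A_{<t},i,g)$ comes with $t\in\T$, $A_{<t}\in\mc H^i_t$, $D\in\ms E$, and $g\colon D\to\A^i$ subject to three requirements: $c(A_{<t},i,g)\in\ms C_t$, $c$ is $\ms F^i$-$\ms C^i$-adapted, and for all $(\omega,f_{<t})\in D\times A_{<t}$ there is $f\in\A^\T$ with $(\omega,f)\in c(A_{<t},i,g)$ and $f|_{[0,t)_\T} = f_{<t}$. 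The last of these, the ``fullness'' condition, will drive the inclusion $D\subseteq D_{t,f}$; Assumption~\hyperlink{Ass:AP.SEF2}{AP.SEF2} (which is essentially the same fullness statement at the level of candidate choices) plays an analogous role and can be invoked if one wishes to argue directly from the data rather than from membership in $C^i$.

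First I would show $D \subseteq D_{t,f}$. Fix $\omega\in D$ and fix $f\in\A^\T$ with $f|_{[0,t)_\T}\in A_{<t}$. Write $f_{<t} = f|_{[0,t)_\T}$. By the fullness condition there is $f'\in\A^\T$ with $(\omega,f')\in c(A_{<t},i,g)$ and $f'|_{[0,t)_\T} = f_{<t}$; by definition of $c(A_{<t},A_t^{i,g})$ this means $f'_{<t}\in A_{<t}$ and $p^i\circ f'(t) = g(\omega)$. Now I invoke Assumption~\hyperlink{Ass:AP.C1}{AP.C1} for $c = c(A_{<t},A_t^{i,g})\in\ms C_t$: since $(\omega,f')\in c$, there is $w''\in x_t(\omega,f')\setminus c$. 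Unpacking, $w'' = (\omega,f'')$ with $f''|_{[0,t)_\T} = f'|_{[0,t)_\T} = f_{<t}$ and $(\omega,f'')\notin c$; because $f''_{<t} = f_{<t}\in A_{<t}$, the only way $(\omega,f'')$ can fail to lie in $c$ is $p^i\circ f''(t) \notin \{g(\omega)\} = p^i(A_{t,\omega}^{i,g})$, so in particular $f''(t)\neq f'(t)$. Moreover $x_t(\omega,f') = x_t(\omega,f)$ since the two paths agree on $[0,t)_\T$ (both restrict to $f_{<t}$), so we have produced two elements $(\omega,f'),(\omega,f'')$ of $x_t(\omega,f)\subseteq W$ with equal restrictions to $[0,t)_\T$ and $f'(t)\neq f''(t)$. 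This is exactly the defining condition for $\omega\in D_{t,f}$ (via $|x_t(\omega,f)|\ge 2$, cf.\ the definition of $D_{t,f}$). Hence $D\subseteq D_{t,f}$.

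For the reverse inclusion $D_{t,f}\subseteq D$, fix $\omega\in D_{t,f}$ with $f|_{[0,t)_\T}\in A_{<t}$. Then $|x_t(\omega,f)|\ge 2$, so there exist $(\omega,f_1),(\omega,f_2)\in x_t(\omega,f)$ with $f_1|_{[0,t)_\T} = f_2|_{[0,t)_\T} = f_{<t}$ and $(\omega,f_1)\neq(\omega,f_2)$; by Assumption~\hyperlink{Ass:AP.SEF3}{AP.SEF3} the set $\{u : f_1(u)\neq f_2(u)\}$ has a minimum $u_0$, and since they first differ at a time $\ge t$ we get $f_1(t)$ or $f_2(t)$ witnessing the difference, i.e.\ (relabelling if needed) $f_1(t)\neq f_2(t)$. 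Now I argue that $\omega\in D$: since $\omega\in D_{t,f} = \hat D_{t,f}$ here (by Proposition~\ref{prop:Dfti}\ref{prop:Dfti.Dfti_nonempty}, provided $D_{t,f}^i\neq\emptyset$, which holds because $D\neq\emptyset$ forces, via the first inclusion applied to any $f$ with restriction in $A_{<t}$, that $D_{t,f}^i\neq\emptyset$ by Corollary~\ref{cor:tildeXi=Xi} and the construction of $\ms C^i_\x$), the random move $\x_t(f)$ lies in $\X^i$ and $c$ is available at it, whence by Theorem~\ref{thm:adapted_choices_mb_functions}\ref{thm:adapted_choices_mb_functions.Dx_subseteq_D} we have $D_{\x_t(f)}\subseteq D$; but $D_{\x_t(f)} = D_{t,f}\ni\omega$, so $\omega\in D$. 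This closes the inclusion $D_{t,f}\subseteq D$ and hence the equality.

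The main obstacle I expect is the bookkeeping around the case $D = \emptyset$ versus $D\ne\emptyset$ and making sure $D_{t,f}^i\neq\emptyset$ so that Proposition~\ref{prop:Dfti}\ref{prop:Dfti.Dfti_nonempty} and Theorem~\ref{thm:adapted_choices_mb_functions}\ref{thm:adapted_choices_mb_functions.Dx_subseteq_D} apply cleanly: if $D = \emptyset$ then $c = c(A_{<t},i,g) = \emptyset$, contradicting Assumption~\hyperlink{Ass:AP.C0}{AP.C0} that elements of $\ms C_t$ are non-empty (recall $c\in C^i$ entails $c\in\ms C_t$), so $D\neq\emptyset$ automatically; and then for any $f$ with $f|_{[0,t)_\T}\in A_{<t}$ the argument for $D\subseteq D_{t,f}$ already shows $D_{t,f}\supseteq D\neq\emptyset$, and in particular $\x_t(f)\in\X^i$ with $c$ available at it, unlocking Theorem~\ref{thm:adapted_choices_mb_functions}. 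Apart from this, everything reduces to careful substitution in the definition of $c(A_{<t},A_t)$ and of $x_t(\cdot)$, together with the two ``richness'' assumptions AP.C1 (giving a competing successor, hence $|x_t|\ge 2$) and the fullness clause in the definition of $C^i$ (giving a representative in $c$ over every $\omega\in D$).
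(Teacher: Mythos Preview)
Your argument for the inclusion $D\subseteq D_{t,f}$ is correct and essentially the same as the paper's: use the fullness clause in the definition of $C^i$ to find $(\omega,f')\in c$ with $f'|_{[0,t)_\T}=f|_{[0,t)_\T}$, and then Assumption~\hyperlink{Ass:AP.C1}{AP.C1} gives a second point in $x_t(\omega,f')=x_t(\omega,f)$, so $\omega\in D_{t,f}$.

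The reverse inclusion, however, has two problems. First, the digression through Assumption~\hyperlink{Ass:AP.SEF3}{AP.SEF3} is both unavailable and incorrect: the lemma is stated for $\psi$-\textsc{sef} data, which only carries Assumption~\hyperlink{Ass:AP.psi-SEF3}{AP.$\psi$-SEF3}, and in any case the step ``since they first differ at a time $\ge t$ we get $f_1(t)\neq f_2(t)$'' does not follow, since the minimum $u_0$ may well satisfy $u_0>t$. Second, and more seriously, your main route to $\omega\in D$ invokes Proposition~\ref{prop:Dfti}\ref{prop:Dfti.Dfti_nonempty} and Corollary~\ref{cor:tildeXi=Xi}, but in the paper both of these are proved \emph{using} Lemma~\ref{lemma:AP_sef_compute_D} (via Lemma~\ref{lemma:AP_sef_compute_P(c)}), so the argument is circular. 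Even granting that $\x_t(f)\in\X$, you still need $c$ to be available at $\x_t(f)$ in order to apply Theorem~\ref{thm:adapted_choices_mb_functions}\ref{thm:adapted_choices_mb_functions.Dx_subseteq_D}, and that availability is precisely what is at stake.

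The paper's proof avoids all of this by using Assumption~\hyperlink{Ass:AP.C2}{AP.C2} directly: having shown $D\subseteq D_{t,f}$ and $D\neq\emptyset$, pick $\omega_0\in D$; then $(\omega_0,\tilde f)\in x_t(\omega_0,f)\cap c$ for some $\tilde f$, and since $f|_{[0,t)_\T}\in A_{<t}$, Assumption~\hyperlink{Ass:AP.C2}{AP.C2} forces $x_t(\omega',f)\cap c\neq\emptyset$ for \emph{every} $\omega'\in D_{t,f}$, which immediately gives $\omega'\in D$. This is a one-line argument using only the $\ms C_t$-membership of $c$, with no forward references.
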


\begin{proof}
    Let $f\in\A^\T$ be such that $f|_{[0,t)_\T}\in A_{<t}$. 

    Let $\omega\in D$. By the definition of $C^i$ and the fact that $c\in C^i$, there is $\tilde f\in\A^\T$ such that $(\omega,\tilde f)\in c$ and $\tilde f|_{[0,t)_\T} = f|_{[0,t)_\T}$. As $c\in\ms C_t$, Assumption \hyperlink{Ass:AP.C1}{AP.C1} implies that $\omega\in D_{t,\tilde f} = D_{t,f}$. We conclude that $D\subseteq D_{t,f}$. 
    
    As $c\neq\emptyset$, according to Assumption \hyperlink{Ass:AP.C0}{AP.C0}, we infer that $D\neq\emptyset$. Let $\omega\in D$ as above, whence $\omega \in D_{t,f}$. Then, as above,
    \[ (\omega,\tilde f) \in x_t(\omega,f) \cap c. \]
    By Assumption \hyperlink{Ass:AP.C2}{AP.C2}, all $\omega'\in D_{t,f}$ satisfy $x_t(\omega',f) \cap c\neq \emptyset$ and must therefore be elements of $D$. Hence, $D_{t,f} \subseteq D$.
\end{proof}

\begin{lemma}\label{lemma:AP_sef_compute_P(c)}
    Let $\D$ be action path $\psi$-\textsc{sef} data on an exogenous scenario space $(\Omega,\ms E)$, $i\in I$, and $c\in C^i$. Let $t\in\T$, $A_{<t} \in \mc H_t^i$, $D\in\ms E$, $g\colon D\to \A^i$ be such that $c = c(A_{<t},i,g)$. Then,
    \begin{equation*}
    \begin{aligned}
        P(c) =&~\{ x_t(\omega,f) \mid (\omega,f)\in D \times \A^\T\colon~ f|_{[0,t)_\T} \in A_{<t}\}\\
        =&~\{ x_t(\omega,f) \mid (\omega,f)\in \Omega \times \A^\T\colon~ \omega\in D_{t,f},~ f|_{[0,t)_\T} \in A_{<t}\}.
    \end{aligned}
    \end{equation*} 
\end{lemma}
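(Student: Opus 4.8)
The statement asserts a formula for the immediate predecessor set $P(c)$ of a choice $c = c(A_{<t},i,g)$ in an action path $\psi$-\textsc{sef}. The plan is to work directly from the definition $P(c) = \{x\in F \mid \exists y\in\downarrow c\colon \uparrow x = \uparrow y\setminus\downarrow c\}$ together with the explicit structure of action path decision forests. The second equality between the two displayed sets is immediate from Lemma~\ref{lemma:AP_sef_compute_D}: that lemma gives $D = D_{t,f}$ for every $f$ with $f|_{[0,t)_\T}\in A_{<t}$, so the condition ``$\omega\in D$'' and the condition ``$\omega\in D_{t,f}$ (together with $f|_{[0,t)_\T}\in A_{<t}$)'' describe the same pairs $(\omega,f)$. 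Hence the real content is the first equality.

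\textbf{Key steps.} First I would recall from \cite[Subsection~4.4]{Rapsch2024DecisionA} (Lemmata 4.7--4.9 there) that the immediate predecessors of elements of $\ms C_t$ are already computed in the reference; since $c(A_{<t},i,g) = c(A_{<t},A_t^{i,g})$ has the form $c(A_{<t},A_t)$ with $A_t$ of the allowed type, the general computation of $P(c(A_{<t},A_t))$ applies, and it should identify $P(c)$ with the set of moves $x_t(w)$ where $w = (\omega,f)$ ranges over outcomes in $c$ (or satisfying the relevant compatibility), restricted by $f|_{[0,t)_\T}\in A_{<t}$. So the cleanest route is: (1) invoke that reference computation to get $P(c) = \{x_t(\omega,f)\mid (\omega,f)\in W\colon f|_{[0,t)_\T}\in A_{<t},\ \omega\in D_{t,f}\}$ or the analogous description; (2) show this matches the claimed set by using Assumption~\hyperlink{Ass:AP.C2}{AP.C2} (which ensures that whether $x_t(\omega,f)$ meets $c$ depends only on $f|_{[0,t)_\T}$, not on the particular $\omega\in D_{t,f}$) and the defining property of $C^i$ that for every $(\omega,f_{<t})\in D\times A_{<t}$ there is an extension $f$ with $(\omega,f)\in c$ and $f|_{[0,t)_\T} = f_{<t}$; and (3) use Lemma~\ref{lemma:AP_sef_compute_D} to replace $D_{t,f}$ by $D$ and drop the constraint $(\omega,f)\in W$, since for $f|_{[0,t)_\T}\in A_{<t}$ and $\omega\in D = D_{t,f}$ Assumption~\hyperlink{Ass:AP.SDF2}{AP.SDF2} already guarantees a compatible outcome so that $x_t(\omega,f)\in F$ is well-defined. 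Conversely, given $x = x_t(\omega,f)$ with $\omega\in D$ and $f|_{[0,t)_\T}\in A_{<t}$, one picks $y\in\downarrow c$ with $y\subseteq x$ (using the $C^i$-property to find an outcome of $c$ extending $f|_{[0,t)_\T}$ in scenario $\omega$) and verifies $\uparrow x = \uparrow y\setminus\downarrow c$ from the tree structure, exactly as in \cite[Lemma~4.7]{Rapsch2024DecisionA}.

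\textbf{Main obstacle.} The delicate point will be verifying the identity $\uparrow x_t(\omega,f) = \uparrow y\setminus\downarrow c$ for a suitable $y\in\downarrow c$, i.e.\ checking that $x_t(\omega,f)$ is genuinely an \emph{immediate} predecessor of $c$ and not merely a predecessor. This requires that no node strictly between $x_t(\omega,f)$ and the relevant $y$ lies outside $\downarrow c$ along that branch, which rests on Assumption~\hyperlink{Ass:AP.SEF3}{AP.SEF3} (or \hyperlink{Ass:AP.psi-SEF3}{AP.$\psi$-SEF3}) forcing a well-defined ``first time of deviation'' and on Assumption~\hyperlink{Ass:AP.C1}{AP.C1} guaranteeing $c$ does not already contain all of $x_t(\omega,f)$. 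In practice almost all of this is packaged in the cited lemmata of \cite{Rapsch2024DecisionA}, so the proof should amount to citing \cite[Lemmata~4.7--4.9]{Rapsch2024DecisionA} for the structural computation of $P(c(A_{<t},A_t))$, then combining it with Lemma~\ref{lemma:AP_sef_compute_D} and Assumption~\hyperlink{Ass:AP.C2}{AP.C2} to rewrite the result in the stated form; the bulk of the work is bookkeeping rather than a new idea.
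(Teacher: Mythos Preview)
Your proposal is correct and follows essentially the same route as the paper: cite the computation $P(c)=\{x_t(w)\mid w\in c\}$ from \cite[Lemma~4.8]{Rapsch2024DecisionA} (valid since $c\in\ms C_t$), then use the defining property of $C^i$ to pass from outcomes in $c$ to arbitrary $(\omega,f)\in D\times\A^\T$ with $f|_{[0,t)_\T}\in A_{<t}$, and finally invoke Lemma~\ref{lemma:AP_sef_compute_D} for the second equality. Your ``main obstacle'' is entirely absorbed by \cite[Lemma~4.8]{Rapsch2024DecisionA}, so Assumptions~\hyperlink{Ass:AP.C2}{AP.C2}, \hyperlink{Ass:AP.SDF2}{AP.SDF2}, and \hyperlink{Ass:AP.psi-SEF3}{AP.$\psi$-SEF3} need not be invoked directly; the actual proof is shorter than you anticipate.
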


\begin{proof}
    It suffices to prove the first equality because the second one follows from it in view of Lemma~\ref{lemma:AP_sef_compute_D}.

    As $c\in\ms C_t$ by assumption, we have 
    \[ P(c) = \{x_t(w) \mid w \in c\}, \]
    by \cite[Lemma~4.8]{Rapsch2024DecisionA}. Now, let $w = (\omega,f)\in \Omega\times \A^\T$. Then, by the definition of $c$, $w\in c$ implies that $f|_{[0,t)_\T} \in A_{<t}$ and $\omega\in D$. Conversely, if $f|_{[0,t)_\T} \in A_{<t}$ and $\omega\in D$, then there is $\tilde f\in\A^\T$ such that $\tilde w = (\omega,\tilde f)\in c$ and $\tilde f|_{[0,t)_\T} = f|_{[0,t)_\T}$ since $c\in C^i$. Hence, $x_t(w) = x_t(\tilde w) \in P(c)$.
\end{proof}

\begin{proof}
    [Proof of Proposition~\ref{prop:Dfti}]
    (Ad \ref{prop:Dfti.hatDft=unionDfti}):~ This follows directly from the fact that for $a',a''\in\A$ we have $a'=a''$ iff for all $i\in I$, $p^i(a') = p^i(a'')$.\smallskip

    (Ad \ref{prop:Dfti.hatDft_subseteq_Dft}):~ Let $t\in\T$, $f\in\A^\T$ and $\omega\in \hat D_{t,f}$. Then there are  $f',f'' \in\A^\T$ such that $(\omega,f'), (\omega,f'')\in W$, $f'|_{[0,t)_\T} = f|_{[0,t)_\T} = f''|_{[0,t)_\T}$ and $f'(t) \neq f''(t)$. Hence, $(\omega,f')$ and $(\omega,f'')$ provide two distinct elements of $x_t(\omega,f)$, whence $\omega\in D_{t,f}$.\smallskip

    (Ad \ref{prop:Dfti.Dfti_nonempty}):~ Let $t\in\T$, $f\in\A^\T$ and $i\in I$. Then the assertion that $D_{t,f} \neq \emptyset$ and $\x_t(f) \in \X^i$ hold true is equivalent to the one that there are $\omega\in \Omega$ and $c\in C^i$ such that $x_t(\omega,f)\in P(c)$.

    To show the implication ``$\Rightarrow$'', suppose that there are such $\omega$ and $c$. Then, by Lemmata \ref{lemma:AP_sef_compute_P(c)} and~\ref{lemma:AP_sef_compute_D}, we can represent $c$ using $A_{<t}\in\mc H^i_t$ and $g\colon D_{t,f} \to \A^i$ as $c = c(A_{<t},i,g)$. As $c\in\ms C_t$ by hypothesis, by Assumption \hyperlink{Ass:AP.C1}{AP.C1}, there is $f'\in\A^\T$ such that $(\omega,f')\in W$, $f'|_{[0,t)_\T} = f|_{[0,t)_\T}$ and $p^i \circ f'(t) \neq g(\omega)$. On the other hand, by Lemma~\ref{lemma:AP_sef_compute_P(c)}, the fact that $x_t(\omega,f)\in P(c)$ implies that there is $f''\in\A^\T$ satisfying $(\omega,f'')\in W$, $f''|_{[0,t)_\T} = f|_{[0,t)_\T}$ and $p^i \circ f''(t) = g(\omega)$. Hence, $\omega\in D_{t,f}^i$.

    As the preceding argument can be made for any $\omega\in D_{t,f}$ it follows that under the assumption ``$D_{t,f}\neq\emptyset$ and $\x_t(f)\in \X^i$'', we have $D_{t,f} \subseteq D_{t,f}^i$. In view of Parts~\ref{prop:Dfti.hatDft=unionDfti} and~\ref{prop:Dfti.hatDft_subseteq_Dft}, we obtain $D_{t,f} = D_{t,f}^i$.

    To show the other implication ``$\Leftarrow$'', suppose that there is $\omega\in D_{t,f}^i$. Hence, there is $f'\in\A^\T$ with $f'|_{[0,t)_\T}=f|_{[0,t)_\T}$ and $(\omega,f')\in W$. Hence, $\omega\in D_{t,f}^i = D_{t,f'}^i$. Since $\omega\in D_{t,f'}^i \subseteq D_{t,f'}$, Assumption \hyperlink{Ass:AP.SEF2}{AP.SEF2} combined with Lemma~\ref{lemma:AP_sef_compute_P(c)} directly yields the existence of $c\in C^i$ with $x_t(\omega,f) = x_t(\omega,f')\in P(c)$.
\end{proof}

\begin{example}\label{ex:SEF_with_Dtf_neq_hatDtf}
    Consider singleton $\Omega$, $\T = \R_+$ with standard order, singleton $I$, let $\A = \R$ and $W$ be the set of pairs $(\omega,f)\in\Omega\times\A^\T$ such that $f(0) = 0$ and $f$ is right-constant, that is, for all $t\in\T$, there is $\e>0$ such that $f|_{[t,t+\e)}$ is single-valued. Then, $(I,\A,\T,W)$ clearly satisfies the Assumptions \hyperlink{Ass:AP.SDF0}{AP.SDF$k$}, $k=0,\dots,3$.
    
    Further, let $i\in I$, $\ms F^i_\x = \ms E = \mc P\Omega$ for all $\x\in\tilde\X^i$, $\ms F^i = (\ms F^i_\x)_{\x\in\tilde\X^i}$, $\ms F=(\ms F^i)_{i\in I}$, and $\mc H^i_t = \big\{\{f\in \A^{[0,t)} \mid D_{t,f}^i\neq \emptyset\}\big\}$, for the unique action index $i\in I$ and all $t\in\T$, $\mc H^i = (\mc H^i_t)_{t\in\T}$, $\mc H = (\mc H^i)_{i\in I}$. Then, $(I,\A,\T,W,\ms F,\mc H)$ clearly satisfies the Assumptions \hyperlink{Ass:AP.SEF0}{AP.SEF$k$}, $k=0,\dots,3$.
    
    However, we clearly have $D_{0,f} = \Omega \neq \emptyset = \hat D_{0,f}$, for all $f\in\A^\T$.
\end{example}

\begin{proof}
    [Proof of Theorem~\ref{thm:AP_sef}]
    Let $\D$ action path $\psi$-\textsc{sef} data on an exogenous scenario space $(\Omega,\ms E)$, and let $\F$ be the induced \textsc{sef} candidate.\smallskip
    
    (Ad basic properties in Definition~\ref{def:SEF}):~ According to \cite[Theorem~2.15]{Rapsch2024DecisionA}, $(F,\pi,\X)$ defines an order consistent stochastic decision forest on $(\Omega,\ms E)$. Moreover, by Corollary \ref{cor:tildeXi=Xi}, for all $i\in I$, $\tilde\X^i = \X^i$. Hence, $\ms F$ defines a family of exogenous information structures on $\X^i$, $i\in I$, and, by \cite[Proposition 4.10]{Rapsch2024DecisionA}, $\ms C$ defines a family of reference choice structures on $\X^i$, $i\in I$. By construction, for each $i\in I$, the elements of $C^i$ are $\ms F^i$-$\ms C^i$-adapted choices. Moreover, by order consistency and \cite[Proposition~2.4]{Rapsch2024DecisionA}, evaluation $\X^i\bullet\Omega\to X$ is injective.\smallskip

    (Ad Axiom~\ref{def:SEF}.\ref{def:SEF.P(c)}):~ Let $i\in I$ and $c,c'\in C^i$ such that $P(c) \cap P(c') \neq \emptyset$. Represent $c$ and $c'$ using $t,t'\in\T$, $A_{<t}\in \mc H^i_t$, $A'_{<t'}\in\mc H^i_{t'}$, $D,D'\in\ms E$, $g\colon D\to \A^i$ and $g'\colon D'\to \A^i$ such that
    \[ c = c(A_{<t},i,g), \qquad c' = c(A'_{<t'},i,g').\]
    In view of Lemma~\ref{lemma:AP_sef_compute_P(c)}, there are $(\omega,f)\in D\times\A^\T$ and $(\omega',f')\in D'\times\A^\T$ such that $f|_{[0,t)_\T} \in A_{<t}$, $f'|_{[0,t')_\T} \in A'_{<t'}$, $p^i\circ f(t) = g(\omega)$, $p^i\circ f'(t) = g'(\omega')$, and
    \[ x_t(\omega,f) = x_{t'}(\omega',f'). \]
    By applying $\pi$, we obtain $\omega = \omega'$. By applying the ``time'' map $\mf t$, we obtain $t = t'$. By the definition of nodes in action path \textsc{sdf}, we get $f|_{[0,t)_\T} = f'|_{[0,t)_\T}$. Hence, $A_{<t} \cap A'_{<t} \neq\emptyset$. By Assumption \hyperlink{Ass:H1}{AP.H1}, $A_{<t} = A'_{<t}$. By Lemma~\ref{lemma:AP_sef_compute_D} we get $D=D'$, and by Lemma~\ref{lemma:AP_sef_compute_P(c)}, we obtain $P(c) = P(c')$. 

    We remain in the situation where $P(c) = P(c')$ and consider $\omega\in\Omega$ such that $c\cap c'\cap W_\omega \neq \emptyset$. Then, $\omega\in D$ and $g(\omega) = g'(\omega)$, which implies $c\cap W_\omega = c'\cap W_\omega$.\smallskip

    (Ad Axiom~\ref{def:SEF}.\ref{def:SEF.outcomes_faithful}):~ Let $x\in X$ and $(c^i)_{i\in J(x)} \in \bigtimes_{i\in J(x)} C^i$. Represent $x$ using $t = \mf t(x) \in \T$ and $w = (\omega,f)\in W$, as $x = x_t(w)$. For any $i\in J(x)$, represent $c^i$ using $A^i_{<t}\in\mc H^i_t$, $D^i\in\ms E$, and $g^i\colon D^i\to \A^i$, as $c = c(A^i_{<t},i,g^i)$. 

    For all $i\in J(x)$, we have $x\in P(c^i)$. By Lemma~\ref{lemma:AP_sef_compute_P(c)}, this implies that, for all $i\in J(x)$, $(\omega,f|_{[0,t)_\T})\in D^i\times A^i_{<t}$. Hence, for any $i\in J(x)$ there is $f_i\in\A^\T$ with $(\omega,f_i)\in c^i$ and $f_i|_{[0,t)_\T} = f|_{[0,t)_\T}\in A^i_{<t}$.
    
    Then, by Assumption \hyperlink{Ass:AP.SEF0}{AP.SEF0} there is $\tilde f\in\A^\T$ such that for all $i\in J(x)$, $p^i\circ \tilde f(t) = p^i \circ f_i(t) = g^i(\omega)$ and $(\omega,\tilde f)\in x_t(\omega,f_i) = x_t(\omega,f)$. We conclude that
    \[ (\omega,\tilde f) \in x \cap \bigcap_{i\in J(x)} c^i. \]

    (Ad Axiom~\ref{def:SEF}.\ref{def:SEF.weak_separation}):~ Let $y,y'\in F$ with $y\cap y' = \emptyset$ and $\pi(y) = \pi(y')$. Denote $\omega = \pi(y)$. There are $f,f'\in\A^\T$ and $t_0\in\T$ such that a) $(\omega,f)\in y$, $(\omega,f')\in y'$, and $f(t_0)\neq f'(t_0)$ and b) $t_0 < \mf t(y) \wedge \mf t(y')$.
    
    By Assumption \hyperlink{Ass:AP.psi-SEF3}{AP.$\psi$-SEF3}, there are $t\in\T$ and $i\in I$ such that $p^i \circ f(t) \neq p^i \circ f'(t)$, $\omega\in D_{t,f}^i \cap D_{t,f'}^i$, and $t\le t_0$. There is a unique pair $A_{<t},A'_{<t}\in\mc H^i_t$ such that $f|_{[0,t)_\T} \in A_{<t}$ and $f'|_{[0,t)_\T} \in A'_{<t}$. By Assumption \hyperlink{Ass:AP.SEF2}{AP.SEF2} and definition of $C$, there are $g,g'\colon D_{t,f} \to \A^i$ such that, with $c=c(A_{<t},i,g)$ and $c'=c(A'_{<t},i,g')$, we have $(\omega,f)\in c\in C^i$ and $(\omega,f')\in c'\in C^i$. In particular, $c\cap c'\cap W_\omega = \emptyset$. Moreover, $\omega\in D_{t,f}\cap D_{t,f'}$, and thus Lemma~\ref{lemma:AP_sef_compute_P(c)} implies that $c$ ($c'$) is available to agent $i$ at the random moves $\x_t(f)$ ($\x_t(f')$, respectively).

    It remains to prove that $c \supseteq y$ and $c' \supseteq y'$. In view of the problem's symmetry, it suffices to give a proof of the first inclusion.
    The inclusion $c\supseteq y$ holds true by construction if $y$ is a singleton. Else, there is $u\in\T$ such that $y = x_u(\omega,f)$. By construction, $t\le t_0 <\mf t(y) = u$, whence $t<u$. Hence, we can argue as follows. If $w\in y$, then there is $\tilde f\in\A^\T$ with $\tilde f|_{[0,u)_\T} = f|_{[0,u)_\T}$ such that $w = (\omega,\tilde f)$. In particular, $\tilde f|_{[0,t)_\T} = f|_{[0,t)_\T}$ and $\tilde f(t) = f(t)$, so that $w = (\omega,\tilde f)\in c$ as well. We conclude that $c \supseteq y$.\smallskip
    
    (Ad Axiom~\ref{def:SEF}.\ref{def:SEF.separation}):~ To show this property, we assume that $\D$ is \textsc{sef} data, so that even Assumption \hyperlink{Ass:AP.SEF3}{AP.SEF3} is satisfied. 
    
    Let $y,y'\in F$ with $y\cap y' = \emptyset$ and $\pi(y) = \pi(y')$. Denote $\omega = \pi(y)$. There are $f,f'\in\A^\T$ such that $(\omega,f)\in y$ and $(\omega,f')\in y'$, and $f\neq f'$. Hence, by Assumption \hyperlink{Ass:AP.SEF3}{AP.SEF3} there is $t\in\T$ such that $f|_{[0,t)_\T} = f'|_{[0,t)_\T}$ and $f(t) \neq f'(t)$. There is $i\in I$ such that $p^i \circ f(t) \neq p^i \circ f'(t)$. Hence, $\omega\in D_{t,f}^i$. There is unique $A_{<t}\in\mc H^i_t$ such that $f|_{[0,t)_\T} \in A_{<t}$. By Assumption \hyperlink{Ass:AP.SEF2}{AP.SEF2} and definition of $C$, there are $g,g'\colon D_{t,f} \to \A^i$ such that, with $c=c(A_{<t},i,g)$ and $c'=c(A_{<t},i,g')$, we have $(\omega,f)\in c\in C^i$ and $(\omega,f')\in c'\in C^i$. In particular, $c\cap c'\cap W_\omega = \emptyset$. Moreover, $\omega\in D_{t,f}$, and thus Lemma~\ref{lemma:AP_sef_compute_P(c)} implies that $c$ and $c'$ are available to agent $i$ at the random move $\x_t(f)$.

    It remains to prove that $x_t(\omega,f) \cap c \supseteq y$ and $x_t(\omega,f) \cap c' \supseteq y'$. By symmetry of the problem, it suffices to give a proof for the first inclusion. As $x_t(\omega,f) \cap y\neq \emptyset$, both nodes are contained in some decision path alias maximal chain in $(F,\supseteq)$ and are thus comparable. The same holds true for $x_t(\omega,f)$ and $y'$. If we had $y\supseteq x_t(\omega,f)$, then $y$ and $y'$ would be comparable since in $(F,\supseteq)$ principal up-sets are chains. But this would imply $y\cap y'\neq\emptyset$, a contradiction. Hence, $x_t(\omega,f) \supsetneq y$. The second part of the inclusion, namely $c\supseteq y$, holds true by construction if $y$ is a singleton. Else, there is $u\in\T$ such that $y = x_u(\omega,f)$. As $x_t(\omega,f) \supsetneq y$, the strict monotonicity of $\mf t$ implies that $t<u$. Hence, we can argue as follows. If $w\in y$, then there is $\tilde f\in\A^\T$ with $\tilde f|_{[0,u)_\T} = f|_{[0,u)_\T}$ such that $w = (\omega,\tilde f)$. In particular, $\tilde f|_{[0,t)_\T} = f|_{[0,t)_\T}$ and $\tilde f(t) = f(t)$, so that $w = (\omega,\tilde f)\in c$ as well. We conclude that $x_t(\omega,f) \cap c \supseteq y$.\smallskip

    (Ad Axiom~\ref{def:SEF}.\ref{def:SEF.enough_choices}):~ Let $x\in X$, $i\in J(x)$ and $y\in \downarrow x \setminus \{x\}$. Let $\omega = \pi(x)$. There is $f\in\A^\T$ such that $(\omega,f)\in y$. Then, there is $t\in\T$ such that $x = x_t(\omega,f)$. As $i\in J(x)$, we have $\x_t(f)\in\X^i$ and $\omega\in D_{t,f}$. By Proposition~\ref{prop:Dfti}, we infer that $\omega\in D_{t,f}^i$. Hence, by Assumption \hyperlink{Ass:AP.SEF2}{AP.SEF2}, there are $g\colon D_{t,f}\to\A^i$ and $A_{<t}\in\mc H_t^i$ with $f|_{[0,t)_\T}\in A_{<t}$ such that $c=c(A_{<t},i,g)$ satisfies $(\omega,f)\in c \in C^i$. By Lemma~\ref{lemma:AP_sef_compute_P(c)}, $x = x_t(\omega,f)\in P(c)$, whence $c\in A^i(x)$.

    Let $w\in y$. Then there is $\tilde f\in\A^\T$ such that $w=(\omega,\tilde f)$. If $y$ is a singleton, then $\tilde f = f$, whence $w\in c$. If $y$ is not a singleton, then there is $u\in\T$ such that $y = x_u(\omega,f)$. By strict monotonicity of $\mf t$, we have $t<u$. In the very same way as we did in the proof of the previous axiom, we infer that $\tilde f|_{[0,t)_\T} = f|_{[0,t)_\T}$ and $\tilde f(t) = f(t)$, whence $(\omega,\tilde f)\in c$. To conclude, we have shown that $c\supseteq y$.\smallskip

    (Ad Axiom~\ref{def:SEF}.\ref{def:SEF.endo_exo_compatible}):~ Let $i\in I$ and $\x,\x'\in\X$ such that $A^i(\x) \cap A^i(\x') \neq \emptyset$. Hence, there are $\omega\in D_\x$, $\omega'\in D_{\x'}$ and $c\in C^i$ such that $\x(\omega),\x'(\omega)\in P(c)$. There are $A_{<t}\in\mc H^i_t$, $D\in\ms E$, $g\colon D\to \A^i$ such that $c = c(A_{<t},i,g)$. By Lemma~\ref{lemma:AP_sef_compute_P(c)}, there are $f,f'\in\A^\T$ with $f|_{[0,t)_\T},f'|_{[0,t)_\T}\in A_{<t}$ such that $\x(\omega) = x_t(\omega,f)$ and $\x'(\omega') = x_t(\omega',f')$. As the left-hand sides are moves, $D_{t,f}$ and $D_{t,f'}$ are non-empty, and by Definition~\ref{def:sdf}.\ref{def:sdf.X.OC}, we get $\x = \x_t(f)$ and $\x' = \x_t(f')$. 
    By Definition~\ref{def:mcH}.\ref{def:mcH.msF_compatible}, we obtain 
    \[ \ms F^i_\x = \ms F^i_{\x_t(f)} = \ms F^i_{\x_t(f')} = \ms F^i_{\x'}. \]

    Note that this implies
    \[ (\ast) \qquad D_\x = D_{\x'}. \]

    It remains to show that $\ms C^i_{\x} = \ms C^i_{\x'}$. In view of the problem's symmetry, it suffices to show the inclusion ``$\subseteq$''. Let $c_1\in\ms C^i_{\x_t(f)}$. Then there are $\tilde A_{<t}\in \mc H_t^i$ and $A_t^i\in\ms B(\A^i)$ such that with $A_t$ as in ($\ms C_{\x}^i$.\ref{def:msC.2}), we have $c_1 = c(\tilde A_{<t},A_t)$, $c_1\in\ms C_t$, and for all $\tilde\omega\in D_\x$, $\x(\tilde\omega) \cap c_1 \neq\emptyset$. Hence, $f|_{[0,t)_\T}\in \tilde A_{<t}\cap A_{<t}$, and thus $\tilde A_{<t} = A_{<t}$ by Definition~\ref{def:mcH}.\ref{def:mcH.partition}.

    In view of $(\ast)$, it remains to show Property ($\ms C_\x^i$.\ref{def:msC.4}) for $\x'$, that is, for all $\tilde\omega\in D_{\x'}$, we have $\x'(\tilde\omega) \cap c_1 \neq \emptyset$. Let $\tilde\omega\in D_{\x'}$. Then $\tilde\omega\in D_\x$. Hence, there is $\tilde f\in\A^\T$ such that $(\tilde\omega,\tilde f)\in\x(\tilde\omega) \cap c_1$. By Assumption \hyperlink{Ass:AP.C1}{AP.C1}, there $\tilde f_1\in\A^\T$ such that $(\tilde\omega,\tilde f_1) \in \x(\tilde\omega) \setminus c_1$. Hence, $\tilde\omega\in D_{t,\tilde f}^i$, and thus, by Assumption \hyperlink{Ass:AP.SEF2}{AP.SEF2}, there is $c_2\in C^i$ such that a) $(\tilde\omega,\tilde f)\in c_2$, and b) there is $\tilde f'\in\A^\T$ satisfying $(\tilde\omega,\tilde f') \in c_2$ and $\tilde f'|_{[0,t)_\T} = f'|_{[0,t)_\T}$. As a consequence, $(\tilde\omega,\tilde f')\in \x'(\tilde\omega)$, $\tilde f'|_{[0,t)_\T}\in A_{<t}$, and $p^i \circ \tilde f'(t) = p^i \circ \tilde f(t)$, whence $(\tilde\omega,\tilde f')\in c_1$. We have thus shown the existence of an element of $\x'(\tilde\omega) \cap c_1$.\smallskip

    (Ad Axiom~\ref{def:SEF}.\ref{def:SEF.choice_completeness}):~ Let $i\in I$ and $c'$ an $\ms F^i$-$\ms C^i$-adapted choice satisfying Properties~\ref{def:SEF.choice_completeness.i} and~\ref{def:SEF.choice_completeness.ii} in Definition~\ref{def:SEF}. 

    By the latter one, there is $c\in C^i$ with $P(c) = P(c')$. There is $t\in\T$, $A_{<t}\in\mc H^i_t$, $D\in\ms E$, and $g\colon D \to \A^i$ such that $c = c(A_{<t},i,g)$. 
    
    Note that the non-redundancy of both $c$ and $c'$ and \cite[Lemma~4.1]{Rapsch2024DecisionA} imply that
    \begin{equation*}
    (\star)\qquad
    \begin{aligned} 
        D =&~ \{\omega\in \Omega \mid c\cap W_\omega \neq \emptyset\} = \{\omega\in\Omega \mid P(c) \cap T_\omega \neq \emptyset\} \\
        =&~ \{\omega\in\Omega \mid P(c') \cap T_\omega \neq \emptyset\} = \{\omega\in\Omega \mid c'\cap W_\omega \neq \emptyset\}.
    \end{aligned}
    \end{equation*}
    
    Let us define a function $g'\colon D \to \A^i$ as follows. Let $\omega\in D$. Then, in view of $(\star)$ and Property~\ref{def:SEF.choice_completeness.i}, there is $c_\omega\in C^i$ with $c'\cap W_\omega = c_\omega\cap W_\omega$. Such $c_\omega$ satisfies
    \[ P(c) \cap T_\omega = P(c') \cap T_\omega = P(c_\omega) \cap T_\omega \neq \emptyset,\] 
    by \cite[Lemma~4.1]{Rapsch2024DecisionA} and non-redundancy of the choices involved. We obtain, as shown shortly afterwards, that any $c_\omega\in C^i$ with $c'\cap W_\omega = c_\omega \cap W_\omega$ must admit a representation
    \[ (\ast) \qquad c_\omega = c(A_{<t},i,g_\omega) \] 
    for some $g_\omega\colon D \to \A^i$ such that for all $(\omega',f'_{<t})\in D\times A_{<t}$ there is $f'\in\A^\T$ with $f'|_{[0,t)_\T} = f'_{<t}$ and $(\omega',f')\in c_\omega$. As $c_\omega\cap W_\omega\neq\emptyset$,  $g_\omega(\omega)$ is independent of the choice of $c_\omega\in C^i$ such that $c'\cap W_\omega = c_\omega \cap W_\omega$. Hence, we can and do define $g'(\omega)$ by the equation
    \[ g'(\omega) = g_\omega(\omega). \]
    
    To see the existence of a representation as in $(\ast)$, represent $c_\omega$ -- which is an element of $C^i$ by assumption -- as $c(A_{<t;\omega},i,g_\omega)$ for some $A_{<t;\omega}\in\mc H^i_t$, $D_\omega\in\ms E$ with $\omega\in D_\omega$, and $g_\omega\colon D_\omega \to \A^i$, such that for all $(\omega',f'_{<t})\in D_\omega\times A_{<t;\omega}$ there is $f'\in\A^\T$ with $f'|_{[0,t)_\T} = f'_{<t}$ and $(\omega',f')\in c_\omega$. Then, there is an element $x\in P(c) \cap T_\omega = P(c_\omega)\cap T_\omega$, which by Lemma~\ref{lemma:AP_sef_compute_P(c)} applied to both $c$ and $c_\omega$, can be represented as $x = x_t(\omega,f)$ with $f\in \A^\T$ such that $f|_{[0,t)_\T}\in A_{<t}\cap A_{<t;\omega}$. Hence, by Definition~\ref{def:mcH}.\ref{def:mcH.partition}, $A_{<t} = A_{<t;\omega}$. Moreover, by Lemma~\ref{lemma:AP_sef_compute_D} applied to $c$ and $c_\omega$, we get $D = D_{t,f} = D_\omega$. This proves the existence of the representation $(\ast)$. In particular, $g'$ is well-defined.

    We now claim that 
    \[ (\dagger)\qquad c' = c(A_{<t},i,g'). \]
    It suffices to show that for all $\omega\in\Omega$, we have
    \[ c'\cap W_\omega = c(A_{<t},i,g') \cap W_\omega. \]
    This is clear for $\omega\in D^\complement$, by $(\star)$ and by definition of $c(A_{<t},i,g)$. If on the other hand $\omega\in D$, then there is $c_\omega\in C^i$ with $c_\omega\cap W_\omega = c' \cap W_\omega$ that can be represented as in $(\ast)$, and we have
    \[ c'\cap W_\omega = c_\omega\cap W_\omega = c(A_{<t},i,g_\omega) \cap W_\omega = c(A_{<t},i,g') \cap W_\omega,\]
    by $(\ast)$ and the definition of $g'$. This proves $(\dagger)$.

    It remains to show that 
    \begin{enumerate}[label=(\alph*),ref=(\alph*)]
        \item\label{thm:AP_SEF.proof.choice_compl.claim.i} $c'\in \ms C_t$, and
        \item\label{thm:AP_SEF.proof.choice_compl.claim.ii} for all $(\omega,f_{<t})\in D\times A_{<t}$ there is $f\in\A^\T$ with $(\omega,f)\in c'$ and $f|_{[0,t)_\T} = f_{<t}$.
    \end{enumerate}
    Regarding \ref{thm:AP_SEF.proof.choice_compl.claim.i}, Assumption \hyperlink{Ass:AP.C0}{AP.C0} is satisfied for $c'$ because it is satisfied for $c$. Indeed, as $c\neq \emptyset$ by assumption, there is $\omega\in\Omega$ such that $c\cap W_\omega \neq\emptyset$. Hence, there is $\omega\in \Omega$ satisfying $P(c) \cap T_\omega \neq \emptyset$ by non-redundancy. Thus, 
    \[ P(c') \cap T_\omega = P(c) \cap T_\omega \neq \emptyset,\]
    whence $c' \neq \emptyset$.

    For Assumption \hyperlink{Ass:AP.C1}{AP.C1}, let $w\in c'$ and let $\omega\in\Omega$ be such that $w\in W_\omega$. Let $c_\omega\in C^i$ such that $c_\omega \cap W_\omega = c' \cap W_\omega$. Then, $w\in c'\cap W_\omega = c_\omega\cap W_\omega$. As $c_\omega\in \ms C_t$, there is 
    \[ w'\in x_t(w) \setminus c_\omega = (x_t(w) \cap W_\omega) \setminus c_\omega = (x_t(w) \cap W_\omega) \setminus c' \subseteq x_t(w) \setminus c'. \]

    For Assumption \hyperlink{Ass:AP.C2}{AP.C2}, let $f\in\A^\T$ with $f|_{[0,t)_\T}\in A_{<t}$ such that $x_t(\omega_0,f) \cap c' \neq \emptyset$ for some $\omega_0 \in D_{t,f}$. Hence, $c' \cap W_{\omega_0} \neq \emptyset$, which implies $\omega_0 \in D$. Then, there is $c_{\omega_0}\in C^i$ such that $c_{\omega_0} = c(A_{<t},i,g_{\omega_0})$, for some $g_{\omega_0}\colon D\to \A^i$ as in $(\ast)$, $c_{\omega_0} \cap W_{\omega_0} = c' \cap W_{\omega_0}$ and $x_t(\omega_0,f) \cap c_{\omega_0} \neq\emptyset$. As $c_{\omega_0}\in C^i$, Lemma~\ref{lemma:AP_sef_compute_D} implies that $D = D_{t,f}$, and Lemma~\ref{lemma:AP_sef_compute_P(c)} implies that $x_t(\omega_0,f)\in P(c_{\omega_0})$. As a consequence,
    \[ x_t(\omega_0,f) \in P(c_{\omega_0}) \cap T_{\omega_0} = P(c') \cap T_{\omega_0} = P(c) \cap T_{\omega_0}, \]
    where the first equality follows from \cite[Lemma~4.1]{Rapsch2024DecisionA} and the fact that $c_{\omega_0} \cap W_{\omega_0} = c' \cap W_{\omega_0}$. Hence, $x_t(\omega_0,f)\in X^i$ and $\x_t(f)\in\X^i$. By completeness of $c$, we get that $c$ is available at the random move $\x_t(f)$.
    
    Let $\omega\in D_{t,f}$. Then, $\omega\in D$. Thus, there is $c_\omega\in C^i$ as in $(\ast)$ such that $c_\omega \cap W_\omega = c' \cap W_\omega$ and 
    \[ x_t(\omega,f)\in P(c) \cap T_\omega = P(c') \cap T_\omega = P(c_\omega) \cap T_\omega, \]
    which we can show using the same argument as above. By Lemma \cite[Lemma~4.8]{Rapsch2024DecisionA}, we have $P(c_\omega) = \{x_t(w) \mid w\in c_\omega\}$. Hence, there is $\tilde f\in\A^\T$ with $\tilde f|_{[0,t)_\T} = f|_{[0,t)_\T}$ such that $(\omega,\tilde f) \in c_\omega$. Fix such an $\tilde f$. We infer that 
    \[ (\omega,\tilde f) \in c_\omega \cap W_\omega = c' \cap W_\omega. \]
    Hence, $(\omega,\tilde f) \in x_t(\omega,f) \cap c'$. We have thus completely proven that $c'\in\ms C_t$.

    Regarding \ref{thm:AP_SEF.proof.choice_compl.claim.ii}, let $(\omega,f_{<t})\in D\times A_{<t}$. Then, by $(\ast)$, there is $c_\omega\in C^i$ such that there is $f\in\A^\T$ with $f|_{[0,t)_\T} = f_{<t}$ and
    \[ (\omega,f) \in c_\omega \cap W_\omega = c' \cap W_\omega \subseteq c'. \]

    We conclude that $c'\in C^i$, and the proof is complete.
\end{proof}

\begin{proof}
    [Proof of Proposition~\ref{prop:compute_Ai(x)_mfp}]
    (Ad \ref{prop:compute_Ai(x)_mfp.Ai(x)}):~ Let $t\in\T$ and $f\in\A^\T$ such that $D_{t,f}\neq\emptyset$. Let $\x = \x_t(f)$. Furthermore, let $c\in C^i$, and represent it by $u\in\T$, $A_{<u}\in \mc H^i_u$, $D\in\ms E$, and $g\colon D\to\A^i$ via $c = c(A_{<u},i,g)$.

    Then, $c\in A^i(\x)$ iff $\x(\omega) \in P(c)$ for some and all $\omega\in D_\x$. By Lemmata \ref{lemma:AP_sef_compute_D} and~\ref{lemma:AP_sef_compute_P(c)}, this is equivalent to $t=u$, $f|_{[0,t)_\T}\in A_{<u}=A_{<t}$ and $D=D_{t,f}$. \smallskip

    (Ad \ref{prop:compute_Ai(x)_mfp.mfp}):~ Call the presumed map introduced in the claim $\p$.
    
    To see that $\p$ defines a map indeed, it suffices to show that for any pair $(t,A_{<t})$ as in the proposition and any $f\in\A^\T$ with $f|_{[0,t)_\T}\in A_{<t}$ we have $D_{t,f} \neq \emptyset$. Indeed, by \ref{def:mcH}.\ref{def:mcH.partition}, $D_{t,f}^i \neq \emptyset$, and thus, by Proposition~\ref{prop:Dfti}, $D_{t,f}\neq\emptyset$.
    
    Regarding injectivity of $\p$, we show the following stronger statement: For any two pairs $(t,A_{<t})$ and $(u,A'_{<u})$, as in the statement, such that the respective values $\mf p = \p(t,A_{<t})$ and $\mf p' = \p(u,A'_{<u})$ non-trivially intersect, it must hold true that $t=u$ and $A_{<t}=A'_{<u}$. Indeed, if $\mf p\cap\mf p' \neq\emptyset$, then both $A_{<t}$ and $A'_{<u}$ are non-empty, there are $f,f'\in \A^\T$ with $f|_{[0,t)_\T} \in A_{<t}$, $f'|_{[0,u)_\T}\in A'_{<u}$ with $\x_t(f) = \x_u(f')$. Applying the ``time'' map $\mf t$ to both sides yields $t=u$, whence $A_{<t} \cap A'_{<u}\neq\emptyset$. Thus, by Definition~\ref{def:mcH}.\ref{def:mcH.partition}, $A_{<t} = A'_{<u}$.

    Regarding the claim about the image, we recall that according to the preceding step, the values of $\p$ are pairwise disjoint. As elements of $\mc H^i_t$ are non-empty for all $t\in\T$, the values of $\p$ are non-empty. Moreover, using Proposition~\ref{prop:Dfti}.\ref{prop:Dfti.Dfti_nonempty} and Definition~\ref{def:mcH}.\ref{def:mcH.partition}, we infer that the image $\im\p$ of $\p$ defines a partition of $\X^i$. 
    
    It remains to show that this partition equals $\mf P^i$. For showing this, let $\x,\x'\in\X^i$. $\x$ and $\x'$ belong to the same element of $\im\p$ iff there are $t\in\T$, $A_{<t}\in\mc H^i_t$, $f,f'\in\A^\T$ such that $f|_{[0,t)_\T},f'|_{[0,t)_\T}\in A_{<t}$, $\x=\x_t(f)$ and $\x'=\x_t(f')$. Note that this implies that $D_{t,f}^i,D_{t,f'}^i \ne \emptyset$. By definition of the latter events, for all $\omega\in D_{t,f}^i$ and $\omega'\in D_{t,f}^i$, $f,f'$ can even be chosen such that $(\omega,f),(\omega',f')\in W$.
    
    Hence, in view of Assumption \hyperlink{Ass:AP.SEF2}{AP.SEF2} and the definition of $C^i$, this is equivalent to the existence of $c\in C^i$ that can be represented as $c=c(A_{<t},i,g)$ by $A_{<t}\in \mc H^i_t$ and $g\colon D\to\A^i$, for some $t\in\T$ and $D\in\ms E$, such that there are $f,f'\in\A^\T$ satisfying $f|_{[0,t)_\T},f'|_{[0,t)_\T}\in A_{<t}$, $\x=\x_t(f)$, $\x'=\x_t(f')$, as well as $D_{t,f} = D = D_{t,f'}$. The latter is implied by Lemma~\ref{lemma:AP_sef_compute_D}.
    
    By Part~\ref{prop:compute_Ai(x)_mfp.Ai(x)}, this statement is equivalent to $A^i(\x) \cap A^i(\x') \neq \emptyset$. By Proposition~\ref{prop:information_sets}.\ref{prop:information_sets.A(x)_partition}, this is equivalent to $A^i(\x) = A^i(\x')$ which, by definition, is equivalent to $\x,\x'\in\mf p$ for some $\mf p\in\mf P^i$. We conclude that $\mf P^i = \im \p$.
\end{proof}

\begin{proof}
    [Proof of Theorem~\ref{thm:adapted_choices_mb_functions}]
    Let the data be given as in the proposition's statement, that is, let $(\Omega,\ms E)$ be an exogenous scenario space, $\D$ be action path $\psi$-\textsc{sef} data on it, $\F$ be the induced action path $\psi$-\textsc{sef}, $i\in I$, $t\in\T$, $A_{<t}\in\mc H^i_t$, $D\in\ms E$ and $g\colon D\to \A^i$ be a map such that $c = c(A_{<t},i,g)\in\ms C_t$.
    
    Then Parts~\ref{thm:adapted_choices_mb_functions.non_red_compl}, \ref{thm:adapted_choices_mb_functions.Dx_subseteq_D}, and~\ref{thm:adapted_choices_mb_functions.mb}.($\Leftarrow$) follow directly from parts 1, 2, and 3 in \cite[Theorem~4.13]{Rapsch2024DecisionA}. It thus remains to prove Part~\ref{thm:adapted_choices_mb_functions.mb}.($\Rightarrow$). To prove this, it suffices to show that \cite[Assumption AP.C3]{Rapsch2024DecisionA} is satisfied for $(A_{<t},i,g)$ and $\ms C^i$, in view of part 4 in \cite[Theorem~4.13]{Rapsch2024DecisionA}.

    To show this, let $\x\in\X^i$ be such that $c$ is available at $\x$. There is $(t,f)\in\T\times\A^\T$ with $\x = \x_t(f)$ and $D_{t,f} \neq\emptyset$. As $\x\in\X^i$, we have, by Proposition \ref{prop:Dfti}, $D_{t,f}^i \neq\emptyset$.

    Thus, by Assumption \hyperlink{Ass:AP.SEF1}{AP.SEF1} there is a generator $\ms G(\A^i)$ of $\ms B(\A^i)$, stable under non-trivial intersections, such that for all $G\in\ms G(\A^i)$, we have $c(A_{<t},A_t^{i,G}) \in\ms C_\x^i$ (with the notation from that assumption). This proves that \cite[Assumption AP.C3]{Rapsch2024DecisionA} is satisfied, thus completing the proof.  
\end{proof}

\begin{proof}
    [Proof of Theorem~\ref{thm:link_endogenous_information_H}]
    Let $(\Omega,\ms E)$ be an exogenous scenario space, $\D$ be action path $\psi$-\textsc{sef} data on it, $\F$ be the induced action path $\psi$-\textsc{sef}, and $i\in I$.\smallskip
    
    (Ad \ref{thm:link_endogenous_information_H.perfect_recall}.$\Rightarrow$):~ Suppose that $i$ admits perfect endogenous recall, and let $t,u\in\T$ with $t<u$, $A_{<t}\in\mc H^i_t$ and $A_{<u}\in\mc H_u^i$. Note first that $A_{<u}\neq\emptyset$, by definition (\ref{def:mcH}.\ref{def:mcH.partition}). Hence, $(\mc P p_{u,t})(A_{<u})\neq \emptyset$. Thus, we cannot have $(\mc P p_{u,t})(A_{<u}) \cap A_{<t} = \emptyset$ and $(\mc P p_{u,t})(A_{<u}) \subseteq A_{<t}$ at the same time. It therefore remains to show the following assertion:
    \[ (\ast) \qquad \Bigg[  (\mc P p_{u,t})(A_{<u}) \cap A_{<t} \neq \emptyset \quad \Longrightarrow \quad \Big[ (\mc P p_{u,t})(A_{<u}) \subseteq A_{<t}, ~ \forall f,f'\in A_{<u}\colon p^i\circ f(t) = p^i\circ f'(t) \Big]\Bigg]. \]
    Let us therefore consider $A_{<u}\in\mc H_u^i$ and $A_{<t}\in\mc H^i_t$ such that  $(\mc P p_{u,t})(A_{<u}) \cap A_{<t} \neq \emptyset$.
    
    From this, we infer that there is $f\in\A^\T$ such that $f|_{[0,u)_\T} \in A_{<u}$ and $f|_{[0,t)_\T} \in A_{<t}$. As a consequence, by Definition~\ref{def:mcH}.\ref{def:mcH.partition}, both $D_{t,f}^i\neq\emptyset$ and $D_{f,u}^i \neq\emptyset$. By Proposition~\ref{prop:Dfti}.\ref{prop:Dfti.Dfti_nonempty}, we obtain $D_{t,f}^i = D_{t,f}$ and $D_{f,u}^i = D_{f,u}$. In particular, we obtain
    \[ D_{f,u}^i = D_{f,u} \subseteq D_{t,f} = D_{t,f}^i. \]
    
    Let $\omega\in D_{f,u}^i$. Upon modifying $f$ at time $u$ and later times, we can assume that $(\omega,f)\in W$. By Assumption \hyperlink{Ass:AP.SEF2}{AP.SEF2} and the definition of $C$, there are $g_t^\omega\colon D_{t,f} \to \A^i$ and $g_u^\omega\colon D_{f,u} \to \A^i$ such that $p^i \circ f(t) = g_t^\omega(\omega)$, $p^i\circ f(u) = g_u^\omega(\omega)$, and, with $c_t^\omega = c(A_{<t},i,g_t^\omega)$ and $c_u^\omega = c(A_{<u},i,g_u^\omega)$, $c_t^\omega,c_u^\omega \in C^i$. In particular, 
    $ (\omega,f) \in c_t^\omega\cap c_u^\omega\cap W_\omega$. 
    Hence, as $i$ admits perfect endogenous recall, we have 
    $c_t^\omega \cap W_\omega \supseteq c_u^\omega \cap W_\omega$ or $c_t^\omega \cap W_\omega \subseteq c_u^\omega \cap W_\omega$. As $\omega\in D_{f,u}^i$ and $(\omega,f)\in W$, there is $f'\in\A^\T$ with $(\omega,f')\in x_u(\omega,f)$ such that $p^i \circ f'(u) \neq p^i \circ f(u) = g_u^\omega(\omega)$. Hence, $(\omega,f') \in (c_t^\omega \cap W_\omega)\setminus (c_u^\omega \cap W_\omega)$. 
    We conclude that all $\omega\in D_{f,u}^i$ satisfy
    \[ (\dagger)\qquad c_t^\omega \cap W_\omega \supseteq c_u^\omega \cap W_\omega. \]

    We now show that $(\mc P p_{u,t})(A_{<u}) \subseteq A_{<t}$. For this, let $f_{<u}\in A_{<u}$. There is an $\omega\in D_{f,u}$, and thus, as $c_u^\omega\in C^i$, there is $f'\in\A^\T$ with $(\omega,f')\in c_u^\omega$ and $f'|_{[0,u)_\T} = f_{<u}$. Then, by $(\dagger)$, $(\omega,f')\in c_t^\omega$. In particular, $p_{u,t}(f_{<u}) = f'|_{[0,t)_\T} \in A_{<t}$. This shows the claimed inclusion.
    
    Next, let $f_{<u},f'_{<u}\in A_{<u}$. Again, there is $\omega\in D_{f,u}$, and as $c_u^\omega\in C^i$, there are $f,f'\in \A^\T$ with $(\omega,f),(\omega,f')\in c_u^\omega$, $f|_{[0,u)_\T} = f_{<u}$, and $f'_{[0,u)_\T} = f'_{<u}$. By $(\dagger)$, we have
    \[ p^i \circ f_{<u}(t) = p^i \circ f(t) = g_t^\omega(\omega) = p^i\circ f'(t) = p^i \circ f'_{<u}(t). \]

    (Ad \ref{thm:link_endogenous_information_H.perfect_recall}.$\Leftarrow$):~ Suppose the right-hand condition on $\mc H^i$ to be satisfied. 
    Let $c,c'\in C^i$ and $\omega\in\Omega$ such that $c\cap c' \cap W_\omega \neq \emptyset$. We have to show that $c\cap W_\omega$ and $c'\cap W_\omega$ can be compared by set inclusion.

    Represent $c,c'\in C^i$ using $t,u\in \T$, $A_{<t}\in \mc H^i_t$, $A'_{<u}\in\mc H^u_t$, $D_t,D_u\in \ms E$, and $g_t \colon D_t \to \A^i$, $g_u\colon D_u \to \A^i$ such that $c = c(A_{<t},i,g_t)$ and $c' = c(A'_{<u},i,g_u)$. By hypothesis, there is $f\in \A^\T$ with $(\omega,f) \in c \cap c'$. 
    
    Without loss of generality, we can assume that $t\le u$. First, consider the case ``$t=u$''. 
    Then, $f|_{[0,t)_\T} \in A_{<t} \cap A'_{<u}$. Hence, by Definition~\ref{def:mcH}.\ref{def:mcH.partition}, $A_{<t} = A'_{<u}$. Moreover, $g_t(\omega) = p^i \circ f(t) = p^i \circ f(u) = g_u(\omega)$. Hence, $c\cap W_\omega = c'\cap W_\omega$.

    In the other case ``$t<u$'', $ f|_{[0,t)_\T} \in (\mc P p_{u,t})(A'_{<u}) \cap A_{<t}$. Thus, by hypothesis, $(\mc P p_{u,t})(A'_{<u}) \subseteq A_{<t}$. We claim that $c\cap W_\omega \supseteq c' \cap W_\omega$. To show this, let $w'\in c'\cap W_\omega$. Represent $w' = (\omega,f')$ for some $f'\in \A^\T$. In particular, $f'|_{[0,u)_\T} \in A'_{<u}$. We infer that $f'|_{[0,t)_\T} = p_{u,t}(f'|_{[0,u)_\T}) \in A_{<t}$, and, since $f|_{[0,u)_\T}\in A'_{<u}$, $p^i \circ f'(t) = p^i \circ f(t) = g_t(\omega)$. Hence, since $w'\in W$, we get $w'\in c \cap W_\omega$.\smallskip

    (Ad \ref{thm:link_endogenous_information_H.perfect_information}):~ By definition, $i$ has perfect endogenous information iff a) all $\mf p\in\mf P^i$ are singletons and b) for all $j\in I\setminus\{i\}$ we have $\X^i \cap \X^j = \emptyset$. By Proposition~\ref{prop:compute_Ai(x)_mfp}.\ref{prop:compute_Ai(x)_mfp.mfp}, a) is equivalent to the statement that for all $t\in\T$, all $A\in \mc H_t^i$ are singletons. By Proposition~\ref{prop:compute_Ai(x)_mfp}.\ref{prop:compute_Ai(x)_mfp.Ai(x)}, and in view of Assumption~\hyperlink{Ass:AP.SEF2}{AP.SEF2}, b) is equivalent to the statement that for all $t\in\T$, all $j\in I\setminus \{i\}$, all $A\in\mc H^i_t$ and all $A'\in\mc H^j_t$, we have $A\cap A' = \emptyset$.
\end{proof}

\subsection{Section~\ref{sec:well-posedness_equilibrium}}\label{subsec:appendix.proofs.3}

\begin{proof}[Proof of Lemma~\ref{lemma:closed_history}]
    Let $(F,\ge)$ be a decision forest and $h\in H$ be a history. Let $W(h)$ the set of all maximal chains $w$ with $h\subseteq w$, which is non-empty by the Hausdorff maximality principle. \smallskip
    
    (Ad \ref{lemma:closed_history.1}):~ Let $h_0 = \bigcap W(h)$, i.e.\ the intersection of all maximal chains $w$ with $h\subseteq w$. As $h\subseteq h_0$, $h_0 \neq\emptyset$. As an intersection of chains, $h_0$ is a chain. Concerning upward-closure, let $x\in h_0$ and let $w\in W(h)$. Then $x\in w$. As $(F,\ge)$ is a forest, $\uparrow x$ is a chain. As $w$ is a chain containing $x$, $\uparrow x \cup w$ is a chain as well. As $w$ is a maximal chain, $\uparrow x \subseteq w$. Hence, $\uparrow x \subseteq h_0$.
    
    Furthermore, by construction, $h_0$ clearly satisfies Property a). Regarding Property b), let $h_1$ be a history satisfying a), that is, $h_1 \subseteq w$ for all maximal chains $w$ with $h\subseteq w$. Then, $h_1\subseteq h_0$ by definition of $h_0$.
    
    Regarding uniqueness, let $\overline h$ an upward closed chain satisfying a) and b). Then, $h_0 \subseteq \overline h$ by b) applied to $\overline h$ and a) applied to $h_0$; and $\overline h\subseteq h_0$ by b) applied to $h_0$ and a) applied to $\overline h$. Hence, $h_0 = \overline h$.\smallskip

    (Ad \ref{lemma:closed_history.2}):~ For every $w\in W(h)$, there is $x\in w$ with $h\subseteq \uparrow x$, because $h$ is non-maximal and upward closed. As $\uparrow x \subseteq w$, we infer 
    \[ \bigcap\{\uparrow x \mid x\in F\colon h\subseteq \uparrow x\} \subseteq \bigcap W(h). \]
    On the other hand, for every $x\in F$ with $h\subseteq \uparrow x$ there is a maximal chain $w$ with $\uparrow x \subseteq w$, because $(F,\ge)$ is a forest and by the Hausdorff maximality principle. In particular, $w\in W(h)$. Moreover, for any $y\in w \setminus \uparrow x$, we have $W(x) \neq W(y)$, because $(F,\ge)$ is a decision forest. As $x\ge y$ and $h\subseteq \uparrow x$, we then have $W(y) \subsetneq W(x) \subseteq W(h)$. Hence, for such $y$, there is $w'\in W(h)$ with $y\notin w'$. Thus $\bigcap W(h) = w\cap \bigcap W(h) \subseteq \uparrow x$. We conclude that 
    \[ \bigcap W(h)\subseteq \bigcap\{\uparrow x \mid x\in F\colon h\subseteq \uparrow x\}. \]
    From Part~\ref{lemma:closed_history.1}, it follows directly that $\overline h = \bigcap W(h)$ which then implies the claim.\smallskip

    (Ad \ref{lemma:closed_history.3}):~ Note that the right-hand condition is equivalent to $W(h) = W(h')$. If this is true, then, in view of Part \ref{lemma:closed_history.1} just proven, $\overline h = \bigcap W(h) = \bigcap W(h') = \overline{h'}$. Conversely, if $\overline{h} = \overline{h'}$, and $w\in W(h)$, then $h' \subseteq\overline{h'}=\overline h\subseteq w$, hence $w\in W(h')$. Thus, $W(h)\subseteq W(h')$. Repeating the argument with $h$ and $h'$ swapped, yields $W(h) = W(h')$.\smallskip

    (Ad \ref{lemma:closed_history.4}):~ Let $B_h$ the set of lower bounds of $h$, i.e.\
    \[ B_h = \{ y\in F\mid \forall x\in h\colon x\ge y\}. \]
    The statement is equivalent to saying that $\overline h=h$ if $B_h$ has no maximum, and $\overline h = h \cup \{\max B_h\}$ otherwise. It is this latter statement that is proven in the following.

    First, suppose that $B_h$ has no maximum. It suffices to show that $F\setminus h \subseteq F\setminus \overline h$, because $h\subseteq \overline h$ by construction. Let $x\in F\setminus h$. If $x\notin B_h$, then $x\notin\overline h$, because $\overline h$ is a chain containing $h$. It remains to consider the case where $x\in B_h \setminus h$. By hypothesis, there is $x'\in B_h$ with $x\ngeq x'$. By Part~\ref{lemma:closed_history.2}, $x\notin \overline h$.

    Second, suppose that $B_h$ has a maximum. Let $x\in F$ such that $h\subseteq \uparrow x$. Then, $x\in B_h$, whence $\max B_h\in\uparrow x$. Hence, by Part~\ref{lemma:closed_history.2}, $\max B_h\in\overline h$. As $h\subseteq\overline h$ by construction, we obtain $\overline h \supseteq h\cup \{\max B_h\}$. Conversely, if $x\in\overline h \setminus h$, then $x\in w\setminus h$ for any $w\in W(h)$. As $W(h)\neq\emptyset$ and $h$ is an upward closed chain, $x\in B_h$. Furthermore, for any $y\in B_h$, $h\subseteq \uparrow y$, by definition. Thus, by Part~\ref{lemma:closed_history.2} just proven, $x\in\uparrow y$, i.e.\ $x\ge y$. Thus, $x=\max B_h$. We conclude that $\overline h = h\cup \{\max B_h\}$.
\end{proof}

\begin{proof}[Proof of Lemma~\ref{lemma:histories_canonical_inj}]
    (Ad range):~ Let $x\in X$. Then $\uparrow x$ is a non-empty, upward closed chain, and it is non-maximal because $x$ is not terminal. Clearly, $x = \inf \uparrow x$ and $x\in\uparrow x$, so that $\uparrow x$ is closed by \ref{lemma:closed_history}. Further, let $\x\in\X$ and $(F,\pi,\X)$ be order consistent. If there is $\x'\in\X$ admitting $\omega\in D_\x \cap D_{\x'}$ with $\x'(\omega) \supseteq \x(\omega)$, then by order consistency, $D_{\x'}\supseteq D_\x$ and $\x'(\omega') \supseteq \x(\omega')$ for all $\omega'\in D_\x$.\smallskip

    (Ad injectivity):~ First, let $x,x'\in X$ be such that $\uparrow x = \uparrow x'$. Then, $x\supseteq x' \supseteq x$, whence $x=x'$. Second, let $\x,\x'\in\X$ be such that $D_{\x} = D_{\x'}$ and $\uparrow \x(\omega) = \uparrow \x'(\omega)$ for all $\omega\in D_\x$. Then, by the first part of the proof, $\x(\omega) = \x'(\omega)$ for all $\omega\in D_\x$, whence $\x = \x'$.
\end{proof}

\begin{lemma}\label{lemma:random_histories_h_f(h)}
    Let $(F,\pi,\X)$ be an order consistent, surely non-trivial, and maximal stochastic decision forest on an exogenous scenario space $(\Omega,\ms E)$. Let $\h$ be a random history with domain $D_\h$, $\omega\in D_\h$, and $f(\h) =  \{\x\in\X \mid \exists \omega'\in D_\x\cap D_\h\colon \x(\omega')\in\h(\omega')\}$. Then, we have:
    \[ \h(\omega) = \{ \x(\omega) \mid \x\in f(\h)\}. \]
\end{lemma}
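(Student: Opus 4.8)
The plan is to establish the two set inclusions separately: the inclusion $\{\x(\omega)\mid\x\in f(\h)\}\subseteq\h(\omega)$ will be immediate from the definition of a random history, while the reverse inclusion will require the small observation that a node lying on a history is never terminal.

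For the first inclusion, I would fix $\x\in f(\h)$. By the definition of $f(\h)$ there is some $\omega'\in D_\x\cap D_\h$ with $\x(\omega')\in\h(\omega')$, so the compatibility clause in Definition~\ref{def:random_history} applies to $\x$ and yields $D_\x\supseteq D_\h$ together with $\x(\omega'')\in\h(\omega'')$ for every $\omega''\in D_\h$. Taking $\omega''=\omega$ (which lies in $D_\h$ by hypothesis) gives $\omega\in D_\x$ and $\x(\omega)\in\h(\omega)$, which is exactly what is needed.

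For the reverse inclusion, I would fix $y\in\h(\omega)$. The key step is to show $y\in X$, i.e.\ that $y$ is not terminal in $(T_\omega,\supseteq)$: if it were, then $\downarrow y=\{y\}$, and since $\h(\omega)$ is an upward closed chain containing $y$, every $z\in\h(\omega)$ is comparable with $y$ and $y\supseteq z$ forces $z=y$, so $\h(\omega)\subseteq\uparrow y$; combined with upward-closure this gives $\h(\omega)=\uparrow y$. But $\uparrow y$ is a chain (the forest property of $(T_\omega,\supseteq)$) that admits no proper extension — any $z\in T_\omega$ with $\uparrow y\cup\{z\}$ a chain is comparable with $y$, and $y\supseteq z$ again forces $z=y$ — hence $\uparrow y$ is a maximal chain, contradicting the non-maximality required of the history $\h(\omega)$. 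So $y\in X$. Then the covering property of random moves, Definition~\ref{def:sdf}.\ref{def:sdf.X.cov}, provides $\x\in\X$ and $\omega'\in D_\x$ with $\x(\omega')=y$; applying $\pi$ and using $\pi\circ\x=\id_{D_\x}$ (Definition~\ref{def:sdf}.\ref{def:sdf.X.section}) together with $y\in T_\omega=\pi^{-1}(\{\omega\})$ forces $\omega'=\omega$, so $\omega\in D_\x$ and $\x(\omega)=y$. Finally $\omega\in D_\x\cap D_\h$ and $\x(\omega)=y\in\h(\omega)$ witness $\x\in f(\h)$, hence $y\in\{\x(\omega)\mid\x\in f(\h)\}$; combining the two inclusions gives $\h(\omega)=\{\x(\omega)\mid\x\in f(\h)\}$.

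The only genuine obstacle is the terminality argument in the second inclusion; everything else is a direct unwinding of the definitions. I note in passing that the hypotheses ``surely non-trivial'' and ``maximal'' are not actually needed for this statement, and order consistency enters only because random histories are defined under that assumption.
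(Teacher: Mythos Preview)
Your proof is correct and follows essentially the same route as the paper's. The paper compresses both inclusions into a single chain of equivalences, showing that for $\x\in\X$ one has $\x\in f(\h)\iff D_\h\subseteq D_\x$ and $\x(\omega)\in\h(\omega)$; your first inclusion is exactly the forward implication, and your second inclusion supplies explicitly the one step the paper leaves implicit, namely that every element of a history is a move (so that the covering property of $\X$ applies). Your closing remark that ``surely non-trivial'' and ``maximal'' are unused is also accurate.
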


\begin{proof}
    Given the data $(F,\pi,\X)$, $\h$, $\omega$, $f$ from the lemma, let $\x\in\X$. Then $\x\in f(\h)$ iff there is $\omega'\in D_\x\cap D_\h$ such that $\x(\omega')\in\h(\omega')$. As $\h$ is a random history, this latter statement is equivalent to saying that $D_\x\supseteq D_\h$ and for all $\omega'\in D_\h$ we have $\x'(\omega') \in \h(\omega')$. By the same argument, though, this latter statement is equivalent to saying that $D_\h\subseteq D_\x$ and $\x(\omega)\in \h(\omega)$.
\end{proof}

\begin{proof}[Proof of Proposition~\ref{prop:random_histories}]
    (Ad \ref{prop:random_histories.image_H_Tr}):~ Let $\h\in\H$. Recall that, by definition, $f(\h) = \{\x\in\X \mid \exists \omega\in D_\x\cap D_\h\colon \x(\omega)\in\h(\omega)\}$. 

    There is $\omega\in D_\h$ and $\h(\omega)$ is non-empty. There is, thus, $\x\in\X$ with $\omega\in D_\x$ and $\x(\omega)\in\h(\omega)$ because $\X$ covers $X$. Hence, $\x\in f(\h)$. We therefore see that $f(\h)$ is non-empty.
    
    Let $\x,\x'\in f(\h)$. Then, $D_\x\cap D_{\x'} \supseteq D_\h$ which is non-empty. Hence, there is $\omega\in D_\h$ with $\x(\omega),\x'(\omega)\in\h(\omega)$. As $\h(\omega)$ is a chain, $\x(\omega)$ and $\x'(\omega)$ are related via $\supseteq$. Up to changing the roles of $\x$ and $\x'$, we obtain $\x(\omega)\supseteq \x'(\omega)$, and thus, by order consistency, $\x\ge_\X \x'$. Thus, $f(\h)$ is a chain.

    Let $\x\in f(\h)$ and $\x'\in\X$ such that $\x'\ge_\X \x$. Then, $D_{\x'}\supseteq D_\x\supseteq D_\h$, and for all $\omega\in D_\h$ we have $\x'(\omega) \supseteq \x(\omega)$. As $\h(\omega)$ is upward closed, $\x'(\omega)\in\h(\omega)$ for all $\omega\in D_\h$. Hence, $\x'\in f(\h)$, and $f(\h)$ is upward closed.

    Regarding non-maximality, we first note that for any $\omega\in D_\h$, there is $y\in F\setminus \h(\omega)$ with $x \supsetneq y$ for all $x\in\h(\omega)$, by non-maximality of $\h(\omega)$. In that case we infer that, for all $\x\in f(\h)$, $\omega\in D_\x$ and $\x(\omega)\in\h(\omega)$, whence $\x(\omega) \supsetneq y$. 
    
    If a) there is $\omega\in D_\h$ and $y\in X\setminus \h(\omega)$ with $x \supsetneq y$ for all $x\in\h(\omega)$, then there is $\x'\in\X$ with $y = \x'(\omega)$, whence $\x>_\X \x'$. Then, we conclude that $f(\h)$ is a non-maximal chain in $(\X,\ge_\X)$ and that $f(\h) \in H_\Tr$.
    
    If, however, b) there is no such pair $(\omega,y)$, then $f(\h)$ is a maximal chain in $(\X,\ge_\X)$ and for any $\omega\in D_\h$, there is a terminal node $y\in F \setminus \h(\omega)$ with $x \supsetneq y$ for all $x\in\h(\omega)$. We infer that there is a random terminal node $\y\colon \{\omega\} \to \{y\}$, and $\x >_\Tr\y = \{(\omega,y)\}$. Hence, $D_\h \subseteq \bigcap_{\x\in h_\Tr} D_\x$ and for any $\omega\in D_\h$, there is $w_\omega\in W_\omega$ with $\y(\omega) = \{w_\omega\}$ whence $w_\omega \in \bigcap_{\x\in h_\Tr} \x(\omega)$. Moreover, $D_\h\neq \emptyset$, which implies that $f(\h)$ is not a maximal chain in $(\Tr,\ge_\Tr)$. $D_\h \in\ms E$ by hypothesis. We have shown that $f(\h)$ is contained in the set described in the claim by the disjunction of conditions a) and b). \smallskip

    For the converse statement, let $h_\Tr$ be an element of the set described in the claim by the disjunction of conditions a) and b). First, suppose a) that $h_\Tr\in H_\Tr$ is a non-maximal chain in $(\X,\ge_\X)$. Then, there is $\x_1\in\X$ with $\x>_\X \x_1$ for all $\x\in h_\Tr$. Let $D = D_{\x_1}$. Second, suppose b) that $h_\Tr$ is a maximal chain in $(\X,\ge_\X)$ admitting non-empty $D\in\ms E\setminus\{\emptyset\}$ with $D\subseteq \bigcap_{\x\in h_\Tr} D_\x$ such that for any $\omega\in D$ there is $w\in W_\omega$ with $w\in \bigcap_{\x\in h_\Tr} \x(\omega)$.

    In both cases, we have a non-empty event $D$ and we define a map $\h$ with domain $D$ by letting 
    \[ (\ast)\qquad\h(\omega) = \{\x(\omega) \mid \x\in h_\Tr\}, \]
    for all $\omega\in D$. We claim that $\h\in\H$ and that $f(\h) = h_\Tr$.
    
    Regarding the first of these claims, the domain $D$ is by construction a non-empty event. Moreover, for any $\omega\in D$, $\h(\omega)\in H$. Indeed, let $\omega\in D$. Then, as $h_\Tr$ is a non-empty chain, $\h(\omega)$ is so, too, by definition of $\ge_\Tr$. Moreover, let $x\in\h(\omega)$ and $x'\in\uparrow x$. Then, there is $\x\in h_\Tr$ with $x=\x(\omega)$, and $x'\in X$. Accordingly, there is $\x'\in\X$ with $x' = \x'(\omega)$. Thus, by order consistency, $\x'\ge_\X \x$. As $h_\Tr$ is upward closed, $\x'\in h_\Tr$, whence $x'=\x'(\omega)\in h(\omega)$. Hence, $\h(\omega)$ is upward closed. Furthermore, $\h(\omega)$ is not maximal as a chain. To show this, we distinguish the two cases from before. In case a), $\x>_\Tr \x_1$ by construction, in particular, $\x(\omega) \supsetneq \x_1(\omega)$ for all $\x\in h_\Tr$. In case b), $\x(\omega) \supsetneq w_\omega$ for all $\x\in h_\Tr$. Hence, $\h(\omega)$ is not maximal as a chain in $(F,\supseteq)$. This shows that for any $\omega\in D$, $\h(\omega)\in H$. Further, let $\x\in\X$ admit $\omega\in D_\x\cap D$ with $\x(\omega)\in\h(\omega)$. Then, by order consistency and \cite[Proposition 2.4]{Rapsch2024DecisionA}, $\x\in h_\Tr$. Hence, $D_\x\supseteq D$ and $\x(\omega')\in\h(\omega')$ for all $\omega'\in D$. It is thus proven that $\h\in\H$.

    Regarding the second of the two claims above, namely that $f(\h) = h_\Tr$, let $\x\in\X$. By definition of $f$, $\x\in f(\h)$ holds true iff there is $\omega\in D_\x\cap D_\h$ with $\x(\omega)\in\h(\omega)$. By definition of $\h$, the latter is equivalent to saying that there is $\omega\in D_\x\cap D$ with $\x(\omega) = \x'(\omega)$ for some $\x'\in h_\Tr$. By order consistency and \cite[Proposition 2.4]{Rapsch2024DecisionA}, and by definition of $D$, this is equivalent to $\x\in h_\Tr$. Thus, $f(\h) = h_\Tr$.    
    \smallskip

    (Ad \ref{prop:random_histories.faithful}):~ Let $\h_1,\h_2\in\H$ such that $f(\h_1) = f(\h_2)$. Let $\omega\in D_{\h_1}\cap D_{\h_2}$. \\Then, by Lemma~\ref{lemma:random_histories_h_f(h)}, 
    \[ \h_1(\omega) = \{\x(\omega) \mid \x\in f(\h_1)\} = \{\x(\omega) \mid \x\in f(\h_2)\} = \h_2(\omega). \]
    Hence, there is a map $\h$ on $D = D_{\h_1} \cup D_{\h_2}$ satisfying $\h|_{D_{\h_k}} = \h_k$ for both $k=1,2$. By construction, $D$ is a non-empty event and $\h(\omega)$ is a history in $(F,\supseteq)$ for any $\omega\in D$. Moreover, we clearly have $f(\h_1) = f(\h) = f(\h_2)$. Thus, if there are $\x\in\X$ and $\omega\in D_\x\cap D_\h$ with $\x(\omega)\in\h(\omega)$, then $\x\in f(\h) = f(\h_1) = f(\h_2)$, and thus, $D_\x\supseteq D_{\h_1} \cup D_{\h_2} = D_\h$ and for all $\omega'\in D_\h$ we have $\x(\omega')\in \h(\omega')$, since $\h_1,\h_2\in\H$.\smallskip

    (Ad \ref{prop:random_histories.closed}):~ Let $\h\in\H$ be a closed random history. We have to show that $\overline{f(\h)} = f(\h)$. In view of Lemma~\ref{lemma:closed_history} it remains to show that, if $f(\h)$ admits an infimum in $(\Tr,\ge_\Tr)$, then $\inf f(\h)\in f(\h)$. Suppose that $f(\h)$ has an infimum. There is $\omega\in D_\h$. From order consistency and Lemma~\ref{lemma:random_histories_h_f(h)}, it follows directly that $(\inf f(\h))(\omega)$ is an (and the) infimum of $\h(\omega)$ in $(F,\supseteq)$. As $\h(\omega)$ is a closed history by hypothesis, we get $(\inf f(\h))(\omega)\in\h(\omega)$. Thus $\inf f(\h)\in\X$, $D_\h\subseteq D_{\inf f(\h)}$, and $\inf f(\h) \in f(\h)$.
\end{proof}

\begin{proof}[Proof of Lemma~\ref{lemma:R_continuous_in_h}]
    Let $\F$ be a stochastic pseudo-extensive form on an exogenous scenario space $(\Omega,\ms E)$, $s\in S$, $h\in H$ and $w\in W$. It suffices to show that for all $x\in X$, we have
    \[ x\subseteq \bigcap h \qquad \Longleftrightarrow \qquad x\subseteq \bigcap \overline h. \]
    This is trivial if $h=\overline h$. Therefore, suppose that the latter is not the case. By Lemma~\ref{lemma:closed_history}, $h$ has an infimum and $\overline h = h \cup \{\inf h\}$. As $h\subseteq \overline h$, the implication ``$\Leftarrow$'' is evident. For the converse one, suppose that $x\subseteq \bigcap h$. In other words, $x\subseteq y$ for all $y\in h$. By definition of the infimum, we infer $x\subseteq \inf h = \bigcap \overline h$.
\end{proof}

\begin{lemma}\label{lemma:extension_of_X_strategies}
    Let $\F$ be a stochastic pseudo-extensive form, $i\in I$, and $s^i_0 \colon X^i_0 \to C^i$ be a map on some set of $i$'s moves $X^i_0 \subseteq X^i$. Then $s^i_0$ is the restriction of an $X$-strategy iff 
    \begin{enumerate}
        \item\label{lemma:extension_of_X_strategies.1} for all $x\in X^i_0$, we have $s^i(x) \in A^i(x)$;
        \item\label{lemma:extension_of_X_strategies.2} for all $x,x'\in X^i_0$ with $A^i(x) = A^i(x')$, we have $s^i(x) = s^i(x')$.
    \end{enumerate}
\end{lemma}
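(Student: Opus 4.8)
The statement is an ``extension lemma'' for $X$-strategies: a partial assignment $s^i_0$ on some subset $X^i_0 \subseteq X^i$ extends to a full $X$-strategy if and only if it satisfies the two obvious local compatibility conditions (availability, and constancy on intersections of information sets). The plan is to prove the two directions separately.

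\emph{Necessity.} Suppose $s^i_0$ is the restriction of an $X$-strategy $s^i \colon X^i \to C^i$. Then condition~\ref{lemma:extension_of_X_strategies.1} holds for all $x \in X^i$ by the definition of an $X$-strategy, hence in particular for all $x \in X^i_0$. Condition~\ref{lemma:extension_of_X_strategies.2} is just the second defining property of an $X$-strategy restricted to $X^i_0$. So this direction is immediate and requires no real argument.

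\emph{Sufficiency.} This is the substantive direction. Assume $s^i_0$ satisfies \ref{lemma:extension_of_X_strategies.1} and~\ref{lemma:extension_of_X_strategies.2}; I want to build an $X$-strategy $s^i$ with $s^i|_{X^i_0} = s^i_0$. The key structural input is Proposition~\ref{prop:information_sets}: the sets $\{P(c) \mid c \in C^i\}$ partition $X^i$ (Part~\ref{prop:information_sets.P(c)_partition}), and $\{A^i(x) \mid x \in X^i\}$ partitions $C^i$ (Part~\ref{prop:information_sets.A(x)_partition}), with $A^i(x) = A^i(x')$ iff $x,x'$ lie in a common $P(c)$ (Part~\ref{prop:information_sets.A(x)=A(x')}). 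Thus the partition of $X^i$ into immediate-predecessor sets is exactly the partition into ``same available-choice set'' classes. For each such class $P \in \{P(c) \mid c\in C^i\}$, consider $X^i_0 \cap P$. If it is nonempty, pick any $x_P \in X^i_0 \cap P$ and set $s^i(x) := s^i_0(x_P)$ for all $x \in P$; condition~\ref{lemma:extension_of_X_strategies.2} guarantees $s^i_0$ is constant on $X^i_0 \cap P$, so this is well-defined and agrees with $s^i_0$ on $X^i_0 \cap P$. If $X^i_0 \cap P = \emptyset$, pick an arbitrary $x_P \in P$ (nonempty since it is a block of a partition of $X^i$, which itself is nonempty whenever $C^i \ne \emptyset$ — and if $C^i = \emptyset$ there is nothing to prove) and an arbitrary $c \in A^i(x_P)$, which is nonempty by definition of $x_P \in X^i$, and set $s^i(x) := c$ for all $x \in P$. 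By construction $s^i(x) \in A^i(x) = A^i(x_P)$ for every $x \in P$ in both cases, using Part~\ref{prop:information_sets.A(x)=A(x')} and that availability sets are constant on $P$; so $s^i$ satisfies property~1 of an $X$-strategy. Property~2 holds because whenever $A^i(x) = A^i(x')$, both $x,x'$ lie in the same block $P$ by Part~\ref{prop:information_sets.A(x)=A(x')}, hence $s^i(x) = s^i(x')$ by construction. Finally $s^i|_{X^i_0} = s^i_0$ by the matching established block-by-block. This produces the desired extension.

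\emph{Expected main obstacle.} There is no deep obstacle here; the proof is essentially bookkeeping on top of Proposition~\ref{prop:information_sets}. The one point deserving care is the appeal to choice: for blocks $P$ disjoint from $X^i_0$ one selects $x_P$ and a choice $c \in A^i(x_P)$ arbitrarily, and one also needs $A^i(x_P) \ne \emptyset$, which holds precisely because $x_P \in X^i$ means $i \in J(x_P)$, i.e.\ $A^i(x_P) \ne \emptyset$ by definition of $X^i$. The only other thing to double-check is well-definedness of $s^i$ on blocks meeting $X^i_0$, which is exactly what hypothesis~\ref{lemma:extension_of_X_strategies.2} supplies. One should also note (or silently use) that the two partitions in Proposition~\ref{prop:information_sets} are genuinely defined under the hypotheses of a stochastic pseudo-extensive form, which is the ambient assumption of the lemma.
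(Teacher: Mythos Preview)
Your proof is correct and takes essentially the same approach as the paper: both extend $s^i_0$ block-by-block along the partition of $X^i$ into sets $\{x : A^i(x) = A^i(x_0)\}$, using hypothesis~\ref{lemma:extension_of_X_strategies.2} for well-definedness and hypothesis~\ref{lemma:extension_of_X_strategies.1} for availability. The only cosmetic difference is that the paper fills in blocks disjoint from $X^i_0$ by first choosing a full $X$-strategy $s_1^i$ (noting $S^i\neq\emptyset$) and using its values, whereas you pick an arbitrary $c\in A^i(x_P)$ directly; both are equally valid.
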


\begin{proof}
    $S^i$ is non-empty because $A^i(\mf p) \neq \emptyset$ for all $\mf p\in \mf P^i$, by definition. Hence, in view of Proposition~\ref{prop:strategies} we can choose an $X$-strategy $s_1^i$. Then, define $s^i$ as follows. Let $x\in X^i$. If there is $x_0\in X^i_0$ with $A^i(x) = A^i(x_0)$, let $s^i(x) = s^i_0(x_0)$. Else, let $s^i(x) = s_1^i(x)$. By Property~\ref{lemma:extension_of_X_strategies.2}, $s^i$ is well-defined.

    By Property~\ref{lemma:extension_of_X_strategies.1} and the fact that $s^i_1$ is an $X$-strategy, we clearly have $s^i(x)\in A^i(x)$ for all $x\in X^i$. Moreover, if $x,x'\in X^i$ satisfy $A^i(x) = A^i(x')$, there are two cases. First, if there is $x_0\in X^i_0$ with $A^i(x) = A^i(x_0)$, then we also have $A^i(x') = A^i(x_0)$, hence $s^i(x) = s^i_0(x_0) = s^i(x')$. Else, there is no such $x_0$. Then, neither is there $x_0\in X_0^i$ with $A^i(x') = A^i(x_0)$. Hence, $s^i(x) = s_1^i(x) = s_1^i(x') = s^i(x')$, because $s_1^i$ is an $X$-strategy.
\end{proof}

\begin{proof}[Proof of Theorem~\ref{thm:well_posed.onto_outcomes}]
    Let $h\in H$ and $w\in\bigcap h$. Let $i\in I$ and $X_0^i = \{x\in X^i \mid w\in x\subseteq \bigcap h\}$. 
    
    As $F$ is a decision forest, $\uparrow \{w\} = \{ y\in F \mid w\in y\}$ is a maximal chain. Let $x\in X_0^i$. Then, $x\in X$ and, thus, $\uparrow x$ is not a maximal chain. As a consequence, there is $y\in\downarrow x\setminus\{x\}$ with $w\in y$, and by Axiom~\ref{def:SEF}.\ref{def:SEF.enough_choices}, there is $c\in A^i(x)$ with $c\supseteq y$. Let $s_0^i(x) = c$. 
    
    This defines a map $s^i_0\colon X^i_0 \to C^i$ with $s^i_0(x) \in A^i(x)$ for all $x\in X_0^i$. By the Heraclitus property from Lemma~\ref{lemma:Heraclitus_property}, $x,x'\in X^i_0$ with $x\neq x'$ necessarily satisfy $A^i(x) \cap A^i(x') = \emptyset$. Hence, by Lemma~\ref{lemma:extension_of_X_strategies}, $s_0^i$ can be extended to an $X$-strategy $s^i$, which uniquely corresponds to a strategy by Proposition~\ref{prop:strategies}. Letting $s = (s^i)_{i\in I}$, we have $s\in S$ and, by construction, $w\in R(s,w\mid h)$.
\end{proof}

\begin{proof}[Proof of Proposition~\ref{prop:scwise_SEF}]
    Let $\F$ be a stochastic pseudo-extensive form on an exogenous scenario space $(\Omega,\ms E)$, and let $\omega\in\Omega$. Let $C_\omega$ be defined as in the claim. Let $\Omega_\omega = \{\omega\}$ and $\ms E_\omega = \mc P\Omega_\omega$.\smallskip 

    (Ad ``$T_\omega$ induces a stochastic decision forest''):~ $T_\omega$ is a connected component of the decision forest $F$. Hence, by \cite[Theorem~1.7]{Rapsch2024DecisionA}, it is a decision tree over $W_\omega$. With $\pi_\omega = \pi|_{T_\omega}$ and ${\X_\omega}$ being the set of maps $\Omega_\omega \to X$, $(T_\omega,\pi_\omega,{\X_\omega})$ is a stochastic decision forest on $(\Omega_\omega,\ms E_\omega)$.\smallskip

    (Ad ``$C_\omega$ is a family of sets of $\X^i_\omega$-complete choices and evaluation maps are injective''):~ Let $c\in C^i$ such that $c\cap W_\omega \neq \emptyset$. There is a non-empty set $F_c\subseteq F$ of nodes such that $c = \bigcup F_c$. As $c\cap W_\omega = \bigcup (F_c \cap T_\omega)$, $F_c \cap T_\omega$ is non-empty. Moreover, $F_c \cap T_\omega$ is a set of nodes in $T_\omega$. Hence, $c\cap W_\omega$ is a choice in $(T_\omega,\pi_\omega,{\X_\omega})$.
    
    Let ${X_\omega}$, $P_\omega(.)$, $A^i_\omega(.)$, $J_\omega(.)$, ${\X^i_\omega}$, etc.\ be associated to $({T_\omega},\pi_\omega,{\X_\omega})$ and $C_\omega$ as $X$, $P(.)$, $A^i(.)$, $J(.)$, $\X^i$ etc.\ are associated to $(F,\pi,\X)$ and $C$. We infer that for any $c\subseteq W$ we have
    \begin{equation}\label{eq:P_omega(c_omega)}
        P_\omega(c\cap W_\omega) = P(c\cap W_\omega) = P(c) \cap T_\omega,
    \end{equation}
    using the definition of $P$ and $P_\omega$, as well as \cite[Lemma~4.1]{Rapsch2024DecisionA}. 
    Furthermore, note that for all $x\in {X_\omega}$ and $\x\in {\X_\omega}$ we have
    \begin{equation}\label{eq:Ai'(x)}
        A^i_\omega(x) = \{ c \cap W_\omega \mid c\in A^i(x)\}, \qquad A^i_\omega(\x) = A^i_\omega(\x(\omega)). 
    \end{equation}
    As a consequence, we have
    \begin{equation}\label{eq:Xi'}
        {X^i_\omega} = X^i\cap T_\omega, \qquad {\X^i_\omega} =  (X^i\cap T_\omega)^{\Omega_\omega}.
    \end{equation}
    Moreover, ${\X^i_\omega}\bullet\Omega_\omega = {\X^i_\omega} \times \Omega_\omega$, and the evaluation map from that set to ${X_\omega}$ is clearly injective.
    \smallskip

    For any $i\in I$ and $\x\in{\X^i_\omega}$, let $(\ms F_\omega)^{i}_{\x} = \ms E_\omega$ and $(\ms C_\omega)^{i}_{\x} = \emptyset$. Let $\ms F_\omega = ((\ms F_\omega)^i)_{i\in I}$ and $\ms C_\omega = ((\ms C_\omega)^{i})_{i\in I}$. Let
    \[ \F_\omega = ({T_\omega},\pi_\omega,{\X_\omega},I,\ms F_\omega,\ms C_\omega,C_\omega). \]

    (Ad ``$\ms F_\omega$ and $\ms C_\omega$ define adequate \textsc{eis} and reference choice structures'' and Axioms~\ref{def:SEF.endo_exo_compatible}, \ref{def:SEF.choice_completeness}):~ For any $i\in I$, $(\ms F_\omega)^{i}$ defines an exogenous information structure on ${\X_\omega^i}$ and $(\ms C_\omega)^{i}$ defines a reference choice structure on ${\X_\omega^i}$. Trivially, any $c\in C^i_\omega$ is $(\ms F_\omega)^{i}$-$(\ms C_\omega)^{i}$-adapted. Axioms  \ref{def:SEF.endo_exo_compatible} and~\ref{def:SEF.choice_completeness} are trivially satisfied.\smallskip

    (Ad Axiom~\ref{def:SEF.P(c)}):~ Let $i\in I$ and $c_\omega,c'_\omega\in C^i_\omega$ such that $P_\omega(c_\omega) \cap P_\omega(c'_\omega) \neq \emptyset$. There are $c,c'\in C^i$ with $c_\omega = c\cap W_\omega$ and $c'_\omega = c'\cap W_\omega$. Hence, using Equation~\ref{eq:P_omega(c_omega)}, we infer $P(c) \cap P(c') \neq \emptyset$. Thus, by Axiom~\ref{def:SEF.P(c)} applied to $\F$, we get $P(c) = P(c')$. Using Equation~\ref{eq:P_omega(c_omega)}, we obtain
    \[ P_\omega(c_\omega) = P(c) \cap T_\omega = P(c') \cap T_\omega = P_\omega(c'_\omega). \]
    Furthermore, Axiom~\ref{def:SEF.P(c)} applied to $c$, $c'$, and $\omega$ implies that $c_\omega = c'_\omega$ or $c_\omega \cap c'_\omega = \emptyset$. Hence, $\F_\omega$ satisfies Axiom~\ref{def:SEF.P(c)}.\smallskip

    (Ad Axiom~\ref{def:SEF.outcomes_faithful}):~ Let $x\in T_\omega$ and $(c^i_\omega)_{i\in J_\omega(x)} \in \bigtimes_{i\in J_\omega(x)} C^i_\omega$. Using Equation~\ref{eq:Ai'(x)}, we get $J_\omega(x) = J(x)$. Hence, there is $(c^i)_{i\in J(x)} \in \bigtimes_{i\in J(x)} C^i$ such that $c^i_\omega = c^i \cap W_\omega$ for all $i\in J(x)$. Then, Axiom~\ref{def:SEF.outcomes_faithful} applied to $x$ and $(c^i)_i$ yields:
    \[ x\cap \bigcap_{i\in J_\omega(x)} c^i_\omega = x\cap \bigcap_{i\in J(x)} c^i \neq \emptyset. \]

    (Ad Axiom~\ref{def:SEF.weak_separation}):~ Let $y,y'\in T_\omega$ disjoint. By Axiom~\ref{def:SEF.weak_separation} applied to $\F$, there are $i\in I$ and $c,c'\in C^i$ such that $y\subseteq c$, $y'\subseteq c'$, and $c\cap c'\cap W_\omega = \emptyset$. Then, $c_\omega = c\cap W_\omega\neq \emptyset$ and $c'_\omega = c'\cap W_\omega \neq \emptyset$, hence, $c_\omega,c'_\omega\in C^i_\omega$, and $y\subseteq c_\omega$, $y'\subseteq c'_\omega$, and $c_\omega\cap c'_\omega = \emptyset$.\smallskip

    (Ad Axiom~\ref{def:SEF.separation}):~ For this Axiom, suppose that $\F$ is even a stochastic extensive form. Let $y,y'\in T_\omega$ disjoint. By Axiom~\ref{def:SEF.separation} applied to $\F$, there are $x\in X$, $i\in I$ and $c,c'\in C^i$ such that $y\subseteq x\cap c$, $y'\subseteq x\cap c'$, $c\cap c'\cap W_\omega = \emptyset$, and $x\in P(c) \cap P(c') \cap W_\omega$. Then, $c_\omega = c\cap W_\omega\neq \emptyset$ and $c'_\omega = c'\cap W_\omega \neq \emptyset$, hence, $c_\omega,c'_\omega\in C^i_\omega$, and $y\subseteq x\cap c_\omega$, $y'\subseteq x\cap c'_\omega$, $c_\omega\cap c'_\omega = \emptyset$, and $x\in P_\omega(c) \cap P_\omega(c')$, in view of Equation~\ref{eq:P_omega(c_omega)}.\smallskip

    (Ad Axiom~\ref{def:SEF.enough_choices}):~ Let $x\in T_\omega$, $i\in J_\omega(x)$, and $y\in \downarrow x \setminus \{x\}$. Using Equation~\ref{eq:Ai'(x)}, we get $J_\omega(x) = J(x)$. Hence, by Axiom~\ref{def:SEF.enough_choices} applied to $\F$, there is $c\in A^i(x)$ with $c\supseteq y$. Hence, by Equation~\ref{eq:Ai'(x)}, $c_\omega = c\cap W_\omega\in A^i_\omega(x)$ and $c_\omega \supseteq y$.
\end{proof}

\begin{proof}[Proof of Theorem~\ref{thm:SEF_well-posed}]
    The claimed equivalences follow directly from the following statements:
    \begin{enumerate}
        \item\label{thm:SEF_well-posed.proof.H} $H = \bigcup_{\omega\in\Omega} H_\omega$, where $H_\omega$ is the set of histories in $(T_\omega,\supseteq)$;
        \item\label{thm:SEF_well-posed.proof.Homega_Womega_comp} for all $\omega\in\Omega$ and $h\in H_\omega$, we have $\bigcap h \subseteq W_\omega$;
        \item\label{thm:SEF_well-posed.proof.si_restriction} if $s^i\in S^i$ is an $X$-strategy for agent $i\in I$ and $\omega\in\Omega$, then the map $s^i_\omega$ with domain $X^i\cap T_\omega$ defined by the assignment $x \mapsto s^i(x) \cap W_\omega $ defines an $X$-strategy in $(T_\omega,I,C_\omega)$;
        \item\label{thm:SEF_well-posed.proof.siomega_extension} if conversely $s^i_\omega$ is an $X$-strategy for an agent $i\in I$ in $(T_\omega,I,C_\omega)$, for some $\omega\in\Omega$, then there is an $X$-strategy $s^i$ for $i$ in $\F$ such that $s^i_\omega(x) = s^i(x) \cap W_\omega$ for all $x\in X^i\cap T_\omega$;
        \item\label{thm:SEF_well-posed.proof.R=R_omega} for all $s\in S$, $\omega\in\Omega$, $h\in H_\omega$ and $w\in \bigcap h$ we have
        \[ R(w,s \mid h) \cap W_\omega = R_\omega(w,s_\omega \mid h), \]
        where $s_\omega = (s^i_\omega)_{i\in I}$, $s^i_\omega$ is the restriction of $s^i$ according to \ref{thm:SEF_well-posed.proof.si_restriction} for each $i\in I$, and $R_\omega$ is the map $R(\F_\omega)$ associated to the $\psi$-\textsc{sef} $\F_\omega$ according to Definition~\ref{def:sef_well-posed}.
    \end{enumerate}

   (Ad \ref{thm:SEF_well-posed.proof.H}):~ For $\omega\in\Omega$, let $H_\omega$ be the set of histories in $(T_\omega,\supseteq)$. Then, similarly, $H$ is the disjoint union of all $H_\omega$, because histories are chains, and the connected components of $(F,\supseteq)$ are given by the collection of all $T_\omega$, $\omega\in\Omega$.\smallskip

   (Ad \ref{thm:SEF_well-posed.proof.Homega_Womega_comp}):~ For all $\omega\in\Omega$, $W_\omega$ is the root of $(T_\omega,\supseteq)$, whence the claim using Part~\ref{thm:SEF_well-posed.proof.H}.\smallskip

    (Ad \ref{thm:SEF_well-posed.proof.si_restriction}):~ Let $s^i\in S^i$ is an $X$-strategy for agent $i\in I$ and $\omega\in\Omega$. By Equation~\ref{eq:Xi'} established in the proof of Proposition~\ref{prop:scwise_SEF} we have ${X^i_\omega} = X^i \cap T_\omega$. Moreover, if $x\in X^i\cap T_\omega$, then $s^i(x) \in A^i(x)$ by definition of $s^i$. Hence, by Equation~\ref{eq:Ai'(x)} established in the proof of Proposition~\ref{prop:scwise_SEF}, we get $s^i(x) \cap W_\omega\in A^i_\omega(x)$. 
    
    Furthermore, for all $x,x'\in {X^i_\omega}$ with $A^i_\omega(x) = A^i_\omega(x')$ we have $A^i(x) = A^i(x')$. Indeed, if $x,x'\in {X^i_\omega}$ satisfy $A^i_\omega(x) = A^i_\omega(x')$, then there is $c\in C^i$ with $x,x'\in P_\omega(c\cap W_\omega) = P(c\cap T_\omega)\subseteq P(c)$, by Proposition~\ref{prop:information_sets} and Equation~\ref{eq:P_omega(c_omega)}. Thus, by Proposition~\ref{prop:information_sets}, $A^i(x) = A^i(x')$. As $s^i$ is an $X$-strategy, we obtain $s^i(x) = s^i(x')$. As a consequence, $s^i(x) \cap W_\omega = s^i(x') \cap W_\omega$.\smallskip

    (Ad \ref{thm:SEF_well-posed.proof.siomega_extension}):~ Conversely, let $s^i_\omega$ be an $X$-strategy for an agent $i\in I$ in $(T_\omega,I,C_\omega)$. Let $X_0^i$ be a representative system of the partition $\{P_\omega(c_\omega) \mid c_\omega\in C^i_\omega\}$, see Proposition~\ref{prop:information_sets}, Part~\ref{prop:information_sets.P(c)_partition}. Then, by definition of $C^i_\omega$ and by Equation~\ref{eq:Ai'(x)}, there is a map $\tilde s^i_0 \colon X^i_0 \to C^i$ with $\tilde s^i_0(x) \cap W_\omega = s^i_\omega(x)$ and $\tilde s^i_0(x) \in A^i(x)$, for all $x\in X^i_0$. For all $x,x'\in X^i_0$ with $A^i(x) = A^i(x')$, we have $A^i_\omega(x) = A^i_\omega(x')$, by Equation~\ref{eq:Ai'(x)}. Thus, $x,x'\in P_\omega(c_\omega)$ for some $c_\omega\in C^i_\omega$, by Proposition~\ref{prop:information_sets}, Part~\ref{prop:information_sets.A(x)=A(x')}, applied to $\F_\omega$. As $X^i_0$ is a representative system, we infer $x=x'$, whence $\tilde s^i_0(x) = \tilde s^i_0(x')$.

    Hence, by Lemma~\ref{lemma:extension_of_X_strategies} $\tilde s^i_0$ can be extended to an $X$-strategy $s^i$ for $i$ on $\F$. Let $x\in X^i \cap T_\omega$. Then, by Equation~\ref{eq:Xi'}, $x\in {X^i_\omega}$ and consequently there is $c_\omega\in C^i_\omega$ with $x\in P_\omega(c_\omega)$. There are $x_0\in X^i_0$ with $x_0\in P_\omega(c_\omega)$ and $c\in C^i$ with $c_\omega = c\cap W_\omega$. Hence, by Equation~\ref{eq:P_omega(c_omega)}, $x,x_0\in P(c)$ and $x,x_0 \in P_\omega(c_\omega)$, which implies both $A^i(x) = A^i(x_0)$ and $A^i_\omega(x) = A^i_\omega(x_0)$, by Proposition~\ref{prop:information_sets}, Part~\ref{prop:information_sets.A(x)=A(x')}.
    As $s^i_\omega$ and $s^i$ are $X$-strategies, we infer that
    \[ s^i_\omega(x) = s^i_\omega(x_0) = \tilde s^i(x_0) \cap W_\omega = s^i(x_0) \cap W_\omega = s^i(x) \cap W_\omega. \]

    (Ad \ref{thm:SEF_well-posed.proof.R=R_omega}):~ Let $s\in S$, $\omega\in\Omega$, $h\in H_\omega$ and $w\in \bigcap h$. 
    Then, 
    \begin{align*}
        R(w,s\mid h) \cap W_\omega =&\, \bigcap \Big\{ s^i(x) \cap W_\omega \mid x \in X,\,i\in J(x)\colon~ w\in x \subseteq \bigcap h \Big\} \\
        =&\, \bigcap \Big\{ s^i_\omega(x) \mid x \in X_\omega,\,i\in J_\omega(x)\colon~ w\in x \subseteq \bigcap h \Big\} \\
        =&\, R_\omega(w,s \mid h).
    \end{align*}
    Indeed, $x\in X$ and $i\in J(x)$ satisfy $w\in x\subseteq \bigcap h$ iff $x\in X^i \cap T_\omega$ and $w\in x\subseteq \bigcap h$, because the collection of $W_{\omega'}$, $\omega'\in\Omega$, equals the set of roots of the decision forest $F$, and is in particular a partition of $W$ (\cite[Theorem~1.7]{Rapsch2024DecisionA}). But $X^i\cap T_\omega = X^i_\omega$ by Equation~\ref{eq:Xi'}. Hence, $x\in X^i\cap T_\omega$ is equivalent to $x\in X_\omega$ and $i\in J_\omega(x)$.
\end{proof}

\begin{proof}[Proof of Corollary \ref{cor:SEF_existence=~order-theoretic_properties}]
    This is a direct consequence of Theorem~\ref{thm:SEF_well-posed} and \cite[Theorem~2]{AlosFerrer2008} (alias \cite[Theorem~5.2]{AlosFerrer2016}). 
    
    Every $\psi$-\textsc{sef} with decision forest $F$ satisfies Property~\ref{def:sef_well-posed}.\ref{def:sef_well-posed.well_posed.existence} iff for every $\psi$-sef $\F$ with decision forest $F$ on an exogenous scenario space $(\Omega,\ms E)$, the induced classical $\psi$-\textsc{sef}s $(T_\omega,I,C_\omega)$ do so for all $\omega\in\Omega$, by Theorem~\ref{thm:SEF_well-posed}. This is the case iff for all connected components $T$ of $F$, all classical $\psi$-\textsc{sef}s with decision tree $T$ satisfy Property~\ref{def:sef_well-posed}.\ref{def:sef_well-posed.well_posed.existence}. Indeed, the ``if'' part is clear, and the ``only if'' part is shown using the exogenous scenario space $\Omega$ given by the set of connected components of $(F,\supseteq)$ with $\ms E = \mc P\Omega$. 
    
    By \cite[Theorem~2]{AlosFerrer2008}, for all connected components $T$ of $F$, all classical $\psi$-\textsc{sef}s with decision tree $T$ satisfy Property~\ref{def:sef_well-posed}.\ref{def:sef_well-posed.well_posed.existence} (are ``playable everywhere'' in the language of \cite{AlosFerrer2008}) iff all connected components $T$ of $F$ are weakly up-discrete and coherent with respect to $\supseteq$. But this is clearly equivalent to $(F,\supseteq)$ being weakly up-discrete and coherent.
\end{proof}

\begin{proof}[Proof of Corollary \ref{cor:SEF_well-posedness=~order-theoretic_properties}]
    This is a direct consequence of Theorem~\ref{thm:SEF_well-posed}, and \cite[Theorem~6]{AlosFerrer2011Comment} and \cite[Corollary 5]{AlosFerrer2011Comment} (alias \cite[Theorem~5.5 and Corollary 5.4]{AlosFerrer2016}). For the remainder of the proof, let $\F$ be a stochastic extensive form on an exogenous scenario space $(\Omega,\ms E)$.\smallskip

    (Ad equivalence of \ref{cor:SEF_well-posedness=~order-theoretic_properties.well-posed} and~\ref{cor:SEF_well-posedness=~order-theoretic_properties.order_properties}):~ $\F$ is well-posed iff for all $\omega\in\Omega$, $(T_\omega,I,C_\omega)$ is well-posed, by Theorem~\ref{thm:SEF_well-posed}. By \cite[Theorem~6]{AlosFerrer2011Comment}, the latter is equivalent to the statement that for all $\omega\in\Omega$, $(T_\omega,\supseteq)$ is regular, weakly up-discrete, and coherent. Clearly, this is equivalent to $(F,\supseteq)$ having these three properties.\smallskip

    (Ad equivalence of \ref{cor:SEF_well-posedness=~order-theoretic_properties.well-posed} and~\ref{cor:SEF_well-posedness=~order-theoretic_properties.order_properties2}):~ $\F$ is well-posed iff for all $\omega\in\Omega$, $(T_\omega,I,C_\omega)$ is well-posed, by Theorem~\ref{thm:SEF_well-posed}. By \cite[Corollary 5]{AlosFerrer2011Comment}, the latter is equivalent to the statement that for all $\omega\in\Omega$, $(T_\omega,\supseteq)$ is regular, and up-discrete. Again, this is clearly equivalent to $(F,\supseteq)$ having these two properties.
\end{proof}

\begin{lemma}\label{lemma:closed_histories_in_coherent_regular_forests}
    Let $(F,\ge)$ be a coherent and regular decision forest and $h\in H$ a closed history. Then, there is $x\in F$ with $h = \uparrow x$.
\end{lemma}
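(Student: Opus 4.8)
The plan is to show that a closed history must be a principal up-set by reducing, via Lemma~\ref{lemma:closed_history}.\ref{lemma:closed_history.4}, to the case in which $h$ has no minimum, and then ruling that case out using coherence together with regularity. So the first thing I would do is dispose of the easy case: if $h$ has a minimum $x$, then $x\in h$ and upward closedness give $\uparrow x\subseteq h$, while minimality of $x$ gives $y\ge x$, i.e.\ $y\in\uparrow x$, for every $y\in h$; hence $h=\uparrow x$ and we are done. No use of coherence or regularity is needed here, and note that $x$ is then automatically non-terminal since $h$ is non-maximal.

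Next I would assume, for contradiction, that $h$ has no minimum. Since $h$ is closed, $h=\overline h$, and by Lemma~\ref{lemma:closed_history}.\ref{lemma:closed_history.4} the equality $\overline h=h\cup\{\inf h\}$ in the case that $h$ admits an infimum would force $\inf h\in h$, hence $\inf h=\min h$; as $h$ has no minimum, $h$ admits no infimum. Now coherence applies to $h$ (a history without minimum): there is a maximal chain $w\subseteq F$ with $h\subseteq w$ whose continuation $w\setminus h$ has a maximum $m$ (and $w\setminus h\neq\emptyset$ because $h$ is a non-maximal chain). The key step is to verify that $h=\uparrow m\setminus\{m\}$. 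For ``$\subseteq$'': given $x\in h$, the elements $x$ and $m$ are comparable in the chain $w$; if $m\ge x$ then $m\in\uparrow x\subseteq h$, contradicting $m\in w\setminus h$, so $x>m$, i.e.\ $x\in\uparrow m\setminus\{m\}$. For ``$\supseteq$'': since $(F,\ge)$ is a forest, $\uparrow m$ is a chain, and one checks that $\uparrow m\cup w$ is a chain (any $z\in\uparrow m$ and $z'\in w$ are comparable by splitting on whether $z'\ge m$), so maximality of $w$ gives $\uparrow m\subseteq w$; then for $y\in\uparrow m\setminus\{m\}$ we have $y\in w$ and $y>m$, and $y\notin h$ would put $y\in w\setminus h$, forcing $m\ge y$, a contradiction; hence $y\in h$.

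Finally I would close the loop: $m$ is non-maximal because $\uparrow m\setminus\{m\}=h\neq\emptyset$, so regularity yields that the history $\uparrow m\setminus\{m\}$ has an infimum; but this history is exactly $h$, so $h$ has an infimum, contradicting the conclusion of the previous paragraph. Therefore $h$ has a minimum, and the easy case gives $h=\uparrow x$. I expect the main obstacle to be purely a matter of care rather than depth: getting the order-direction bookkeeping right (here ``minimum'' of a history means the $\ge$-least, i.e.\ deepest, node it contains, which is why $h=\uparrow x$ rather than $\uparrow x\setminus\{x\}$ in the good case) and making the identification $h=\uparrow m\setminus\{m\}$ fully rigorous, in particular the inclusion $\uparrow m\subseteq w$ that relies on the forest property.
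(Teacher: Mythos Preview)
Your proof is correct and follows essentially the same approach as the paper's: both dispose of the case where $h$ has a minimum directly, then assume no minimum, invoke coherence to obtain a continuation with maximum $m$, identify $h=\uparrow m\setminus\{m\}$, and use regularity together with closedness to reach a contradiction. The only cosmetic difference is that you first deduce ``$h$ has no infimum'' from Lemma~\ref{lemma:closed_history}.\ref{lemma:closed_history.4} and then contradict it via regularity, whereas the paper applies regularity first and then uses closedness to produce a minimum; this is merely a reordering of the same steps.
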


\begin{proof}
    Let $h\in H$ be a closed history. It suffices to show that $h$ has a minimum $x$ since then $h = \uparrow x$, because $h$ is upward closed. 
    
    If $h$ had no minimum, then by coherence there would be a continuation $c$ with a maximum $x$. Then $h = \uparrow x \setminus \{x\}$. Indeed, if $y\in h$, then $y\supsetneq x$ because $h\cup c$ is a chain, $h$ is upward closed, and $x\notin h$. Conversely, let $z\in \uparrow x \setminus \{x\}$. For any $y\in h$, we would have $y\supseteq z$ or $z\supseteq y$ because $\uparrow x$ would be a chain containing $h$; and for any $y\in c$ we would have $z\supsetneq y$ because $x = \max c$. Hence $h\cup c \cup \{z\}$ would be a chain. By maximality of $h\cup c$ and $z\notin c$, we would get $z\in h$. 
    
    Having established that $h = \uparrow x \setminus \{x\}$, we would infer from regularity, that $h = \uparrow x \setminus \{x\}$ would have an infimum. As $h$ is supposed to be closed, it would contain its infimum and thus $h$ would have a minimum -- a contradiction.
\end{proof}

\begin{proof}[Proof of Proposition~\ref{prop:random_histories=~random_moves}]
    Let $\F$ be a well-posed stochastic extensive form on an exogenous scenario space $(\Omega,\ms E)$. The first claim follows directly from Corollary~\ref{cor:SEF_well-posedness=~order-theoretic_properties} and Lemma~\ref{lemma:closed_histories_in_coherent_regular_forests}. Next, let $\h\in\H$ be a closed random history and suppose that $(F,\pi,\X)$ is order consistent. By the first part proven just above, for any $\omega\in D_\h$, there is $x_\omega\in T_\omega$ with $\h(\omega) = \uparrow x_\omega$. For any $\omega\in D_\h$, $x_\omega \in X$ because $\h(\omega)$ is not a maximal chain, but upward closed. For every $\omega\in D_\h$, there is $\x_\omega\in\X$ with $\omega\in D_{\x_\omega}$ and $x_\omega = \x_\omega(\omega)$. 

    Let $\omega,\omega'\in D_\h$. Then, as $\h$ is a random history, $D_\h \subseteq D_{\x_\omega}\cap D_{\x_{\omega'}}$ and $\x_\omega(\omega')\in\h(\omega')$. Thus, $\x_\omega(\omega') \supseteq \x_{\omega'}(\omega')$. By order consistency, then, $\x_\omega \ge_\X \x_{\omega'}$. Repeating the argument after having permuted $\omega$ and $\omega'$ yields the converse inequality whence $\x_\omega = \x_{\omega'}$. 
\end{proof}

\begin{proof}[Proof of Theorem~\ref{thm:absent_minded_driver_Gilboa_sef_well-posed}]
    Let $p\in[0,1]$, $s\in S$ and $\Pr$ be a suitable \textsc{eu} preference structure. Let $i,j\in\{1,2\}$ with $i\neq j$ and let $\tilde s\in S$ with $\tilde s^i = s^i$. As above, $\tilde s^j$ can be identified with an exit event $\tilde E_j\in\ms F^j_{\x_j}$.

    Note that $E_{\neg i,j} = \{\rho = j\}$, $E_{i,j} = \{\rho = i\}$, and $E^{s}_{i,j} = \{\rho=i\}\cap E_i^\complement$. Hence, using the $\P$-independence of $\rho$ and $\ms F^i_{\x_i}$,
    \[ \P(E_{\neg i,j} \cup E^{s}_{i,j}) = \P(\rho = j) + \P(\rho = i)(1-\P(E_i)) = \frac 12 (1+p). \]
    Thus, for every $E\in\ms E$, dynamic consistency of $(\Pi,s)$ implies that
    \[ \P_{j,\{\x_j\}}(E) = 2\,\frac{\P((\{\rho = j\} \cup [\{\rho=i\}\cap E_i^\complement])\cap E)}{1+p}. \]
    Easy computations using the $\P$-independence of $\rho$, $\ms F^i_{\x_i}$ and $\ms F^j_{\x_j}$ yield:
    \begin{align*}
        \P_{j,\{\x_j\}}(\rho=j,\tilde E_j^\complement,E_i\mid \ms F^j_{\x_j}) =&\, \frac{(1-p)1_{\tilde E_j^\complement}}{1+p}, \\
        \P_{j,\{\x_j\}}(\rho=i,\tilde E_j\mid \ms F^j_{\x_j}) =&\, \frac{p1_{\tilde E_j}}{1+p}, \\
        \P_{j,\{\x_j\}}(\rho=j,\tilde E_j^\complement,E_i^\complement\mid \ms F^j_{\x_j}) =&\, \frac{p1_{\tilde E_j^\complement} }{1+p}, \\
        \P_{j,\{\x_j\}}(\rho=i,\tilde E_j^\complement\mid \ms F^j_{\x_j}) =&\, \frac{p1_{\tilde E_j^\complement}}{1+p}. \\
    \end{align*}
    
    Then,
    \begin{align*}
        \pi_{j,\{\x_j\}} =&\, \E_{j,\{\x_j\}}\Big[4 \cdot \Big(1\{\rho=j,\tilde E_j^\complement,E_i\} + 1\{\rho=i,\tilde E_j\}\Big) \\
        & \qquad\quad + 1 \cdot \Big(1\{\rho=j,\tilde E_j^\complement,E_i^\complement\} + 1\{\rho=i,\tilde E_j^\complement\}\Big) \\
        & \qquad\quad + 0 \cdot 1\{\rho=j,\tilde E_j\} \Big] \\
        =&\, \frac{1}{1+p} \Big(1_{\tilde E_j^\complement} \big(4(1-p)+p+p\big) + 1_{\tilde E_j} 4p\Big)\\
        =&\, \frac{1}{1+p} \Big(1_{\tilde E_j^\complement} \big(4-2p\big) + 1_{\tilde E_j} 4p\Big).\\
    \end{align*}
    
    For $p<\frac 23$, this is maximised $\P_{j,\{\x_j\}}$-almost surely by all $\tilde E_j\in\ms F^j_{\x_j}$ with $\P_{j,\{\x_j\}}(\tilde E_j) = 0$. If $(s,\Pr)$ were in equilibrium, then, $p = 1 - \P(E_j) = 1$, a contradiction. 

    For $p>\frac 23$, conversely, the above expression is maximised $\P_{j,\{\x_j\}}$-almost surely by all $\tilde E_j\in\ms F^j_{\x_j}$ with $\P_{j,\{\x_j\}}(\tilde E_j) = 1$. If $(s,\Pr)$ were in equilibrium, then, $p = 1 - \P(E_j) = 0$, a contradiction. 

    Hence, if $(s,\Pr)$ is in equilibrium, then $p = \frac 23$. Conversely, suppose that $p = \frac 23$. Then, by the computation above, $\pi_{j,\{\x_j\}} = \frac 85$, which is independent on the chosen strategy $\tilde s^j$ alias $\tilde E_j$. In particular, $\tilde s^j$ is a best response to $s^i$. Switching the roles of $i$ and $j$ also shows that $\pi_{i,\{\x_i\}} = \frac 85$ so that $(s,\Pr)$ is in equilibrium.
\end{proof}

\begin{proof}[Proof of Theorem~\ref{thm:AP_sef_well-posed}]
    Let $\D$ be action path stochastic extensive form data on an exogenous scenario space $(\Omega,\ms E)$ with well-ordered time $\T$ and let $\F$ be the induced stochastic extensive form. In view of Theorem~\ref{thm:SEF_well-posed}, it suffices to show that the underlying decision forest $F$ is up-discrete and regular with respect to $\supseteq$.\smallskip

    (Ad up-discreteness):~ Let $c$ be a non-empty chain in $(F,\supseteq)$. Then, as $F$ is a decision forest, $\bigcap c\neq\emptyset$. Indeed, $c$ is contained in some maximal chain of the form $\uparrow \{w\}$ for some $w\in W$, see \cite[Definition 1.3, Proposition 1.4]{Rapsch2024DecisionA}. Therefore, there is $w=(\omega,f)\in \bigcap c$, which is an element of $W$. Hence, there is $\T'\subseteq \T$ with
    \[ c \cup \{\{w\}\} = \{ x_t(\omega,f) \mid t\in\T'\} \cup \{\{w\}\}. \]
    If $\T'$ is empty, then $w = \max c$. If $\T'$ is non-empty, then, by hypothesis, it has a minimum $t_0$ with respect to the well-order $\le$ on $\T$. Then, by the very definition of $x_t(\omega,f)$ for $t\in\T'$ and the fact that they all contain $w$, we directly obtain $x_{t_0}(\omega,f) = \max c$.\smallskip

    (Ad regularity):~ Let $x\in F$ be non-maximal. If $\uparrow x \setminus \{x\}$ has a minimum, then it also has an infimum. It therefore remains to consider the case where $\uparrow x \setminus \{x\}$ has no minimum. For that proof, let
    \[ B_x = \{ y\in F \mid \forall z\in\uparrow x \setminus \{x\} \colon z\supseteq y\} \]
    denote the set of lower bounds of $\uparrow x \setminus \{x\}$, a set that clearly contains $x$.
    
    If $x$ is terminal, then $x = \{w\}$ for some $w = (\omega,f)\in W$. Let $y\in B_x$ and $w'\in y$. Then $w'=(\omega,f')$ for some $f'\in\A^\T$ because $\uparrow x\setminus \{x\}$ is non-empty by non-maximality of $x$, implying that $x$ and $y$ are elements of the same connected component. If we had $f\neq f'$, then by hypothesis there would be a minimal $t\in\T$ with $f(t) \neq f'(t)$. Thus, $x_t(\omega,f)\in X$, and as $\uparrow x \setminus \{x\}$ is assumed to have no minimum, there would be $u\in\T$ with $t<u$ and $x_u(\omega,f)\in \uparrow x \setminus \{x\}$. But then $w'\notin x_u(\omega,f)$, in contradiction to $w'\in y \subseteq x_u(\omega,f)$, as $y\in B_x$. Hence, the assumption $f\neq f'$ was false and we must have $f=f'$. Thus, $w'=w$ and $y=\{w\}=x$. We conclude that $B_x=\{x\}$. Hence, $B_x$ has a maximum and $\uparrow x \setminus \{x\}$ an infimum.

    If $x$ is not terminal, then $x\in X$ and $x = x_{\mf t(x)}(w)$ for some $w = (\omega,f)\in W$. Let $y\in B_x$ and let $w'\in y$. Then $w' = (\omega,f')$ for some $f'\in \A^\T$, for similar reasons as above. Then $f'=f$ or $f'\neq f$. In the latter case, the hypothesis implies the existence of minimal $t\in\T$ with $f(t) \neq f'(t)$. If we had $t<\mf t(x)$, then there would be $u\in \T$ with $t<u<\mf t(x)$, since otherwise $x_t(\omega,f)$ would be a minimum of $\uparrow x\setminus \{x\}$ which does not exist by assumption. But then, just as above, $w'\notin x_u(\omega,f)$, in contradiction to $w'\in y \subseteq x_u(\omega,f)$, as $y\in B_x$. Hence, $\mf t(x) \le t$ which implies that $f'|_{[0,\mf t(x))_\T} = f|_{[0,\mf t(x))_\T}$. Hence, whether $f=f'$ or not, it follows that $w'=(\omega,f')\in x$. We infer that $y\subseteq x$. Thus, $x$ is a maximum of $B_x$ and an infimum of $\uparrow x \setminus \{x\}$.    
\end{proof}

\end{document}